\newcommand{\p}{\bm{p}}
\newcommand{\Bb}{\bm{b}}
\newcommand{\Dd}{\bm{d}}
\newcommand{\Aa}{\bm{a}}
\newcommand{\uu}{\bm{u}}
\newcommand{\E}{\mathbb{E}}
\newcommand{\R}{\mathbb{R}}
\newcommand{\prob}{\mathbb{P}}
\DeclareMathOperator*{\argmax}{arg\,max}
\DeclareMathOperator*{\argmin}{arg\,min}
\DeclarePairedDelimiter\ceil{\lceil}{\rceil}
\DeclarePairedDelimiter\floor{\lfloor}{\rfloor}
\DeclarePairedDelimiter\Bfloor{\Bigg\lfloor}{\Bigg\rfloor}
\newtheorem{proposition}{Proposition}
\newtheorem{lemma}{Lemma}
\newtheorem{assumption}{Assumption}
\newtheorem{theorem}{Theorem}
\newtheorem{corollary}{Corollary}
\newtheorem{example}{Example}
\newtheorem*{theorem*}{Theorem}
\title{Online Linear Programming: Dual Convergence, New Algorithms, and Regret Bounds}
\author{Xiaocheng Li$^\dagger$ \and  Yinyu Ye$^\dagger$}
\date{\small 
$^\dagger$Department of Management Science and Engineering, Stanford University\\
$\{$chengli1, yyye$\}$@stanford.edu
}
\begin{document}
\maketitle

\begin{abstract}
We study an online linear programming (OLP) problem under a random input model in which the columns of the constraint matrix along with the corresponding coefficients in the objective function are generated i.i.d. from an unknown distribution and revealed sequentially over time. Virtually existing online algorithms were based on learning the dual optimal solutions/prices of the linear programs (LP), and their analyses were focused on the aggregate objective value and solving the packing LP where all coefficients in the constraint matrix and objective are nonnegative. However, two major open questions were: (i) Does the set of LP optimal dual prices learned in the existing algorithms converge to those of the ``offline'' LP, and (ii) Could the results be extended to general LP problems where the coefficients can be either positive or negative. We resolve these two questions by establishing convergence results for the dual prices under moderate regularity conditions for general LP problems. Specifically, we identify an equivalent form of the dual problem which relates the dual LP with a sample average approximation to a stochastic program. Furthermore, we propose a new type of OLP algorithm, Action-History-Dependent Learning Algorithm, which improves the previous algorithm performances by taking into account the past input data as well as the past decisions/actions. We derive an $O(\log n \log \log n)$ regret bound (under a locally strong convexity 
and smoothness condition) for the proposed algorithm, against the $O(\sqrt{n})$ bound for typical dual-price learning algorithms, where $n$ is the number of decision variables. Numerical experiments demonstrate the effectiveness of the proposed algorithm and the action-history-dependent design. 
\end{abstract}

\section{Introduction}

Sequential decision making has been an increasingly attractive research topic with the advancement of information technology and the emergence of new online marketplaces. As a key concept appearing widely in the fields of operations research, management science, and artificial intelligence, sequential decision making concerns the problem of finding the optimal decision/policy in a dynamic environment where the knowledge of the system, in the form of data and samples, amasses and evolves over time. In this paper, we study the problem of solving linear programs in a sequential setting, usually referred to as online linear programming (OLP) (See e.g., \citep{agrawal2014dynamic}). The formulation of OLP has been widely applied in the context of online Adwords/advertising \citep{mehta2005adwords}, online auction market \citep{buchbinder2007online}, resource allocation \citep{asadpour2019online}, packing and routing \citep{buchbinder2009online}, and revenue management \citep{talluri2006theory}. One common feature in these application contexts is that the customers, orders, or queries arrive in a forward sequential manner, and the decisions need to be made on the fly with no future data/information available at the decision/action point. 

The OLP problem takes a standard linear program as its underlying form (\textbf{with $n$ decision variables and $m$ constraints}), while the constraint matrix is revealed column by column with the corresponding coefficient in the linear objective function. In this paper, we consider the standard random input model (See \citep{goel2008online, devanur2019near}) where the orders, represented by the columns of the constraint matrix together with the corresponding objective coefficients, are sampled independently and identically from \textbf{an unknown distribution} $\mathcal{P}$. At each timestamp, the value of the decision variable needs to be determined based on the past observations and cannot be changed afterward. The goal is to minimize the gap (formally defined as regret) between the objective value solved in this online fashion and the ``offline'' optimal objective value where one has the full knowledge of the linear program data.

There were many algorithms and research results on OLP in the past decade due to its  wide applications. Virtually all existing online algorithms were based on learning the LP dual optimal solutions/prices, and their analyses of OLP were focused on the aggregate objective value and solving the packing LP where all coefficients in the constraint matrix and objective are nonnegative. Two major open questions in the literature were: (1) Does the set of LP optimal dual prices of OLP converge to those of the offline LP, and (2) Could the results be extended to general LP problems where the coefficients can be either positive or negative. We resolve these two questions in this paper as part of our results. Moreover, we propose a new type of OLP algorithm and develop tools to analyze the regret of OLP algorithms. Our key results and main contributions are summarized as follows.

\subsection{Key Results and Main Contributions}

\textbf{Dual convergence of online linear programs}. 
We establish convergence results for the dual optimal solutions of a sequence of linear programs in Section \ref{dualconvergence}. We first derive an equivalent form of the dual LP and discover that the sampled dual LP, under the random input model, can be viewed as a \textit{Sample Average Approximation} (SAA) \citep{kleywegt2002sample, shapiro2009lectures} of a constrained stochastic programming problem. The stochastic program is defined by the LP constraint capacity and the distribution $\mathcal{P}$ that generates the input of the LP. Our key result states that, under moderate regularity conditions, the optimal solution of the sampled dual LP will converge to the optimal solution of the stochastic program as the number of (primal LP) decision variables goes to infinity. Specifically, we establish that the L$_2$ distance between the two solutions is $\tilde{O}\left(\frac{\sqrt{m}}{\sqrt{n}}\right)$ under the random input model where $m$ is the number of constraints and $n$ is the number of decision variables. Moreover, the convergence results are not only pertaining to online packing LPs, but also hold for general LPs where the input data coefficients can be either positive or negative. 



\textbf{Action-history-dependent learning algorithm.} We develop a new type of OLP algorithm – Action-history-dependent Learning Algorithm in Section \ref{AHDLA}. This new algorithm is a dual-based algorithm (as the algorithms in \citep{devanur2011near, agrawal2014dynamic, gupta2014experts}, etc.) and it utilizes our results on the convergence of the sampled dual optimal solution. One common pattern in the design of most existing OLP algorithms is that the choice of the decision variable at time $t$ only depends on the past input data, i.e., the coefficients in the constraints and the objective function revealed, but not the decisions already made (until time $t-1$). Our new action-history-dependent algorithm considers both the past input data and the past choice of decision variables. Similar idea was considered in a few specific problems such as network revenue management and online auction. Compared to the existing OLP algorithms, our new algorithm is more conscious of the constraints/resources consumed by the past decisions, and thus the decisions can be made in a more dynamic, closed-loop, and non-stationary way. We demonstrate in both theory and numerical experiments that this actions-history-dependent mechanism significantly improves the online performance than existing OLP algorithms without this mechanism.

\textbf{Regret bounds for OLP.} We analyze the worst-case gap (regret) between the expected online objective value and the ``offline'' optimal objective value. Specifically, we study the regret in an asymptotic regime where the number of constraints $m$ is fixed as a constant and the LP right-hand-side input scales linearly with the number of decision variables $n$. As far as we know, this is the first regret analysis result in the general OLP formulation. We derive an $O(\log n \log \log n)$ regret upper bound (under a locally strong convexity and smoothness assumption) for the proposed action-history-dependent learning algorithm, which has a similar order of magnitude as the best achievable lower bound $\Omega(\log n)$ of the problem even with the exact knowledge of the underlying distribution \citep{bray2019does}. Our regret analysis provides an algorithmic insight for the constrained online optimization problem: a successful algorithm should have good control of the binding constraints (resources) consumption -- not exhausting those constraints too early or having too much remaining at the end, an aspect usually overlooked by typical dual-price learning algorithms in OLP literature. The analysis extends the findings in the network revenue management literature (for example, the adaptive resource control in \citep{jasin2012re, jasin2015performance}) to a more general non-parametric context. Moreover, the results and methodologies (such as Theorem \ref{representation}) are  potentially applicable to other online learning and online decision making problems.


\subsection{Literature Review}

Online optimization/learning/decision problems have been long studied in the community of operations research and theoretical computer science. We refer readers to the papers \citep{borodin2005online, buchbinder2009design, hazan2016introduction} for a general overview of the topic and the recent developments. The OLP problem has been studied mainly under two models: the random input model \citep{goel2008online, devanur2019near} and the random permutation model \citep{molinaro2013geometry, agrawal2014dynamic, gupta2014experts}. In this paper, we consider the random input  model (also known as stochastic input model) where the columns of constraint matrix are generated i.i.d. from an unknown distribution $\mathcal{P}$. In comparison, the random permutation model assumes the columns are arriving in  random order and the arrival order is uniformly distributed over all the  permutations. Technically, the i.i.d. assumption in the random input model is stronger than the random permutation assumption in that the random input model can be viewed as a special case of the random permutation model (\citep{mehta2013online}). Practically, the random input model can be motivated from the online advertising problem, network revenue management problems in flight and hotel bookings, or the online auction problem. In these application contexts, each column in the constraint matrix together with the corresponding coefficient in the objective function represents an individual order/bid/query. In this sense, the random input model can be interpreted as an independence assumption across different customers.

Our paper differs from the existing OLP literature in the right-hand-side assumption, i.e., the constraint capacity. A stream of OLP papers \citep{devanur2009adwords, molinaro2013geometry, agrawal2014dynamic, kesselheim2014primal, gupta2014experts} studied the trade-off between the algorithm competitiveness and the constraint capacity. They investigated the necessary and sufficient condition on the right-hand-side of the LP and the number of constraints $m$ for the existence of a $(1-\epsilon)$-competitive OLP algorithm. Specifically, \cite{agrawal2014dynamic} established the necessary part by constructing a worst-case example, stating that the right-hand-side should be no smaller than $\Omega\left(\frac{\log m}{\epsilon^2}\right)$. \cite{kesselheim2014primal, gupta2014experts} developed algorithms that achieve $(1-\epsilon)$-competitiveness under this necessary condition and thus completed the sufficient part. In this paper, we research an alternative question, when the right-hand-side grows linearly with the number of decision variables $n$, whether the algorithm could achieve a better performance than $(1-\epsilon)$-competitiveness. In general, this linear growth regime will render the optimal objective value growing linearly with $n$ as well. Consequently, a $(1-\epsilon)$-competitiveness performance guarantee will potentially incur a gap that is linear in $n$ between the online objective value and the offline optimal value. Thus we consider regret instead of competitiveness ratio as the performance measure and will analyze three algorithms that achieve sublinear regret under the linear growth regime. Moreover, a typical assumption in the OLP literature requires the data entries in the constraint matrix and the objective to be non-negative. We do not make this assumption in our model so that our model and analysis can capture a double-sided market with both buying and selling orders. 

Another stream of research originates from the revenue management literature and studies a parameterized type of the OLP problem. It models the network revenue management problem where heterogeneous customers arrive sequentially to the system and the customers can be divided into finitely many classes based on their requests and prices. In the language of OLP, the columns of the constraint matrix with the corresponding coefficients in the objective follow a well-parameterized distribution and have a finite and known support. \cite{reiman2008asymptotically, jasin2012re, jasin2013analysis, bumpensanti2018re} among others studied the problem under a setting where the model parameters are known and discussed the performance of a re-solving technique that dynamically solves the certainty-equivalent problem according to the current state of the system. This line of work highlights the effectiveness of the re-solving technique in an environment with known parameters and investigates the desired re-solving frequency. In a similar spirit, \cite{jasin2015performance} analyzed the performance of the re-solving based algorithm in an unknown parameter setting, and \cite{ferreira2018online} studied a slightly different pricing-based revenue management problem. The OLP model generalizes the network revenue management problem in that it does not impose any parametric structure on the distribution. For the OLP problem, the distribution of the customer request and price may have an infinite support (such as secretary problem/Adwords problem) and negative values are allowed (such as a two-sided auction market). Comparatively, the revenue management literature focuses on the case where the underlying distribution has a parametric structure and finite support. For example, \cite{jasin2015performance} studied an unknown setting, but the paper still assumed the knowledge of the distribution's support. Consequently, the parameter learning in \citep{jasin2015performance} was reduced to a simple intensity estimation problem for a homogeneous Poisson process and the algorithm therein relied on the estimation together with the re-solving technique. In fact, the algorithms developed along this line of literature fail for the more general OLP problem because when the distribution is fully non-parametric and unknown, there is no way to \textit{first estimate} the distribution parameter and \textit{then to solve} a certainty-equivalent optimization problem based on the estimated parameter. The dual-based algorithms developed in our paper can thus be viewed as a combination of this first-estimate-then-optimize procedure into one single step. In particular, our action-history-dependent algorithm implements the idea of re-solving technique in a non-parametric setting, and our algorithm analysis reinforces the effectiveness of the adaptive and re-solving design (mainly discussed in the revenue management literature) in a more general online optimization context.  

Another line of research investigated the multi-secretary problem (\citep{kleinberg2005multiple, arlotto2019uniformly, bray2019does} among others). The multi-secretary problem is a special form of OLP problem that has only one constraint and all the coefficients in the constraint matrix are one. \cite{arlotto2019uniformly} showed that when the reward distribution is unknown and if no additional assumption is imposed on the distribution, the regret lower bound of the multi-secretary problem is $\Omega(\sqrt{n})$. A subsequent work \citep{bray2019does} further noted that even when the distribution is known, the regret lower bound is $\Omega(\log n).$ \cite{balseiro2019learning} studied a similar one-constraint but multi-agent online learning formulation motivated from the repeated auction problem. The results in our paper are positioned in a more general context and consistent with this line of works. We identify a group of assumptions that admits an $\tilde{O}(\log n)$ regret upper bound for the more general OLP problem. Our action-history-dependent algorithm can also be viewed as a generalization of the elegant adaptive algorithms developed in \citep{arlotto2019uniformly, balseiro2019learning}.
Together with \citep{bray2019does}, our results indicate that the action-history-dependent algorithm achieves a near-optimal regret performance for the multi-secretary problem even in a comparison with the optimal dynamic algorithm developed with knowledge of the distribution.

The OLP problem is also related to the general online optimization problem. Compared to the standard online optimization problem, our OLP problem is special in two aspects: (i) the presence of the constraints and (ii) a dynamic oracle as the regret benchmark. For the literature on online optimization with constraints \citep{mahdavi2012trading, agrawal2014fast, yu2017online,yuan2018online}, the common approach is to employ a bi-objective performance measure and report the regret and constraint violation separately. We contribute to this line of research by developing a machinery to analyze the regret of a feasible online algorithm. For the second aspect, the dynamic oracle allows the decision variables (in OLP) of different time steps to take different values. This is a stronger oracle than the static oracle \citep{mahdavi2012trading, agrawal2014fast, yu2017online,yuan2018online} that requires the decision variables at different time steps to take the same (static) value. In other words, the OLP regret is computed against a stronger benchmark. This explains why OLP literature considers mainly the random input model and the random permutation model, instead of an adversarial setting.



\section{Problem Formulation}

In this section, we formulate the OLP problem and define the objective. Consider a generic LP problem 
\begin{align}
\label{primalLP}
    \max\ & \sum_{j=1}^n r_j x_j 
    \\ 
    \text{s.t.}\ & \sum_{j=1}^n a_{ij}x_j \le b_i,\ \  i=1,...,m \nonumber \\
    & 0 \le x_{j} \le 1, \ \  j=1,...,n \nonumber
\end{align}
where $r_j \in \R,$ $\bm{a}_{j} = (a_{1j}, ..., a_{mj})^\top \in \R^m$, and $\bm{b} = (b_{1},...,b_{m})^\top \in \R^m.$ Without loss of generality, we assume $b_i>0$ for $i=1,...,m.$ Throughout this paper, we use bold symbols to denote vectors/matrices and normal symbols for scalars.

In the online setting, the parameters of the linear program are revealed in an online fashion and one needs to determine the value of decision variables sequentially. Specifically, at each time $t,$ 
the coefficients $(r_t, \bm{a}_t)$ are revealed, and we need to decide the value of $x_t$ instantly. Different from the offline setting, at time $t$, we do not have the information of the following coefficients to be revealed. Given the history $\mathcal{H}_{t-1} = \{r_j, \bm{a}_j, x_j\}_{j=1}^{t-1}$, the decision of ${x}_t$ can be expressed as a policy function of the history and the coefficients observed in the current time period. That is,
\begin{equation}
    {x}_t = \pi_t(r_t, \bm{a}_t, \mathcal{H}_{t-1}).
    \label{policy}
\end{equation}
The policy function $\pi_t$ can be time-dependent and we denote policy $\bm{\pi} = (\pi_{1},...,\pi_{n}).$ The decision variable ${x}_t$ must conform to the constraints
$$\sum_{j=1}^t a_{ij}{x}_j \le b_i,\ \  i=1,...,m,$$
$$0 \le {x}_t \le 1.$$
The objective is to maximize the objective $\sum_{j=1}^n r_j{x}_j.$ 

We illustrate the problem setting through the following practical example. Consider a market making company receiving both buying and selling orders, and the orders arrive sequentially. At each time $t,$ we observe a new order, and we need to decide whether to accept or reject the order. The order is a buying/selling request for the resources, or it could be a mixed request, e.g., selling the first resource for $1$ unit and buying the second resource for $2$ units with a total order price of \$1. Once our decision is made, the order will leave the system, and it is either fulfilled or rejected. In this example, the term $b_i$ can be interpreted as the total available inventory for the resource $i,$ and the decision variables $x_t$'s can be interpreted as the acceptance and rejection of an order. In particular, we do not allow shorting of resources along the process. Our goal is to maximize the total revenue. 

We assume the LP parameters $(r_j, \bm{a}_j)$ are generated i.i.d. from an unknown distribution $\mathcal{P}.$  We denote the offline optimal solution of linear program (\ref{primalLP}) as $\bm{x}^* = (x_1^*,...,x_n^*)^\top$, and the offline (online) objective value as $R_n^*$ (${R}_n$). Specifically, 
\begin{align*}
R_n^* & \coloneqq \sum_{j=1}^n r_{j} x_j^*\\
{R}_n(\bm{\pi}) & \coloneqq \sum_{j=1}^n r_{j} {x}_j.
\end{align*}
in which online objective value depends on the policy $\bm{\pi}$. The quantity $R_n^*$ assumes the full knowledge of the realization (of the randomness), and it is also known as \textit{hindsight oracle} in the literature of online learning and robust optimization. In this paper, we consider a fixed $m$ and large $n$ regime, and focus on the worst-case gap between the online and offline objective. 
We define the \textit{regret} 
$$ \Delta^\mathcal{P}_n(\bm{\pi}) \coloneqq \E_{\mathcal{P}} \left[R_n^*-{R}_n(\bm{\pi})\right]  $$
and the \textit{worst-case regret} 
$$\Delta_n(\bm{\pi}) \coloneqq \sup_{\mathcal{P} \in \Xi} \Delta^\mathcal{P}_n(\bm{\pi}) = \sup_{\mathcal{P} \in \Xi} \E_{\mathcal{P}} \left[R_n^*-{R}_n(\bm{\pi})\right]$$
where $\Xi$ denotes a family of distributions satisfying some regularity conditions (to be specified later). Throughout this paper, we omit the subscript $\mathcal{P}$ in the expectation notation when there is no ambiguity. The worst-case regret takes the supremum regret over a family of distributions so that it is suitable as a performance guarantee when the distribution $\mathcal{P}$ is unknown. We remark that the offline optimal solution $R_n^*$ can be interpreted as a dynamic oracle that allows the optimal decision variables to take different values at different time steps. This is an important distinction between the OLP problem and the problem of online convex optimization with constraints \citep{mahdavi2012trading, agrawal2014fast, yu2017online,yuan2018online}.

\subsection{Notations}

Throughout this paper, we use the standard big-O notations where $O(\cdot)$ and $\Omega(\cdot)$ represent upper and lower bound, respectively. The notation $\tilde{O}(\cdot)$ further omits the logarithmic factor, such as $\tilde{O}(\sqrt{n})=O(\sqrt{n}\log n)$ and $\tilde{O}(\log{n})=O(\log{n}\log \log n)$. The following list summarizes the notations used in this paper:
\begin{itemize}
    \item $m$: number of constraints; $n$: number of decision variables
    \item $i$: index for constraint; $j,t$: index for the decision variables
    \item $r_j$: the $j$-th coefficient in the objective function
    \item $\Aa_j$: the $j$-th column in the constraint matrix
    \item $\bar{r}, \bar{a}$: upper bound on $|r_j|$'s and $\|\Aa_j\|_2$'s
    \item $\mathcal{P}$: distribution of $(r_{j}, \Aa_{j})$'s; $\mathcal{P}\in\Xi$: a family of distributions to be defined in the next section
    \item $\lambda, \mu$: Parameters that convey the meaning of strong convexity and smoothness; they are related to the condition distribution of $r|\bm{a}$ to be defined in the next section
    \item $\bm{x}^*=(x_1^*,...,x_n^*)^\top$: the offline primal optimal solution
    \item $\bm{x}=(x_1,...,x_n)^\top$: the online solution
    \item $\bm{p}_n^*$: the (random) offline dual optimal solution 
    \item $\bm{p}^*$: the (deterministic) optimal solution of the stochastic program (\ref{asymProblem})
    \item $R_n^*$: the offline optimal objective value
    \item $R_n(\pi)$ (sometimes $R_n$ when the policy is clear): the online return/revenue under policy $\pi$
    \item $\bm{b}=(b_1,...,b_m)^\top$: the right-hand-side, constraint capacity
    \item $\bm{d}$: the average constraint capacity, i.e., $\bm{d}=\bm{b}/n$
    \item $\underline{d}, \bar{d}$: lower and upper bounds for $\bm{d}$
    \item $\Omega_d$: the region for $\bm{d}$, defined by $\bigotimes_{i=1}^m (\underline{d},\bar{d})$
    \item $I_B$ and $I_N$: the index sets for binding and non-binding constraints defined by the stochastic program (\ref{asymProblem}), respectively
    \item $\bm{b}_t$: the remaining constraint capacity at the end of the $t$-th period, $t=1,...,n$
    \item $\bm{d}_t$: the remaining average constraint capacity at the end of the $t$-th period, i.e., $\bm{d}_t=\bm{b}_t/(n-t)$, $t=1,...,n-1$
    \item $\wedge$: minimum operator, $y \wedge z\coloneqq \min\{y,z\}$ for $y,z\in\mathbb{R}$
    \item $\vee$: maximum operator, $y \vee z\coloneqq \max\{y,z\}$ for $y,z\in\mathbb{R}$
    \item $I(\cdot)$: indicator function; $I(\mathcal{E})=1$ when $\mathcal{E}$ is true and $I(\mathcal{E})=0$ otherwise
\end{itemize}

\section{Dual Convergence}

\label{dualconvergence}
Many OLP algorithms rely on solving the dual problem of the linear program (\ref{primalLP}). However, there is still a lack of theoretical understanding of the properties of the dual optimal solutions. In this section, we establish convergence results on the OLP dual solutions and lay foundations for the analyses of the OLP algorithms. 

To begin with, the dual of the linear program (\ref{primalLP}) is
\begin{align}
    \min\ & \sum_{i=1}^m b_ip_i + \sum_{j=1}^n y_j   \label{dualLP} \\
    \text{s.t.}\ & \sum_{i=1}^m a_{ij} p_i +y_j \ge r_j, \ \  j=1,...,n. \nonumber
    \\ 
    & p_i, y_j \ge 0 \text{ for all } i,j. \nonumber
\end{align}
Here the decision variables are $\bm{p} = (p_1,...,p_m)^\top$ and $\bm{y} = (y_1,...,y_n)^\top$.

Let $(\bm{p}^*_n, \bm{y}^*_n)$ be an optimal solution for the dual LP (\ref{dualLP}). From the complementary slackness condition, we know the primal optimal solution satisfies
\begin{equation}
    x^*_{j} = \begin{cases} 1, & \ r_j > \bm{a}_j^{\top} \bm{p}^*_n  \\
0,& \  r_j < \bm{a}_j^{\top} \bm{p}^*_n.
\end{cases}
\label{decisionRule}
\end{equation}
When $r_j=\bm{a}_j^{\top} \bm{p}^*_n$, the optimal solution $x_j^*$ may take non-integer values. This tells us that the primal optimal solution largely depends on the dual optimal solution $\p^*_n$ and thus motivates our study of the optimal dual solutions. An equivalent form of the dual LP can be obtained by plugging the constraints into the objective function (\ref{dualLP}): 
\begin{align}
    \min\ & \sum_{i=1}^m b_ip_i + \sum_{j=1}^n \left(r_j-\sum_{i=1}^m a_{ij}p_i\right)^+   \label{newDual}
    \\ 
   \text{s.t. } & p_i \ge 0, \ \ i=1,...,m. \nonumber
\end{align}
where $(\cdot)^+$ is the positive part function, also known as the ReLu function.  The optimization problem (\ref{newDual}), despite not being a linear program, has a convex objective function. 
It has the advantage of only involving $\bm{p}$ which is closely related to the optimal primal solution. More importantly, the random summands in the second part of the objective function (\ref{newDual}) are independent of each other, and therefore the sum (after a normalization) will converge to a certain deterministic function. To better make this point, let $d_i=b_i/n$ and divide the objective function in (\ref{newDual}) by $n$. Then the optimization problem can be rewritten as 
\begin{align}
    \min\ & f_n(\p) \coloneqq \sum_{i=1}^m d_ip_i + \frac{1}{n}\sum_{j=1}^n \left(r_j-\sum_{i=1}^m a_{ij}p_i\right)^+   \label{new2Dual}
    \\ 
   \text{s.t. } & p_i \ge 0, \ \ i=1,...,m. \nonumber
\end{align}
The second term in the objective function (\ref{new2Dual}) is a summation of $n$ i.i.d. random functions. 

Consider the following stochastic program
\begin{align} \min & \ f(\bm{p}) \coloneqq  \bm{d}^\top \bm{p} + \E\left[(r-\bm{a}^\top \bm{p})^+\right] \label{asymProblem}\\
\text{s.t. \ } & \bm{p} \ge \bm{0}, \nonumber
\end{align}
where the expectation is taken with respect to $(r, \bm{a})$. In the rest of the paper, unless otherwise stated, the expectation is always taken with respect to $(r, \bm{a})$. Evidently,
$$\E f_n(\p) = f(\p)$$
for all $\p.$ This observation casts the dual convergence problem in the form of a stochastic programming problem. The function $f_n(\bm{p})$ in (\ref{new2Dual}) can be viewed as a \textit{sample average approximation} (SAA) (See \citep{kleywegt2002sample, shapiro2009lectures}) of the function $f(\bm{p}).$ Specifically, the dual program associated with a primal linear program with $n$ decision variables is then an $n$-sample approximation of the stochastic program (\ref{asymProblem}). We denote the optimal solutions to the $n$-sample approximation problem (\ref{new2Dual}) and the stochastic program (\ref{asymProblem}) with $\bm{p}^*_n$ and $\bm{p}^*$, respectively. In this section, we provide a finite-sample analysis for the convergence of $\p_n^*$ to $\p^*$. We first introduce the assumptions and then formally establish the convergence.


\subsection{Assumptions and Basics}
The first group of assumptions concerns the boundedness and the linear growth of the constraints. 

\begin{assumption}[Boundedness and Linear Growth Capacity]
We assume 
\begin{itemize}
    \item[(a)] $\left\{(r_j, \bm{a}_{j})\right\}_{j=1}^n$ are generated i.i.d. from distribution $\mathcal{P}$.
    \item[(b)] There exist constants $\bar{r}, \bar{a} >0$ such that $|r_j|\le \bar{r}$ and $\|\bm{a}_j\|_2 \le \bar{a}$ almost surely. 
    \item[(c)] $d_i = b_i/n \in (\underline{d},\bar{d})$ for $\underline{d}, \bar{d}>0$, $i = 1,...,m.$ Denote $\Omega_d = \bigotimes_{i=1}^m (\underline{d}, \bar{d}).$
    \item[(d)] $n>m.$
\end{itemize}
Throughout this paper, $\|\cdot\|_2$ denotes the L$_2$-norm of a vector.
\label{assump1}
\end{assumption}

Assumption \ref{assump1} (a) states that parameters (coefficients in the objective function and columns in the constraint matrix) of linear program (\ref{primalLP}) are generated i.i.d. from an unknown distribution $\mathcal{P}.$ The
vectors $\{(r_j, \bm{a}_{j}), j=1,...,n\}$ are independent of each other, but their components may be dependent. Assumption \ref{assump1} (b) requires the parameters are bounded. The bound parameters $\bar{a}$ and $\bar{r}$ are introduced only for analysis purposes and will not be used for algorithm implementation. Assumption \ref{assump1} (c) requires the right-hand-side of the LP constraints grows linearly with $n$. This guarantees that for the (optimal) solutions, a constant proportion of the $x_j$'s could be $1$. It means the number of orders/requests that can be fulfilled is on the order of $n$ and thus ensures a constant service level (percentage of orders satisfied). On the contrary, if this is not true and all the requests are buying orders ($a_{ij}>0$), the service level may go to zero when the business running period $n$ goes to infinity. The parameter $d_i=b_i/n$ has the interpretation of available constraint/resource per period. Also, we require that the number of decision variables $n$ is larger than the number of constraints $m$. While discussing the dual convergence, the dimension of the dual variable $\p$ is equal to the number of constraints $m$ and the number of primal decision variables $n$ can be viewed as the number of samples used to approximate $\p^*.$ The assumption of $n>m$ restricts our attention to a low-dimensional setting.

Proposition \ref{basicProp} summarizes several basic properties related to the dual LP (\ref{dualLP}) and the stochastic program (\ref{asymProblem}). It states that  both the objective functions in the SAA problem and the stochastic program are convex. In addition, the optimal solutions to these two problems are bounded.

\begin{proposition}
Under Assumption \ref{assump1}, we have the following results on $f_n$ and $f$ (with probability 1).
\begin{itemize}
    \item[(a)] The optimal solution set of problem (\ref{newDual}) is identical to the optimal solution set of problem (\ref{dualLP}).
    \item[(b)] Both $f_n(\p)$ and $f(\p)$ are convex.
    \item[(c)] The optimal solutions $\p_n^*$ and $\p^*$ satisfy
    $$\Dd^\top \p_n^* \le \bar{r},$$
    $$\Dd^\top \p^* \le \bar{r}.$$
\end{itemize}
\label{basicProp}
\end{proposition}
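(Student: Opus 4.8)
The plan is to dispatch the three parts in turn; each rests on an elementary convex-analytic observation rather than on anything delicate.

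For part (a), I would fix an arbitrary $\p \ge \bm{0}$ and minimize the objective of (\ref{dualLP}) over $\bm{y}$ alone. The variable $y_j$ appears only in the constraint $y_j \ge \max\{0,\, r_j - \Aa_j^\top \p\}$ and enters the objective with positive coefficient, so its partial minimizer is $y_j = (r_j - \Aa_j^\top \p)^+$. Substituting this back shows that (\ref{newDual}) is precisely the partial minimization of (\ref{dualLP}) over $\bm{y}$; hence the $\p$-coordinates of the two optimal solution sets coincide, and in particular $\p_n^*$ (obtained from (\ref{new2Dual}), which is (\ref{newDual}) scaled by $1/n$) is a dual-optimal price vector for (\ref{dualLP}). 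I would also record in passing that both problems are feasible — take $\p=\bm 0$, $y_j=r_j^+$ — and bounded below, so the optima exist.

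For part (b), in $f_n$ the term $\sum_{i=1}^m d_i p_i$ is linear, while each summand $(r_j - \Aa_j^\top \p)^+$ is the composition of the convex nondecreasing map $t\mapsto t^+$ with the affine map $\p \mapsto r_j - \Aa_j^\top \p$, hence convex; a nonnegative sum of convex functions is convex. For $f$, every realization $\p \mapsto (r - \Aa^\top \p)^+$ is convex by the same composition argument, and convexity is preserved under expectation (apply Jensen pointwise, using Assumption \ref{assump1}(b) to guarantee integrability of the ReLu term), so $f$ is convex; adding the linear term $\Dd^\top \p$ does not change this.

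For part (c), I would use optimality together with Assumption \ref{assump1}(b). Since $\p_n^*$ minimizes $f_n$ over $\{\p\ge \bm 0\}$, we have $f_n(\p_n^*) \le f_n(\bm 0) = \frac{1}{n}\sum_{j=1}^n r_j^+ \le \bar r$; and since the ReLu term in $f_n(\p_n^*)$ is nonnegative, $\Dd^\top \p_n^* \le f_n(\p_n^*) \le \bar r$. The bound for $\p^*$ is identical with $f$ in place of $f_n$, using $f(\bm 0)=\E[r^+]\le \bar r$. There is no real obstacle here: the only points needing care are the attainment/interchange issues — feasibility, coerciveness of $f$ on the nonnegative orthant via $\Dd^\top \p \ge \underline d \|\p\|_1$, and integrability — and being precise that the equivalence in (a) is a statement about optimal solution \emph{sets} (or their $\p$-projections), since optima need not be unique.
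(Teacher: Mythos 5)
Your proposal is correct and follows essentially the same route as the paper's proof: part (a) by eliminating $\bm{y}$ via $y_j=(r_j-\Aa_j^\top\p)^+$, part (b) by composition and preservation of convexity under sums and expectation, and part (c) by comparing the optimal value with $f_n(\bm{0})$ (resp.\ $f(\bm{0})$) and dropping the nonnegative ReLu term. Your added remarks on feasibility, coerciveness, and solution sets are harmless refinements rather than a different argument.
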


Given the boundedness of the optimal solutions, we define
$$\Omega_{p} \coloneqq \left\{\p\in \mathbb{R}^m: \bm{p} \ge \bm{0}, \bm{e}^\top \p \le \frac{\bar{r}}{\underline{d}}\right\}$$ 
where $\bm{e} \in \mathbb{R}^m$ is an all-one vector.
We know that $\Omega_p$ covers all possible optimal solutions to (\ref{new2Dual}) and (\ref{asymProblem}). 

Next, we introduce the second group of assumptions on the distribution $\mathcal{P}$. Here and hereafter, $I(\cdot)$ denotes the indicator function.

\begin{assumption}[Non-degeneracy]
We assume
\begin{itemize}
    \item[(a)] The second-order moment matrix $\bm{M} \coloneqq \E_{(r,\Aa)\sim \mathcal{P}}[\Aa \Aa^\top]$ is positive-definite. Denote its minimum eigenvalue with $\lambda_{\min}.$
    \item[(b)] There exist constants $\lambda$ and $\mu$ such that if $(r,\bm{a})\sim \mathcal{P}$,
    $$ \lambda |\bm{a}^\top\p - \bm{a}^\top\p^*| \le \left|\prob(r>\bm{a}^\top \bm{p}\vert\bm{a}) - \prob(r>\bm{a}^\top\bm{p}^*\vert\bm{a})\right| \le \mu |\bm{a}^\top\p - \bm{a}^\top\p^*|$$
    holds for any $\bm{p} \in \Omega_{p}$.
    \item[(c)] The optimal solution $\p^*$ to the stochastic optimization problem (\ref{asymProblem}) satisfies $p^*_i = 0$ if and only if $d_i - \E_{(r,\Aa)\sim \mathcal{P}}[a_iI(r>\Aa^\top \p^*)]>0.$
\label{assump2}
\end{itemize}
\end{assumption}

Assumption \ref{assump2}  (a) is mild in that the matrix $\E[\Aa \Aa^\top]$ is positive semi-definite by definition; the positive definiteness holds as long as the constraint matrix $\bm{A}$ of the LP always has a full row rank, which is a typical assumption for solving linear programs. Assumption \ref{assump2}  (b) states that the cumulative distribution function of $r|\Aa$ should not grow too fast or too slowly. Assumption \ref{assump2} (c) imposes a strict complementarity for the stochastic program. Essentially, Assumption \ref{assump2} altogether imposes a non-degeneracy condition for both the primal and dual LPs. It can be viewed as a generalization of the non-degeneracy condition in \citep{jasin2012re, jasin2015performance} and as a stochastic version of the general position condition in \citep{devanur2009adwords, agrawal2014dynamic}.

We remark that all three parts of Assumption \ref{assump2} are crucial for the analyses in the rest of the paper. While Assumption \ref{assump2} (a) and (c) are not necessarily true for all the stochastic programs, a slight perturbation of the distribution $\mathcal{P}$ (for example, through adding small random noises to $r_j$ and $a_{ij}$) would result in them being satisfied. In addition, we provide three examples that satisfy Assumption \ref{assump2} (b). In Example 1, we can simply choose $\lambda = \underline{\alpha}$ and $\mu=\bar{\alpha}$ and then Assumption \ref{assump2} (b) is satisfied. The analyses of Example 2 and Example 3 are postponed to Section \ref{egAssump}.

\begin{example}
Consider a multi-secretary problem where $m=1$ and all the constraint coefficients $a_{1j}=1$ for $j=1,...,n$. The reward $r_j$ is a continuous random variable and its distribution $\mathcal{P}_r$ has a density function $f_{r}(x)$ s.t. $\underline{\alpha} \le f_r(x)\le \bar{\alpha}$ for $x \in [0,1].$ Here $\bar{\alpha}\ge\underline{\alpha}\ge 0$.
\end{example}

\begin{example}
Consider $(r, \Aa)\sim\mathcal{P}$ such that $r=\Aa^\top \p^* + \epsilon$ where $\epsilon$ is a continuous random variable independent with $\bm{a}$ and with bounded support. In addition, the distribution $\mathcal{P}_\epsilon$ of $\epsilon$'s has a density function $f_\epsilon(x)$ such that there exists $\bar{\alpha},\underline{\alpha},c_\epsilon>0$ such that $f_\epsilon(x)\ge \underline{\alpha}$ for $x\in[-c_\epsilon,c_\epsilon]$ and $f_\epsilon(x)\le \bar{\alpha}$ for all $x$. 
\end{example}

\begin{example}
Consider $(r, \Aa)\sim\mathcal{P}$ such that the conditional distribution $r|\Aa$ has a density function $f_{r|\Aa}(x)$ and the density function satisfies $\underline{\alpha} \le f_{r|\Aa}(x)\le \bar{\alpha}$ for $x \in [-\bar{r},\bar{r}]$ with $\underline{\alpha}, \bar{\alpha}>0,$ and $f_{r|\Aa}(x)=0$ for $x \notin [-\bar{r},\bar{r}]$. In addition, there exists $\delta_r>0$ such that $\bm{a}^\top \bm{p}^*\in [-\bar{r}+\delta_r,\bar{r}-\delta_r]$ almost surely.
\end{example}

According to Assumption 2 (c), we define two index sets
$$I_B \coloneqq \left\{i: d_i - \E_{(r,\Aa)\sim \mathcal{P}}[a_iI(r>\Aa^\top \p^*)] = 0\right\},$$
$$I_N \coloneqq \left\{i : d_i - \E_{(r,\Aa)\sim \mathcal{P}}[a_iI(r>\Aa^\top \p^*)] > 0\right\},$$
where the subscripts ``B'' and ``N'' are short for binding and non-binding, respectively. Assumption \ref{assump2} (c) implies $I_B \cap I_N = \varnothing$ and $I_B \cup I_N = \{1,...,m\}.$ Throughout the paper, the binding and non-binding constraints of the OLP problem are always defined according to the stochastic program (\ref{asymProblem}).

Now, we derive more basic properties for the stochastic program (\ref{asymProblem}) based on Assumption \ref{assump2}. First, define a function $h: \R^{m} \times \R^{m+1} \rightarrow \R$,
$$h(\bm{p}, \bm{u}) \coloneqq \sum_{i=1}^m d_ip_i + \left(u_0-\sum_{i=1}^m u_{i}p_i\right)^+$$
and function $\phi: \R^{m} \times  \R^{m+1} \rightarrow \R^m$,
$$\phi(\p, \bm{u}) \coloneqq \frac{\partial h(\p, \bm{u})}{\partial p} = (d_1,...,d_m)^\top - (u_1,...,u_m)^\top \cdot I\left(u_0>\sum_{i=1}^m u_{i}p_i\right)$$
where $\bm{u}=(u_0,u_1,...,u_m)^\top$ and $\bm{p} = (p_1,...,p_m)^\top$. The function $\phi$ is the partial \textit{sub-gradient} of the function $h$ with respect to $\bm{p}$; in particular, $\phi(\bm{p},\bm{u})=\bm{d}$ when $u_0=\sum_{i=1}^m u_ip_i$. We know that
$$f(\bm{p}) = \E_{\bm{u} \sim \mathcal{P}} [h(\p,\bm{u})].$$
and we define 
\begin{equation}
    \nabla f(\p)\coloneqq \E\left[ \phi(\p,\bm{u})\right]
    \label{f_sub_gradient}
\end{equation}
where both expectations are taken with respect to $\bm{u}=(r,\Aa)\sim \mathcal{P}.$ The caveat is that the function $f$ is not necessarily differentiable under our current assumptions. As we will see shortly in Lemma \ref{TaylorLemma0} and Proposition \ref{strConvex}, the definition of $\nabla f(\p)$ constitutes a meaning of sub-gradient for function $f$.

Lemma \ref{TaylorLemma0} represents the difference between $f(\p)$ and $f(\bm{p}^*)$ with the sub-gradient function $\phi$. Note that the identity (\ref{identity0}) holds regardless of the distribution $\mathcal{P}$. Its derivation shares the same idea with the Knight's identity in \citep{knight1998limiting}. Intuitively, the lemma can be viewed as a second-order Taylor expansion for the function $f.$ The first and second term on the right-hand side of the identity can be interpreted as the first- and second- order term in Taylor expansion. They are disguised in this special form due to the non-differentiability of the positive part function at the origin.

\begin{lemma}
For any $\bm{p}\ge \bm{0},$ we have the following identity,
\begin{align}
    f(\p)- f(\p^*) & = \underbrace{ \nabla f(\bm{p}^*)(\p-\p^*)}_\text{First-order} +  \underbrace{\E\left[\int_{\bm{a}^\top \p}^{\bm{a}^\top \p^*}\left(I(r>v) - I(r>\bm{a}^\top\p^*) \right)dv\right]}_\text{Second-order}.
    \label{identity0}
\end{align}
where the expectation is taken with respect to $(r,\bm{a})\sim\mathcal{P}.$
\label{TaylorLemma0}
\end{lemma}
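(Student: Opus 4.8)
The plan is to prove the identity by working pointwise inside the expectation and then integrating. Fix $(r,\bm{a})$ and abbreviate $s^* = \bm{a}^\top\p^*$ and $s = \bm{a}^\top\p$. By definition of $h$ and $f$, we have $f(\p) - f(\p^*) = \E\big[ h(\p,(r,\bm{a})) - h(\p^*,(r,\bm{a}))\big]$, and since the $\bm{d}^\top\p$ part is linear, the only nontrivial contribution is $\E\big[(r-s)^+ - (r-s^*)^+\big]$ plus the linear term $\bm{d}^\top(\p-\p^*)$. So the whole identity reduces to establishing, for each fixed $(r,\bm{a})$, the scalar identity
\begin{equation}
(r-s)^+ - (r-s^*)^+ = -\,I(r>s^*)\,(s-s^*) + \int_{s}^{s^*}\big(I(r>v) - I(r>s^*)\big)\,dv,
\label{eq:scalarid}
\end{equation}
and then taking expectations: the linear term $\bm{d}^\top(\p-\p^*)$ combines with $\E[-\bm{a}^\top(\p-\p^*) I(r>s^*)]$ to give exactly $\E[\phi(\p^*,(r,\bm{a}))^\top](\p-\p^*)$ by the stated formula for $\phi$, and the integral term becomes the second-order term verbatim.

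First I would verify \eqref{eq:scalarid}. Use the elementary fact that for a real number $w$, $w^+ = \int_0^{w} I(v' < w)\,dv' = \int_{-\infty}^{0}\big(I(v'<w) - I(v'<0)\big)dv'$ — more usefully, apply the Knight-type identity $w^+ - z^+ = -(w-z)I(z<0) + \int_0^{w-z}\big(I(z < -t) - I(z<0)\big)dt$ for scalars, which is exactly the one-dimensional version of what is being claimed. Concretely, with $w = r - s$ and $z = r-s^*$, we have $w - z = s^* - s$, and $I(z<0) = I(r<s^*)$; substituting and changing the integration variable from $t$ to $v = s^* - t$ (so $dv = -dt$, limits $t: 0 \to s^*-s$ become $v: s^* \to s$) converts the integral to $\int_s^{s^*}\big(I(r > v) - I(r>s^*)\big)dv$ after also accounting for the boundary cases $I(z\le 0)$ versus $I(z<0)$, which differ only on a set where $r = s^*$; since the integrand and the left side are both continuous in $r$ this causes no issue. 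I would prove the scalar identity by checking the two cases $s \le s^*$ and $s > s^*$ separately, or simply by differentiating both sides in $s$ and matching at $s = s^*$.

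Then I would assemble the vector statement. Replacing $s = \bm{a}^\top\p$, $s^* = \bm{a}^\top\p^*$ in \eqref{eq:scalarid}, adding $\bm{d}^\top(\p - \p^*)$ to both sides, and taking $\E_{(r,\bm{a})\sim\mathcal{P}}$, the left side is $f(\p) - f(\p^*)$. On the right, the first-order part is $\bm{d}^\top(\p-\p^*) - \E\big[I(r>\bm{a}^\top\p^*)\,\bm{a}^\top(\p-\p^*)\big] = \E\big[\big(\bm{d} - \bm{a}\,I(r>\bm{a}^\top\p^*)\big)^\top\big](\p-\p^*) = \E[\phi(\p^*,(r,\bm{a}))^\top](\p-\p^*)$, using the displayed expression for $\phi$ with $\bm{u} = (r,\bm{a})$. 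The second-order part is $\E\big[\int_{\bm{a}^\top\p}^{\bm{a}^\top\p^*}\big(I(r>v) - I(r>\bm{a}^\top\p^*)\big)dv\big]$, which is exactly what appears in \eqref{identity0}. Finally I would note that the interchange of expectation and the finite integral is justified because the integrand is bounded (it takes values in $\{-1,0,1\}$) and the integration domain has finite length $|\bm{a}^\top(\p - \p^*)| \le \bar{a}(\|\p\|_2 + \|\p^*\|_2)$, which is finite, so Fubini applies.

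I do not expect any serious obstacle here; the only mild subtlety is the careful bookkeeping of strict versus non-strict inequalities in the indicator functions at the single point $r = \bm{a}^\top\p^*$ (equivalently $w = z$ or $z = 0$), and confirming that the chosen sub-gradient convention for $\phi$ at the kink matches the boundary term produced by the Knight identity. This is resolved by observing that any discrepancy lives on a measure-zero-in-$v$ set inside the integral and on a boundary case that does not affect the expectation, so the identity holds as stated for every $\p \in \mathbb{R}^m$ regardless of $\mathcal{P}$ — which is precisely the distribution-free feature the lemma advertises.
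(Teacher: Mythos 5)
Your proof is correct and follows essentially the same route as the paper's: reduce to the pointwise identity $(r-\bm{a}^\top\p)^+ - (r-\bm{a}^\top\p^*)^+ = \int_{\bm{a}^\top\p}^{\bm{a}^\top\p^*} I(r>v)\,dv$ (your Knight-identity step is just this fact in disguise), cancel the linear terms against the indicator to form $\phi$, and take expectations. The paper verifies the same scalar identity by direct algebra rather than citing Knight, and notes that the identity is exact pointwise (the strict-versus-nonstrict indicator issue only affects a Lebesgue-null set of $v$ inside the integral), so no measure-theoretic caveat on $\mathcal{P}$ is needed.
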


Proposition \ref{strConvex} applies Assumption \ref{assump2} to the identity (\ref{identity0}) and it leads to a local property around $\p^*$ for the function $f(\bm{p})$. Our motivation for the proposition is to provide more intuitions for Assumption \ref{assump2} from a technical perspective. Moreover, the proposition asserts the uniqueness of $\bm{p}^*$ which makes our notion of convergence to $\bm{p}^*$ well-defined. 

\begin{proposition}[Growth and Smoothness of $f(\bm{p})$]
Under Assumption \ref{assump1} and \ref{assump2}, for $\bm{p}\in \Omega_p,$ 
\begin{equation}
    \frac{\lambda\lambda_{\min}}{2} \|\p-\p^*\|_2^2 \le f(\p)- f(\p^*) -\nabla f(\bm{p}^*)(\p-\p^*)
\le\frac{\mu\bar{a}^2}{2} \|\p-\p^*\|_2^2. \label{twoSide}
\end{equation}
Moreover, the optimal solution $\bm{p}^*$ to the stochastic program (\ref{asymProblem}) is unique.
\label{strConvex}
\end{proposition}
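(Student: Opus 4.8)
The plan is to start from Lemma \ref{TaylorLemma0}, which already isolates the quantity appearing in \eqref{twoSide} as the ``Second-order'' term
$$ f(\p)-f(\p^*)-\E[\phi(\p^*,(r,\bm{a}))^\top](\p-\p^*) \;=\; \E\!\left[\int_{\bm{a}^\top\p}^{\bm{a}^\top\p^*}\bigl(I(r>v)-I(r>\bm{a}^\top\p^*)\bigr)\,dv\right], $$
and then to bound this expression from both sides using Assumption \ref{assump2}. First I would condition on $\bm{a}$ and move the conditional expectation inside the (finite, uniformly bounded) integral by Fubini, so the inner integrand becomes $\prob(r>v\mid\bm{a})-\prob(r>\bm{a}^\top\p^*\mid\bm{a})$. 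Since $v\mapsto\prob(r>v\mid\bm{a})$ is non-increasing, a short case analysis according to whether $\bm{a}^\top\p$ lies below or above $\bm{a}^\top\p^*$ shows the conditional integral is always nonnegative. The crucial point is that for every intermediate value $v$ between $\bm{a}^\top\p$ and $\bm{a}^\top\p^*$ there is a point $\p_v$ on the segment $[\p,\p^*]$ with $\bm{a}^\top\p_v=v$, and this segment lies in $\Omega_p$ because $\p\in\Omega_p$ by hypothesis and $\p^*\in\Omega_p$ by Proposition \ref{basicProp}(c); hence Assumption \ref{assump2}(b) (applied at $\p_v$, with $\bm{d}=\bm{b}/n\in\Omega_d$) gives $\lambda|v-\bm{a}^\top\p^*|\le|\prob(r>v\mid\bm{a})-\prob(r>\bm{a}^\top\p^*\mid\bm{a})|\le\mu|v-\bm{a}^\top\p^*|$.

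Integrating this over the segment of length $|\bm{a}^\top(\p-\p^*)|$ yields $\frac{\lambda}{2}(\bm{a}^\top(\p-\p^*))^2 \le \E[\,\cdot\mid\bm{a}\,] \le \frac{\mu}{2}(\bm{a}^\top(\p-\p^*))^2$, and taking the expectation over $\bm{a}$ turns the quadratic forms into $(\p-\p^*)^\top\bm{M}(\p-\p^*)$. The lower bound then follows from $(\p-\p^*)^\top\bm{M}(\p-\p^*)\ge\lambda_{\min}\|\p-\p^*\|_2^2$ (Assumption \ref{assump2}(a)) and the upper bound from $\E[(\bm{a}^\top(\p-\p^*))^2]\le\E[\|\bm{a}\|_2^2]\,\|\p-\p^*\|_2^2\le\bar a^2\|\p-\p^*\|_2^2$ (Assumption \ref{assump1}(b)), which is exactly \eqref{twoSide}.

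For the uniqueness claim I would first note that $\E[\phi(\p^*,(r,\bm{a}))]=\bm{d}-\E[\bm{a}\,I(r>\bm{a}^\top\p^*)]$ is a subgradient of $f$ at $\p^*$ (take expectations in the subgradient inequality for $h(\cdot,\bm{u})$), and in fact it is the \emph{gradient}: Assumption \ref{assump2}(b) forces $v\mapsto\prob(r>v\mid\bm{a})$ to be continuous at $\bm{a}^\top\p^*$, so $r\mid\bm{a}$ has no atom there and $f(\p)=\bm{d}^\top\p+\E[(r-\bm{a}^\top\p)^+]$ is differentiable at $\p^*$ with $\nabla f(\p^*)=\E[\phi(\p^*,(r,\bm{a}))]$. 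The first-order optimality condition for the stochastic program \eqref{asymProblem} then gives $\nabla f(\p^*)\ge\bm{0}$ and $\nabla f(\p^*)^\top\p^*=0$, hence $\nabla f(\p^*)^\top(\q-\p^*)\ge 0$ for all $\q\ge\bm{0}$. If $\p^{**}$ is any other optimal solution, it lies in $\Omega_p$ (the argument of Proposition \ref{basicProp}(c) applies to it as well), convexity gives $\nabla f(\p^*)^\top(\p^{**}-\p^*)\le f(\p^{**})-f(\p^*)=0$, so $\nabla f(\p^*)^\top(\p^{**}-\p^*)=0$; substituting $\p=\p^{**}$ into the left inequality of \eqref{twoSide} yields $0=f(\p^{**})-f(\p^*)-\nabla f(\p^*)^\top(\p^{**}-\p^*)\ge\frac{\lambda\lambda_{\min}}{2}\|\p^{**}-\p^*\|_2^2$, forcing $\p^{**}=\p^*$.

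I expect the main obstacle to be the treatment of the Second-order term: justifying the Fubini interchange, keeping the sign bookkeeping straight when $\bm{a}^\top\p$ and $\bm{a}^\top\p^*$ appear in arbitrary order, and above all verifying that Assumption \ref{assump2}(b) genuinely governs the integrand at \emph{every} intermediate $v$ — this is where convexity of $\Omega_p$ and the inclusion $\p,\p^*\in\Omega_p$ are indispensable. The uniqueness step is comparatively routine once one grants that $f$ is differentiable at $\p^*$ (one could alternatively route it through the complementarity Assumption \ref{assump2}(c)), but it does use strictly more than \eqref{twoSide} alone, namely the first-order stationarity of $\p^*$.
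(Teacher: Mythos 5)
Your proposal is correct and follows essentially the same route as the paper: both bound the second-order term of Lemma \ref{TaylorLemma0} by conditioning on $\bm{a}$, applying Assumption \ref{assump2}(b) pointwise inside the integral, integrating the linear bound to a quadratic in $\bm{a}^\top(\p-\p^*)$, and then passing to $\lambda_{\min}$ and $\bar a^2$ via Assumptions \ref{assump2}(a) and \ref{assump1}(b); uniqueness likewise follows in both arguments from the nonnegativity of the first-order term (which you derive via stationarity and the paper reads off from the binding/non-binding structure of Assumption \ref{assump2}(c)) together with the strict quadratic growth. Your explicit justification that intermediate values $v$ correspond to points $\p_v$ on the segment $[\p,\p^*]\subset\Omega_p$, so that Assumption \ref{assump2}(b) genuinely applies, is a point the paper leaves implicit.
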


The proposition gives a technical interpretation for the distributional conditions in Assumption \ref{assump2}. Essentially, the role of the assumption on the distribution $\mathcal{P}$ is to impose a locally strong convexity and smoothness around $\bm{p}^*$. This is weaker than the classic notion of strong convexity and smoothness for convex functions which requires the inequality (\ref{twoSide}) to hold globally. Assumption \ref{assump2} (b) and Proposition \ref{strConvex} both concern a local property for the optimal solution $\bm{p}^*.$ As we will see in the later chapter, this local property on $f(\bm{p})$ is crucial and sufficient to ensure a fast convergence rate of $\bm{p}_n^*$ and a sharp regret bound for OLP algorithms.

We use the notation $\Xi$ to denote the family of distributions that satisfy Assumption \ref{assump1} and \ref{assump2}. In the rest of the paper except for Section \ref{AHDLA}, all the theoretical results on the dual convergence and the analyses of OLP algorithms are established under Assumption \ref{assump1} and \ref{assump2}. In Section \ref{AHDLA}, we will present a stronger version of Assumption \ref{assump2} and analyze the action-history-dependent algorithm accordingly.

\subsection{Dual Convergence}

Now, we discuss the convergence of $\bm{p}_n^*$ to $\bm{p}^*$. 
First, the SAA function $f_n(\bm{p})$ can be expressed by
$$f_n(\bm{p}) = \frac{1}{n}\sum_{j=1}^n h(\bm{p}, \bm{u}_j)$$
where $\bm{u}_j = (r_j, \bm{a}_{j})$ and the function $h$ is as defined earlier. Lemma \ref{TaylorLemma} is a sample average version of Lemma \ref{TaylorLemma0} and it represents the difference between $f_n(\p)$ and $f_n(\p^*)$ with the sub-gradient function $\phi$. 

\begin{lemma}
For any $\bm{p} \in \mathbb{R}^m,$ we have the following identity,
\begin{align}
    f_n(\p)- f_n(\p^*) & = \underbrace{ \frac{1}{n}\sum_{j=1}^n \phi(\p^*,\bm{u}_j)^\top(\p-\p^*)}_\text{First-order} +  \underbrace{\frac{1}{n} \sum_{j=1}^n \int_{\bm{a}_j^\top \p}^{\bm{a}_j^\top \p^*}\left(I(r_j>v) - I(r_j>\bm{a}_j^\top\p^*) \right)dv}_\text{Second-order}.
    \label{identity}
\end{align}
\label{TaylorLemma}
\end{lemma}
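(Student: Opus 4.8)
The plan is to prove Lemma \ref{TaylorLemma} by establishing the corresponding identity term-by-term for a single sample index $j$, and then averaging. Concretely, fix $j$ and write $\bm{u}_j = (r_j, \bm{a}_j)$. I would like to show that for every $j$,
\begin{equation*}
h(\p, \bm{u}_j) - h(\p^*, \bm{u}_j) = \phi(\p^*, \bm{u}_j)^\top (\p - \p^*) + \int_{\bm{a}_j^\top \p}^{\bm{a}_j^\top \p^*} \left( I(r_j > v) - I(r_j > \bm{a}_j^\top \p^*) \right) dv,
\end{equation*}
after which summing over $j = 1, \dots, n$ and dividing by $n$ gives (\ref{identity}), since $f_n(\p) = \frac{1}{n}\sum_{j=1}^n h(\p, \bm{u}_j)$. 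Note that this is exactly the pathwise (pre-expectation) version of Lemma \ref{TaylorLemma0}; indeed, if I prove the per-sample identity, then taking $\E_{(r,\bm{a})\sim\mathcal{P}}$ of both sides recovers Lemma \ref{TaylorLemma0} as well, so the two lemmas share a single underlying computation.

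The key step is to unpack the definition $h(\p, \bm{u}_j) = \bm{d}^\top \p + (r_j - \bm{a}_j^\top \p)^+$. The linear part $\bm{d}^\top \p$ contributes exactly $\bm{d}^\top(\p - \p^*)$ to the left-hand side, which should be absorbed into the first-order term together with the linear contribution of $\phi(\p^*, \bm{u}_j) = \bm{d} - \bm{a}_j \, I(r_j > \bm{a}_j^\top \p^*)$. So it remains to handle the ReLU part: I must show
\begin{equation*}
(r_j - \bm{a}_j^\top \p)^+ - (r_j - \bm{a}_j^\top \p^*)^+ = -\bm{a}_j^\top(\p - \p^*)\, I(r_j > \bm{a}_j^\top \p^*) + \int_{\bm{a}_j^\top \p}^{\bm{a}_j^\top \p^*} \left( I(r_j > v) - I(r_j > \bm{a}_j^\top \p^*) \right) dv.
\end{equation*}
The cleanest route is the Knight identity cited in the paper: for scalars, $(x - t)^+ - (x)^+ = -t\, I(x > 0) + \int_0^t \left( I(x \le s) - I(x \le 0) \right) ds$, or an equivalent sign-adjusted form. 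I would set $x = r_j - \bm{a}_j^\top \p^*$ (so $(x)^+ = (r_j - \bm{a}_j^\top \p^*)^+$) and $t = \bm{a}_j^\top(\p - \p^*)$ (so $x - t = r_j - \bm{a}_j^\top \p$). Applying the scalar identity and then performing the change of variables $v = \bm{a}_j^\top \p^* + s$ inside the integral — which maps $s \in [0, t]$ to $v$ running from $\bm{a}_j^\top \p^*$ to $\bm{a}_j^\top \p$, hence the stated limits from $\bm{a}_j^\top \p$ to $\bm{a}_j^\top \p^*$ after a sign flip — converts $I(x \le s)$ into $I(r_j - \bm{a}_j^\top \p^* \le v - \bm{a}_j^\top \p^*) = I(r_j \le v)$, and similarly $I(x \le 0)$ into $I(r_j \le \bm{a}_j^\top \p^*)$. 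Rewriting $I(r_j \le v) - I(r_j \le \bm{a}_j^\top \p^*) = -\big(I(r_j > v) - I(r_j > \bm{a}_j^\top \p^*)\big)$ and reconciling the orientation of the integral yields exactly the claimed second-order term.

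The main obstacle, and the only place that demands care, is bookkeeping the signs and the orientation of the integral, since $\bm{a}_j^\top(\p - \p^*)$ can be positive or negative (the data coefficients are not assumed nonnegative), so the interval of integration may run either direction; I would verify the scalar Knight identity holds for $t$ of either sign — it does, because one can check it by differentiating both sides in $t$ and matching at $t = 0$, with the integral interpreted as $\int_0^t$ regardless of the sign of $t$. A secondary subtlety is the measure-zero ambiguity of the indicators when $r_j = \bm{a}_j^\top \p$ or $r_j = \bm{a}_j^\top \p^*$: this does not affect the integral (Lebesgue measure zero) and the choice of $\phi$ as a fixed subgradient (taking $I(u_0 > \sum u_j p_j)$ with strict inequality) is consistent, so the identity holds pointwise in $r_j$ off a finite set and everywhere as an identity of the stated form. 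Once the per-sample identity is in hand, averaging is immediate and the proof concludes.
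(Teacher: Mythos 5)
Your proposal is correct and matches the paper's argument: the paper likewise proves the per-sample identity $h(\p,\bm{u}_j)-h(\p^*,\bm{u}_j)=\phi(\p^*,\bm{u}_j)^\top(\p-\p^*)+\int_{\bm{a}_j^\top\p}^{\bm{a}_j^\top\p^*}\left(I(r_j>v)-I(r_j>\bm{a}_j^\top\p^*)\right)dv$ (in the proof of Lemma \ref{TaylorLemma0}, which it explicitly notes is the Knight identity) and then sums over $j$. The only cosmetic difference is that the paper obtains the ReLU step directly from $(r-\bm{a}^\top\p)^+-(r-\bm{a}^\top\p^*)^+=\int_{\bm{a}^\top\p}^{\bm{a}^\top\p^*}I(r>v)\,dv$ rather than invoking the scalar Knight identity with a change of variables, which is the same computation.
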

In the following, we will establish the convergence of $\p_n^*$ based on an analysis of the identity (\ref{identity}). The idea is to show that the right-hand-side of (\ref{identity}) concentrates around its expectation as on the right-hand-side of (\ref{identity0}). The following two propositions analyze the first-order and second-order terms respectively. Proposition \ref{pro:gradient} tells that the sample average sub-gradient $\frac{1}{n}\sum_{j=1}^n \phi(\p^*,\bm{u}_j)$ stays close to its expectation $\nabla f(\p)$ \eqref{f_sub_gradient} evaluated at $\bm{p}^*$ with high probability. Proposition \ref{pro:Hessian} states that the second-order term -- the integral on the right hand of (\ref{identity}), is uniformly lower bounded by a strongly convex quadratic function with high probability. Intuitively, the analysis only involves the local property of the function $f_n$ around $\bm{p}^*.$ This explains why we impose only local (but not global) conditions on the function $f$ in Assumption \ref{assump2}. Specifically, Assumption \ref{assump2} (a) and (b) concern the Hessian, while Assumption \ref{assump2} (c) concerns the gradient, with both being evaluated at $\bm{p}^*.$

\begin{proposition} 
\label{pro:gradient}
We have 
$$\prob\left(\Bigg\|\frac{1}{n}\sum_{j=1}^n \phi(\p^*,\bm{u}_j) - \nabla f(\p^*)\Bigg\|_2 \le \epsilon \right) \ge 1-2m \exp\left(-\frac{n\epsilon^2}{2\bar{a}^2m}\right)$$
hold for any $\epsilon>0$, all $n>m$ and $\mathcal{P} \in \Xi$. 
\end{proposition}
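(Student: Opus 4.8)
The plan is to recognize $\frac{1}{n}\sum_{j=1}^n \phi(\p^*,\bm{u}_j)$ as an empirical mean of i.i.d.\ bounded random vectors in $\R^m$ whose expectation is exactly $\nabla f(\p^*)$, and then apply a coordinate-wise concentration bound combined with a union bound. Concretely, recall from the excerpt that $\phi(\p^*,\bm{u}_j) = (d_1,\dots,d_m)^\top - (a_{1j},\dots,a_{mj})^\top I(r_j > \bm{a}_j^\top \p^*)$, and since $f(\p)=\E[h(\p,\bm{u})]$ with $\phi=\partial h/\partial p$, we have $\E[\phi(\p^*,\bm{u}_j)] = \nabla f(\p^*)$ (interchanging expectation and the subgradient is valid here because $h$ is convex in $\p$ and the nondifferentiability set $\{r = \bm{a}^\top\p^*\}$ is handled as part of the earlier setup). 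So the $i$-th coordinate of $\phi(\p^*,\bm{u}_j) - \nabla f(\p^*)$ is a mean-zero random variable, and the $j$-index sum is a sum of $n$ i.i.d.\ such variables.

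First I would bound each coordinate: $|\phi_i(\p^*,\bm{u}_j) - d_i| = |a_{ij}| \cdot I(r_j > \bm{a}_j^\top\p^*) \le |a_{ij}| \le \|\bm{a}_j\|_2 \le \bar{a}$ by Assumption~\ref{assump1}(b), hence the centered variable $\phi_i(\p^*,\bm{u}_j) - \nabla_i f(\p^*)$ is bounded in an interval of length at most $2\bar a$ (in fact one can get a slightly better range, but $\bar a$ suffices). Next I would apply Hoeffding's inequality to the $i$-th coordinate: for each fixed $i$,
\[
\prob\left(\left|\frac{1}{n}\sum_{j=1}^n \phi_i(\p^*,\bm{u}_j) - \nabla_i f(\p^*)\right| > \frac{\epsilon}{\sqrt{m}}\right) \le 2\exp\left(-\frac{2 n (\epsilon/\sqrt m)^2}{(2\bar a)^2}\right) = 2\exp\left(-\frac{n\epsilon^2}{2\bar a^2 m}\right).
\]
Then a union bound over $i=1,\dots,m$ shows that with probability at least $1 - 2m\exp(-n\epsilon^2/(2\bar a^2 m))$, every coordinate deviation is at most $\epsilon/\sqrt m$ in absolute value, which forces $\|\frac1n\sum_j \phi(\p^*,\bm u_j) - \nabla f(\p^*)\|_2 \le \sqrt{m \cdot (\epsilon/\sqrt m)^2} = \epsilon$. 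This yields exactly the claimed bound, uniformly over $\mathcal{P}\in\Xi$ since the only distributional input used is the deterministic bound $\bar a$ and $n>m$.

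There is no serious obstacle here; this is a routine concentration argument. The one point requiring a little care is the identity $\E[\phi(\p^*,\bm{u}_j)] = \nabla f(\p^*)$ — one must justify that differentiation passes through the expectation despite $h$ being only subdifferentiable (not differentiable) along $\{r=\bm a^\top\p^*\}$; under Assumption~\ref{assump2}(b) the conditional law of $r$ given $\bm a$ is absolutely continuous so this event has probability zero, $f$ is differentiable at $\p^*$, and the dominated convergence / standard interchange argument goes through. A second minor point is the choice of deviation threshold $\epsilon/\sqrt m$ per coordinate to land exactly on the stated constant $2\bar a^2 m$ in the exponent; any reader can adjust constants, but this particular split is what reproduces the statement verbatim.
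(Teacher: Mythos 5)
Your proof is correct and follows essentially the same route as the paper's: coordinate-wise Hoeffding with the range $[d_i-\bar a, d_i+\bar a]$ for each component of $\phi(\p^*,\bm u_j)$, deviation threshold $\epsilon/\sqrt m$ per coordinate, and a union bound over the $m$ coordinates. The only addition is your (welcome) remark justifying $\E[\phi(\p^*,\bm u_j)]=\nabla f(\p^*)$, which the paper asserts without comment.
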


Proposition \ref{pro:gradient} is obtained by a direct application of a concentration inequality. Notably, the probability bound on right-hand-side is not dependent on the distribution $\mathcal{P}$ and the inequality holds for any $\epsilon>0.$ From the optimality condition of the stochastic program, we know $\left(\nabla f(\p^*)\right)_i= 0$ for $i \in I_B$ and $\left(\nabla f(\p^*)\right)_i > 0$ for $i \in I_N$. Therefore, the proposition implies that the sample average sub-gradient (first-order term in (\ref{identity})) concentrates around zero for binding dimensions and concentrates around a positive value for non-binding dimensions. As noted earlier, the binding and non-binding dimensions are defined by the stochastic program (\ref{asymProblem}).

\begin{proposition}
\label{pro:Hessian}
We have 
\begin{align}
  & \prob\Bigg(\frac{1}{n} \sum_{j=1}^n \int_{\bm{a}_j^\top \p}^{\bm{a}_j^\top \p^*}\left(I(r_j>v)-I(r_j>\bm{a}_j^\top\p^*)\right)dv \ge -\epsilon^2 -2\epsilon\bar{a}\|\p^*-{\p}\|_2 + \frac{\lambda\lambda_{\min}}{32} \left\|\p^* - {\p}\right\|_2^2 \label{second_order_p}
  \\
  & \hspace{4.5cm} \text{ \ for all \ } \p \in \Omega_p\Bigg)  
  \ge 1- m\exp\left(-\frac{n\lambda_{\min}^2}{4\bar{a}^2}\right) - 2\left(2N\right)^m \cdot \exp\left(-\frac{n\epsilon^2}{2}\right)  \nonumber 
\end{align}
holds for any $\epsilon>0,$ $n>m$ and $\mathcal{P}\in \Xi.$ Here
$$N = \Bfloor{\log_q \left(\frac{\underline{d}\epsilon^2}{\bar{a}\bar{r}\sqrt{m}}\right)}+1, \ \ \ q = \max \left\{\frac{1}{1+\frac{1}{\sqrt{m}}} , \frac{1}{1+\frac{1}{\sqrt{m}}\left(\frac{\lambda\lambda_{\min}}{8\mu \bar{a}^2}\right)^{\frac{1}{3}}} \right\}$$
where $\floor{\cdot}$ is the floor function. 
\end{proposition}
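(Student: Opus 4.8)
The plan is to treat the empirical ``second-order'' function (the left-hand side of (\ref{second_order_p})) as a sample average that inherits, uniformly over $\Omega_p$, the local curvature its mean enjoys. Write
\begin{align*}
g_j(\p) &\coloneqq \int_{\bm{a}_j^\top\p}^{\bm{a}_j^\top\p^*}\bigl(I(r_j>v)-I(r_j>\bm{a}_j^\top\p^*)\bigr)\,dv, & G_n(\p) &\coloneqq \frac1n\sum_{j=1}^n g_j(\p).
\end{align*}
First I would record the structural facts about the summands: $g_j(\p)\ge 0$ always (the integrand has the sign of the direction of integration); as a function of $t=\bm{a}_j^\top\p$, $g_j$ is piecewise linear with slopes in $\{-1,0,1\}$ and minimum value $0$, so $g_j(\cdot)$ and $G_n(\cdot)$ are convex on $\R^m$, with $G_n(\p^*)=0$, and each $g_j$ is $\bar a$-Lipschitz by Assumption \ref{assump1}(b); moreover $\E\,g_j(\p)$ is exactly the second-order term of identity (\ref{identity0}). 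Convexity together with $G_n(\p^*)=0$ is the key tool for the uniformization step: it makes $\rho\mapsto G_n(\p^*+\rho\bm{\omega})/\rho$ nondecreasing along every ray from $\p^*$.

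The next step is a pointwise curvature bound. Fix $\p\in\Omega_p$ and set $s_j=\bm{a}_j^\top(\p-\p^*)$. An elementary case analysis gives $g_j(\p)\ge \tfrac12|s_j|\,I(E_j)$, where $E_j$ is the event that $r_j$ falls in the half of the interval with endpoints $\bm{a}_j^\top\p$ and $\bm{a}_j^\top\p^*$ that is nearer to $\bm{a}_j^\top\p^*$; the far endpoint of that half is $\bm{a}_j^\top\bigl(\tfrac12(\p+\p^*)\bigr)$ with $\tfrac12(\p+\p^*)\in\Omega_p$, so Assumption \ref{assump2}(b) and monotonicity of a survival function give $\prob(E_j\mid\bm{a}_j)\ge\tfrac{\lambda}{2}|s_j|$, hence $\E[g_j(\p)\mid\bm{a}_j]\ge\tfrac{\lambda}{4}s_j^2$. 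Averaging, $\tfrac1n\sum_j\E[g_j(\p)\mid\bm{a}_j]\ge\tfrac{\lambda}{4}(\p-\p^*)^\top\bigl(\tfrac1n\sum_j\bm{a}_j\bm{a}_j^\top\bigr)(\p-\p^*)$; a matrix concentration bound for the i.i.d.\ rank-one terms $\bm{a}_j\bm{a}_j^\top$ (each with spectral norm at most $\bar a^2$, mean $\M\succeq\lambda_{\min}\bm{I}$ by Assumption \ref{assump2}(a)) forces $\tfrac1n\sum_j\bm{a}_j\bm{a}_j^\top\succeq\tfrac{\lambda_{\min}}{2}\bm{I}$ off an event of probability at most $m\exp(-n\lambda_{\min}^2/(4\bar a^2))$ — this is the source of the $\epsilon$-free term — so that $\tfrac1n\sum_j\E[g_j(\p)\mid\bm{a}_j]\ge\tfrac{\lambda\lambda_{\min}}{8}\|\p-\p^*\|_2^2$ there. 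Finally, conditionally on $\{\bm{a}_j\}$ the variables $g_j(\p)\in[0,\bar a\|\p-\p^*\|_2]$ are independent, so Hoeffding gives $G_n(\p)\ge\tfrac1n\sum_j\E[g_j(\p)\mid\bm{a}_j]-\epsilon\bar a\|\p-\p^*\|_2$ off an event of probability $\le\exp(-2n\epsilon^2)$. Combining, for each fixed $\p$, $G_n(\p)\ge\tfrac{\lambda\lambda_{\min}}{8}\|\p-\p^*\|_2^2-\epsilon\bar a\|\p-\p^*\|_2$ with high probability.

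The remaining — and delicate — step is to make this hold for all $\p\in\Omega_p$ simultaneously. A single $\epsilon$-net is exponentially large in $m$, and a crude net-and-Lipschitz argument fails because near $\p^*$ the quadratic signal $\|\p-\p^*\|_2^2$ is dominated by the $\bar a\cdot(\text{net radius})$ Lipschitz error. Instead I would use a multi-scale geometric net: in each coordinate keep the values $p_i^*$ and $p_i^*\pm\gamma q^\ell$ for $\ell=0,\dots,N-1$, with $\gamma$ of order $\bar r/\underline d$ (the diameter of $\Omega_p$) and $q$, $N$ as in the statement, chosen so the finest feature size is of order $\underline d\epsilon^2/(\bar a\sqrt m)$ (this is precisely why $N$ takes that logarithmic form) and the net has at most $(2N)^m$ points. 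Applying the previous step at every net point and union bounding produces the $2(2N)^m\exp(-n\epsilon^2/2)$ term. For an arbitrary $\p$, round each coordinate of $\p-\p^*$ toward $0$ to obtain a net point $\p'$: if $\|\p-\p^*\|_2$ is below the finest scale then $\p'=\p^*$, $G_n(\p)\ge0$, and the asserted lower bound is already $\le 0$ (here the additive $-\epsilon^2$ is spent); otherwise transfer the bound from $\p'$ to $\p$ using the ray-monotonicity of $G_n$ (convexity plus $G_n(\p^*)=0$) together with the $\bar a$-Lipschitz property. The exact value of $q$, including the cube-root factor $(\lambda\lambda_{\min}/(8\mu\bar a^2))^{1/3}$, is what makes the quadratic constant degrade only from $\tfrac{\lambda\lambda_{\min}}{8}$ to $\tfrac{\lambda\lambda_{\min}}{32}$ while the residual errors stay within $\epsilon^2+2\epsilon\bar a\|\p-\p^*\|_2$; collecting the two failure probabilities then gives the stated bound.

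I expect this last step to be the main obstacle: converting the fixed-$\p$ curvature estimate into a statement uniform over all of $\Omega_p$, where the curvature is ``flat'' near $\p^*$, using only polylogarithmically-many (times $2^m$) net points and without an $O(\bar a\|\p-\p^*\|_2)$ leakage in the transfer. Matching the multi-scale net against the convexity/ray-monotonicity of $G_n$ is what pins down the precise constants $q$, $N$ and $\tfrac{\lambda\lambda_{\min}}{32}$; by contrast the pointwise ingredients — the case analysis for $g_j$, the matrix-concentration bound, and the conditional Hoeffding step — are routine.
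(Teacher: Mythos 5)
Your overall architecture matches the paper's: a matrix-Hoeffding step to lower-bound the empirical Gram matrix $\frac1n\sum_j\Aa_j\Aa_j^\top$ by $\frac{\lambda_{\min}}{2}\bm{I}$ (producing the $\epsilon$-free failure term), a conditional-on-$\{\Aa_j\}$ Hoeffding bound at each fixed $\p$, and a Huber-style multi-scale covering of $\Omega_p$ by at most $(2N)^m$ geometrically shrinking cells whose finest scale determines $N$ and absorbs the $-\epsilon^2$. Your pointwise curvature estimate $\E[g_j(\p)\mid\Aa_j]\ge\frac{\lambda}{4}(\Aa_j^\top(\p-\p^*))^2$ is fine (the paper gets $\lambda/2$ by integrating Assumption \ref{assump2}(b) directly, but the constant is immaterial).

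The gap is in the transfer from net points to arbitrary $\p$, which you propose to do with convexity/ray-monotonicity of $G_n$ plus the deterministic $\bar a$-Lipschitz bound. Ray-monotonicity only transfers a lower bound between two points on the \emph{same ray} through $\p^*$; your net point $\p'$ is obtained by coordinatewise rounding and in general lies off the ray through $\p$, and $G_n$ is not coordinatewise monotone away from $\p^*$ (e.g.\ $(x-y)^2$ is convex and vanishes at the origin, yet increases when $(1,1)$ is rounded to $(1,0)$). So the off-ray displacement must be paid for with the Lipschitz bound, which costs $\bar a\cdot\mathrm{diam}(\text{cell})\asymp\bar a\frac{\sqrt m(1-q)}{q}\|\p-\p^*\|_2$ --- a term \emph{linear} in $\|\p-\p^*\|_2$ with an $\epsilon$-independent constant. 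For $\|\p-\p^*\|_2$ small (say of order $\sqrt\epsilon$) this swamps both the quadratic signal $\frac{\lambda\lambda_{\min}}{32}\|\p-\p^*\|_2^2$ and the allowed slack $\epsilon^2+2\epsilon\bar a\|\p-\p^*\|_2$, and no choice of $q$ fixes it; making the net fine enough for a deterministic Lipschitz fill-in would require polynomially-in-$1/\epsilon$ many points, destroying the $(2N)^m$ count. What the paper actually does (following Huber) is control the cell oscillation \emph{stochastically}: for each cell it defines $\Gamma_{kl}(r_j,\Aa_j)=\max_{\p\in\Omega_{kl}}\int_{\Aa_j^\top\p}^{\Aa_j^\top\p_{kl}}(\cdots)dv$, applies Hoeffding to $\frac1n\sum_j\Gamma_{kl}$, and bounds $\E[\Gamma_{kl}\mid\Aa_1,\dots,\Aa_n]\le\mu\bar a^2\,\|\p^*-\bar{\p}_{kl}\|_2\max_{\p\in\Omega_{kl}}\|\p-\p_{kl}\|_2$ using the \emph{upper}-bound side ($\mu$) of Assumption \ref{assump2}(b): the integrand is nonzero only on an event of conditional probability $O(\mu\cdot\text{interval length})$, so the expected cell supremum is \emph{quadratic} in the distance to $\p^*$ and can be absorbed into the signal by the choice of $q$. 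That stochastic, quadratic-in-distance control of the cell supremum is the missing ingredient --- it is exactly why $\mu$ appears in the formula for $q$ --- and your proposal never invokes the $\mu$ side of Assumption \ref{assump2}(b) in the uniformization step.
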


Proposition \ref{pro:Hessian} discusses the second-order term in (\ref{identity}). The inequality (\ref{second_order_p}) tells that the second-order term is uniformly lower bounded by a quadratic function with high probability. To prove the inequality for a fixed $\p$ can be easily done by a concentration argument as in Proposition \ref{pro:gradient}. The challenging part is to show that the inequality (\ref{second_order_p}) holds uniformly for all $\p\in \Omega_p.$ The idea here is to find a collection of sets that covers $\Omega_p$ and then analyze each covering set separately. We utilize the same covering scheme as in \citep{huber1967behavior} which is originally developed to analyze the consistency of non-standard maximum likelihood estimators. The advantage of this covering scheme is that it provides a tighter probability bound than the traditional $\epsilon$-covering scheme. The parameters involved in the proposition are dependent on the parameters in Assumption \ref{assump1} and \ref{assump2}: $\bar{a}, \bar{r}$ and $\underline{d}$ are specified in Assumption \ref{assump1}; $\lambda_{\min},\lambda$ and $\mu$ are specified in Assumption \ref{assump2}. For the newly introduced parameters, $q$ can be viewed as a constant parameter and $N$ is on the order of $\sqrt{m}\log m$. All of these parameters are not dependent on the specific distribution $\mathcal{P}$ and thus the result creates convenience for our later regret analysis of OLP algorithms. Importantly, both probability bounds in Proposition \ref{pro:gradient} and \ref{pro:Hessian} have an exponential term of $n$, and we  utilize this fact to establish the convergence rate of $\p_n^*$ as follows. 

With Proposition \ref{pro:gradient} and \ref{pro:Hessian}, the identity (\ref{identity}) can be written heuristically as
\begin{align}
    f_n(\p) - f_n(\p^*) & \ge \nabla f(\p^*)^\top (\p-\p^*) - \epsilon\|\p^*-{\p}\|_2 -\epsilon^2 -2\epsilon\bar{a}\|\p^*-{\p}\|_2 + \frac{\lambda\lambda_{\min}}{32} \left\|\p^* - {\p}\right\|_2^2  \nonumber \\
    & \ge -\epsilon^2 -(2\bar{a}+1)\epsilon\|\p^*-{\p}\|_2 + \frac{\lambda\lambda_{\min}}{32} \left\|\p^* - {\p}\right\|_2^2  \text{ \ uniformly for all \ $\p\in \Omega_p$} \label{heu}
\end{align}
with high probability, that is, \eqref{heu} is violated with a probability decreasing exponentially in $\epsilon$. Note that the right side in above is a quadratic function of $\p$. From the optimality of $\p_n^*$,
\begin{equation}
    f_n(\p_n^*)\le f_n(\p^*).
    \label{optP}
\end{equation}
By putting together (\ref{heu}) with (\ref{optP}) and then integrating with respect to $\epsilon$, we can obtain the following theorem. 



\begin{theorem}
Under Assumption \ref{assump1} and \ref{assump2}, there exists a constant $C$ such that 
\begin{gather*}
    \E\left[ \|\p^*_n - \p^*\|_2^2 \right] \le \frac{Cm \log m \log\log n}{n}
\end{gather*}
holds for all $n\ge \max\{m,3\}$, $m\ge2,$ and distribution $\mathcal{P} \in \Xi.$ Additionally,
$$\E \left[\|\p^*_n - \p^*\|_2 \right]\le C \sqrt{\frac{m \log m \log\log n}{n}}.$$ For the case of $m=1$, both results still hold without the $\log m$ term on the right-hand-side.
\label{expectation}
\end{theorem}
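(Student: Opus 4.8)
The plan is to upgrade the heuristic estimate (\ref{heu}) to a genuine high-probability bound on $\|\p_n^*-\p^*\|_2$ and then integrate that tail bound. First note, using Proposition \ref{basicProp}(c) together with the definition of $\Omega_p$, that $\p_n^*,\p^*\in\Omega_p$ and hence $\|\p_n^*-\p^*\|_2\le\|\p_n^*\|_1+\|\p^*\|_1\le 2\bar r/\underline d=:R$ holds deterministically. Fix $\epsilon>0$ and let $\mathcal E_\epsilon$ be the intersection of the events in Propositions \ref{pro:gradient} and \ref{pro:Hessian}; by the union bound $\prob(\mathcal E_\epsilon^c)\le g(\epsilon)$, where
\[
g(\epsilon):=2m\exp\!\Big(-\tfrac{n\epsilon^2}{2\bar a^2 m}\Big)+m\exp\!\Big(-\tfrac{n\lambda_{\min}^2}{4\bar a^2}\Big)+2\big(2N(\epsilon)\big)^m\exp\!\Big(-\tfrac{n\epsilon^2}{2}\Big),
\]
with $N(\epsilon)$ the quantity of Proposition \ref{pro:Hessian} (which, importantly, depends on $\epsilon$ through $\log_q(\underline d\epsilon^2/(\bar a\bar r\sqrt m))$). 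On $\mathcal E_\epsilon$, the uniform-in-$\p$ inequality (\ref{heu}) may be applied at $\p=\p_n^*\in\Omega_p$; combining it with the optimality inequality (\ref{optP}) gives, writing $D:=\|\p_n^*-\p^*\|_2$, the scalar quadratic inequality $\tfrac{\lambda\lambda_{\min}}{32}D^2-(2\bar a+1)\epsilon D-\epsilon^2\le 0$, whose resolution yields $D\le C_1\epsilon$ with $C_1$ depending only on $\lambda,\lambda_{\min},\bar a$. Therefore $\prob(\|\p_n^*-\p^*\|_2>C_1\epsilon)\le g(\epsilon)$ for every $\epsilon>0$.

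Next I would convert this into a bound in expectation by integrating over $\epsilon$. Writing $\E[D^2]=\int_0^\infty\prob(D^2>u)\,du$ and substituting $u=C_1^2\epsilon^2$, and using the deterministic bound $D\le R$ to cut off the integral,
\[
\E\big[\|\p_n^*-\p^*\|_2^2\big]\le 2C_1^2\int_0^{R/C_1}\epsilon\,\min\{1,g(\epsilon)\}\,d\epsilon .
\]
Split at a threshold $\epsilon_0$: on $[0,\epsilon_0]$ use $\min\{1,g\}\le 1$, contributing $\le\epsilon_0^2$; on $[\epsilon_0,R/C_1]$ use $\min\{1,g\}\le g(\epsilon)$, whose essentially Gaussian tail is controlled by its left endpoint, plus the summand $m\exp(-n\lambda_{\min}^2/(4\bar a^2))$ — which is independent of $\epsilon$ — integrating to $O(e^{-cn})=O(1/n)$. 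The threshold is chosen so that the dominant summand $2(2N(\epsilon))^m\exp(-n\epsilon^2/2)$ of $g$ is of order one at $\epsilon_0$, i.e.\ $\epsilon_0$ solves $n\epsilon_0^2\asymp m\log(2N(\epsilon_0))$. Since $N(\epsilon)$ grows only logarithmically in $1/\epsilon$ and polynomially in $m$, a solution with $\epsilon_0$ of order $\sqrt{1/n}$ satisfies $N(\epsilon_0)=O(\sqrt m\log n)$, hence $m\log(2N(\epsilon_0))=O(m\log m+m\log\log n)=O(m\log m\log\log n)$ and $\epsilon_0^2=O(m\log m\log\log n/n)$. Both halves of the split integral are then $O(m\log m\log\log n/n)$, giving the first claim after enlarging the constant $C$ to absorb the (bounded) regime of small $n$, where $D^2\le R^2$ already suffices.

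Finally, the second inequality follows from the first by Jensen's inequality (or Cauchy--Schwarz): $\E[\|\p_n^*-\p^*\|_2]\le\big(\E[\|\p_n^*-\p^*\|_2^2]\big)^{1/2}\le\sqrt{Cm\log m\log\log n/n}$.

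The main obstacle is the choice of the threshold $\epsilon_0$: because the failure bound $g(\epsilon)$ itself depends on $\epsilon$ through $N(\epsilon)$, $\epsilon_0$ is defined only implicitly and one must resolve a mild self-consistency. This is precisely where the refinement from $\log n$ to $\log\log n$ comes from: at $\epsilon_0\sim\sqrt{1/n}$ one has $N(\epsilon_0)\sim\log(1/\epsilon_0)\sim\log n$, so it is $\log(2N(\epsilon_0))\sim\log\log n$, rather than $\log n$ itself, that enters the exponent in $g$. A secondary technical point is that $g(\epsilon)\not\to 0$ as $\epsilon\to\infty$ (because of the $\epsilon$-free gradient-event term), so the integration genuinely needs the crude deterministic bound $\|\p_n^*-\p^*\|_2\le 2\bar r/\underline d$ to be truncated; the discarded tail is exponentially small in $n$ and harmless.
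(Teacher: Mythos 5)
Your proposal is correct and follows essentially the same route as the paper: the same two concentration events from Propositions \ref{pro:gradient} and \ref{pro:Hessian}, the same resolution of the quadratic inequality $\frac{\lambda\lambda_{\min}}{32}D^2-(2\bar a+1)\epsilon D-\epsilon^2\le 0$ to get $\prob(D>\kappa\epsilon)\le g(\epsilon)$, and the same integration of the tail over $\epsilon\in[0,\bar r/\underline d]$ (your explicit threshold $\epsilon_0$ and self-consistent choice $N(\epsilon_0)=O(\sqrt m\log n)$ is just a repackaging of the paper's substitution $\epsilon=\varepsilon\sqrt{m\log m\log\log n/n}$ and its integral lemmas). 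The only cosmetic difference is that you bound $\E[D^2]$ first and deduce the $L_2$ bound by Jensen, while the paper does the reverse order.
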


Theorem \ref{expectation} states the convergence rate in terms of the L$_2$ distance. The second inequality in the theorem can be directly implied from the first one, and here we present both inequalities for later usage. Also, to simplify the notations in the rest of the paper, we adopt the same constant for both inequalities without loss of generality. The theorem formally connects the dual LP and the stochastic program, by characterizing the distance between the optimal solutions of the two problems. As mentioned earlier, it resolves the open question of the convergence of dual optimal solutions in the OLP problem.  The conclusion and the proof of the theorem are all based on finite-sample argument, in parallel with the classic stochastic programming literature where asymptotic results are derived \citep{shapiro1993asymptotic, shapiro2009lectures}. The finite-sample property is crucial in the regret analysis of the OLP algorithms. In the Theorem \ref{representation} of the next section, we will show how the regret of an OLP algorithm can be upper bounded by the L$_2$ approximation error of $\bm{p}^*.$ This explains why we focus on the convergence and the approximation error of $\p_n^*$ instead of the function value $f(\p_n^*).$
As for the convergence rate, recall that Assumption \ref{assump2} (a) and (b) impose a curvature condition around the optimal solution $\p^*$. They are critical to the quadratic form (\ref{second_order_p}) in Proposition \ref{pro:Hessian} and consequently the convergence rate in Theorem \ref{expectation}. An alternative proof of the convergence results in Theorem \ref{expectation} may also be obtained by using the notion of \textit{local Rademacher Complexity}. \cite{bartlett2005local} proposed this local and empirical version of Rademacher complexity and employed the notion to derive the fast rate of estimator convergence. We remark that this possible alternative treatment also requires certain growth and smoothness condition around the optimal solution $\p^*$ as Assumption \ref{assump2} (a) and (b). 



Now, we discuss some implications of Theorem \ref{expectation} on the OLP problem. The theorem tells that the sequence of $\p_n^*$ will converge to $\p^*$, and that when $n$ gets large, $\p^*_n$ stays closer to $\p^*$. From an algorithmic perspective, an OLP algorithm can form an approximate for $\p^*$ based on the observed $t$ inputs at each time $t$. If $t$ is sufficiently large, the approximate dual price $\p_t^*$ should be very close to both $\p^*$ and $\p^*_n.$ Then, the algorithm can use $\p^*_t$ (as an approximate to the optimal $\p_n^*$) to decide the value of the current decision variable. This explains why OLP algorithms in literature always solve a scaled version of the offline LP (based on the observations available at time $t$). However, in the literature, due to the lack of convergence knowledge of the dual optimal solutions, papers devised other approaches to analyze the online objective value. Our convergence result explicitly characterizes the rate of the convergence and thus provides a powerful and natural instrument for the theoretical analysis of the online algorithms.

The dual convergence result also contributes to the literature of approximate algorithms for large-scale LPs.  Specifically, we can perform  one-time learning with the first $t$ inputs and then use $\p_t^*$ as an approximation for $\p_n^*$. In this way, we obtain an approximate algorithm for solving the original LP problem by only accessing the first $t$ columns $\{(r_j,\Aa_j)\}_{j=1}^t.$ The approximate algorithm can be viewed as a constraint sampling procedure \citep{de2004constraint, lakshminarayanan2017linearly} for the dual LP. It also complements the recent work \citep{vu2018random} in which approximate algorithms for large-scale LP are developed under certain statistical assumptions on the coefficients of the LP problem.

\section{Learning Algorithms for OLP}

\label{Algorithms}
\subsection{Dual-based Policy, Constraint Process, and Stopping Time}
\label{thresPolicy}
In this section, we present several online algorithms based on the dual convergence results. We first revisit the definition of online policies and narrow down our scope to a class of \textit{dual-based policies} that rely on the dual solutions. Specifically, at each time $t$, a vector $\p_t$ is computed based on historical data
$$\p_t = h_t(\mathcal{H}_{t-1})$$
where $\mathcal{H}_{t-1} = \{r_j, \bm{a}_j, x_j\}_{j=1}^{t-1}$.
Inspired by the optimality condition of the offline/static LP,  we attempt to set 
$$\tilde{x}_{t} = \begin{cases} 1, & \text{ if } r_{t} > \bm{a}_{t}^\top \bm{p}_t,\\
0, & \text{ if } r_{t} \le \bm{a}_{t}^\top \bm{p}_t.
\end{cases}$$ 
In other words, a threshold is set by the dual price vector $\bm{p}_t$. If the reward $r_t$ is larger than the threshold, we intend to accept the order. Then, we check the constraints satisfaction and assign
$${x}_{t} = \begin{cases} \tilde{x}_{t}, & \text{ if } \sum_{j=1}^{t-1} a_{ij}{x}_j + a_{it}\tilde{x}_t\le b_i,\ \text{ for } i=1,...,m,\\
0, & \text{ otherwise}.
\end{cases}$$ 
We formally define policies with this structure as a dual-based policy. We emphasize that in this dual-based policy class, $\p_t$ is first computed based on history (up to time $t-1$), and then $(r_t,\Aa_t)$ is observed. This creates a natural conditional independence $$(\tilde{x}_t, r_t,\Aa_t) \perp \mathcal{H}_{t-1} \big| \p_t.$$ 
This matches the setting in online convex optimization where at each time $t$, the online player makes her decision before we observe the function $f_t$ (See \citep{hazan2016introduction}). We will frequently resort to this conditional independence in the regret analysis. In this policy class, an online policy $\bm{\pi}$ could be fully specified by the sequence of mappings $h_{t}$'s, i.e., $\bm{\pi} = (h_1,...,h_{n}).$ To facilitate our analysis, we introduce the constraint process $\left\{\bm{b}_{t}\right\}_{t=1}^n$,
$$\bm{b}_{0} = \bm{b} = n\bm{d}$$
$$\bm{b}_{t} = \bm{b}_{t-1} - \Aa_t x_t.$$
In this way, $\bm{b}_{t} = \left(b_{1t},...,b_{mt}\right)^\top $ represents the vector of remaining resources at the end of the $t$-th period. In particular, $\bm{b}_{n} = \left(b_{1n},...,b_{mn}\right)^\top $ represents the remaining resources at the end of the horizon. By the definition of OLP, $\bm{b}_t \ge 0$ for $t=1,...,n.$ Also, the process of $\left\{\bm{b}_{t}\right\}_{t=0}^n$ is pertaining to the policy $\pi$. Based on the constraint process, we define 
$$\tau_s \coloneqq \min \{n\} \cup \left\{t\ge 1: \min_i b_{it} < s\right\}$$
for $s>0.$
In this way, $\tau_s$ denotes the first time that there are less than $s$ units for some type of constraints. Precisely, $\tau_s$ is a stopping time adapted to the process $\left\{\bm{b}_{t}\right\}_{t=1}^n$. Similar to the process $\bm{b}_t$, the stopping time $\tau_{s}$ is also pertaining to the policy $\pi.$ When executing an online policy, we do not close the business at the first time that some constraints are violated. This is because we are considering double-sided problems that include both buying and selling orders. If a certain type of resource is exhausted, we may accept selling orders containing that resource as a way of replenishment. We will see that a careful design of the algorithm will ensure the constraint violation/resource depletion only happens at the very end. So, the decisions afterward will not affect the cumulative revenue significantly. 

In the rest of this section, we first derive a generic upper bound for the regret of OLP algorithms and then present three OLP algorithms. These three algorithms all belong to the dual-based policy class, and their regret analyses all rely on the dual convergence and the generic upper bound. We restrict our attention to large-$n$ and small-$m$ setting, and the regret bounds will be presented with big-O notation which treats $m$ and the parameters in Assumption \ref{assump1} and \ref{assump2} as constants.

\subsection{Upper Bound for OLP Regret}

We first construct an upper bound for the offline optimal objective value. Consider the optimization problem
\begin{align} 
\max_{\p\ge \bm{0}} \ \ \ \ & \E \left[r I(r> \Aa^\top \p)\right] \label{detRelax} \\
\text{s.t. \ } & \E\left[\Aa I(r>\Aa^\top \p)\right] \le \Dd \nonumber
\end{align}
where the expectation is taken with respect to $(r, \bm{a}) \sim \mathcal{P}.$
There are two ways to interpret this optimization problem. On one hand, we can interpret this problem as a ``deterministic'' relaxation of the primal LP (\ref{primalLP}). We substitute both the objective and constraints of (\ref{primalLP}) with an expectation form expressed in dual variable $\p.$ On the other hand, we can view this optimization problem as the primal problem of the stochastic program (\ref{asymProblem}). The consideration of a deterministic form for an online decision making problem has appeared widely in the literature of network revenue management \citep{talluri1998analysis, jasin2013analysis, bumpensanti2018re}, dynamic pricing \citep{besbes2009dynamic, wang2014close, lei2014near, chen2018primal}, and bandits problem \citep{wu2015algorithms}. The idea is that when analyzing the regret of an online algorithm in such problems, the offline optimal value usually does not have a tractable form (such as the primal LP problem (\ref{primalLP})). The deterministic formulation serves as a tractable upper bound for the offline optimal value, and then the gap between the deterministic optimal and the online objective values is an upper bound for the regret of the online algorithm. Different from the literature, we consider the Lagrangian of the deterministic formulation to remove the constraints. Specifically,  define
$$g(\p) \coloneqq \E\left[r I(r> \Aa^\top \p) + \left(\Dd-\Aa I(r>\Aa^\top \p)\right)^\top \p^*\right],$$
where the expectation is taken with respect to $(r, \bm{a})\sim \mathcal{P}$ and $\p^*$ is the optimal solution to the stochastic program (\ref{asymProblem}).
We can view $g(\p)$ as the Lagrangian of the optimization problem (\ref{detRelax}) with a specification of the multiplier by $\p^*.$ Lemma \ref{gP} establishes that the deterministic formulation indeed provides an upper bound for the offline optimal value and that the optimization problems (\ref{asymProblem}) and (\ref{detRelax}) share the same optimal solution.
\begin{lemma}
Under Assumption \ref{assump1} and \ref{assump2}, we have 
$$\E R_n^* \le ng(\p^*)$$
$$g(\p^*) \ge g(\p)$$
for any $\p\ge0.$ Here $\p^*$ is the optimal solution to the stochastic program (\ref{asymProblem}). Additionally,
\begin{equation}
    g(\p^*)- g(\p) \le \mu\bar{a}^2 \|\p^* - \p\|_2^2
    \label{gapG}
\end{equation}
holds for all $\p \in \Omega_p$ and all the distribution $\mathcal{P}\in \Xi.$ 
\label{gP}
\end{lemma}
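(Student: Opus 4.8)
The plan is to prove the three claims of Lemma~\ref{gP} in order, exploiting the identity-based machinery already set up for $f$. First I would establish $\E R_n^* \le n g(\p^*)$. The key observation is that the offline LP value $R_n^*$ can be bounded via LP duality: for \emph{any} feasible dual price $\p \ge \bm 0$, weak duality of the (per-realization) LP gives $R_n^* \le \sum_{i} b_i p_i + \sum_j (r_j - \Aa_j^\top \p)^+ = n f_n(\p)$; taking $\p = \p^*$ and then expectations gives $\E R_n^* \le n\, \E f_n(\p^*) = n f(\p^*)$. So it remains to show $f(\p^*) = g(\p^*)$. This should follow from the strict-complementarity structure in Assumption~\ref{assump2}(c): writing $f(\p^*) = \Dd^\top \p^* + \E[(r - \Aa^\top \p^*)^+]$ and $(r-\Aa^\top\p^*)^+ = r\,I(r>\Aa^\top\p^*) - \Aa^\top\p^*\, I(r>\Aa^\top\p^*)$ pointwise, we get $f(\p^*) = \E[r I(r>\Aa^\top\p^*)] + (\Dd - \E[\Aa I(r>\Aa^\top\p^*)])^\top \p^* = g(\p^*)$. (Here I don't even need complementarity for the equality $f(\p^*)=g(\p^*)$; it holds identically in $\p^*$.)

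Next I would prove $g(\p^*) \ge g(\p)$ for all $\p \ge \bm 0$. Rewrite $g(\p) = \E[(r - \Aa^\top\p^*)\, I(r>\Aa^\top\p)] + \Dd^\top \p^*$, so that maximizing $g$ over $\p$ amounts to maximizing $\E[(r - \Aa^\top\p^*)\, I(r>\Aa^\top\p)]$. For \emph{each} realization $(r,\Aa)$, the integrand $(r-\Aa^\top\p^*)\,I(r>\Aa^\top\p)$ would be maximized over the choice of indicator value by taking $I=1$ exactly when $r > \Aa^\top\p^*$ — i.e.\ the indicator threshold $\Aa^\top\p$ "wants" to equal $\Aa^\top\p^*$. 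Since $\p = \p^*$ achieves $I(r>\Aa^\top\p) = I(r>\Aa^\top\p^*)$ pointwise, it attains this pointwise maximum simultaneously for every realization, hence maximizes the expectation. Thus $g(\p^*) \ge g(\p)$; more precisely, $g(\p^*) - g(\p) = \E\big[(r-\Aa^\top\p^*)(I(r>\Aa^\top\p^*) - I(r>\Aa^\top\p))\big] \ge 0$ since each summand is nonnegative (when $r > \Aa^\top\p^*$ both factors are $\ge 0$; when $r \le \Aa^\top\p^*$ both are $\le 0$).

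Finally, for the quadratic bound \eqref{gapG} I would control $g(\p^*) - g(\p) = \E\big[(r-\Aa^\top\p^*)(I(r>\Aa^\top\p^*) - I(r>\Aa^\top\p))\big]$. On the event where $I(r>\Aa^\top\p^*) \neq I(r>\Aa^\top\p)$, the value $r$ lies between $\Aa^\top\p$ and $\Aa^\top\p^*$, so $|r - \Aa^\top\p^*| \le |\Aa^\top\p - \Aa^\top\p^*| \le \bar a \|\p - \p^*\|_2$ by Cauchy–Schwarz and Assumption~\ref{assump1}(b). Hence $g(\p^*) - g(\p) \le \bar a \|\p-\p^*\|_2 \cdot \E\big[|I(r>\Aa^\top\p^*) - I(r>\Aa^\top\p)|\big]$. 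Conditioning on $\Aa$ and applying Assumption~\ref{assump2}(b), $\E\big[|I(r>\Aa^\top\p) - I(r>\Aa^\top\p^*)| \,\big|\, \Aa\big] = |\prob(r>\Aa^\top\p|\Aa) - \prob(r>\Aa^\top\p^*|\Aa)| \le \mu |\Aa^\top\p - \Aa^\top\p^*| \le \mu \bar a \|\p-\p^*\|_2$; taking the outer expectation and combining gives $g(\p^*) - g(\p) \le \mu \bar a^2 \|\p-\p^*\|_2^2$. The main obstacle, if any, is being careful about the pointwise sign/monotonicity argument in the second step and the "$r$ between the two thresholds" claim in the third — both hinge on the elementary fact that for fixed $r$, the map $c \mapsto I(r > c)$ is nonincreasing, so these are bookkeeping rather than deep points; the genuinely substantive input is Assumption~\ref{assump2}(b), which converts the indicator discrepancy into a Lipschitz bound in $\Aa^\top\p$.
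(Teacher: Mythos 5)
Your proposal is correct and follows essentially the same route as the paper's proof: the identity $f(\p^*)=g(\p^*)$ combined with LP duality for the first claim, the pointwise sign decomposition of $\E[(r-\Aa^\top\p^*)(I(r>\Aa^\top\p^*)-I(r>\Aa^\top\p))]$ for the second, and Assumption~\ref{assump2}(b) for the quadratic bound. The only cosmetic differences are that you invoke weak duality at $\p^*$ directly where the paper uses strong duality at $\p_n^*$ plus the suboptimality of $\p^*$ (these are equivalent), and that you factor out the bound $\bar a\|\p-\p^*\|_2$ on $|\Aa^\top\p-\Aa^\top\p^*|$ before taking expectations rather than after, which yields the same final constant $\mu\bar a^2$.
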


The presence of constraints makes the regret analysis challenging, because the way the constraints affect the objective value in an online setting is elusive and problem-dependent. The Lagrangian form $g(\p)$ resolves the issue by incorporating the constraints into the objective. Intuitively, it assigns a cost to the constraint consumption and thus unifies the two seemingly conflicting sides -- revenue maximization and constraint satisfaction. The importance of Lemma \ref{gP} is two-fold. First, it provides a deterministic upper bound for the expected offline optimal objective value. The upper bound is not dependent on the realization of a specific OLP instance, so it is more convenient to analyze than the original offline optimal objective value. Second, a dual-based online algorithm employs a dual price $\p_t$ at time $t$ and its instant reward at time $t$ can be approximated by $g(\p_t)$. Then the single-step regret of the algorithm at time $t$ can be upper bounded with (\ref{gapG}).
Theorem \ref{representation} builds upon Lemma \ref{gP} and compares the online objective value $\E R_n(\pi)$ against $ng(\p^*)$. A generic upper bound is developed for dual-based online policies. The upper bound consists of three components: (i) the cumulative ``approximation'' error, (ii) the remaining periods after the constraint is almost exhausted, and (iii) the remaining resources at the end of time period $n$. The first component relates the regret with $\E\left[\|\p_t-\p^*\|_2^2\right]$ studied in the last section. It justifies why the dual convergence (Theorem \ref{expectation}) is studied in the last section. The second component concerns $\tau_{\bar{a}}$, where $\bar{a}$ is defined in Assumption \ref{assump1} and it is the maximal possible constraint consumption per period. The intuition for $\tau_{\bar{a}}$ is that an order $(r_t, \Aa_t)$, if necessary, can always be fulfilled when $t\le \tau_{\bar{a}}.$ Though selling order (with negative $a_{ij}$'s) may still be accepted after $\tau_{\bar{a}}$, from the standpoint of deriving regret upper bound, we simply ignore all the orders that come after $\tau_{\bar{a}}$. The third component considers the constraint leftovers for binding constraints. Intuitively, binding constraints are the bottleneck for producing revenue, so wasting those resources at the end of the horizon will induce a cost. 

\begin{theorem}
Under Assumption \ref{assump1} and \ref{assump2}, there exists a constant $K$ such that the worst-case regret under policy $\bm{\pi}$,
$$\Delta_{n}(\pi) \le K\cdot \E\left[\sum_{t=1}^{\tau_{\bar{a}}} \|\p_t - \p^* \|_2^2 + (n-\tau_{\bar{a}}) + \sum_{i\in I_B} b_{in} \right]$$
holds for all $n>0.$ Here $I_B$ is the set of binding constraints specified by the stochastic program (\ref{asymProblem}), $\p_t$ is specified by the policy $\bm{\pi}$, and $\p^*$ is the optimal solution of the stochastic program (\ref{asymProblem}).
\label{representation}
\end{theorem}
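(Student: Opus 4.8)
The plan is to decompose the regret $\Delta_n^{\mathcal{P}}(\pi) = \mathbb{E}[R_n^* - R_n(\pi)]$ by inserting the Lagrangian upper bound $n g(\p^*)$ from Lemma \ref{gP} as a pivot:
$$
\mathbb{E}[R_n^* - R_n(\pi)] \le n g(\p^*) - \mathbb{E}[R_n(\pi)].
$$
So the whole task reduces to lower-bounding the online revenue $\mathbb{E}[R_n(\pi)]$ by $n g(\p^*)$ minus the three claimed error terms. First I would write the online revenue telescopically over time. For each $t$, the policy picks $\tilde x_t = I(r_t > \Aa_t^\top \p_t)$ and then sets $x_t = \tilde x_t$ unless some constraint is violated. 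Up to the stopping time $\tau_{\bar a}$, there is always at least $\bar a$ units of every resource left, and since $\|\Aa_t\|_2 \le \bar a$ forces $|a_{it}| \le \bar a$, the tentative decision is always feasible; hence $x_t = \tilde x_t$ for all $t \le \tau_{\bar a}$. This is the key structural fact that makes $\tau_{\bar a}$ the right object: before $\tau_{\bar a}$ the algorithm behaves exactly like the unconstrained threshold rule with threshold $\p_t$.

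Next I would handle the three regimes separately. For $t \le \tau_{\bar a}$, I use the conditional independence $(\tilde x_t, r_t, \Aa_t)\perp \mathcal{H}_{t-1}\mid \p_t$ noted in Section \ref{thresPolicy}: conditioning on $\p_t$, the expected instantaneous revenue $\mathbb{E}[r_t I(r_t > \Aa_t^\top\p_t)\mid \p_t]$ plus the "Lagrangian credit" $\mathbb{E}[(\Dd - \Aa_t I(r_t>\Aa_t^\top\p_t))^\top\p^*\mid\p_t]$ equals exactly $g(\p_t)$, which by the quadratic bound \eqref{gapG} of Lemma \ref{gP} is at least $g(\p^*) - \mu\bar a^2\|\p_t - \p^*\|_2^2$. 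Summing over $t \le \tau_{\bar a}$ and taking expectations produces the term $\mathbb{E}[\sum_{t=1}^{\tau_{\bar a}}\|\p_t-\p^*\|_2^2]$ together with $\mathbb{E}[\tau_{\bar a}]\cdot g(\p^*)$, while the Lagrangian credit terms must be tracked: the accumulated credit $\sum_{t\le\tau_{\bar a}}(\Dd - \Aa_t x_t)^\top\p^*$ is, up to sign, $\p^{*\top}(\tau_{\bar a}\Dd - (\Bb - \Bb_{\tau_{\bar a}}))$; combined with the remaining periods this has to be reorganized so that the leftover $\sum_{i\in I_B} b_{in}$ appears and the non-binding coordinates drop out because $p_i^* = 0$ there. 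For $t > \tau_{\bar a}$, I simply bound the foregone revenue crudely by $\bar r$ per period, giving the $(n - \tau_{\bar a})$ term.

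The main obstacle I anticipate is the bookkeeping that converts the accumulated constraint-credit expression into exactly $\sum_{i \in I_B} b_{in}$ plus a controlled remainder. One has to be careful that (a) $\bm b_n$ appears rather than $\bm b_{\tau_{\bar a}}$, which requires absorbing the difference into the $(n-\tau_{\bar a})$ term since at most $\bar a$ units per period are consumed after $\tau_{\bar a}$; (b) the sign works out — leftover binding capacity is a \emph{cost} because $g$ charges $\Dd^\top\p^*$ per period regardless of whether capacity was used; and (c) the stopping time $\tau_{\bar a}$ is random and correlated with everything, so the telescoping and the application of the conditional independence must be done with optional-stopping-style care (e.g., working with the stopped processes and using that $\{\tau_{\bar a} \ge t\}$ is $\mathcal{H}_{t-1}$-measurable). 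Everything else — the per-step comparison to $g(\p^*)$, the quadratic error bound, the crude tail bound — is routine given Lemma \ref{gP}; collecting the constants into a single $K$ (which may depend on $m$, $\bar r$, $\bar a$, $\underline d$, $\mu$, and $\|\p^*\|$, all treated as constants in this regime) finishes the argument, and taking the supremum over $\mathcal{P}\in\Xi$ is immediate since no constant depends on $\mathcal{P}$.
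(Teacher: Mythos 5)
Your proposal is correct and follows essentially the same route as the paper: pivot through $ng(\p^*)$ via Lemma \ref{gP}, identify the per-step conditional expectation of revenue plus Lagrangian credit with $g(\p_t)$ for $t\le\tau_{\bar a}$, apply the quadratic bound (\ref{gapG}), bound the post-$\tau_{\bar a}$ periods crudely, and use $p_i^*=0$ on non-binding coordinates to reduce the leftover term to $\sum_{i\in I_B}b_{in}$. The only cosmetic difference is that the paper adds and subtracts $\bm{b}_n^\top\p^*$ over the full horizon before splitting at $\tau_{\bar a}$, which sidesteps the $\bm{b}_{\tau_{\bar a}}$-versus-$\bm{b}_n$ bookkeeping you correctly flag as needing absorption into the $(n-\tau_{\bar a})$ term.
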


Theorem \ref{representation} provides important insights for the design of an online policy. First, a dual-based policy should learn $\p^*$. Meanwhile, the online policy should have stable control of the resource/constraint consumption. Exhausting the constraints too early may result in a large value of the term $n-\tau_{\bar{a}}$ while the remaining resources at the end of the horizon may also induce regret through the term $\sum_{i\in I_B} b_{in}$. Essentially, both components stem from the fluctuation of the constraint consumption. The ideal case, although not possible due to the randomness, is that the $i$-th constraint is consumed by exactly $d_i$ units in each time period. The following corollary states that the result in Theorem \ref{representation} holds for a general class of stopping times.

\begin{corollary}
For any given $\Bb_t$-adapted stopping time $\tau,$ if $\prob(\tau \le \tau_{\bar{a}})=1$, 
$$\Delta_{n}(\bm{\pi}) \le K \cdot \E\left[\sum_{t=1}^{\tau-1} \|\p_t - \p^* \|_2^2 + (n-\tau) + \sum_{i\in I_B} b_{in}\right]$$
holds for all $n>0$ with the same constant $K$ as in Theorem \ref{representation}. Here $I_B$ is the set of binding constraints, $\p_t$ is specified by the policy $\bm{\pi}$, and $\p^*$ is the optimal solution of the stochastic program (\ref{asymProblem}). 
\label{coro1}
\end{corollary}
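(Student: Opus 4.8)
The plan is to revisit the proof of Theorem~\ref{representation} and to observe that it in fact establishes the displayed bound for \emph{every} stopping time dominated by $\tau_{\bar a}$, the choice $\tau=\tau_{\bar a}$ being only the canonical one. Recall the skeleton of that proof: Lemma~\ref{gP} gives $\E R_n^*\le n\,g(\p^*)=\sum_{t=1}^n g(\p^*)$, hence
$$\Delta_n(\pi)\le \sum_{t=1}^n \E\left[g(\p^*)-r_tx_t\right],$$
and this sum is split according to the stopping time into the periods $t\le\tau_{\bar a}$ and the periods $t>\tau_{\bar a}$. On the late periods one uses the crude fact that $g(\p^*)-r_tx_t$ is bounded by a constant depending only on the quantities in Assumption~\ref{assump1}, contributing the term $(n-\tau_{\bar a})$. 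On the early periods one exploits that the time-$t$ order must be accepted, so $x_t=\tilde x_t=I(r_t>\Aa_t^\top\p_t)$; combined with the conditional independence $(r_t,\Aa_t,\tilde x_t)\perp\mathcal{H}_{t-1}\mid\p_t$ and the smoothness estimate (\ref{gapG}) of Lemma~\ref{gP}, this bounds the per-period regret by a constant multiple of $\E\|\p_t-\p^*\|_2^2$. The leftover capacity on binding constraints produces the term $\sum_{i\in I_B}b_{in}$.

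The observation driving the corollary is that the proof uses only two properties of $\tau_{\bar a}$: (i) $\{\tau_{\bar a}\ge t\}$ is $\mathcal{H}_{t-1}$-measurable, so that conditioning on $\p_t$ decouples the event $\{t\le\tau_{\bar a}\}$ from $(r_t,\Aa_t)$; and (ii) on $\{t\le\tau_{\bar a}\}$ the time-$t$ order can always be fulfilled, because at least $\bar a\ge|a_{it}|$ units of every constraint remain, which forces $x_t=\tilde x_t$. Let $\tau$ now be any $\Bb_t$-adapted stopping time with $\prob(\tau\le\tau_{\bar a})=1$. Then (i) holds for $\tau$ since $\{\tau\ge t\}=\{\tau\le t-1\}^c\in\sigma(\Bb_1,\dots,\Bb_{t-1})\subseteq\mathcal{H}_{t-1}$ (each $\Bb_s$ is a function of $\{\Aa_j,x_j\}_{j\le s}$), and (ii) transfers because $\{t\le\tau\}\subseteq\{t\le\tau_{\bar a}\}$ almost surely. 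Since nothing else about the stopping time enters, repeating the estimates verbatim with $\tau$ replacing $\tau_{\bar a}$ — per-period approximation bounds for $t<\tau$, the crude bound for the $n-\tau$ periods $t\ge\tau$, and the leftover term $\sum_{i\in I_B}b_{in}$, which does not reference any stopping time — gives
$$\Delta_n(\pi)\le K\cdot\E\left[\sum_{t=1}^{\tau-1}\|\p_t-\p^*\|_2^2+(n-\tau)+\sum_{i\in I_B}b_{in}\right]$$
with the same constant $K$.

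The step to be careful about is the bookkeeping at the boundary period $t=\tau$: following the proof of Theorem~\ref{representation}, one must verify that period $\tau$ is either counted inside the crude $(n-\tau)$ budget or otherwise absorbed (it is necessarily acceptable, as $\tau\le\tau_{\bar a}$, so its one-step regret is at most a constant, and $\|\p_\tau-\p^*\|_2^2$ is uniformly bounded on $\Omega_p$ in any case), so that the approximation sum runs only up to $\tau-1$ and the constant is not inflated. This is accounting, not a new idea. I would emphasize that the hypothesis $\prob(\tau\le\tau_{\bar a})=1$ is not a technicality but the crux of the statement: it is exactly what makes orders arriving before $\tau$ fulfillable, hence $x_t=\tilde x_t$ there, the identity on which the entire per-period approximation estimate rests; dropping it would leak an uncontrolled contribution from rejected early orders. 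A cruder alternative would be to deduce the corollary from Theorem~\ref{representation} by splitting $\sum_{t=1}^{\tau_{\bar a}}\|\cdot\|_2^2=\sum_{t=1}^{\tau-1}\|\cdot\|_2^2+\sum_{t=\tau}^{\tau_{\bar a}}\|\cdot\|_2^2$, bounding the second sum by $(\tau_{\bar a}-\tau+1)\max_{\p,\p'\in\Omega_p}\|\p-\p'\|_2^2$ and using $\tau_{\bar a}\le n$; this is quicker but does not preserve $K$, which is why re-inspecting the proof is the route I would take.
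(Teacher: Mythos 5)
Your proposal is correct and follows essentially the same route as the paper: the paper's own proof of Corollary \ref{coro1} simply observes that the only property of $\tau_{\bar{a}}$ used in proving Theorem \ref{representation} is that orders arriving before it can always be fulfilled, and that any $\Bb_t$-adapted $\tau$ with $\prob(\tau\le\tau_{\bar a})=1$ inherits this property, so the derivations go through verbatim. Your additional remarks on adaptedness and the boundary period $t=\tau$ are sound elaborations of the same argument, not a different method.
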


\textbf{Remark.} Theorem \ref{representation} and Corollary \ref{coro1} reduce the derivation of regret upper bound to the analysis of approximation errors and the analysis of the constraint process. To highlight the usefulness of this reduction, we briefly discuss the existing techniques developed for analyzing online learning problems with the presence of constraints. The simplest approach is to propose a bi-objective performance measure: one for the objective value and one for the constraint violation. The bi-objective performance measure is adopted in the problem of online convex optimization with constraints \citep{mahdavi2012trading, agrawal2014fast, yu2017online, yuan2018online}. 
In these works, the regret for the objective value and the constraint violation are reported separately. Our results provide a method to convert the bi-objective results into one performance measure. When the upper bound on the constraint coefficients $\bar{a}$ is known, one approach to combine the bi-objective performance measure is to penalize the excessive usage of the constraints, such as the regret analysis in \citep{ferreira2018online}. However, this approach only works for non-adaptive algorithms (like Algorithm \ref{alg:IDLA}) but fails when the constraint process affects the decision (like Algorithm \ref{alg:HDLA} and other re-solving algorithms). In this light, Theorem \ref{representation} and Corollary \ref{coro1} provide a unified treatment for non-adaptive and adaptive algorithms.  Another approach to deal with the constraints is to use the ``shrinkage'' trick. The idea is to perform the online learning as if the constraint is shrunk by a factor of $1-\epsilon$, and then the output online solution will be feasible with high probability for the original problem. This shrinkage trick is used in the OLP literature \citep{agrawal2014dynamic, kesselheim2014primal} and also in the bandits with knapsacks problem \citep{agrawal2014bandits}. The downside of the shrinkage is that it will probably result in an over-conservative decision and may have too many remaining resources at the end of the horizon. The regret analysis closest to our approach is the analysis in \citep{jasin2012re, jasin2015performance, balseiro2019learning}. The regret derivation therein also involves analyzing the stopping time related to the constraint depletion. Our generic upper bound can be viewed as a generalization of their analysis for the case when no prior knowledge on the support of $(r_j, \bm{a}_j)$'s is available. For the network revenue management problem, the support of customer orders $(r_j, \bm{a}_j)$'s is assumed to be finite and known. In this case, the offline optimal solution can be explicitly specified by the optimal quantity of each type of customer orders that should be accepted. Then the performance of an online algorithm can be analyzed by the deviation from the optimal quantity. In a general OLP problem, there is no explicit representation of the offline optimal solution. Theorem \ref{representation} and Corollary \ref{coro1} resolve the issue by identifying the approximation error of $\bm{p}^*$ as a non-parametric generalization of the deviation from the optimal quantity in the network revenue management context. A subsequent work ({\color{blue}Lu et al., 2020}) also applied similar ideas from Theorem \ref{representation} and Corollary \ref{coro1} in their regret analysis for an online allocation problem which has a finite support of $(r_j, \bm{a}_j)$'s.

\subsection{When the Distribution is Known}

We first present an algorithm for the situation when we know the distribution $\mathcal{P}$ that generates the LP coefficients. With the knowledge of $\mathcal{P},$ the stochastic programming problem (\ref{asymProblem}) is well-specified. 
We study the algorithm mainly for benchmark purposes, so we do not discuss the practicability of knowing the distribution $\mathcal{P}$. Moreover, we assume that the stochastic programming problem (\ref{asymProblem}) can be solved exactly. In Algorithm \ref{alg:Distribution}, the optimal solution $\p^*$ is computed before the online procedure and can be viewed as prior knowledge. So, there is ``no need to learn'' the dual price throughout the procedure, and the pre-computed $\p^*$ can be used for thresholding rule. 
 
\begin{algorithm}[ht!]
\caption{No-Need-to-Learn Algorithm}\label{alg:Distribution}
\begin{algorithmic}[1]
\State Input:  $n$, $d_1,...,d_m$, Distribution $\mathcal{P}$
\State Compute the optimal solution of the stochastic programming problem
\begin{align*} \bm{p}^*&  = \argmin \bm{d}^\top \bm{p} + \E_{(r,\bm{a}) \sim \mathcal{P}} \left[(r-\bm{a}^\top \bm{p})^+\right] \\
& \text{s.t. \ }  \bm{p} \ge 0. 
\end{align*}
\For {$t=1,..., n$}
\State If constraints are not violated, choose
$${x}_{t} = \begin{cases} 1, & \text{ if } r_{t} > \bm{a}_{t}^\top \bm{p}^*\\
0, & \text{ if } r_{t} \le \bm{a}_{t}^\top \bm{p}^*
\end{cases}$$
\EndFor
\end{algorithmic}
\end{algorithm}

\begin{theorem}
With the online policy $\bm{\pi}_1$ specified by Algorithm \ref{alg:Distribution},
\begin{gather*}
\Delta_n(\bm{\pi}_1) \le O(\sqrt{n}).
\end{gather*}
\label{Algo3}
\end{theorem}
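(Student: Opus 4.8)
The plan is to invoke the generic regret decomposition of Theorem~\ref{representation} and bound each of its three terms separately for the policy $\bm{\pi}_1$. Since Algorithm~\ref{alg:Distribution} uses the fixed dual price $\p_t \equiv \p^*$ at every step, the first (approximation-error) term vanishes identically: $\sum_{t=1}^{\tau_{\bar a}} \|\p_t - \p^*\|_2^2 = 0$. Hence it remains only to show that
\[
\E\left[(n-\tau_{\bar a})\right] = O(\sqrt n)
\qquad\text{and}\qquad
\E\left[\sum_{i\in I_B} b_{in}\right] = O(\sqrt n),
\]
and the theorem follows from Theorem~\ref{representation}. (In fact one gets the sharper $\tilde O$ or even $O(\sqrt n)$; the statement only claims $O(\sqrt n)$.)

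First I would analyze the constraint process $\{\bm b_t\}$ under $\bm\pi_1$. Introduce the ``target'' decisions $\tilde x_t = I(r_t > \Aa_t^\top \p^*)$, which are i.i.d.\ Bernoulli-type variables (measurable functions of $(r_t,\Aa_t)$), and compare the actual remaining capacity to the idealized random walk $\hat{\bm b}_t := \bm b - \sum_{j=1}^t \Aa_j \tilde x_j$. By the primal feasibility of $\p^*$ in the deterministic problem~(\ref{detRelax}) — i.e.\ $\E[\Aa I(r>\Aa^\top\p^*)] \le \Dd$ — each coordinate $\hat b_{it}$ has nonnegative expected increment per step, $\E[\hat b_{it} - \hat b_{i,t-1}] = d_i - \E[a_i I(r>\Aa^\top\p^*)] \ge 0$, with equality exactly on the binding set $I_B$. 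For $i\in I_B$ the process $\hat b_{it}$ is a mean-zero random walk with bounded (by $\bar a$) increments, so standard concentration (Azuma–Hoeffding or a maximal inequality) gives $\E[\,|\hat b_{it}|\,] = O(\sqrt n)$ and, more usefully, $\E[\max_{t\le n} |\hat b_{it} - \hat b_{i,n}|] = O(\sqrt n)$. The key structural point is that the actual process $\bm b_t$ coincides with $\hat{\bm b}_t$ up to the first time a constraint would be violated; since (by Assumption~\ref{assump2}(c)) the non-binding constraints have strictly positive drift and hence stay away from $0$ with exponentially high probability, the only way feasibility can force $x_t\neq\tilde x_t$ is when some binding constraint is within $\bar a$ of depletion, which by the random-walk estimate happens only within $O(\sqrt n)$ steps of the end. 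This yields $\E[n - \tau_{\bar a}] = O(\sqrt n)$; and on the event $\{\tau_{\bar a} = n\}$ (together with its high-probability complement being negligible) the leftover $b_{in}$ for $i\in I_B$ is itself $O(\sqrt n)$ in expectation by the same concentration bound. Feeding these into Theorem~\ref{representation} gives $\Delta_n(\bm\pi_1)\le K\cdot O(\sqrt n) = O(\sqrt n)$.

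The main obstacle, I expect, is handling the coupling between the "violation" events and the random walk carefully: one must argue that deviating from $\tilde x_t$ (because of a tight binding constraint) does not occur before time $n - O(\sqrt n)$, while simultaneously controlling the non-binding constraints to ensure they never bind. The clean way is a union-bound over the $m$ constraints combined with a reflection/maximal inequality for the binding random walks and an exponential (Chernoff) tail for the strictly-drifting non-binding ones, exploiting that all the relevant drifts and variances are bounded by the constants in Assumptions~\ref{assump1}–\ref{assump2} uniformly over $\mathcal P\in\Xi$; this uniformity is what makes the bound a genuine worst-case regret bound rather than a distribution-dependent one. A secondary subtlety is that $\tau_{\bar a}$ and $\bm b_t$ depend on the policy in a feedback loop once a constraint is tight, so the identification $\bm b_t = \hat{\bm b}_t$ must be stated precisely as holding on the stopping-time-truncated horizon $t \le \tau_{\bar a}$, which is exactly the form in which Theorem~\ref{representation} is phrased.
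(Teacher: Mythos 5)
Your proposal is correct and follows essentially the same route as the paper: apply Theorem~\ref{representation}, observe the approximation-error term vanishes since $\p_t\equiv\p^*$, and control $\E[n-\tau_{\bar a}]$ and $\E[\sum_{i\in I_B}b_{in}]$ by concentration of the idealized consumption process $\sum_{j\le t}a_{ij}I(r_j>\Aa_j^\top\p^*)$, whose mean is at most $td_i$ (with equality on $I_B$) and whose variance is at most $\bar a^2 t$. The only cosmetic difference is that the paper uses a per-$t$ Chebyshev bound summed over $t$ (plus a union bound over the $m$ coordinate stopping times $\tau_{\bar a}^i$) rather than an Azuma/maximal inequality, and it does not need to treat the non-binding constraints separately since the same Chebyshev estimate covers them.
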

Theorem \ref{Algo3} tells that the worst-case regret under an online policy with the knowledge of full distribution is $O(\sqrt{n})$. Recall the generic regret upper bound in Theorem \ref{representation}, there is no approximation error for Algorithm \ref{alg:Distribution} because $\p^*$ is employed as the dual price. So the regret of Algorithm \ref{alg:Distribution} stems only from the fluctuation of the constraint process. Intuitively, the fluctuation is on the order of $O(\sqrt{n})$ when it approaches the end of the horizon. Our next algorithm and its according regret analysis show that the contribution of the approximation error of $\p^*$ (when the distribution is unknown) is indeed dominated by the regret caused by the fluctuation of the constraint process.


\subsection{Simplified Dynamic Learning Algorithm}

Algorithm \ref{alg:IDLA} approximates the dual price $\p^*$ in Algorithm \ref{alg:Distribution} by solving an SAA for the stochastic program based on the past observations. 
From the OLP perspective, it can be viewed as a simplified version of the dynamic learning algorithm in \citep{agrawal2014dynamic}. In Algorithm \ref{alg:IDLA}, the dual price vector $\p_t$ is updated only at geometric time intervals and it is computed based on solving the $t$-sample approximation, i.e., minimizing $f_t(\p)$. The key difference between this simplified algorithm and the dynamic learning algorithm in \citep{agrawal2014dynamic} is that we get rid of the shrinkage term $\left(1-\epsilon \sqrt{\frac{n}{t_k}}\right)$ in the constraints. Specifically, the algorithm in that paper considers $\left(1-\epsilon\sqrt{\frac{n}{t_k}}\right)t_kd_i$ on the right-hand side of the constraints in Step 6 of Algorithm \ref{alg:IDLA}. In the random input model, the shrinkage factor results in an over-estimated dual price $\p_t$ and hence will be more conservative in accepting orders. The conservativeness is probably helpful under the random permutation model studied in \citep{agrawal2014dynamic} but may cause the remaining resources to be linear in $n$ under the random input model. 

\begin{algorithm}[ht!]
\caption{Simplified Dynamic Learning Algorithm}\label{alg:IDLA}
\begin{algorithmic}[1]
\State Input: $d_1,...,d_m$ where $d_i=b_i/n$
\State Initialize: Find $\delta \in (1,2]$ and $L >0$ s.t. $\floor{\delta^L} = n.$
\State Let $t_k = \floor{\delta^k }, k = 1,2,...,L-1$ and $t_L=n+1$
\State Set $x_1 = ... = x_{t_1}=0$
\For {$k=1,2,...,L-1$}
\State Specify an optimization problem
\begin{align*}
\max\ & \sum_{j=1}^{t_k} r_j x_j 
    \\ 
    \text{s.t.}\ & \sum_{j=1}^{t_k} a_{ij}x_j \le t_kd_i ,\ \  i=1,...,m \\
    & 0 \le x_{j} \le 1, \ \  j=1,...,{t_k}
\end{align*}
\State Solve its dual problem and obtain the optimal dual variable $\bm{p}_k^*$
\begin{align*}
  \bm{p}_k^* &=  \argmin_{\bm{p}} \sum_{i=1}^m d_ip_i + \frac{1}{t_k}\sum_{j=1}^{t_k} \left(r_j-\sum_{i=1}^m a_{ij}p_i\right)^+   \\
   &\text{s.t.\ \ }  p_i \ge 0, \ \ i=1,...,m. \nonumber
\end{align*}
\For {$t = t_k+1,...,t_{k+1}$}
\State If constraints permit, set
$${x}_{t} = \begin{cases} 1, & \text{ if } r_{t} > \bm{a}_{t}^\top \bm{p}_k^*\\
0, & \text{ if } r_{t} \le \bm{a}_{t}^\top \bm{p}_k^*
\end{cases}$$
\State Otherwise, set $x_t = 0$
\State If $t=n$, stop the whole procedure. 
\EndFor
\EndFor
\end{algorithmic}
\end{algorithm}

\begin{theorem}
\label{Algo1}
With the online policy $\bm{\pi}_2$ specified by Algorithm \ref{alg:IDLA}, 
\begin{gather*}
\Delta_n(\bm{\pi}_2) \le O(\sqrt{n} \log n).
\end{gather*}
\end{theorem}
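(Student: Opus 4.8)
\textbf{Proof plan for Theorem \ref{Algo1}.}

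The plan is to invoke the generic regret decomposition of Theorem \ref{representation} with the stopping time $\tau_{\bar a}$ and bound each of the three terms $\E[\sum_{t=1}^{\tau_{\bar a}}\|\p_t-\p^*\|_2^2]$, $\E[n-\tau_{\bar a}]$, and $\E[\sum_{i\in I_B} b_{in}]$ separately, aiming to show each is $\tilde O(\sqrt n)$. Here $\p_t=\p_k^*$ is the dual price used in the geometric block containing $t$, i.e.\ $t_k<t\le t_{k+1}$.

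First I would handle the approximation-error term. For $t$ in block $k$, $\p_t=\p_k^*$ is the optimizer of the $t_k$-sample SAA $f_{t_k}$, so Theorem \ref{expectation} gives $\E[\|\p_k^*-\p^*\|_2^2]\le C m\log m\log\log t_k/t_k = \tilde O(1/t_k)$. Summing over the block, $\E[\sum_{t=t_k+1}^{t_{k+1}}\|\p_t-\p^*\|_2^2]\le (t_{k+1}-t_k)\cdot\tilde O(1/t_k)=\tilde O(\delta^{k}/\delta^{k})=\tilde O(1)$ per block (using $t_{k+1}\le\delta t_k$ and $\delta\in(1,2]$), hence $\tilde O(L)=\tilde O(\log n)$ over all $L=O(\log n)$ blocks. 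The first block contributes at most $t_1=O(1)$ deterministically. So this term is $\tilde O(\log n)$, which is dominated by $\sqrt n$; I would also note that truncating the sum at $\tau_{\bar a}\le n$ only helps.

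Next, the two constraint-process terms $\E[n-\tau_{\bar a}]$ and $\E[\sum_{i\in I_B} b_{in}]$. The key structural fact I would establish is that, conditional on $\p_t$, the expected one-period consumption of constraint $i$ is $\E[a_i I(r>\Aa_t^\top\p_t)\mid\p_t]$, which by Assumption \ref{assump2}(b) and smoothness of $g$ is within $O(\|\p_t-\p^*\|_2)$ of $d_i$ for binding $i$ (since $\nabla f(\p^*)_i=0$ means $\E[a_iI(r>\Aa^\top\p^*)]=d_i$). Using the conditional independence $(\tilde x_t,r_t,\Aa_t)\perp\mathcal H_{t-1}\mid\p_t$ noted in Section \ref{thresPolicy}, the accumulated process $b_{i0}-b_{it}=\sum_{s\le t}a_{is}x_s$ is, up to a drift controlled by $\sum_s\|\p_s-\p^*\|_2=\tilde O(\sqrt n)$ (Cauchy--Schwarz on the $\tilde O(\log n)$ bound above gives $\sum_s\|\p_s-\p^*\|_2\le\sqrt{n\cdot\tilde O(\log n)}=\tilde O(\sqrt n)$) and an error from the rejection events when constraints bind, a martingale with bounded increments (bounded by $\bar a$). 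Azuma/Doob maximal inequalities then give fluctuations of order $O(\sqrt n)$ for $\sup_t|b_{i0}-b_{it}-t d_i|$ up to logarithmic and drift corrections; combined with the linear-growth Assumption \ref{assump1}(c) this yields $\E[n-\tau_{\bar a}]=\tilde O(\sqrt n)$ (the resource lasts until within $O(\sqrt n)$ of the end) and $\E[b_{in}\mathbf 1\{i\in I_B\}]=\tilde O(\sqrt n)$. For non-binding $i$ one only needs $b_{it}\ge \bar a$ for most $t$, which follows from the strict inequality $d_i-\E[a_iI(r>\Aa^\top\p^*)]>0$ in Assumption \ref{assump2}(c) plus the same concentration.

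The main obstacle is the constraint-process analysis: one must carefully disentangle three sources of deviation in $b_{it}$ — the genuine stochastic fluctuation of i.i.d.\ consumption (the $O(\sqrt n)$ martingale term), the systematic drift caused by using $\p_t\neq\p^*$ (controlled by $\sum\|\p_t-\p^*\|_2=\tilde O(\sqrt n)$), and the feedback distortion from periods where the feasibility check forces $x_t=0$ even though $\tilde x_t=1$. The last is subtle because it couples the process to its own past; the standard remedy, which I would follow, is a coupling/stopping argument showing that before $\tau_{\bar a}$ no such forced rejection occurs on the binding constraints (any accepted order fits, by definition of $\tau_{\bar a}$ and $\|\Aa_t\|_2\le\bar a$), so the distortion is confined to the short tail after $\tau_{\bar a}$, whose length is itself what we are bounding — requiring a self-consistent (e.g.\ bootstrap or induction-on-scale) estimate. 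Everything else is routine concentration and the geometric-block bookkeeping already sketched.
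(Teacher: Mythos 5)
Your plan is correct and follows essentially the same route as the paper: apply the generic bound of Theorem \ref{representation}, bound $\E[\sum_t\|\p_t-\p^*\|_2^2]$ by $\tilde O(\log n)$ via Theorem \ref{expectation} and geometric-block bookkeeping, and control $\E[n-\tau_{\bar a}]$ and $\E[\sum_{i\in I_B}b_{in}]$ by splitting the consumption process into a drift of order $\sum_t\|\p_t-\p^*\|_2=\tilde O(\sqrt n)$ plus a bounded-increment martingale handled by Doob's maximal inequality. The only thing you overcomplicate is the feedback distortion: since the dual prices in Algorithm \ref{alg:IDLA} depend only on past inputs and not on past actions, the untruncated process $\sum_{j\le t}a_{ij}I(r_j>\Aa_j^\top\p_j)$ coincides with the actual consumption before $\tau_{\bar a}$ and the post-$\tau_{\bar a}$ periods are already charged to the $\E[n-\tau_{\bar a}]$ term of the generic bound, so no bootstrap or induction-on-scale argument is needed.
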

Theorem \ref{Algo1} tells that the policy incurs a worst-case regret of $O(\sqrt{n} \log n)$. Its proof relies on an analysis of the three components in the regret bound in Theorem \ref{representation}. The summation $\sum_{t=1}^{\tau_{\bar{a}}} \|\p_t-\p^*\|_2^2$ can be analyzed by the dual convergence result. Specifically, the dual price $\bm{p}_{t_k}$ in Algorithm \ref{alg:IDLA} is computed based on a $t_k$-sample approximation to the stochastic program (\ref{asymProblem}), and therefore Theorem \ref{expectation} can be employed to upper bound the distance $\|\p_{t_k}-\p^*\|_2^2$. It reiterates the importance of studying the dual convergence and expressing the approximation error in L$_2$ distance. The two components related to the stopping time and remaining resources are studied based on a careful analysis of the process $\bm{b}_t$. The detailed proof can be found in Section \ref{AsqrtT}. 



\subsection{Action-History-Dependent Learning Algorithm}
\label{AHDLA}

Now, we present our action-history-dependent learning algorithm. In Algorithm \ref{alg:IDLA}, the dual price $\p_t$ is a function of the past inputs $\{(r_j,\Aa_j)\}_{j=1}^{t-1}$ but it does not consider the past actions $(x_1,...,x_{t-1})$. In contrast, Algorithm \ref{alg:HDLA} integrates the past actions into the constraints of the optimization problem of $\p_t$. At the beginning of period $t+1$, the first $t$ inputs $\{(x_j, r_j, \Aa_j)\}_{j=1}^t$ are observed. Algorithm \ref{alg:IDLA} normalizes $b_{i}$ to $\frac{t}{n}b_{i} = td_i$ for the right-hand-side of the LP, while Algorithm \ref{alg:HDLA} normalizes the remaining resource $b_{it}$ for the right-hand-side of the LP (Step 6 of Algorithm \ref{alg:HDLA}). The intuition is that if we happen to consume too much resource in the past periods, the remaining resource $b_{it}$ will shrink, and Algorithm \ref{alg:HDLA} will accordingly push up the dual price and be more inclined to reject an order. On the contrary, if we happen to reject a lot of orders at the beginning and it results in too much remaining resource, the algorithm will lower down the dual price so as to accept more orders in the future. This pendulum-like design in Algorithm \ref{alg:HDLA} incorporates the past actions in computing dual prices indirectly through the remaining resources.

\begin{algorithm}[ht!]
\caption{Action-history-dependent Learning Algorithm}\label{alg:HDLA}
\begin{algorithmic}[1]
\State Input: $n$, $d_1,...,d_m$
\State Initialize the constraint $b_{i0} = nd_i$ for $i=1,...,m$
\State Initialize the dual price $\p_1 = \bm{0}.$
\For {$t=1,...,n$}
\State Observe $(r_t, \Aa_t)$ and set
$${x}_{t} = \begin{cases} 1, & \text{ if } r_{t} > \bm{a}_{t}^\top \bm{p}_t\\
0, & \text{ if } r_{t} \le \bm{a}_{t}^\top \bm{p}_t
\end{cases}$$
if the constraints are not violated
\vspace{0.4cm}

\State Update the constraint vector
$$b_{it} = b_{i,t-1}-a_{it}{x}_{t} \text{ \ for \ } i = 1,...,m$$
\vspace{0.1cm}

\State Specify an optimization problem
\begin{align*}
\max\ & \sum_{j=1}^{t} r_j x_j 
    \\ 
    \text{s.t.}\ & \sum_{j=1}^{t} a_{ij}x_j \le \frac{tb_{it}}{n-t} ,\ \  i=1,...,m \\
    & 0 \le x_{j} \le 1, \ \  j=1,...,{t}
\end{align*}
\State If $t<n,$ solve its dual problem and obtain the dual price $\bm{p}_{t+1}$
\begin{align*}
  \bm{p}_{t+1} &=  \argmin_{\bm{p}} \sum_{i=1}^m \frac{b_{it}p_i}{n-t} + \frac{1}{t}\sum_{j=1}^{t} \left(r_j-\sum_{i=1}^m a_{ij}p_i\right)^+   \\
    & \text{s.t.\ \ }  p_i \ge 0, \ \ i=1,...,m. \nonumber
\end{align*}
\EndFor
\end{algorithmic}
\end{algorithm}

From an algorithmic standpoint, Algorithm \ref{alg:HDLA} implements the re-solving technique in a learning environment, while the idea was implemented in a known-parameter environment by \citep{reiman2008asymptotically, jasin2012re, bumpensanti2018re}. Unlike the work \citep{jasin2015performance} which explicitly estimated the arrival intensity and fed the estimate into a certainty-equivalent problem, the learning part of Algorithm \ref{alg:HDLA} is implicit and integrated into the optimization part. For the optimization problem in Step 6 of Algorithm \ref{alg:HDLA}, the left-hand-side is specified by the history, while the right-hand-side is specified by the real-time constraint capacity. If we define $d_{it} = \frac{b_{it}}{n-t}$ as the average remaining resource capacity, then the optimization problem at time $t$ can be viewed as a $t$-sample approximation to a stochastic program specified by $\bm{d}_t=(d_{1t},...,d_{mt})^\top$. Importantly, the targeted stochastic program at each time period is dynamically changing according to the constraint process, while Algorithm \ref{alg:IDLA} and all the preceding analyses in this paper focus on a static stochastic program specified by the fixed initial $\bm{d}$. To apply the dual convergence result in a changing $\bm{d}$ setting, we need a uniform version of Assumption \ref{assump2}. Specifically, for $\bm{d}'=(d_1',...,d_m')^\top\in \Omega_d$, define $$f_{\bm{d}'}(\bm{p}) \coloneqq \bm{d}'^\top \bm{p} + \E_{(r,\bm{a})\sim\mathcal{P}}\left[(r-\bm{a}^\top\bm{p})^+\right]$$
and denote $\bm{p}^*(\bm{d}')$ denotes its optimal solution. Assumption \ref{assump3} shares the same part (a) with Assumption \ref{assump2} and extends the part (b) and (c) of Assumption \ref{assump2} to a uniform condition for all $\bm{d}'\in \Omega_d$. Accordingly, we update the definition of $\Xi$ to denote the family of distributions that satisfy Assumption \ref{assump1} and \ref{assump3}; for this subsection, we consider the distribution $\mathcal{P}$ within this updated $\Xi.$
\begin{assumption}[Uniform version of Assumption \ref{assump2}]
We assume
\begin{itemize}
    \item[(a)] The second-order moment matrix $\bm{M} \coloneqq \E_{(r,\Aa)\sim \mathcal{P}}[\Aa \Aa^\top]$ is positive-definite. Denote its minimum eigenvalue with $\lambda_{\min}.$
    \item[(b)] There exist constants $\lambda$ and $\mu$ such that if $(r,\bm{a})\sim \mathcal{P}$,
    $$ \lambda |\bm{a}^\top\p - \bm{a}^\top\p^*(\bm{d}')| \le \left|\prob(r>\bm{a}^\top \bm{p}\vert\bm{a}) - \prob(r>\bm{a}^\top\bm{p}^*(\bm{d}')\vert\bm{a})\right| \le \mu |\bm{a}^\top\p - \bm{a}^\top\p^*(\bm{d}')|$$
    holds for any $\bm{p} \in \Omega_{p}$ and $\bm{d}'\in \Omega_d$.
    \item[(c)] The optimal solution $\p^*(\bm{d}')$ satisfies $p^*_i(\bm{d}') = 0$ if and only if $d_i' - \E_{(r,\Aa)\sim \mathcal{P}}[a_iI(r>\Aa^\top \p^*(\bm{d}'))]>0$ for any $\bm{d}'\in \Omega_d$.
\label{assump3}
\end{itemize}
\end{assumption}

Theorem \ref{Algo2} states that Algorithm \ref{alg:HDLA} incurs a worst-case regret of $O(\log{n}\log\log n)$.   Technically, the proof builds upon an analysis of the three components of the generic upper bound in Theorem \ref{representation}. A caveat is that the SAA problem and the underlying stochastic program in Algorithm \ref{alg:HDLA} are dynamically changing over time. To analyze the algorithm, we identify a subset $\mathcal{D}\subset\Omega_d$ around the initial $\bm{d}=\bm{b}/n$ and show that (i) when $\bm{d}_t=\bm{d}'\in \mathcal{D},$ Assumption \ref{assump1} and \ref{assump3} are satisfied, and more importantly, all the stochastic programs specified by $\bm{d}'$ share the same binding and non-binding dimensions; (ii) the process $\bm{d}_t$ exits the region $\mathcal{D}$ only at the very end of the horizon. 
Since the constraint process $\bm{b}_t$ and $\bm{d}_t$ are more complicated, 
we adopt a different approach than the proof of Theorem \ref{Algo1}. The proof of the theorem and more discussions are deferred to Section \ref{AlogT}.

\begin{theorem}
With the online policy $\bm{\pi}_3$ specified by Algorithm \ref{alg:HDLA},
\begin{gather*}
\Delta_n(\bm{\pi}_3) \le O(\log{n}\log\log n).
\end{gather*}
\label{Algo2}
\end{theorem}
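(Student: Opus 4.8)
The plan is to control, each by $\tilde{O}(\log n)$, the three quantities appearing in the generic decomposition of Theorem~\ref{representation} (or Corollary~\ref{coro1}) instantiated with the stopping time $\tau=\tau_{\bar a}$: the cumulative price error $\E[\sum_{t\le\tau}\|\p_t-\p^*\|_2^2]$, the post-depletion horizon $\E[n-\tau]$, and the binding leftover $\E[\sum_{i\in I_B}b_{in}]$; adding the three and using $\tilde O(\log n)=O(\log n\log\log n)$ gives the claim. For $t\le\tau_{\bar a}$ every constraint still has at least $\bar a$ units, so no arriving order can overflow and the realized action is exactly the threshold rule $x_t=I(r_t>\Aa_t^\top\p_t)$; together with the conditional independence $(r_t,\Aa_t)\perp\mathcal H_{t-1}\mid\p_t$ from Section~\ref{thresPolicy}, this lets me evaluate one-step conditional expectations against $\mathcal P$. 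The workhorse quantity is the deviation of the remaining capacity from its ``on-track'' value, $\delta_{it}:=b_{it}-(n-t)d_i$, which starts at $\delta_{i0}=0$ and obeys the exact recursion $\delta_{it}-\delta_{i,t-1}=d_i-a_{it}x_t$.

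First I would prove a Lipschitz/sensitivity estimate for the stochastic-program solution map $\bm{d}\mapsto\p^*(\bm{d})$: by the local strong convexity and smoothness of $f$ around $\p^*$ (Proposition~\ref{strConvex}) and the strict complementarity in Assumption~\ref{assump2}(c), on a fixed neighborhood of $\bm{d}$ the binding/non-binding partition is preserved and $\|\p^*(\bm{d}')-\p^*\|_2\lesssim\|(\bm{d}'-\bm{d})_{I_B}\|_2$, with the non-binding coordinates of $\p^*(\bm{d}')$ still zero. Algorithm~\ref{alg:HDLA} (Step~8) computes $\p_{t+1}$ as the minimizer of an SAA of exactly the stochastic program with capacity replaced by the remaining average capacity $\bm{d}_t=\bm{b}_t/(n-t)$; combining a version of Theorem~\ref{expectation} that is uniform over the capacity in a neighborhood of $\bm{d}$ (and robust to that capacity being the sample-dependent $\bm{d}_{t}$ — I would first show $\bm{d}_{t}$ lies in the neighborhood with overwhelming probability, then apply a uniform-in-capacity version of Propositions~\ref{pro:gradient}--\ref{pro:Hessian}) with the sensitivity estimate yields, for $t\le\tau$,
$$\E\big[\|\p_t-\p^*\|_2^2\big]\ \lesssim\ \frac{m\log m\,\log\log t}{t}\ +\ \sum_{i\in I_B}\frac{\E[\delta_{i,t-1}^2]}{(n-t+1)^2}.$$

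The crux, and the step I expect to be the main obstacle, is to show $\E[\delta_{it}^2]=\tilde O(n-t)$ for every binding coordinate $i\in I_B$. Writing $G_i(\p):=\E[a_iI(r>\Aa^\top\p)]$, the one-step drift is $\E[\delta_{it}\mid\mathcal F_{t-1}]-\delta_{i,t-1}=d_i-G_i(\p_t)$; because the right-hand side in Algorithm~\ref{alg:HDLA} is reset to $\bm{d}_{t-1}$, the price $\p_t$ tracks $\p^*(\bm{d}_{t-1})$, at whose optimum $G_i$ equals $d_{i,t-1}=d_i+\delta_{i,t-1}/(n-t+1)$ on binding coordinates, so the drift is $-\delta_{i,t-1}/(n-t+1)$ up to the SAA error $\tilde O(1/\sqrt t)$. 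This is precisely the action-history-dependent negative feedback, and it gives the mean-reverting recursion $\E[\delta_{it}^2]\le\big(1-\tfrac{c}{\,n-t+1\,}\big)\E[\delta_{i,t-1}^2]+\tilde O\!\big(\tfrac{n-t}{t}\big)+O(1)$ after absorbing a bounded-variance innovation and the cross term; telescoping against the contracting weights $\prod_{u}(1-\tfrac{c}{n-u+1})\asymp\tfrac{n-t}{n-s}$ collapses the sum to $\tilde O(n-t)$. Making this rigorous requires managing the couplings: the binding set of $\bm{d}_{t-1}$ must remain $I_B$ (a self-consistency argument fed by the $\tilde O(n-t)$ bound itself, hence an induction on $t$ with the exponentially small failure probabilities of Propositions~\ref{pro:gradient}--\ref{pro:Hessian} absorbed into a negligible loss term), the non-binding coordinates drift away only linearly and in the ``safe'' direction (so they neither trigger $\tau_{\bar a}$ early nor enter the bound above), and forced rejections are confined to $t>\tau_{\bar a}$.

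Finally I would assemble the three terms. Plugging $\E[\delta_{i,t-1}^2]=\tilde O(n-t+1)$ into the displayed inequality gives $\E[\|\p_t-\p^*\|_2^2]=\tilde O(1/t)+\tilde O(1/(n-t+1))$, and summing over $t\le\tau-1$ produces $\tilde O(\log n)$ — the first term. For the binding leftover, $\E[\delta_{i,n-1}^2]=\tilde O(1)$ and one further step moves $\bm{b}$ by $O(1)$, so $\E[\sum_{i\in I_B}b_{in}]\le\sum_{i\in I_B}\sqrt{\E[\delta_{in}^2]}=\tilde O(1)$. For the post-depletion horizon, on a binding coordinate $b_{it}\ge(n-t)d_i-|\delta_{it}|$ with $\E[\delta_{it}^2]=\tilde O(n-t)$, so $\min_ib_{it}$ cannot drop below $\bar a$ until $n-t$ is itself $\tilde O(1)$ in expectation (non-binding coordinates only grow and never trigger it), whence $\E[n-\tau_{\bar a}]=\tilde O(1)$. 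Summing, $\Delta_n(\bm{\pi}_3)\le\tilde O(\log n)=O(\log n\log\log n)$.
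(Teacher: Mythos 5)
Your proposal follows essentially the same route as the paper's proof: the generic bound of Theorem~\ref{representation}/Corollary~\ref{coro1}, the Lipschitz sensitivity of $\bm{d}\mapsto\p^*(\bm{d})$ on a neighborhood where the binding set is preserved (the paper's Lemmas~\ref{d-p} and~\ref{delta_d}), a capacity-uniform version of the dual convergence, and the mean-reverting recursion for the capacity deviation on binding coordinates induced by the resolving design — your unnormalized $\delta_{it}=(n-t)(d_{it}-d_i)$ with contraction factor $1-\tfrac{1}{n-t}$ is exactly equivalent to the paper's zero-drift recursion for $d_{it}-d_i$, and your target $\E[\delta_{it}^2]=\tilde O(n-t)$ matches the paper's Lemma~\ref{sum_Z}. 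The only inaccuracy is that your claimed $\tilde O(1)$ bounds for $\E[n-\tau_{\bar a}]$ and $\E[\sum_{i\in I_B}b_{in}]$ are too optimistic (a Chebyshev/maximal-inequality argument on $\prob(\tau\le t)$ summed over $t$, and the blow-up of the recursion's $\tfrac{1}{(n-t-1)^2}$ terms near the horizon's end, each yield $\tilde O(\log n)$ as in the paper), but this does not affect the final $O(\log n\log\log n)$ regret.
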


\begin{figure}[ht!]
\begin{subfigure}{.5\textwidth}
  \centering
  \includegraphics[width=.98\linewidth]{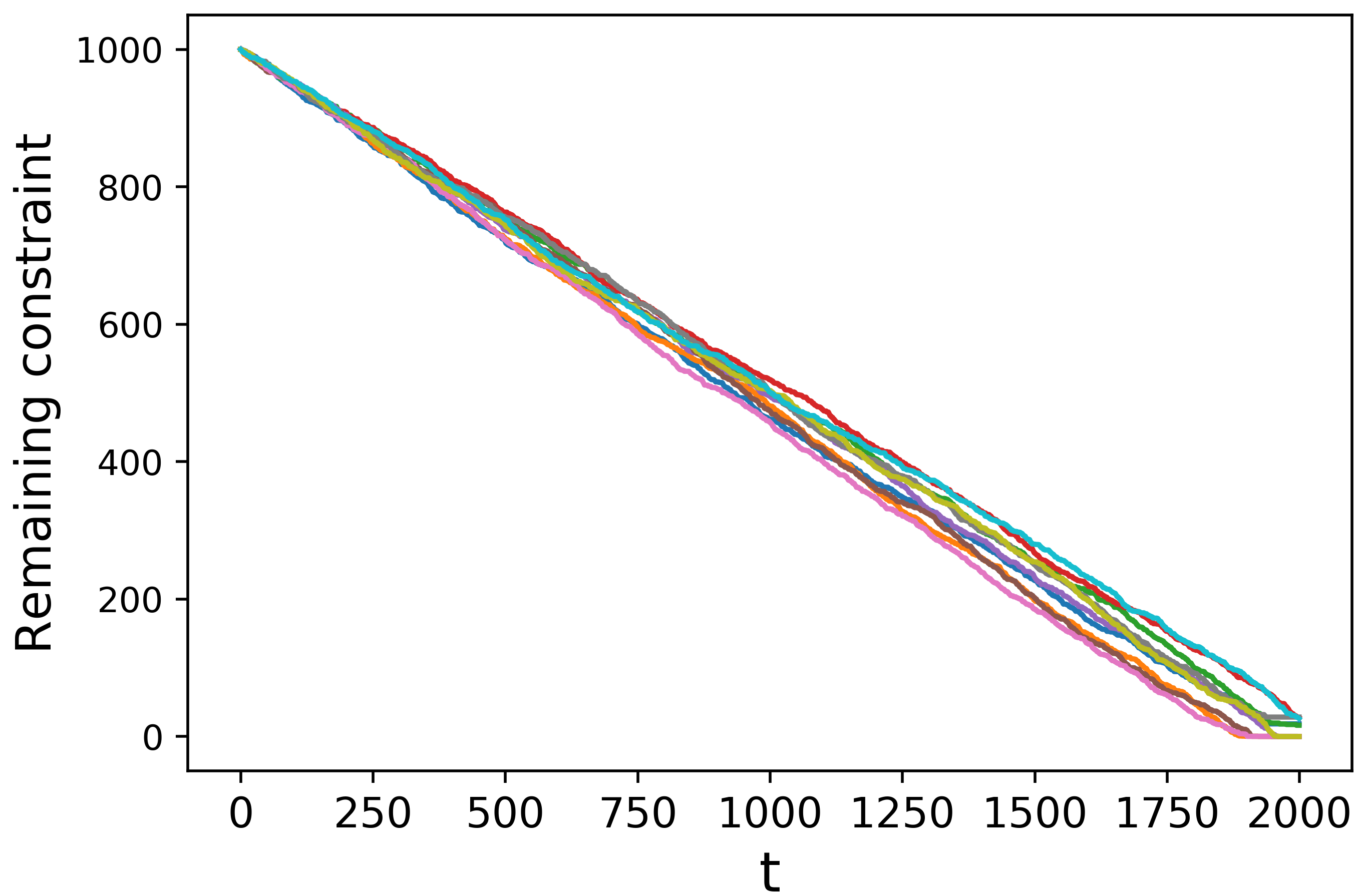}
  \caption{Algorithm 2}
  \label{fig:sub3}
\end{subfigure}%
\begin{subfigure}{.5\textwidth}
  \centering
  \includegraphics[width=.98\linewidth]{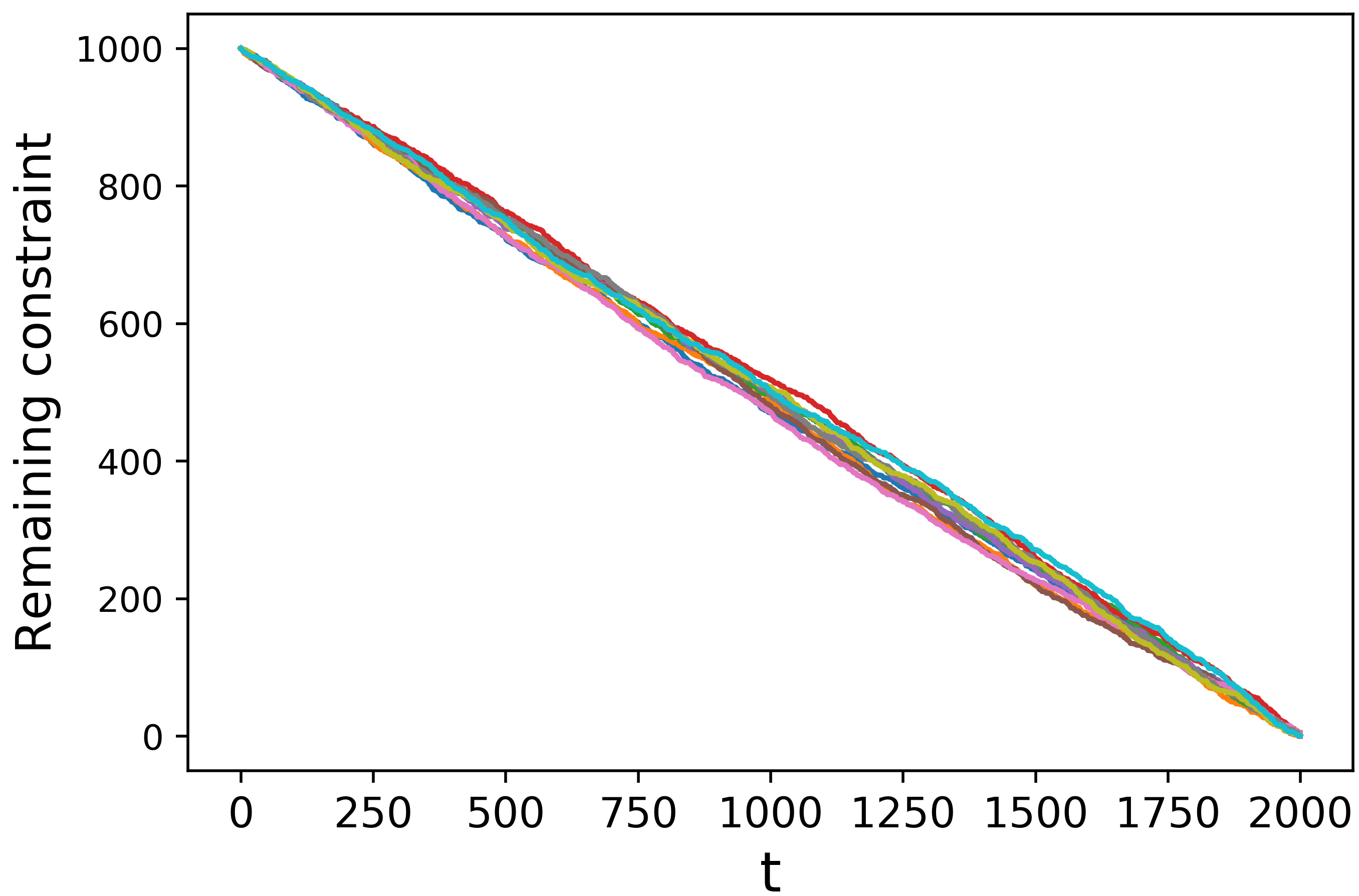}
  \caption{Algorithm 3}
  \label{fig:sub4}
    \end{subfigure}
    \begin{subfigure}{.5\textwidth}
  \centering
  \includegraphics[width=.98\linewidth]{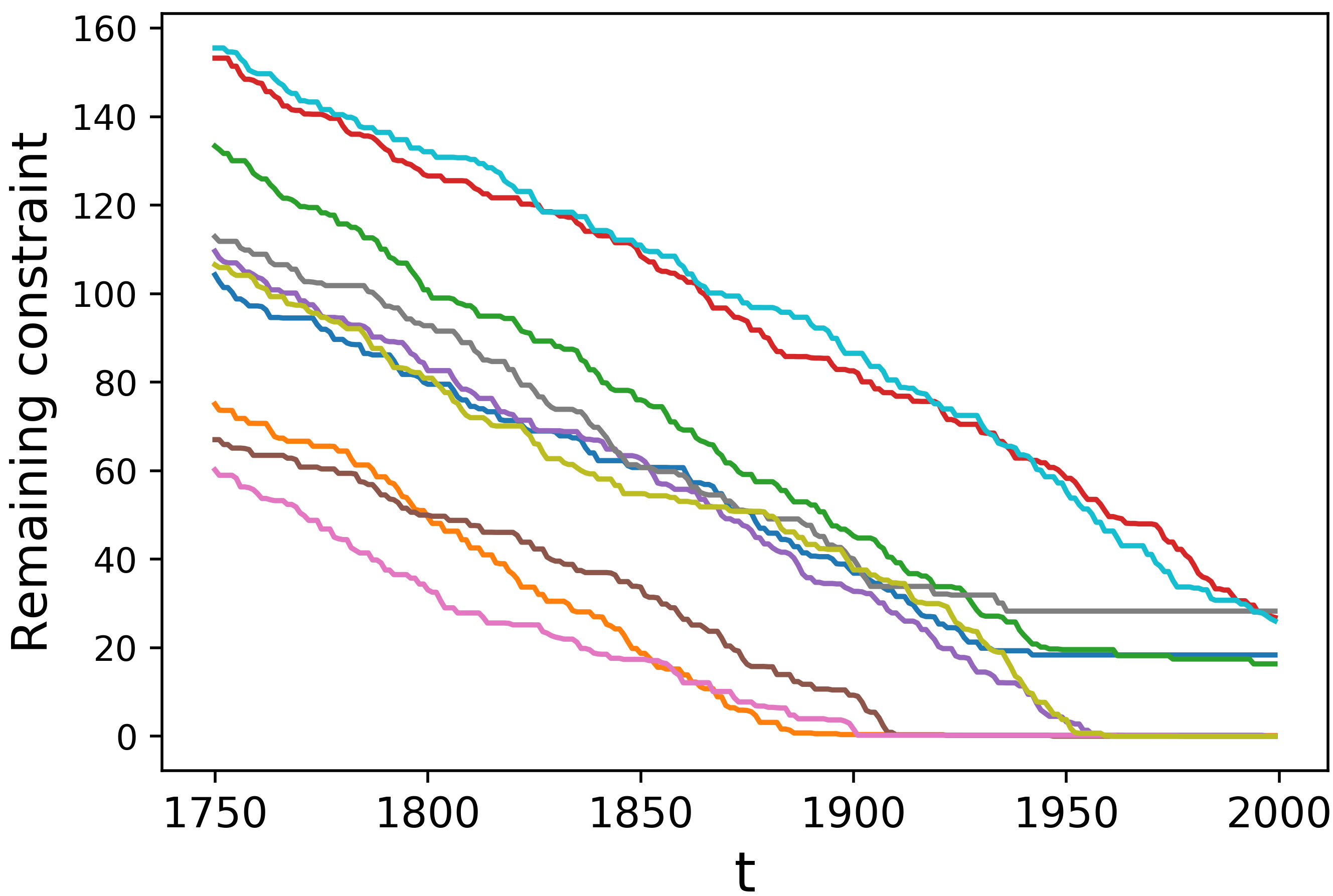}
  \caption{Algorithm 2}
  \label{fig:sub5}
    \end{subfigure}
    \begin{subfigure}{.5\textwidth}
  \centering
  \includegraphics[width=.98\linewidth]{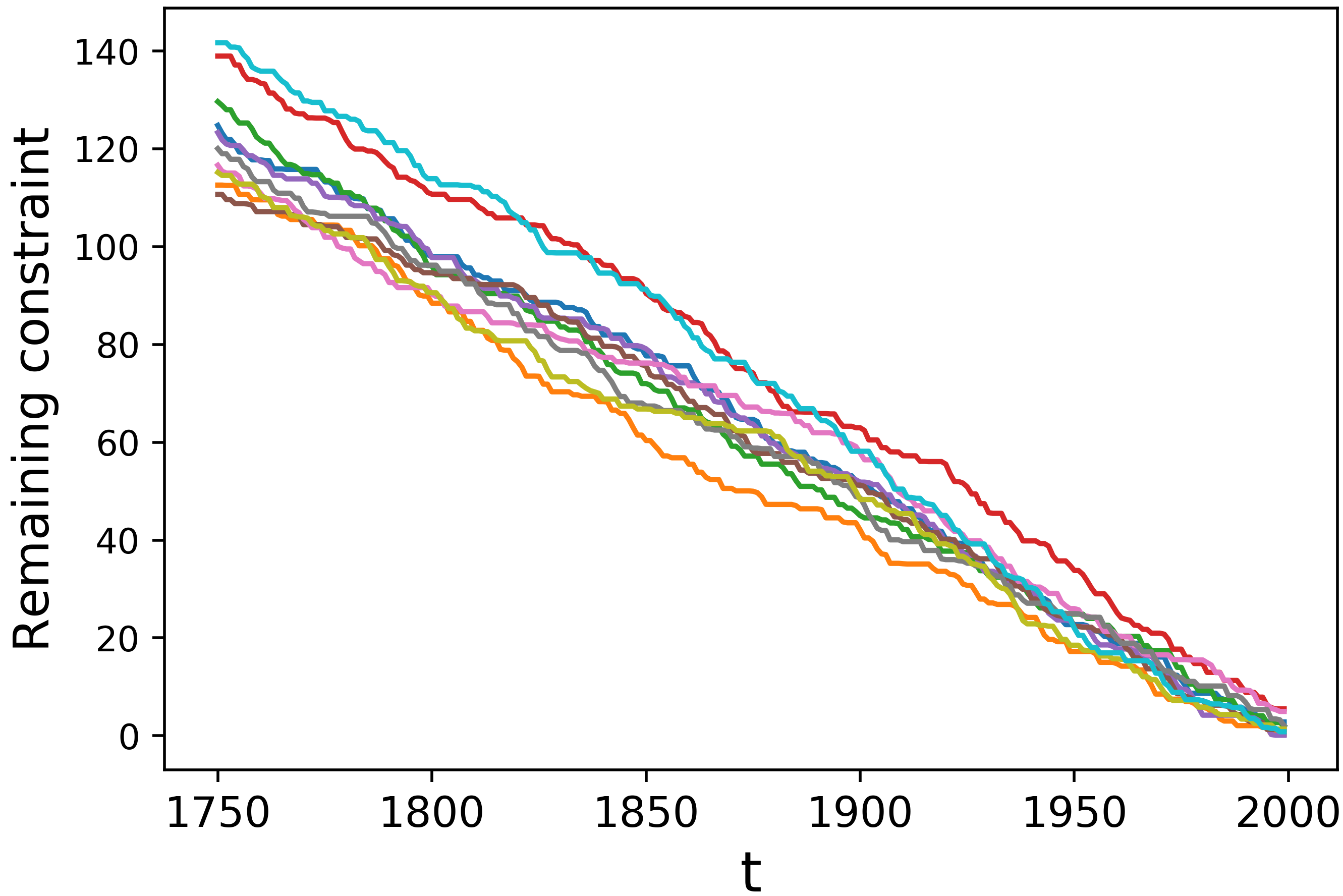}
  \caption{Algorithm 3}
  \label{fig:sub6}
    \end{subfigure}
    \caption{Constraint Consumption: $m=4$ and $n=2000$. Both $r_{j}$ and $a_{ij}$ are generated from $\text{Uniform}[0,1]$. The bottom two plots zoom in to the last 250 periods in the top two plots.}
    \label{fig:RemainingResource}
\end{figure}

Figure \ref{fig:RemainingResource} visualizes the constraint consumption under Algorithm \ref{alg:IDLA} and Algorithm \ref{alg:HDLA}. We run $10$ simulation trials for each algorithm and plot the constraint consumption of a binding constraint from each random trial. The top two figures show that both algorithms seem to perform well in balancing the resource consumption: not to exhaust the constraint too early or to have too much leftover at the end. But if we zoom into the last 250 periods as the bottom two figures, the advantage of Algorithm \ref{alg:HDLA} becomes significant. For Algorithm \ref{alg:IDLA}, some trials exhaust the constraint $O(\sqrt{n})$ periods prior to the end, while some trials have an $O(\sqrt{n})$ remaining at the end. Interestingly, for some curves (like the grey and green ones) in Figure \ref{fig:sub5}, the remaining resource level stops decreasing $O(\sqrt{n})$ periods prior to the end, though the remaining resource level is strictly positive. This is because some other constraint(s) has been exhausted at that time, and from that point on, we can not accept more orders even though there are still remaining resources for the plotted constraint. In comparison, the constraint consumption of Algorithm \ref{alg:HDLA} is much more stable. 

\section{Experiments and Discussions}

\subsection{Numerical Experiments}

We implement the three proposed algorithms on three different models, with model details given in Table \ref{tab:Model}. In the first model (Random Input I), the constraint coefficients $\Aa_j$'s and objective coefficients $r_j$'s are i.i.d. generated as bounded random variables. All $d_i$'s are set to be $0.25$. In the second model (Random Input II), the constraint coefficient $a_{ij}$ is generated from a normal distribution, which violates the boundedness assumption. The assignment of $r_j$ is deterministic conditional of $\bm{a}_j$ and thus violates Assumption \ref{assump2} (b). Both $a_{ij}$ and $r_j$ take negative values with a positive probability. In Random Input II, we set $d_i$'s alternatively to be $0.2$ and $0.3.$  In the third model, we consider a random permutation model, the same as the worst-case example in \citep{agrawal2014dynamic}. The number of decision variables $n$ is a random variable itself in this permutation model, so we specify its expectation to be $100$ and $300$ in the experiment. 

\begin{table}[ht!]
    \centering
    \begin{tabular}{c|c|c}
    \hline 
         Model  & $\Aa_j$ & $r_j$\\  \hline 
         Random Input I & $a_{ij}\sim \text{Uniform[-0.5,1]}$ &  $r_j \perp \Aa_j$ and $r_j \sim \text{Uniform}[0,10]$ \\ \hline
         Random Input II   & $a_{ij}\sim \text{Normal}(0.5,1)$ & $r_j = \sum_{i=1}^{m} a_{ij}$ \\ \hline
         Permutation &   \multicolumn{2}{c}{\citep{agrawal2014dynamic}} \\ \hline 
    \end{tabular}
    \caption{Models used in the experiments}
    \label{tab:Model}
\end{table}

Table \ref{tab:Regret} reports the estimated regrets of the three algorithms under different combinations of $m$ and $n.$ The estimation is based on $200$ simulation trials and in each simulation trial, a problem instance ($a_{ij}$'s and $r_j$'s) is generated from the corresponding model. While solving the stochastic program in Algorithm \ref{alg:Distribution}, we use an SAA scheme with $10^6$ samples. We have the following observations based on the experiment results. First, Table \ref{tab:Regret} shows that Algorithm \ref{alg:HDLA} performs uniformly better than 
Algorithm \ref{alg:Distribution} and Algorithm \ref{alg:IDLA}. Importantly, Algorithm \ref{alg:HDLA} also excels in the models of Random Input II and Permutation where the assumptions for theoretical analysis are violated. In particular, the random permutation problem instance is used in \cite{agrawal2014dynamic} to establish the necessary condition on the constraint capacity and can thus be viewed as one of the most challenging OLP problems. As far as we know, Algorithm \ref{alg:HDLA} is the first algorithm that employs the action-history-dependent mechanism in a generic OLP setting. In this regard, Algorithm \ref{alg:Distribution} and Algorithm \ref{alg:IDLA} stand for typical algorithms in OLP literature \citep{molinaro2013geometry,agrawal2014dynamic, gupta2014experts,devanur2019near}, and the experiments demonstrate the effectiveness of the action-history-dependent design compared with these works. Besides, the advantage of Algorithm \ref{alg:HDLA} becomes smaller as the ratio $m/n$ goes up. This can be explained by the fact that the dual convergence rate is of order $\sqrt{m/n}$ and therefore a dual-based algorithm like Algorithm \ref{alg:HDLA} would be more effective in a large-$n$ and small-$m$ regime.

\begin{table}[ht!]
    \centering
    \small
    \begin{tabular}{c|ccc|ccc|ccc}
    \toprule
    Model  & \multicolumn{3}{c|}{Random Input I} & \multicolumn{3}{c|}{Random Input II} & \multicolumn{3}{c}{Permutation}\\ \midrule
      Algorithm   & A1& A2& A3 &A1& A2& A3&A1& A2& A3 \\ \midrule 
      m = 4, n = 100   &28.17&37.68&\textbf{27.14}&11.75&23.85&\textbf{5.29}&33.17& 37.62&\textbf{7.42}\\
      m = 4, n = 300    &60.17&86.33&\textbf{45.01}&37.59&76.43&\textbf{5.47}& 109.6&51.52&\textbf{11.48}\\
      m = 16, n = 100   &30.21&45.16&\textbf{27.59}&93.34&81.02&\textbf{52.69}&24.96&21.96&\textbf{12.88} \\
      m = 16, n = 300    &60.76&88.91&\textbf{46.30}&184.4&160.4&\textbf{49.13}&140.5&59.56&\textbf{14.87} \\
      m = 64, n = 100    &36.84&40.52&\textbf{34.77}&493.2&461.7&\textbf{414.5}&37.70&20.31&\textbf{15.97} \\
      m = 64, n = 300    &67.78&87.68&\textbf{52.90}&1017.6&881.9&\textbf{611.1}&145.9&47.22&\textbf{18.76} \\
      \bottomrule
    \end{tabular}
    \caption{Regret performance: A1, A2, and A3 stand for Algorithm 1 (No-need-to-learn), Algorithm 2 (Simplified Dynamic Learning), and Algorithm 3 (Action-history-dependent), respectively.}
    \label{tab:Regret}
\end{table}

To illustrate how the regret scales with $n$, we fix $m=4$ and run the experiments for different $n(=25,50,100,250,500,1000,2000)$. The results are presented in Figure \ref{fig:Regret}, where the curves are plotted by connecting the sample points. For the left panel, the curves verify the regret results in Theorem \ref{Algo3}, \ref{Algo1} and \ref{Algo2}. Meanwhile, the right panel looks interesting in that the regrets of Algorithm \ref{alg:Distribution} 
and \ref{alg:IDLA} scale linearly with $n$ while the regret of Algorithm \ref{alg:HDLA} is $O(1)$ (a horizontal line can be fitted). This phenomenon is potentially caused by the deterministic assignment of $r_j$'s in the Random Input II model. 

\begin{figure}[ht!]
\begin{subfigure}{.5\textwidth}
  \centering
  \includegraphics[width=.93\linewidth]{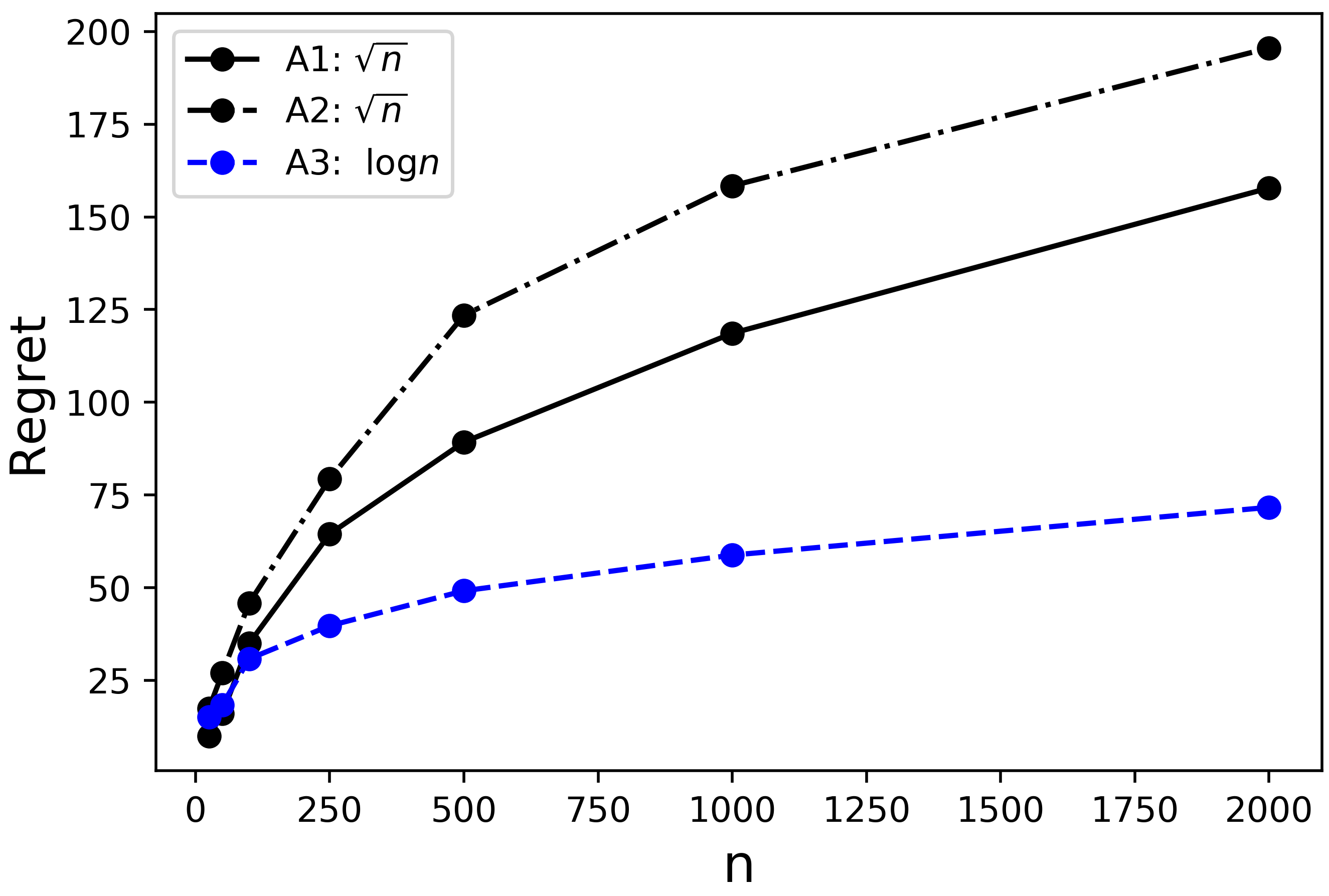}
  \caption{Random Input I}
  \label{fig:sub1}
\end{subfigure}%
\begin{subfigure}{.5\textwidth}
  \centering
  \includegraphics[width=.98\linewidth]{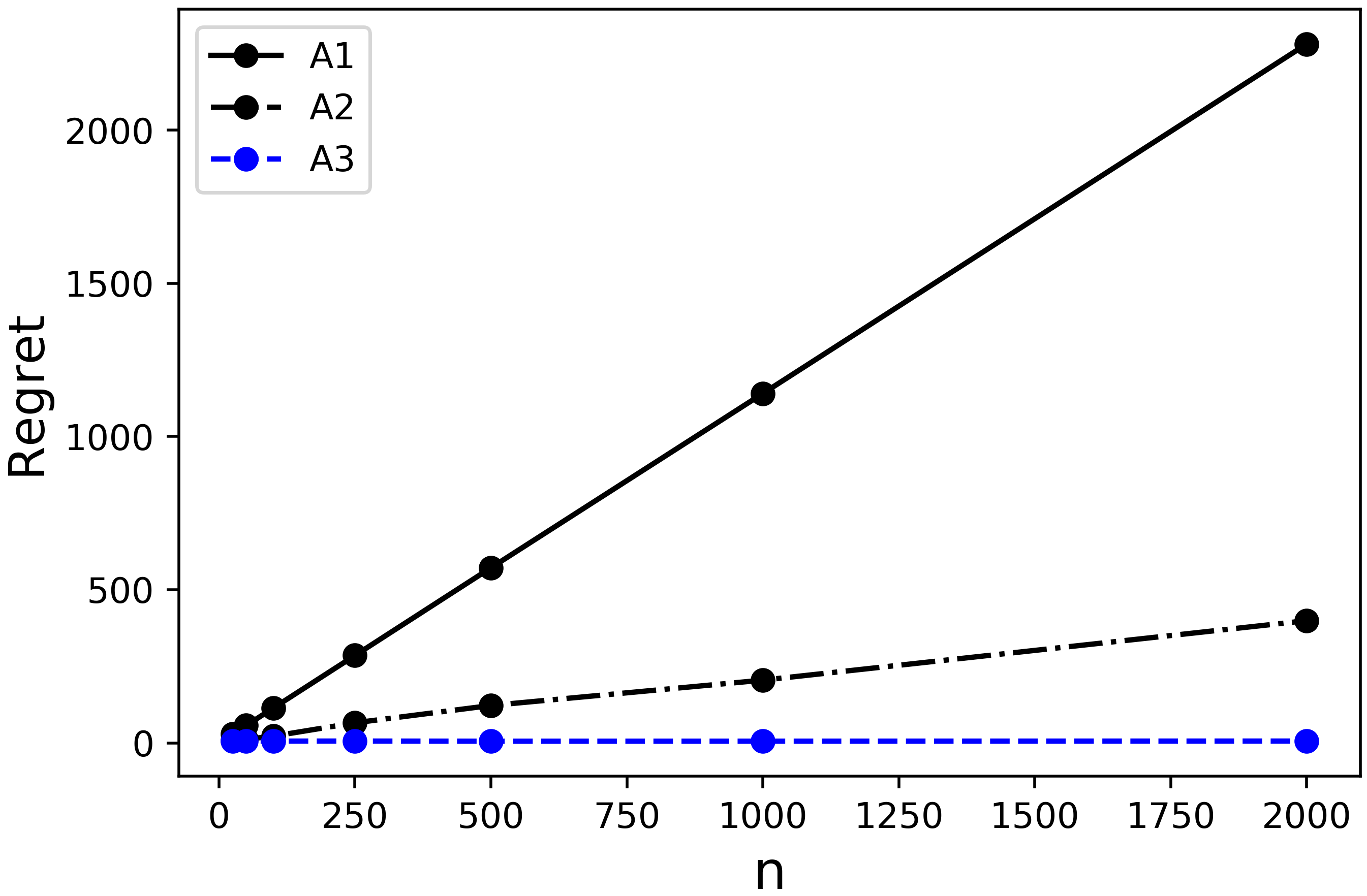}
  \caption{Random Input II}
  \label{fig:sub2}
    \end{subfigure}
    \caption{Regret curves with $m=4$}
    \label{fig:Regret}
\end{figure}

In the numerical experiments (Figure \ref{fig:Regret} (a)), we observe empirically that the regrets for Algorithm \ref{alg:Distribution} and for Algorithm \ref{alg:IDLA} are both of order $\sqrt{n}$. This indicates that the geometric interval is already sufficient for Algorithm \ref{alg:IDLA}. In other words, the performance of Algorithm \ref{alg:IDLA} cannot be further improved by simply increasing learning frequency. Furthermore, it means the consideration of past actions is necessary if we want to reduce the regret to $O(\log n)$. The same observation has been made for the network revenue management problem and thus motivates the study of the re-solving technique.

Table \ref{tab:networkRM} reports numerical experiments based on the network revenue management problem \citep{jasin2015performance}. The network revenue management problem therein can be formulated as an OLP problem by associating each arriving customer with a binary decision variable which denotes the decision of acceptance or rejection. The parameters in our experiment are the same as the numerical experiment in \citep{jasin2015performance}: demand rate, deterministic price, itinerary structure, and flight capacity. We implemented the PAC (Probabilistic Allocation Control) algorithm in \citep{jasin2015performance} and the re-solving algorithm in \citep{jasin2012re}. Two versions of the re-solving algorithm are implemented, one with the knowledge of true parameters (Re-solving (I)) and the other one with an imprecise parameter estimate (Re-solving (II)). Specifically, we perturb the arrival probability estimate by a magnitude of 0.005 for each customer type and re-normalize the perturbed probability. The imprecise estimate aims to capture the scenario that the decision maker has no access to the true parameters and can only calibrate the arrival probability through history data. Different re-solving frequencies and different horizons are tested for all four algorithms and the performance is reported based on an average over 200 simulation trials. Specifically, for the PAC algorithm, we set the initial learning period $t_0=15$ which means accepting all the customers when $t\le t_0.$

\begin{table}[ht!]
    \centering
    \small
    \begin{tabular}{c|c|cc|cc|cc|cc}
    \toprule
    \multicolumn{2}{c|}{} & \multicolumn{2}{c|}{Algorithm 3} & \multicolumn{2}{c|}{PAC} & \multicolumn{2}{c|}{Re-solving (I)} & \multicolumn{2}{c}{Re-solving (II)} \\ \midrule
    \multicolumn{2}{c|}{Re-solving freq.}  & 2 & 10  &2& 10 & 2& 10 & 2 & 10 \\ \midrule 
 \multirow{4}{*}{Horizon}   &  n = 200   & 63.73 & 64.83  &123.49 & 168.32&60.04 & 84.61 &  62.59 &87.34 \\
    &  n = 400 &  82.67 & 92.01  &  141.32& 183.14  & 121.62  & 129.68   &  131.29& 140.52 \\
    &  n = 800 & 114.57  & 111.50 & 179.93    &192.22&  205.67  & 235.31 & 226.55 & 266.97 \\
    &  n = 1600 & 122.81 & 131.76 & 192.18  & 221.03  &  383.37& 402.52 &  477.73 & 486.78 \\
      \bottomrule
    \end{tabular}    
\caption{Regret performance on network revenue management experiment \citep{jasin2015performance}. PAC stands for the Probabilistic Allocation Control algorithm proposed in \citep{jasin2015performance}. Re-solving (I) stands for the re-solving based algorithm proposed in \citep{jasin2012re}. Re-solving (II) stands for the re-solving based algorithm proposed in \citep{jasin2012re} but with an imprecise parameter estimate. Re-solving frequency denotes the frequency with which the dual or primal control is updated for all four algorithms. Horizon denotes the number of periods (total number of customers in network revenue management problem/decision variables in OLP). The unit for the numbers in the table is \$100.}
    \label{tab:networkRM}
\end{table} 

Our action-history-dependent algorithm (Algorithm \ref{alg:HDLA}) has a better empirical performance than the other three algorithms. The result is a bit surprising in that the PAC algorithm can be viewed as a primal version of our action-history-dependent algorithm and it also has the adaptive design through re-solving and re-optimization. In addition, the Re-solving (I) algorithm even utilizes the knowledge of true parameter values. We believe the advantage of our algorithm is only up to a constant factor when $n$ is sufficiently large, and we provide two explanations for the observation. First, there are 41 different itinerary routes and 14 connecting flights ($m=14$) in this experiment. It means that the PAC algorithm needs to estimate 41 parameters, while Algorithm \ref{alg:HDLA} only needs to estimate 14 parameters. Also, the nature of the problem may result in a smaller variance when estimating the optimal dual price. Specifically, the 41 parameters in the PAC algorithms are all probabilities and the average value is approximately 0.02. So the estimation suffers from the efficiency issues in the rare-event simulation \citep{asmussen2007stochastic}. In addition, the two-step procedure of the PAC algorithm feeds the estimation as input for an optimization problem and the optimization procedure may further amplify the estimation error. Second, all three re-solving algorithms are primal-based while our Algorithm \ref{alg:HDLA} is dual-based. Note that the primal-based algorithms output a randomized allocation rule from the optimization procedure, and consequently this randomized rule induces more randomness when deciding the values of the primal variables. The additional randomness may cause more fluctuation of the constraint process. In contrast, our Algorithm \ref{alg:HDLA} is dual-based and can thus be viewed as a smoother version of the three primal-based algorithms.

\subsection{Lower Bound and Open Questions} 

Now, we present a lower bound result for the OLP problem for dual-based policies. \cite{bray2019does} established that the worst-case regret of the multi-secretary problem is $\Omega(\log n)$ even with the knowledge of underlying distribution. We provide an alternative proof for the lower bound for two goals. First, the proof mimics the derivation of the lower bounds in \citep{keskin2014dynamic, besbes2013implications} and the core part is based on van Trees inequality \citep{gill1995applications} -- a Bayesian version of the Cramer-Rao bound. Thus, it extends the previous lower bound analysis from an unconstrained setting to a constrained setting. Second, recall that we develop a generic regret upper bound in Theorem \ref{representation}. The lower bound proof shows that under certain conditions, the generic regret upper bound is rather tight. Intuitively, this indicates that an effective learning of $\bm{p}^*$ and stable control of the constraint process are not only sufficient but also probably necessary to guarantee a sharp regret bound. 

\begin{theorem}
There exist constants $\underline{C}$ and $n_0>0$ such that
$$\Delta_{n}(\bm{\pi}) \ge \underline{C} \log n$$
holds for all $n \ge n_0$ and any dual-based policy $\bm{\pi}.$
\label{lower_bound}
\end{theorem}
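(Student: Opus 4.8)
The plan is to exploit the multi-secretary special case, which is an OLP instance with $m=1$ and $a_{1j}\equiv 1$, so any lower bound constructed there is automatically a lower bound on the worst-case OLP regret over $\Xi$. First I would fix a capacity ratio $d\in(0,1)$ and a smooth one-parameter family $\{\mathcal{P}_\theta:\theta\in[\theta_0-\delta_0,\theta_0+\delta_0]\}$ of reward distributions — e.g.\ a location family $f_\theta(x)=f_0(x-\theta)$, or a tilt $f_\theta(x)=f_0(x)(1+\theta\,g(x))$ with a fixed mean-zero bump $g$ supported near the quantile to keep the support fixed — whose densities are uniformly bounded above and below by positive constants, so that Example~1 applies and $\mathcal{P}_\theta\in\Xi$ for every $\theta$ with common constants. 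The optimizer $\p^*(\theta)$ of the stochastic program (\ref{asymProblem}) for this instance is exactly the $(1-d)$-quantile of $\mathcal{P}_\theta$; by construction it is continuously differentiable in $\theta$ with $|(\p^*)'(\theta)|$ bounded below by a positive constant (identically $1$ for the location family). I would then place a prior $\Lambda$ with a smooth density compactly supported in the open interval $(\theta_0-\delta_0,\theta_0+\delta_0)$, so its Fisher information $\mathcal{I}(\Lambda)$ is finite and the per-sample Fisher information $\mathcal{I}_1(\theta)$ is uniformly bounded on the support.

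The first substantive step is a reduction of the regret of an arbitrary dual-based policy $\bm\pi$ to the mean-squared error with which its threshold sequence $\{p_t\}$ tracks $\p^*(\theta)$:
\[
\Delta^{\mathcal{P}_\theta}_n(\bm\pi)\ \ge\ c\sum_{t=1}^{\lfloor\gamma n\rfloor}\E_\theta\!\left[(p_t-\p^*(\theta))^2\right]
\]
for absolute constants $c,\gamma>0$. This is essentially a converse to Theorem~\ref{representation}: Lemma~\ref{gP} gives $g(\p^*)-g(p)\le\mu\bar a^2(p-\p^*)^2$, and Assumption~\ref{assump2}(b) together with the argument behind Proposition~\ref{strConvex} gives the matching lower bound $g(\p^*)-g(p)\ge c(p-\p^*)^2$ on $\Omega_p$; since the Lagrangian reward collected in period $t$ equals, in conditional expectation given $p_t$, the quantity $g(p_t)$ plus a martingale-difference term, summing these single-period gaps produces the display. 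Two reductions make this rigorous: one may assume $p_t\in\Omega_p$ without loss, because $g$ is concave and maximized at $\p^*\in\Omega_p$ so projecting $p_t$ onto $\Omega_p$ only decreases regret; and restricting the sum to $t\le\lfloor\gamma n\rfloor$ for a small deterministic $\gamma$ ensures the binding constraint is almost surely not yet exhausted there, so the feasibility truncation and the remaining-capacity terms of Theorem~\ref{representation} cannot help the policy. I expect this step — certifying in the \emph{lower-bound} direction that the constraint-process contributions are nonnegative — to be the main obstacle.

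The second step is information-theoretic. Because $\bm\pi$ is dual-based, $p_t=h_t(\mathcal{H}_{t-1})$ is a deterministic function of $(r_1,\dots,r_{t-1})$ (the columns are the non-random constant $1$, and the past actions are themselves functions of past rewards), hence an estimator of the smooth parameter $\psi(\theta):=\p^*(\theta)$ from $t-1$ i.i.d.\ draws of $\mathcal{P}_\theta$. The van Trees (Bayesian Cram\'er--Rao) inequality then yields
\[
\int\E_\theta\!\left[(p_t-\psi(\theta))^2\right]\Lambda(d\theta)\ \ge\ \frac{\left(\int\psi'(\theta)\,\Lambda(d\theta)\right)^2}{(t-1)\int\mathcal{I}_1(\theta)\,\Lambda(d\theta)+\mathcal{I}(\Lambda)}\ \ge\ \frac{c_1}{t}
\]
for all $t\ge 2$, with $c_1>0$ depending only on the family and the prior. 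Averaging the regret reduction over $\Lambda$ and inserting this per-period bound gives $\int\Delta^{\mathcal{P}_\theta}_n(\bm\pi)\,\Lambda(d\theta)\ge c\,c_1\sum_{t=2}^{\lfloor\gamma n\rfloor}\tfrac1t\ge\underline C\log n$ for $n$ large, since $\sum_{t\le\gamma n}1/t=\log n+\Theta(1)$. Finally, the worst-case regret dominates any prior average, so there is a $\theta$ with $\Delta^{\mathcal{P}_\theta}_n(\bm\pi)\ge\underline C\log n$, and since $\mathcal{P}_\theta\in\Xi$ we conclude $\Delta_n(\bm\pi)=\sup_{\mathcal{P}\in\Xi}\Delta^{\mathcal{P}}_n(\bm\pi)\ge\underline C\log n$ for all $n\ge n_0$. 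Aside from the reduction, the remaining care is routine: checking that the family lies in $\Xi$ uniformly in $\theta$ (Assumption~\ref{assump2}(a) is automatic as $\E[a^2]=1$, Assumption~\ref{assump2}(c) holds since $\p^*(\theta)$ is the interior quantile, and Assumption~\ref{assump2}(b) follows as in Example~1), and verifying the regularity hypotheses of van Trees (quadratic-mean differentiability of $\theta\mapsto f_\theta$ and a smooth compactly supported prior). A two-point Le Cam argument at scale $t^{-1/2}$ in each block is a possible alternative, but the van Trees route is cleaner here and mirrors the dynamic-pricing lower bounds of \citep{keskin2014dynamic, besbes2013implications}.
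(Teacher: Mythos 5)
Your route is essentially the paper's: reduce to the one-constraint multi-secretary instance, lower-bound the regret by the cumulative squared tracking error $\sum_t\E[(p_t-p^*)^2]$ via the Lagrangian $g$ and the local quadratic growth coming from Assumption \ref{assump2}(b) (this is the paper's Theorem \ref{p-p}), and then lower-bound that cumulative error by $\Omega(\log n)$ with the van Trees inequality applied to a smooth one-parameter family under a compactly supported prior (the paper uses truncated exponentials with a polynomial prior on $[1,2]$). The cosmetic differences — your tilt/location family versus their truncated exponentials, and your truncation of the sum at $\lfloor\gamma n\rfloor$ with $\gamma<d$ versus their convention of forcing $p_s=1$ after exhaustion — are both workable.

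There is, however, one genuine gap, and it is not the one you flagged as the main obstacle. Your display $\Delta_n^{\mathcal{P}_\theta}(\bm\pi)\ge c\sum_{t\le\gamma n}\E_\theta[(p_t-p^*(\theta))^2]$ does not follow from the argument you sketch. The Lagrangian argument yields $ng(p^*)-\E R_n(\bm\pi)\ge c\sum_t\E[(p_t-p^*)^2]$, but the regret is measured against the random offline optimum, and Lemma \ref{gP} only gives the one-sided bound $\E R_n^*\le ng(p^*)$. Hence
\[
\Delta_n^{\mathcal{P}_\theta}(\bm\pi)\;\ge\;c\sum_{t}\E\bigl[(p_t-p^*)^2\bigr]\;-\;\bigl(ng(p^*)-\E R_n^*\bigr),
\]
and the subtracted benchmark gap is nonnegative and, a priori, could swamp a $\log n$ term. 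The constraint-process issue you worried about is the easy half (feasibility gives $(nd-\sum_t x_t)p^*\ge 0$ immediately, and each post-exhaustion period contributes nonnegatively to $ng(p^*)-\E R_n(\bm\pi)$ since $g(p^*)\ge dp^*$); the missing half is showing $ng(p^*)-\E R_n^*=o(\log n)$. The paper closes this by writing $R_n^*=nf_n(p_n^*)$ via strong duality and bounding $ng(p^*)-\E R_n^*\le n\mu'\,\E[(p_n^*-p^*)^2]\le C\mu'\log\log n$, which requires the finite-sample dual convergence rate of Theorem \ref{expectation}. That step is where the SAA machinery enters the lower bound, and your proposal needs it (or an equivalent second-order analysis of the sample optimal value) to be complete.
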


Based on the experiments and the theoretical results developed in this paper, we raise several open questions for future study. First, how does the regret depend on $m$? In the experiments of Random Input I, we observe that the regret increases but does not scale up much with $m$. This is apparently not the case for Random Input II where the regret increases significantly as $m$ grows larger (See Section \ref{numeric_m} for more experiments). A possible explanation is that the generation of $r_j$'s in Random Input II causes that $r_j$ scales with $m$ and consequently the offline optimal objective value scales linearly with $mn.$ A natural question then is what are the conditions that render the regret dependent on $m$ and in what way the regret depends on $m$. Second, is it possible to relax the assumptions and extend the theoretical results for more general random input models and permutation models? We observe a good performance of Algorithm \ref{alg:HDLA} when the assumptions are violated and even in the permutation model. For example, we observe an $O(1)$ regret of Algorithm \ref{alg:HDLA} in the experiment under Random Input II (Figure \ref{fig:Regret}).  Since the assumptions are violated, the lower bound does not hold in Random Input II. Also, it is an interesting question to ask if the dual convergence and the regret results still hold under the random permutation model. This question entails a proper definition of the stochastic program (\ref{asymProblem}) in the permutation context. Third, in Assumption \ref{assump1} (c), we require that the constraints scale linearly with $n$. We have not answered the question of whether this linear growth rate is necessary for establishing the dual convergence results. In other words, how can the dual convergence and regret bounds be extended to 
a limited-resource regime? Besides, Algorithm \ref{alg:HDLA} updates the dual price in every period. This raises the question if it is possible to have a less frequent updating/learning scheme but still to achieve the same order of regret. In the network revenue management setting where the distribution is known, \cite{bumpensanti2018re} has shown the effectiveness of an infrequent updating scheme. The analysis therein highly relies on the finiteness of the distribution's support as well as the knowledge of the distribution. We believe it is both interesting and challenging to derive a similar low-regret result with an infrequent updating scheme for the general OLP problem. The last question is on the type of regret bound: all the algorithm bounds we provide in this paper are on regret expectation. An interesting question is how a high-probability regret bound can be derived for the OLP problem (possibly with extra logarithmic terms). In fact, the results of dual convergence in our paper are probabilistic (as in Proposition \ref{pro:gradient} and \ref{pro:Hessian}), which may serve as a good starting point for deriving a high-probability regret bound. 

\section*{Acknowledgment}
\small
\onehalfspacing
We thank Yu Bai, Simai He, Peter W. Glynn, Daniel Russo, Zizhuo Wang, Zeyu Zheng,  and seminar participants at Stanford AFTLab, NYU Stern, Columbia DRO, Chicago Booth, Imperial College Business School and ADSI summer school for helpful discussions and comments. We thank Chunlin Sun, Guanting Chen, and Zuguang Gao for proofreading the proof and Yufeng Zheng for assistance in the simulation experiments.
\normalsize

\onehalfspacing

\bibliographystyle{informs2014} 
\bibliography{sample.bib} 

\begin{thebibliography}{56}
\expandafter\ifx\csname natexlab\endcsname\relax\def\natexlab#1{#1}\fi
\expandafter\ifx\csname url\endcsname\relax
  \def\url#1{{\tt #1}}\fi
\expandafter\ifx\csname urlprefix\endcsname\relax\def\urlprefix{URL }\fi
\expandafter\ifx\csname urlstyle\endcsname\relax
  \expandafter\ifx\csname doi\endcsname\relax
  \def\doi#1{doi:\discretionary{}{}{}#1}\fi \else
  \expandafter\ifx\csname doi\endcsname\relax
  \def\doi{doi:\discretionary{}{}{}\begingroup \urlstyle{rm}\Url}\fi \fi

\bibitem[{Abernethy et~al.(2008)Abernethy, Bartlett, Rakhlin, and
  Tewari}]{abernethy2008optimal}
Abernethy, Jacob, Peter~L Bartlett, Alexander Rakhlin, Ambuj Tewari. 2008.
\newblock Optimal strategies and minimax lower bounds for online convex games .

\bibitem[{Agrawal and Devanur(2014{\natexlab{a}})}]{agrawal2014bandits}
Agrawal, Shipra, Nikhil~R Devanur. 2014{\natexlab{a}}.
\newblock Bandits with concave rewards and convex knapsacks.
\newblock {\it Proceedings of the fifteenth ACM conference on Economics and
  computation\/}. 989--1006.

\bibitem[{Agrawal and Devanur(2014{\natexlab{b}})}]{agrawal2014fast}
Agrawal, Shipra, Nikhil~R Devanur. 2014{\natexlab{b}}.
\newblock Fast algorithms for online stochastic convex programming.
\newblock {\it Proceedings of the twenty-sixth annual ACM-SIAM symposium on
  Discrete algorithms\/}. SIAM, 1405--1424.

\bibitem[{Agrawal et~al.(2014)Agrawal, Wang, and Ye}]{agrawal2014dynamic}
Agrawal, Shipra, Zizhuo Wang, Yinyu Ye. 2014.
\newblock A dynamic near-optimal algorithm for online linear programming.
\newblock {\it Operations Research\/} {\bf 62}(4) 876--890.

\bibitem[{Arlotto and Gurvich(2019)}]{arlotto2019uniformly}
Arlotto, Alessandro, Itai Gurvich. 2019.
\newblock Uniformly bounded regret in the multisecretary problem.
\newblock {\it Stochastic Systems\/} .

\bibitem[{Asadpour et~al.(2019)Asadpour, Wang, and Zhang}]{asadpour2019online}
Asadpour, Arash, Xuan Wang, Jiawei Zhang. 2019.
\newblock Online resource allocation with limited flexibility.
\newblock {\it Management Science\/} .

\bibitem[{Asmussen and Glynn(2007)}]{asmussen2007stochastic}
Asmussen, S{\o}ren, Peter~W Glynn. 2007.
\newblock {\it Stochastic simulation: algorithms and analysis\/}, vol.~57.
\newblock Springer Science \& Business Media.

\bibitem[{Balseiro and Gur(2019)}]{balseiro2019learning}
Balseiro, Santiago~R, Yonatan Gur. 2019.
\newblock Learning in repeated auctions with budgets: Regret minimization and
  equilibrium.
\newblock {\it Management Science\/} {\bf 65}(9) 3952--3968.

\bibitem[{Bartlett et~al.(2005)Bartlett, Bousquet, Mendelson
  et~al.}]{bartlett2005local}
Bartlett, Peter~L, Olivier Bousquet, Shahar Mendelson, et~al. 2005.
\newblock Local rademacher complexities.
\newblock {\it The Annals of Statistics\/} {\bf 33}(4) 1497--1537.

\bibitem[{Besbes and Muharremoglu(2013)}]{besbes2013implications}
Besbes, Omar, Alp Muharremoglu. 2013.
\newblock On implications of demand censoring in the newsvendor problem.
\newblock {\it Management Science\/} {\bf 59}(6) 1407--1424.

\bibitem[{Besbes and Zeevi(2009)}]{besbes2009dynamic}
Besbes, Omar, Assaf Zeevi. 2009.
\newblock Dynamic pricing without knowing the demand function: Risk bounds and
  near-optimal algorithms.
\newblock {\it Operations Research\/} {\bf 57}(6) 1407--1420.

\bibitem[{Borodin and El-Yaniv(2005)}]{borodin2005online}
Borodin, Allan, Ran El-Yaniv. 2005.
\newblock {\it Online computation and competitive analysis\/}.
\newblock cambridge university press.

\bibitem[{Boucheron et~al.(2013)Boucheron, Lugosi, and
  Massart}]{boucheron2013concentration}
Boucheron, St{\'e}phane, G{\'a}bor Lugosi, Pascal Massart. 2013.
\newblock {\it Concentration inequalities: A nonasymptotic theory of
  independence\/}.
\newblock Oxford university press.

\bibitem[{Bray(2019)}]{bray2019does}
Bray, Robert~L. 2019.
\newblock Does the multisecretary problem always have bounded regret?
\newblock {\it arXiv preprint arXiv:1912.08917\/} .

\bibitem[{Buchbinder et~al.(2007)Buchbinder, Jain, and
  Naor}]{buchbinder2007online}
Buchbinder, Niv, Kamal Jain, Joseph~Seffi Naor. 2007.
\newblock Online primal-dual algorithms for maximizing ad-auctions revenue.
\newblock {\it European Symposium on Algorithms\/}. Springer, 253--264.

\bibitem[{Buchbinder and Naor(2009)}]{buchbinder2009online}
Buchbinder, Niv, Joseph Naor. 2009.
\newblock Online primal-dual algorithms for covering and packing.
\newblock {\it Mathematics of Operations Research\/} {\bf 34}(2) 270--286.

\bibitem[{Buchbinder et~al.(2009)Buchbinder, Naor
  et~al.}]{buchbinder2009design}
Buchbinder, Niv, Joseph~Seffi Naor, et~al. 2009.
\newblock The design of competitive online algorithms via a primal--dual
  approach.
\newblock {\it Foundations and Trends{\textregistered} in Theoretical Computer
  Science\/} {\bf 3}(2--3) 93--263.

\bibitem[{Bumpensanti and Wang(2018)}]{bumpensanti2018re}
Bumpensanti, Pornpawee, He~Wang. 2018.
\newblock A re-solving heuristic with uniformly bounded loss for network
  revenue management.
\newblock {\it arXiv preprint arXiv:1802.06192\/} .

\bibitem[{Chen and Gallego(2018)}]{chen2018primal}
Chen, Ningyuan, Guillermo Gallego. 2018.
\newblock A primal-dual learning algorithm for personalized dynamic pricing
  with an inventory constraint.
\newblock {\it Available at SSRN\/} .

\bibitem[{De~Farias and Van~Roy(2004)}]{de2004constraint}
De~Farias, Daniela~Pucci, Benjamin Van~Roy. 2004.
\newblock On constraint sampling in the linear programming approach to
  approximate dynamic programming.
\newblock {\it Mathematics of operations research\/} {\bf 29}(3) 462--478.

\bibitem[{Devanur and Hayes(2009)}]{devanur2009adwords}
Devanur, Nikhil~R, Thomas~P Hayes. 2009.
\newblock The adwords problem: online keyword matching with budgeted bidders
  under random permutations.
\newblock {\it Proceedings of the 10th ACM conference on Electronic
  commerce\/}. ACM, 71--78.

\bibitem[{Devanur et~al.(2011)Devanur, Jain, Sivan, and
  Wilkens}]{devanur2011near}
Devanur, Nikhil~R, Kamal Jain, Balasubramanian Sivan, Christopher~A Wilkens.
  2011.
\newblock Near optimal online algorithms and fast approximation algorithms for
  resource allocation problems.
\newblock {\it Proceedings of the 12th ACM conference on Electronic
  commerce\/}. ACM, 29--38.

\bibitem[{Devanur et~al.(2019)Devanur, Jain, Sivan, and
  Wilkens}]{devanur2019near}
Devanur, Nikhil~R, Kamal Jain, Balasubramanian Sivan, Christopher~A Wilkens.
  2019.
\newblock Near optimal online algorithms and fast approximation algorithms for
  resource allocation problems.
\newblock {\it Journal of the ACM (JACM)\/} {\bf 66}(1) 7.

\bibitem[{Ferreira et~al.(2018)Ferreira, Simchi-Levi, and
  Wang}]{ferreira2018online}
Ferreira, Kris~Johnson, David Simchi-Levi, He~Wang. 2018.
\newblock Online network revenue management using thompson sampling.
\newblock {\it Operations research\/} {\bf 66}(6) 1586--1602.

\bibitem[{Gill and Levit(1995)}]{gill1995applications}
Gill, Richard~D, Boris~Y Levit. 1995.
\newblock Applications of the van trees inequality: a bayesian cram{\'e}r-rao
  bound.
\newblock {\it Bernoulli\/} {\bf 1}(1-2) 59--79.

\bibitem[{Goel and Mehta(2008)}]{goel2008online}
Goel, Gagan, Aranyak Mehta. 2008.
\newblock Online budgeted matching in random input models with applications to
  adwords.
\newblock {\it Proceedings of the nineteenth annual ACM-SIAM symposium on
  Discrete algorithms\/}. Society for Industrial and Applied Mathematics,
  982--991.

\bibitem[{Gupta and Molinaro(2014)}]{gupta2014experts}
Gupta, Anupam, Marco Molinaro. 2014.
\newblock How experts can solve lps online.
\newblock {\it European Symposium on Algorithms\/}. Springer, 517--529.

\bibitem[{Hazan et~al.(2016)}]{hazan2016introduction}
Hazan, Elad, et~al. 2016.
\newblock Introduction to online convex optimization.
\newblock {\it Foundations and Trends{\textregistered} in Optimization\/} {\bf
  2}(3-4) 157--325.

\bibitem[{Huber et~al.(1967)}]{huber1967behavior}
Huber, Peter~J, et~al. 1967.
\newblock The behavior of maximum likelihood estimates under nonstandard
  conditions.

\bibitem[{Jasin(2015)}]{jasin2015performance}
Jasin, Stefanus. 2015.
\newblock Performance of an lp-based control for revenue management with
  unknown demand parameters.
\newblock {\it Operations Research\/} {\bf 63}(4) 909--915.

\bibitem[{Jasin and Kumar(2012)}]{jasin2012re}
Jasin, Stefanus, Sunil Kumar. 2012.
\newblock A re-solving heuristic with bounded revenue loss for network revenue
  management with customer choice.
\newblock {\it Mathematics of Operations Research\/} {\bf 37}(2) 313--345.

\bibitem[{Jasin and Kumar(2013)}]{jasin2013analysis}
Jasin, Stefanus, Sunil Kumar. 2013.
\newblock Analysis of deterministic lp-based booking limit and bid price
  controls for revenue management.
\newblock {\it Operations Research\/} {\bf 61}(6) 1312--1320.

\bibitem[{Keskin and Zeevi(2014)}]{keskin2014dynamic}
Keskin, N~Bora, Assaf Zeevi. 2014.
\newblock Dynamic pricing with an unknown demand model: Asymptotically optimal
  semi-myopic policies.
\newblock {\it Operations Research\/} {\bf 62}(5) 1142--1167.

\bibitem[{Kesselheim et~al.(2014)Kesselheim, T{\"o}nnis, Radke, and
  V{\"o}cking}]{kesselheim2014primal}
Kesselheim, Thomas, Andreas T{\"o}nnis, Klaus Radke, Berthold V{\"o}cking.
  2014.
\newblock Primal beats dual on online packing lps in the random-order model.
\newblock {\it Proceedings of the forty-sixth annual ACM symposium on Theory of
  computing\/}. ACM, 303--312.

\bibitem[{Kleinberg(2005)}]{kleinberg2005multiple}
Kleinberg, Robert. 2005.
\newblock A multiple-choice secretary algorithm with applications to online
  auctions.
\newblock {\it Proceedings of the sixteenth annual ACM-SIAM symposium on
  Discrete algorithms\/}. Society for Industrial and Applied Mathematics,
  630--631.

\bibitem[{Kleywegt et~al.(2002)Kleywegt, Shapiro, and Homem-de
  Mello}]{kleywegt2002sample}
Kleywegt, Anton~J, Alexander Shapiro, Tito Homem-de Mello. 2002.
\newblock The sample average approximation method for stochastic discrete
  optimization.
\newblock {\it SIAM Journal on Optimization\/} {\bf 12}(2) 479--502.

\bibitem[{Knight(1998)}]{knight1998limiting}
Knight, Keith. 1998.
\newblock Limiting distributions for l1 regression estimators under general
  conditions.
\newblock {\it Annals of statistics\/}  755--770.

\bibitem[{Lakshminarayanan et~al.(2017)Lakshminarayanan, Bhatnagar, and
  Szepesv{\'a}ri}]{lakshminarayanan2017linearly}
Lakshminarayanan, Chandrashekar, Shalabh Bhatnagar, Csaba Szepesv{\'a}ri. 2017.
\newblock A linearly relaxed approximate linear program for markov decision
  processes.
\newblock {\it IEEE Transactions on Automatic Control\/} {\bf 63}(4)
  1185--1191.

\bibitem[{Lei et~al.(2014)Lei, Jasin, and Sinha}]{lei2014near}
Lei, Yanzhe~Murray, Stefanus Jasin, Amitabh Sinha. 2014.
\newblock Near-optimal bisection search for nonparametric dynamic pricing with
  inventory constraint.
\newblock {\it Ross School of Business Paper\/} (1252).

\bibitem[{Mackey et~al.(2014)Mackey, Jordan, Chen, Farrell, Tropp
  et~al.}]{mackey2014matrix}
Mackey, Lester, Michael~I Jordan, Richard~Y Chen, Brendan Farrell, Joel~A
  Tropp, et~al. 2014.
\newblock Matrix concentration inequalities via the method of exchangeable
  pairs.
\newblock {\it The Annals of Probability\/} {\bf 42}(3) 906--945.

\bibitem[{Mahdavi et~al.(2012)Mahdavi, Jin, and Yang}]{mahdavi2012trading}
Mahdavi, Mehrdad, Rong Jin, Tianbao Yang. 2012.
\newblock Trading regret for efficiency: online convex optimization with long
  term constraints.
\newblock {\it Journal of Machine Learning Research\/} {\bf 13}(Sep)
  2503--2528.

\bibitem[{Mehta(2013)}]{mehta2013online}
Mehta, Aranyak. 2013.
\newblock Online matching and ad allocation.
\newblock {\it Foundations and Trends{\textregistered} in Theoretical Computer
  Science\/} {\bf 8}(4) 265--368.

\bibitem[{Mehta et~al.(2005)Mehta, Saberi, Vazirani, and
  Vazirani}]{mehta2005adwords}
Mehta, Aranyak, Amin Saberi, Umesh Vazirani, Vijay Vazirani. 2005.
\newblock Adwords and generalized on-line matching.
\newblock {\it 46th Annual IEEE Symposium on Foundations of Computer Science
  (FOCS'05)\/}. IEEE, 264--273.

\bibitem[{Molinaro and Ravi(2013)}]{molinaro2013geometry}
Molinaro, Marco, Ramamoorthi Ravi. 2013.
\newblock The geometry of online packing linear programs.
\newblock {\it Mathematics of Operations Research\/} {\bf 39}(1) 46--59.

\bibitem[{Reiman and Wang(2008)}]{reiman2008asymptotically}
Reiman, Martin~I, Qiong Wang. 2008.
\newblock An asymptotically optimal policy for a quantity-based network revenue
  management problem.
\newblock {\it Mathematics of Operations Research\/} {\bf 33}(2) 257--282.

\bibitem[{Revuz and Yor(2013)}]{revuz2013continuous}
Revuz, Daniel, Marc Yor. 2013.
\newblock {\it Continuous martingales and Brownian motion\/}, vol. 293.
\newblock Springer Science \& Business Media.

\bibitem[{Rockafellar(1970)}]{rockafellar1970convex}
Rockafellar, R~Tyrrell. 1970.
\newblock {\it Convex analysis\/}, vol.~28.
\newblock Princeton university press.

\bibitem[{Shapiro(1993)}]{shapiro1993asymptotic}
Shapiro, Alexander. 1993.
\newblock Asymptotic behavior of optimal solutions in stochastic programming.
\newblock {\it Mathematics of Operations Research\/} {\bf 18}(4) 829--845.

\bibitem[{Shapiro et~al.(2009)Shapiro, Dentcheva, and
  Ruszczy{\'n}ski}]{shapiro2009lectures}
Shapiro, Alexander, Darinka Dentcheva, Andrzej Ruszczy{\'n}ski. 2009.
\newblock {\it Lectures on stochastic programming: modeling and theory\/}.
\newblock SIAM.

\bibitem[{Talluri and Van~Ryzin(1998)}]{talluri1998analysis}
Talluri, Kalyan, Garrett Van~Ryzin. 1998.
\newblock An analysis of bid-price controls for network revenue management.
\newblock {\it Management science\/} {\bf 44}(11-part-1) 1577--1593.

\bibitem[{Talluri and Van~Ryzin(2006)}]{talluri2006theory}
Talluri, Kalyan~T, Garrett~J Van~Ryzin. 2006.
\newblock {\it The theory and practice of revenue management\/}, vol.~68.
\newblock Springer Science \& Business Media.

\bibitem[{Vu et~al.(2018)Vu, Poirion, and Liberti}]{vu2018random}
Vu, Ky, Pierre-Louis Poirion, Leo Liberti. 2018.
\newblock Random projections for linear programming.
\newblock {\it Mathematics of Operations Research\/} {\bf 43}(4) 1051--1071.

\bibitem[{Wang et~al.(2014)Wang, Deng, and Ye}]{wang2014close}
Wang, Zizhuo, Shiming Deng, Yinyu Ye. 2014.
\newblock Close the gaps: A learning-while-doing algorithm for single-product
  revenue management problems.
\newblock {\it Operations Research\/} {\bf 62}(2) 318--331.

\bibitem[{Wu et~al.(2015)Wu, Srikant, Liu, and Jiang}]{wu2015algorithms}
Wu, Huasen, Rayadurgam Srikant, Xin Liu, Chong Jiang. 2015.
\newblock Algorithms with logarithmic or sublinear regret for constrained
  contextual bandits.
\newblock {\it Advances in Neural Information Processing Systems\/}. 433--441.

\bibitem[{Yu et~al.(2017)Yu, Neely, and Wei}]{yu2017online}
Yu, Hao, Michael Neely, Xiaohan Wei. 2017.
\newblock Online convex optimization with stochastic constraints.
\newblock {\it Advances in Neural Information Processing Systems\/}.
  1428--1438.

\bibitem[{Yuan and Lamperski(2018)}]{yuan2018online}
Yuan, Jianjun, Andrew Lamperski. 2018.
\newblock Online convex optimization for cumulative constraints.
\newblock {\it Advances in Neural Information Processing Systems\/}.
  6137--6146.

\end{thebibliography}
\newpage 

\appendix
\section*{Appendices}

\renewcommand{\thesubsection}{A\arabic{subsection}}

\section{Proofs for Section 2} 

\subsection{Proof of Proposition \ref{basicProp}}

\begin{proof}
\begin{itemize}
    \item[(a)] The original dual problem \eqref{dualLP} can be recovered by substituting $y_j = \left(r_j-\sum_{i=1}^m a_{ij}p_i\right)^+$ in the objective function (\ref{newDual}). Then the feasible solutions and the objective functions are matched. Therefore, these two problems share the same optimal solution. 
    \item[(b)] We know that each component in the first summation is linear and each component in the second summation is convex. Also, the summation operation preserves convexity (See Chapter 3.2.1 \citep{rockafellar1970convex}). So, both $f_n$ and $f$ are convex functions. 
    \item[(c)] If $\Dd^\top \p > \bar{r},$ then
    \begin{align*}
     f(\p) \ge \Dd^\top \p & >\bar{r}  \ge \E[r] = f(\bm{0}).
    \end{align*}
    Hence $\p$ cannot be the optimal solution. In the same way, we can show the result for $\p_n^*.$
\end{itemize}
\end{proof}

\subsection{Proof of Lemma \ref{TaylorLemma0}}

\begin{proof}
We only need to show the following equality (before taking the expectation),
\begin{align*}
    h(\bm{p}, (r,\bm{a})) - h(\bm{p}^*, (r,\bm{a})) = \phi(\p^*, (r,\bm{a}))^\top(\p-\p^*) +  \int_{\bm{a}^\top \p}^{\bm{a}^\top \p^*}\left(I(r>v) - I(r>\bm{a}^\top\p^*) \right)dv.
\end{align*}
Indeed, 
\begin{align*}
    &h(\bm{p}, (r,\bm{a})) - h(\bm{p}^*, (r,\bm{a})) - \phi(\p^*,(r,\bm{a}))^\top(\p-\p^*) \\
    = &  \bm{d}^\top \bm{p} + (r - \Aa^\top \p)^+-\bm{d}^\top \bm{p}^* - (r - \Aa^\top \p^*)^+ - (\Dd-\Aa\cdot I(r_j > \Aa^\top \p^*))^\top(\p-\p^*) \\
    = & (r - \Aa^\top \p)^+ -(r - \Aa^\top \p^*)^+ +  I(r > \Aa^\top \p^*)\cdot(\Aa^\top \p-\Aa^\top\p^*) \\
    = & \int_{\bm{a}^\top \p}^{\bm{a}^\top \p^*}I(r>v)dv  +  I(r > \Aa^\top \p^*)\cdot(\Aa^\top \p-\Aa^\top\p^*) 
    \\ = & \int_{\bm{a}^\top \p}^{\bm{a}^\top \p^*}\left(I(r>v) - I(r>\bm{a}^\top\p^*) \right)dv.
\end{align*}
By taking expectation with respect to $(r,\bm{a})\sim \mathcal{P}$, we obtain the identity (\ref{identity0}).
\end{proof}

\subsection{Examples for Assumption 2 (b)}

\label{egAssump}

In Example 2, it is easy to see that for $\bm{p}$ such that $|\bm{a}^\top\bm{p}-\bm{a}^\top\bm{p}^*|\le c_\epsilon.$ We have
$$\underline{\alpha}|\bm{a}^\top\p - \bm{a}^\top\p^*| \le \left|\prob(r>\bm{a}^\top \bm{p}\vert\bm{a}) - \prob(r>\bm{a}^\top\bm{p}^*\vert\bm{a})\right| \le \bar{\alpha} |\bm{a}^\top\p - \bm{a}^\top\p^*|.$$
For a general $\bm{p}\in \Omega_p$, we know 
$$|\bm{a}^\top\p - \bm{a}^\top\p^*| \le  \bar{a}\|\p - \p^*\|_2 \le \bar{a}\|\p - \p^*\|_1 \le \frac{\bar{a}\bar{r}}{\underline{d}}$$
where the last inequality comes from Proposition \ref{basicProp} (c). Without loss of generality, we consider $\bm{a}^\top\p<\bm{a}^\top\p^*-c_\epsilon$,  then
$$\left|\prob(r>\bm{a}^\top \bm{p}\vert\bm{a}) - \prob(r>\bm{a}^\top\bm{p}^*\vert\bm{a})\right|\ge \underline{\alpha}c_\epsilon \ge \frac{\underline{\alpha}c_\epsilon\underline{d}}{\bar{a}\bar{r}} |\bm{a}^\top\p - \bm{a}^\top\p^*|.$$
So, we can choose $\mu=\bar{\alpha}$ and $\lambda=\min \left\{\underline{\alpha}, \frac{\underline{\alpha}c_\epsilon\underline{d}}{\bar{a}\bar{r}} \right\}$ and Assumption \ref{assump2} (b) is satisfied. 

The implication for this second example is that for the existence of $\mu$, we only need that the density function has a finite upper bound, and for the existence of $\lambda,$ we can impose a locally lower bound for the density function and extend the condition to a more general support through the above derivation. Then, Example 3 follows the same intuition and analysis as Example 2.

\subsection{Proof of Proposition \ref{strConvex}}

\begin{proof}
To see the result, from Lemma \ref{TaylorLemma0}, we only need to analyze the second-order term
\begin{align}
  &  \E\left[\int_{\bm{a}^\top \p}^{\bm{a}^\top \p^*}\left(I(r>v) - I(r>\bm{a}^\top\p^*) \right)dv\right] \nonumber \\
= & \E\left[\E\left[\int_{\bm{a}^\top \p}^{\bm{a}^\top \p^*}\left(I(r>v) - I(r>\bm{a}^\top\p^*) \right)dv\Bigg\vert a\right] \right]\nonumber\\
= & \E\left[\int_{\bm{a}^\top \p}^{\bm{a}^\top \p^*} \prob(r> v\vert\bm{a}) - \prob(r>\bm{a}^\top\bm{p}^*\vert\bm{a}) dv \right]\nonumber \\
\le & \E\left[\int_{\bm{a}^\top \p}^{\bm{a}^\top \p^*} \mu (\bm{a}^\top\bm{p}^* - v) dv \right] \nonumber \\
=  & \frac{\mu}{2}\E\left[(\bm{a}^\top \bm{p} - \bm{a}^\top \bm{p}^*)^2\right] \\
\le & \frac{\mu\bar{a}^2}{2} \|\p-\p^*\|_2^2\label{smoothness0}
\end{align}
where the first inequality comes from Assumption \ref{assump2} (b) and the second inequality comes from the boundedness of $\bm{a}$ -- Assumption 1 (b). With a similar argument, we can show that 
\begin{align}
  \E\left[\int_{\bm{a}^\top \p}^{\bm{a}^\top \p^*}\left(I(r>v) - I(r>\bm{a}^\top\p^*) \right)dv\right] 
\ge  \frac{\lambda\lambda_{\min}}{2} \|\p-\p^*\|_2^2
\label{smoothness}
\end{align}
where $\lambda_{\min}$ is specified in Assumption \ref{assump2} (a) and $\lambda$ is specified in Assumption \ref{assump2} (b).

To see the uniqueness of $\bm{p}^*$, we first show that 
\begin{align}
    \nabla f(\bm{p}^*) \boldsymbol{\cdot} \bm{p}^* = \bm{0} \label{compliment}
\end{align}
where the operator $\boldsymbol{\cdot}$ denotes the element-wise product. In addition, $\nabla f(\bm{p}^*)\ge \bm{0}$. These results can be shown via a standard argument through the sub-gradient and the optimality condition (See Chapter 23 of \citep{rockafellar1970convex}). For completeness, we provide a self-contained proof here.

To see $\nabla f(\bm{p}^*)\ge \bm{0}$, we note from \eqref{smoothness0} that
$$f(\bm{p})-f(\bm{p}^*)\le \nabla f(\bm{p}^*)(\bm{p}-\bm{p}^*) + \frac{\mu\bar{a}^2}{2} \|\bm{p}-\bm{p}^*\|^2_2.$$
If the $i'$-th entry $(\nabla f(\bm{p}^*))_{i'} <0$ for some $i'$, then we can choose $\bm{p}'=(p'_1,...,p'_m)^\top\ge \bm{0}$ such that ${p}_{i}'={p}^*_i$ for $i\neq i'$ and ${p}_{i'}'={p}^*_{i'} -\frac{(\nabla f(\bm{p}^*))_{i'}}{\mu\bar{a}^2}$. It is easy to verify that 
$$f(\bm{p}')-f(\bm{p}^*)=-\frac{2(\nabla f(\bm{p}^*))_{i'}^2}{\mu\bar{a}^2}<0$$
which contradicts the optimality of $\bm{p}^*.$

Similarly, if $(\nabla f(\bm{p}^*))_{i'} \cdot p^*_{i'}>0,$ we can choose $\bm{p}'=(p'_1,...,p'_m)^\top\ge \bm{0}$ such that ${p}_{i}'={p}^*_i$ for $i\neq i'$ and ${p}_{i'}'=({p}^*_{i'} -\frac{(\nabla f(\bm{p}^*))_{i'}}{\mu\bar{a}^2})\vee 0$. It is easy to verify that $f(\bm{p}')-f(\bm{p}^*)<0$ which also contradicts the optimality of $\bm{p}^*.$

Consequently, the uniqueness follows from
$$f(\bm{p})-f(\bm{p}^*)\ge \nabla f(\bm{p}^*)(\bm{p}-\bm{p}^*) + \frac{\lambda\lambda_{\min}}{2} \|\bm{p}-\bm{p}^*\|^2_2.$$
Specifically, let $\bm{p}'\neq \bm{p}^*$ be another optimal solution to the problem. Then the left-hand-side is zero. But for the right-hand-side, the first term is non-negative from \eqref{compliment} and the fact that $\nabla f(\bm{p}^*)\ge \bm{0}$, and the second term is strictly positive, which leads to a contradiction.
\end{proof}

\subsection{Proof of Lemma \ref{TaylorLemma}}

\begin{proof}
The proof is the same as the proof of Lemma \ref{TaylorLemma0} and it is completed by doing the analysis for each term in the summation of $f_n.$
\end{proof}

\subsection{Proof of Proposition \ref{pro:gradient}}

\label{ApropGrd}
First, we introduce the Hoeffding's inequality for scalar random variables.
\begin{lemma}[Hoeffding's inequality]
Let $X_1, ..., X_n$ be independent random variables such that $X_i$ takes its values in $[u_i, v_i]$ almost surely for all $i\le n.$
Then for every $t>0,$
$$\prob\left(\left\vert \frac{1}{n}\sum_{i=1}^n X_i -\E X_i \right\vert \ge t\right) \le 2\exp\left(-\frac{2n^2t^2}{\sum_{i=1}^{n}(u_i-v_i)^2}\right)$$
\label{HF1}
\end{lemma}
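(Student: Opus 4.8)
The plan is to prove this by the standard exponential-moment (Chernoff) method combined with the moment generating function bound for bounded random variables. First I would center the summands by setting $Y_i = X_i - \E X_i$, so that $\E Y_i = 0$ and $Y_i$ lies almost surely in an interval of the same length $v_i - u_i$. Writing $S_n = \sum_{i=1}^n Y_i$, the event $\{|\frac1n\sum_{i=1}^n X_i - \E X_i| \ge t\}$ is exactly $\{|S_n| \ge nt\}$, and a union bound reduces the problem to showing that each of $\prob(S_n \ge nt)$ and $\prob(-S_n \ge nt)$ is at most $\exp(-2n^2 t^2 / \sum_{i=1}^n (v_i-u_i)^2)$; the two tails then combine to produce the factor $2$ in the statement.

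For the upper tail I would apply Markov's inequality to $e^{\lambda S_n}$ with a free parameter $\lambda > 0$ and use independence to factor the moment generating function:
$$\prob(S_n \ge nt) \le e^{-\lambda n t}\,\E\!\left[e^{\lambda S_n}\right] = e^{-\lambda n t}\prod_{i=1}^n \E\!\left[e^{\lambda Y_i}\right].$$
The key ingredient is then a single-variable estimate (Hoeffding's lemma): for any mean-zero $Y$ supported on $[a,b]$, one has $\E[e^{\lambda Y}] \le \exp(\lambda^2 (b-a)^2 / 8)$. Granting this, each factor above is at most $\exp(\lambda^2 (v_i-u_i)^2/8)$, so $\prob(S_n \ge nt) \le \exp(-\lambda n t + \tfrac{\lambda^2}{8}\sum_{i=1}^n (v_i-u_i)^2)$. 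Minimizing this quadratic in $\lambda$ at $\lambda^\star = 4nt/\sum_{i=1}^n(v_i-u_i)^2$ yields the exponent $-2n^2 t^2/\sum_{i=1}^n (v_i-u_i)^2$, which is precisely the target bound.

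The hard part will be establishing Hoeffding's lemma itself. I would prove it by convexity of the exponential: for $y \in [a,b]$ the chord bound gives $e^{\lambda y} \le \frac{b-y}{b-a}e^{\lambda a} + \frac{y-a}{b-a}e^{\lambda b}$, and taking expectations together with $\E Y = 0$ yields $\E[e^{\lambda Y}] \le \frac{b}{b-a}e^{\lambda a} - \frac{a}{b-a}e^{\lambda b} =: e^{\psi(\lambda)}$. It then remains to show $\psi(\lambda) \le \lambda^2 (b-a)^2/8$, which I would obtain from a second-order Taylor expansion: one checks $\psi(0) = \psi'(0) = 0$, and that $\psi''(\lambda)$ takes the form $\rho(1-\rho)(b-a)^2$ for some $\rho = \rho(\lambda) \in [0,1]$, so $\psi''(\lambda) \le (b-a)^2/4$ since $\rho(1-\rho) \le 1/4$. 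Finally, the lower tail $\prob(-S_n \ge nt)$ follows from the identical argument applied to $-Y_i$, which has support of the same length, and the union bound over the two tails completes the proof.
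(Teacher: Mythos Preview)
Your proposal is correct and follows the standard Chernoff method combined with Hoeffding's lemma, which is exactly the argument one finds in the reference the paper cites. The paper itself does not give a self-contained proof of this lemma; it simply refers the reader to Chapter~2 of \cite{boucheron2013concentration}, so your write-up actually supplies more detail than the paper does.
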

\begin{proof}
We refer to Chapter 2 of the book \citep{boucheron2013concentration}.
\end{proof}

Now, we present the proof of Proposition \ref{pro:gradient}.

\begin{proof}
We use $\phi(\p^*, \uu_j)_i$ to denote the $i$-th coordinate of the gradient vector $\phi(\p^*, \uu_j)$. By the definition of $\phi,$ we know that 
$$\E \phi(p^*, \uu_j)_{i} = \left(\nabla f(\p^*)\right)_i.$$
From the boundedness $\Aa_{j}$'s (Assumption \ref{assump1} (b)), we know that 
$$|\phi(p^*, \uu_j)_{i}| \in [d_i - \bar{a}, d_i+\bar{a}].$$
Then, by applying the Hoeffding's inequality, we obtain
$$\prob\left(\Bigg|\frac{1}{n}\sum_{j=1}^n \phi(\p^*,\bm{u}_j)_i -\left(\nabla f(\p^*)\right)_i\Bigg| \ge \epsilon \right) \le 2 \exp\left(-\frac{n\epsilon^2}{2\bar{a}^2}\right).$$
In fact, 
$$\left\{\Bigg\|\frac{1}{n}\sum_{j=1}^n \phi(\p^*,\bm{u}_j) - \nabla f(\p^*) \Bigg\|_2 \ge \epsilon \right\} \subset  \bigcup_{i=1}^m \left\{\Bigg|\frac{1}{n}\sum_{j=1}^n \phi(\p^*,\bm{u}_j)_i - \left(\nabla f(\p^*)\right)_i\Bigg| \ge \frac{\epsilon}{\sqrt{m}} \right\}.$$
Applying the union bound, 
\begin{align*}
    \prob\left(\Bigg\|\frac{1}{n}\sum_{j=1}^n \phi(\p^*,\bm{u}_j) - \left(\nabla f(\p^*)\right)_i\Bigg\|_2 \ge \epsilon \right) & \le m\prob\left(\Bigg|\frac{1}{n}\sum_{j=1}^n \phi(\p^*,\bm{u}_j)_i - \left(\nabla f(\p^*)\right)_i \Bigg| \ge \frac{\epsilon}{\sqrt{m}} \right) \\
    & \le 2m \exp\left(-\frac{n\epsilon^2}{2\bar{a}^2m}\right). 
\end{align*}
Thus we obtain Proposition \ref{pro:gradient}. 
\end{proof}

\subsection{Proof of Proposition \ref{pro:Hessian}}

\label{ApropHes}
We first introduce a matrix version for the Hoeffding's inequality. 
\begin{lemma}[Matrix Hoeffding's Inequality]
Let $\bm{X}_1, ..., \bm{X}_n \in \mathbb{R}^d$ be i.i.d. random vectors with $\E(\bm{X}_k\bm{X}_k^\top) = \bm{M}$. Also, we assume $\|\bm{X}_{k}\|_2^2 \le B$ almost surely. Let
$$\bm{Z} = \frac{1}{n} \sum_{k=1}^n \bm{X}_k \bm{X}_k^\top.$$
Then 
$$\prob\left(\|\bm{M} - \bm{Z}\|_S \ge t\right) \le d\cdot \exp\left(\frac{-t^2}{Bn}\right)$$
for all $t>0.$
\label{HF2}
\end{lemma}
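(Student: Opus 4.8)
The plan is to realize $\bm{Z}-\bm{M}$ as a normalized sum of i.i.d.\ centered symmetric random matrices and run a matrix Chernoff bound of Ahlswede--Winter / Tropp type, where the dimensional prefactor $d$ enters through the inequality $e^{\lambda_{\max}(\bm{A})}\le\operatorname{tr} e^{\bm{A}}$ after exponentiation, and the sub-Gaussian tail profile is driven by an almost-sure operator-norm bound on the summands. Concretely, write $\bm{W}_k\coloneqq\tfrac{1}{n}\bigl(\bm{X}_k\bm{X}_k^\top-\bm{M}\bigr)$, so that $\bm{Z}-\bm{M}=\sum_{k=1}^n\bm{W}_k$ with the $\bm{W}_k$ i.i.d., symmetric, and mean zero.

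First I would record the deterministic bound $\bm{0}\preceq\bm{X}_k\bm{X}_k^\top\preceq B\bm{I}_d$, valid because $\bm{X}_k\bm{X}_k^\top$ is rank one with operator norm $\|\bm{X}_k\|_2^2\le B$; taking expectations and using Jensen gives $\bm{0}\preceq\bm{M}\preceq B\bm{I}_d$ as well, hence $-\tfrac{B}{n}\bm{I}_d\preceq\bm{W}_k\preceq\tfrac{B}{n}\bm{I}_d$ and $\bm{W}_k^2\preceq\tfrac{B^2}{n^2}\bm{I}_d$. In particular the matrix-variance proxy satisfies $v\coloneqq\bigl\|\sum_k\E\bm{W}_k^2\bigr\|_S\le\tfrac{B^2}{n}$ and $\|\bm{W}_k\|_S\le\tfrac{B}{n}$ a.s. Second, I would invoke the matrix MGF estimate (subadditivity of the matrix cumulant generating function via Lieb's inequality): for $\theta>0$, $\E\operatorname{tr}\exp\!\bigl(\theta\sum_k\bm{W}_k\bigr)\le d\exp\!\bigl(g(\theta)\,v\bigr)$ with $g$ the scalar comparison function coming from the operator-norm bound on the $\bm{W}_k$, so that by Markov's inequality $\prob\bigl(\lambda_{\max}(\bm{Z}-\bm{M})\ge t\bigr)\le d\exp\!\bigl(-\theta t+g(\theta)v\bigr)$. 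Third, I would optimize over $\theta$ to land on a bound of the form $d\exp(-c\,nt^2/B^2)$ in the regime where the quadratic term dominates. Finally, since $\|\bm{M}-\bm{Z}\|_S=\max\{\lambda_{\max}(\bm{Z}-\bm{M}),\,\lambda_{\max}(\bm{M}-\bm{Z})\}$ and $-\bm{W}_k$ obeys the same hypotheses as $\bm{W}_k$, I would apply the one-sided estimate to both $\pm(\bm{Z}-\bm{M})$ and combine by a union bound.

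I expect the only real difficulty to be the bookkeeping of the constant in the exponent and of the prefactor: a black-box matrix Hoeffding inequality produces an extra factor of $2$ and a numerical constant such as $8$ inside the exponent, so to land on precisely the stated form one should run the Ahlswede--Winter computation directly, exploiting the rank-one structure to get a clean bound on $\E e^{\theta\bm{W}_k}$ (dominated by $e^{(e^{\theta B/n}-1-\theta B/n)\,nv/B}\,\bm{I}_d$) and then optimizing in $\theta$, and either absorb the factor $2$ from the two-sided event into the constant or simply use the one-sided bound, which is what the downstream applications (e.g.\ controlling $\lambda_{\min}(\bm{Z})$ in the proof of Proposition~\ref{pro:Hessian}) actually require.
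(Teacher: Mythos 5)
Your proposal is correct in outline, but it takes a genuinely different route from the paper: the paper does not prove Lemma~\ref{HF2} at all, it simply cites Corollary~4.2 of Mackey et al.\ (2014) (whose proof there goes through the method of exchangeable pairs) and specializes it to rank-one summands. You instead reconstruct the result from scratch via the Ahlswede--Winter/Tropp Laplace-transform method: the decomposition $\bm{Z}-\bm{M}=\sum_k \bm{W}_k$ with $\bm{W}_k=\tfrac1n(\bm{X}_k\bm{X}_k^\top-\bm{M})$, the deterministic order bounds $\bm{0}\preceq\bm{X}_k\bm{X}_k^\top\preceq B\bm{I}_d$ and hence $\bm{W}_k^2\preceq (B/n)^2\bm{I}_d$, the trace-exponential MGF bound with prefactor $d$, Markov, optimization in $\theta$, and a union bound for the two-sided event are all sound, and your observation that only the one-sided control of $\lambda_{\max}(\bm{M}-\bm{Z})$ (equivalently $\lambda_{\min}(\bm{M}_n)$) is needed in the proof of Proposition~\ref{pro:Hessian} is exactly right. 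What your approach buys is a self-contained argument; what the citation buys is slightly sharper constants (Mackey et al.'s variance proxy $\tfrac12\|\sum_k(\bm{A}_k^2+\E\bm{W}_k^2)\|_S$ improves on the crude $\|\sum_k\bm{A}_k^2\|_S$). One point worth making explicit: your computation lands on a tail of the form $d\exp(-c\,nt^2/B^2)$, which does not literally match the stated $d\exp(-t^2/(Bn))$ --- but the stated exponent is evidently a typo (with $n$ in the denominator the bound would be useless), and the way the lemma is actually invoked later, namely $m\exp(-n\lambda_{\min}^2/(4\bar{a}^2))$ with $t=\lambda_{\min}/2$ and $B=\bar{a}^2$, confirms the intended form has $n$ in the numerator; the residual $B$-versus-$B^2$ discrepancy is a constant-level inconsistency in the paper, not a flaw in your argument.
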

\begin{proof}
We refer to the Corollary 4.2 (Matrix Hoeffding Inequality) of \citep{mackey2014matrix}. The proof of this lemma simply reduces the matrix in Corollary 4.2 to a vector setting. 
\end{proof}

Now, we prove Proposition \ref{pro:Hessian}.

\begin{proof}[Proof of Proposition \ref{pro:Hessian}]

We complete the proof in three steps: 
\begin{itemize}
    \item[Step 1.] We show that the quantity $\bm{M}_n = \frac{1}{n} \sum_{j=1}^n \Aa_j \Aa_j^\top$ concentrates around its mean 
$\bm{M} = \E \left[\Aa_j \Aa_j^\top \right]$ with high probability. Intuitively, this removes the randomness on $\Aa_j$'s. In the later part of the proof, we will see that the matrix $\bm{M}$ works approximately as the Hessian matrix of the function $f_n(\p).$ Assumption \ref{assump2} (b) states that the minimum eigenvalue of the Hessian matrix is $\lambda_{\min}.$
\item [Step 2.] To establish the uniform result, we introduce a finite set of representative points $\p_{kl}$'s (the indices $k$ and $l$ to be specified later) for the set $\Omega_p.$ As mentioned in the main body of the paper, for each $\p \in \Omega_p$, the function value $$\sum_{j=1}^n \int_{\bm{a}_j^\top \p}^{\bm{a}_j^\top \p^*}\left(I(r_j>v)-I(r_j>\bm{a}_j^\top\p^*)\right)dv$$ is a random variable dependent on $(r_j,\Aa_j)$'s. Though we can apply the concentration inequality to analyze the function value for each specific $\p$, the argument does not go through for all (uncountably many) points in $\Omega_p$ as in the proposition. The representative points serve as an intermediary between the point-wise argument and the uniform argument. The idea is:
\begin{itemize}
    \item[Part (i).] We first show the quantity $\sum_{j=1}^n \int_{\bm{a}_j^\top \p_{kl}}^{\bm{a}_j^\top \p^*}\left(I(r_j>v)-I(r_j>\bm{a}_j^\top\p^*)\right)dv$ concentrates around its mean for all representative points $\p_{kl}$'s 
    \item[Part (ii).] We then establish that for each $\p\in \Omega_p$, there is a  $\p_{kl}$ near $\p$ such that the difference 
\begin{align*}
   & \underbrace{\sum_{j=1}^n \int_{\bm{a}_j^\top \p_{kl}}^{\bm{a}_j^\top \p^*}\left(I(r_j>v)-I(r_j>\bm{a}_j^\top\p^*)\right)dv}_{\text{what has been analyzed by the above Part (i)}}- \underbrace{\sum_{j=1}^n \int_{\bm{a}_j^\top \p}^{\bm{a}_j^\top \p^*}\left(I(r_j>v)-I(r_j>\bm{a}_j^\top\p^*)\right)dv}_{\text{the goal of the proposition}}\\
    = &\sum_{j=1}^n \int_{\bm{a}_j^\top \p_{kl}}^{\bm{a}_j^\top \p}\left(I(r_j>v)-I(r_j>\bm{a}_j^\top\p^*)\right)dv
\end{align*}
is small with high probability. 
\end{itemize}
In this way, we make the argument through for all $\p \in \Omega.$

\item[Step 3.] We combine the two parts in Step 2 and then put the result together with Step 1.
\end{itemize}

For \textbf{Step 1}, consider 
$$\bm{M}_n = \frac{1}{n} \sum_{j=1}^n \Aa_j \Aa_j^\top$$
$$\bm{M} = \E \left[\Aa_j \Aa_j^\top \right]$$
where the expectation is taken with respect to $(r_j, \bm{a}_j)\sim \mathcal{P}.$ We know from Assumption \ref{assump2} (a) that the minimum eigenvalue of $\bm{M}$ is $\lambda_{\min}$. Also,
$$\lambda_{\min} - \lambda_{\min}(\bm{M}_n) \le \lambda_{\max}(\bm{M} -\bm{M}_n) \le \|\bm{M}-\bm{M}_n\|_S$$
where $\lambda_{\min}(\cdot)$ and $\lambda_{\max}(\cdot)$ refer to the smallest and largest eigenvalue of a matrix, respectively. Denote event
$$\mathcal{E}_0 = \left\{\lambda_{\min}(\bm{M}_n) \le \frac{\lambda_{\min}}{2}\right\}.$$ Applying Lemma \ref{HF2},
\begin{align}
    \prob(\mathcal{E}_0) = \prob\left(\lambda_{\min}(\bm{M}_n) \le \frac{\lambda_{\min}}{2}\right) & \le  \prob\left(\|\bm{M}-\bm{M}_n\|_S \ge \frac{\lambda_{\min}}{2} \right) \nonumber \\
    & \le m\cdot \exp\left(\frac{-n\lambda_{\min}^2}{4\bar{a}^2}\right). \label{p3prob1}
\end{align}
where $\bar{a}$ is the upper bound on $\Aa_j$ from Assumption \ref{assump1} (b). So, we complete \textbf{Step 1} by showing that the random matrix $\bm{M}_n$ has a minimum eigenvalue larger than $\frac{\lambda_{\min}}{2}$ with high probability.

For \textbf{Step 2}, we first present how we select the representative set of points $\bm{p}_{kl}$'s. From Proposition \ref{basicProp} (c), we know that the optimal solution $\p_n^*$ and $\p^*$ is bounded. Define set $\bar{\Omega} =\left\{\p \in \mathbb{R}^m \Big| \|\p-\p^*\|_{\infty} \le \frac{\bar{r}}{\underline{d}}\right\}$ and $\p_n^*\in \Omega_p \subset \bar{\Omega}$ almost surely. We only need to show the results for the larger set $\bar{\Omega}.$ The region $\bar{\Omega}$ can be split into a union of disjoint sets $$\bar{\Omega} = \bigcup_{k=1}^{N} \bigcup_{l=1}^{l_k} \Omega_{kl}.$$
The splitting scheme is inspired by \citep{huber1967behavior}. 
Specifically, these sets are divided layer by layer. The set $\bar{\Omega}_{k} =\left\{\p \in \mathbb{R}^m \Big| \|\p-\p^*\|_{\infty} \le q^k\frac{\bar{r}}{\underline{d}}\right\}$ for $k=0,...,N.$ Here $N$ and $q\in (0,1)$ will be determined later. The $k$-th layer $\bar{\Omega}_{k-1}\setminus \bar{\Omega}_{k}$ is further divided into disjoint cubes $\{\Omega_{kl}\}_{l=1}^{l_k}$ with edges of length $(1-q)q^{k-1}\frac{\bar{r}}{\underline{d}}$ for $k=1,...,N-1$ and $l=1,...,l_k$. The center cube is simply $\bar{\Omega}_{N} = \Omega_{N1}$ with edge of length $q^N\frac{\bar{r}}{\underline{d}}$ and $l_N=1.$ Also, the length $q$ is adjusted in a way that the splitting scheme cut the region $\bar{\Omega}$ into an integer number of cubes. Figure \ref{fig:omega} gives a visualization of the splitting scheme. In total, there are no more than $(2N)^m$ cubes (See Lemma 3 in \citep{huber1967behavior}). In the following, we will complete the two parts in Step 2 for the cubes. First, we analyze the outer cubes $\Omega_{kl},$ for $k=1,...,N-1$ and $l=1,...,l_k$ and then we treat the center cube $\Omega_{N1}$ 
separately.

\begin{figure}[ht!]
    \centering
    \includegraphics[width=0.6\textwidth]{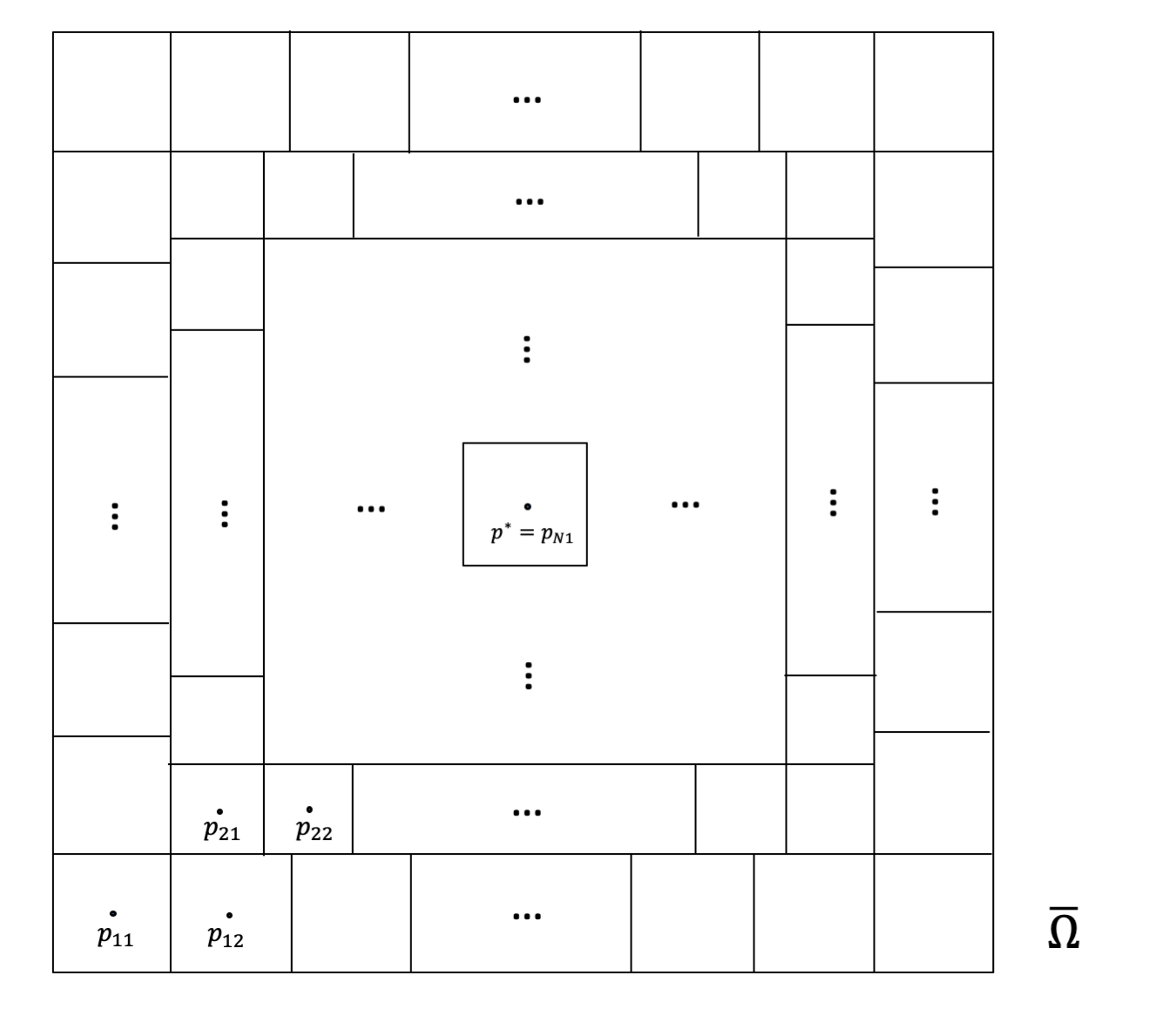}
    \caption{Visualization of the splitting scheme}
    \label{fig:omega}
\end{figure}

Let $\p_{kl}$ be the center of the cube $\Omega_{kl}$, $\underline{\p}_{kl}$ and $\bar{\p}_{kl}$ be the points in $\Omega_{kl}$  that are closest and furthest from $\p^*$, respectively. That is,
$$\underline{\p}_{kl} = \argmin_{\p \in \Omega_{kl}} \|\p-\p^*\|_2,$$
$$\bar{\p}_{kl} = \argmax_{\p \in \Omega_{kl}} \|\p-\p^*\|_2.$$

Now, we derive the \textbf{Part (i) of Step 2.}
To obtain an upper bound for the usage of Hoeffding's Inequality, we have
$$ \left\vert\int_{\bm{a}_j^\top \p_{kl}}^{\bm{a}_j^\top \p^*}\left(I(r_j>v)-I(r_j>\bm{a}_j^\top\p^*)\right)dv\right\vert \le \left|\bm{a}_j^\top \p_{kl}- \bm{a}_j^\top \p^*\right|\le \bar{a}\|\p^*-\bar{\p}_{kl}\|_2$$
where the right-hand-side is a deterministic quantity that does not depend on $(r_j, \Aa_j)$'s. We define the following event (for the cube $\Omega_{kl}$ and the point $\bm{p}_{kl}$ indexed by $k,l$) that the integral deviates from its mean, 
\begin{align*}
    \mathcal{E}_{kl,1} & = \Bigg\{\frac{1}{n} \sum_{j=1}^n \int_{\bm{a}_j^\top \p_{kl}}^{\bm{a}_j^\top \p^*}\left(I(r_j>v)-I(r_j>\bm{a}_j^\top\p^*)\right)dv \le \\ & \ \  \ \ \ \frac{1}{n} \sum_{j=1}^n \E\left[ \int_{\bm{a}_j^\top \p_{kl}}^{\bm{a}_j^\top \p^*}\left(I(r_j>v)-I(r_j>\bm{a}_j^\top\p^*)\right)dv\Bigg \vert \Aa_{1},...,\Aa_{n} \right] - \epsilon \bar{a}\|\p^*-\bar{\p}_{kl}\|_2  \Bigg\}.
\end{align*} 
Note that $r_1,...,r_n$ are independent conditional on $\Aa_1,...,\Aa_n.$ By applying Hoeffding's inequality, we know,
\begin{equation}
    \prob(\mathcal{E}_{kl,1}\vert \Aa_1,...,\Aa_n)\le \exp\left(-\frac{n\epsilon^2}{2}\right)
    \label{p3hf1}
\end{equation}
for $\Aa_1,...,\Aa_n,$ $k=1,...,N-1$ and $l=1,...,l_k.$ Consequently, we know $\prob\left(\mathcal{E}_{kl,1}\right) \le \exp \left(-\frac{n\epsilon^2}{2}\right)$ by integrating with respect to $(\Aa_1,...,\Aa_n).$ This completes Part (i) of Step 2, establishing that the random integral from $\bm{a}_j^\top \p_{kl}$ to $\bm{a}_j^\top \p^{*}$ concentrates around its (conditional) mean with high probability.

For \textbf{Part (ii) of Step 2}, define 
$$\Gamma_{kl}(r_j,\Aa_j) = \max_{\p \in \Omega_{kl}} \int_{\Aa_j^\top\p}^{\Aa_j^\top\p_{kl}} I(r_j>v) - I(r_j>\Aa_j^\top \p^*)dv.$$
We know that
\begin{align*}
    \E [\Gamma_{kl}(r_j,\Aa_j)|\Aa_1,...,\Aa_n] & = \E\left[\max_{\p \in \Omega_{kl}} \int_{\Aa_j^\top\p}^{\Aa_j^\top\p_{kl}} I(r_j>v) - I(r_j>\Aa_j^\top \p^*)dv\Bigg\vert \Aa_1,...,\Aa_n\right] \\
    & \le \E \left[\max_{\p \in \Omega_{kl}} \int_{\Aa_j^\top\p}^{\Aa_j^\top\p_{kl}}  I(v<r_j\le\Aa_j^\top \p^*)dv\Bigg\vert \Aa_1,...,\Aa_n\right] \\
    & \le \bar{a} \max_{\p \in \Omega_{kl}} \|\p-\p_{kl}\|_2 \cdot \E \left[\max_{\p \in \Omega_{kl}} I(\Aa_j^\top\p <r_j\le\Aa_j^\top \p^*)dv \Bigg\vert \Aa_1,...,\Aa_n\right]
    \\
    & \le \bar{a} \max_{\p \in \Omega_{kl}} \|\p-\p_{kl}\|_2 \cdot \mu \bar{a} \|\p^*-\bar{\p}_{kl}\|_2 \\
    & = \mu \bar{a}^2 \|\p^*-\bar{\p}_{kl}\|_2 \max_{\p \in \Omega_{kl}} \|\p-\p_{kl}\|_2. 
\end{align*}
Here the second line comes from the definition of the indicator function. The third line singles out the two limits of the integral. The fourth line comes from Assumption \ref{assump2} (b) and the definition of $\bar{\bm{p}}_{kl}$ (the furthest point in $\Omega_{kl}$ from $\p^*$.) Specifically, the parameter $\bar{a}$ comes from Assumption \ref{assump1} (b) and $\mu$ comes from Assumption \ref{assump2} (b).
Also,
$$|\Gamma_{kl}(r_j,\Aa_j)| \le \max_{\p \in \Omega_{kl}}\left|\Aa_j^\top\p -\Aa_j^\top\p _{kl}\right| \le \bar{a} \max_{\p \in \Omega_{kl}} \|\p-\p_{kl}\|_2$$
holds for all $k=1,...,N-1$ and $l=1,...,l_k.$ Let
$$\mathcal{E}_{kl,2} = \left\{\Bigg \vert\frac{1}{n}\sum_{j=1}^n \Gamma_{kl}(r_j,\Aa_j) - \frac{1}{n}\sum_{j=1}^n \E[\Gamma_{kl}(r_j,\Aa_j)\vert \Aa_1,...,\Aa_n] \Bigg \vert \ge 2\epsilon \bar{a} \max_{\p \in \Omega_{kl}} \|\p-\p_{kl}\|_2\right\}.$$
By applying Hoeffding's Inequality with the independence of $\Gamma_{kl}(r_j,\Aa_j)$'s conditional on $\Aa_1,...,\Aa_n$,
\begin{equation}\prob\left(\mathcal{E}_{kl,2}\vert \Aa_1,...,\Aa_n\right) \le \exp \left(\frac{-n\epsilon^2}{2}\right)
\label{p3hf2}
\end{equation}
for all $\Aa_1,...,\Aa_n,$ $k=1,...,N-1$ and $l=1,...,l_k.$ Consequently, $\prob\left(\mathcal{E}_{kl,2}\right) \le \exp \left(\frac{-n\epsilon^2}{2}\right).$

Next, we handle \textbf{Step 3} and combine the previous parts together. First, we analyze the conditional expectation (\ref{p3hf1}) on the right hand side of $\mathcal{E}_{kl,1}$. Conditional on event $\mathcal{E}_0,$
\begin{align} 
 & \E \left[\frac{1}{n} \sum_{j=1}^n \int_{\bm{a}_j^\top \p_{kl}}^{\bm{a}_j^\top \p^*}\left(I(r_j>v)-I(r_j>\bm{a}_j^\top\p^*)\right)dv \Bigg| \Aa_1,..., \Aa_n, \mathcal{E}_0 \right]\nonumber\\
 =& \ \frac{1}{n} \sum_{j=1}^n \int_{\bm{a}_j^\top \p_{kl}}^{\bm{a}_j^\top \p^*} \prob(r_j>v|\bm{a}_j)- \prob(r_j>\bm{a}_j^\top\p^*|\bm{a}_j) dv \nonumber\\ 
\ge & \ \frac{\lambda}{2n} \sum_{j=1}^n (\bm{a}_j^\top \p^* - \bm{a}_j^\top \p_{kl})^2\nonumber \\
 =&  \ \frac{\lambda}{2}(\p^* - \p_{kl})^\top \left(\frac{1}{n} \sum_{j=1}^n \bm{a}_j \bm{a}_j^\top\right) (\p^* - \p_{kl})\nonumber
 \\
 \ge &\ \frac{\lambda\lambda_{\min}}{4} \left\|\p^* - \p_{kl}\right\|_2^2 \label{dist0}
\end{align}
where the third line comes from Assumption \ref{assump2} (b) and the last line comes from the definition of $\mathcal{E}_0$ (earlier in (\ref{p3prob1})).
Based on the splitting scheme, we find the relationship between $\p_{kl}$, $\bar{\p}_{kl}$, $\p^*,$ and an arbitrary $\p\in\Omega_{kl}.$ Specifically, since the cubes on the $k$-th layer shrink the whole region with a factor of $q^k,$
$$\max_{\p \in \Omega_{kl}} \|\p-\p_{kl}\|_2 = \sqrt{m}(1-q)q^{k-1}\frac{\bar{r}}{\underline{d}},$$ 
$$\left\|\p^* - \p_{kl}\right\|_2 \ge q^{k}\frac{\bar{r}}{\underline{d}},$$
for all $k=1,...,N-1$ and $l=1,...,l_k.$
As a result,
\begin{align}
    \|\p^*-\bar{\p}_{kl}\|_2 & \le \left\|\p^* - \p_{kl}\right\|_2 + \max_{\p \in \Omega_{kl}} \|\p-\p_{kl}\|_2 \nonumber \\
    & \le \left(1 + \frac{\sqrt{m}(1-q)}{q}\right)  \left\|\p^* - \p_{kl}\right\|_2   \label{dist1}
\end{align}
and
\begin{align}
    \max_{\p \in \Omega_{kl}} \|\p-\p_{kl}\|_2 \le \frac{\sqrt{m}(1-q)}{q} \left\|\p^* - \p_{kl}\right\|_2 \le \frac{\sqrt{m}(1-q)}{q} \left\|\p^* - \bar{\p}_{kl}\right\|_2.
    \label{dist2}
\end{align}
With (\ref{dist0}), (\ref{dist1}) and (\ref{dist2}), we can compare the difference (when the event $\mathcal{E}_0$ happens) between the conditional expectations appearing in $\mathcal{E}_{kl,1}$ and $\mathcal{E}_{kl,2},$
\begin{align*}
&\E\left[\frac{1}{n} \sum_{j=1}^n \int_{\bm{a}_j^\top \p_{kl}}^{\bm{a}_j^\top \p^*}\left(I(r_j>v)-I(r_j>\bm{a}_j^\top\p^*)\right)dv\Bigg\vert \Aa_1,...,\Aa_n\right] - \E\left[\frac{1}{n} \sum_{j=1}^n \Gamma_{kl}(r_j,a_j) \Bigg\vert \Aa_1,...,\Aa_n\right] \\
  \ge & \frac{\lambda\lambda_{\min}}{4} \left\|\p^* - \p_{kl}\right\|_2^2 - \mu \bar{a}^2 \|\p^*-\bar{\p}_{kl}\|_2 \max_{\p \in \Omega_{kl}} \|\p-\p_{kl}\|_2  \\
   \ge & \frac{\lambda\lambda_{\min}}{4}\left(\frac{1}{1 + \frac{\sqrt{m}(1-q)}{q}}\right)^2 \|\p^*-\bar{\p}_{kl}\|_2^2  - \mu\bar{a}^2\frac{\sqrt{m}(1-q)}{q} \|\p^*-\bar{\p}_{kl}\|_2^2, \text{\ \ \ (Applying (\ref{dist2}))}
\end{align*}
where the last line represents the difference in a quadratic form of $\|\p^*-\bar{\p}_{kl}\|.$
By choosing 
$$q = \max \left\{\frac{1}{1+\frac{1}{\sqrt{m}}} , \frac{1}{1+\frac{1}{\sqrt{m}}\left(\frac{\lambda\lambda_{\min}}{8\mu \bar{a}^2}\right)^{\frac{1}{3}}} \right\},$$
we have 
\begin{equation}
    \E\left[\frac{1}{n} \sum_{j=1}^n \int_{\bm{a}_j^\top \p_{kl}}^{\bm{a}_j^\top \p^*}\left(I(r_j>v)-I(r_j>\bm{a}_j^\top\p^*)\right)dv\Bigg\vert \Aa_1,...,\Aa_n\right] - \E\left[\frac{1}{n} \sum_{j=1}^n \Gamma_{kl}(r_j,a_j) \Bigg\vert \Aa_1,...,\Aa_n\right] \ge  \frac{\lambda\lambda_{\min}}{32} \left\|\p^* - \bar{\p}_{kl}\right\|_2^2.
    \label{p3hf3}
\end{equation}
Intuitively, the choice of $q$ ensures that the distance between the cube center $\bm{p}_{kl}$ and the furthest point from $\bm{p}_{kl}$ in the cube $\Omega_{kl}$ is dominated by the distance between $\bm{p}_{kl}$ and $\bm{p}^*.$ In this way, the integral from $\Aa_j^\top\bm{p}^*$ to $\Aa_j^\top\bm{p}_{kl}$ in the above will dominate the integral from $\Aa_j^\top\bm{p}_{kl}$ to $\Aa_j^\top\bm{p}$ for any $\bm{p} \in \Omega_{kl}.$ And the former integral  approximately takes a quadratic form based on the concentration argument.

Now, we are ready to return to the main objective and  analyze $\sum_{j=1}^n \int_{\bm{a}_j^\top \p}^{\bm{a}_j^\top \p^*}\left(I(r_j>v)-I(r_j>\bm{a}_j^\top\p^*)\right)dv$. In \textbf{Part 2} of the proof, we decompose it into two parts and derive concentration results for the two parts respectively. The above inequality (\ref{p3hf3}) puts together the expectation terms in the two concentration results. By connecting the dots, we know that
on event $\mathcal{E}_0\cap \mathcal{E}_{kl,1}^c \cap \mathcal{E}_{kl,2}^c$, 
\begin{align}
 & \frac{1}{n} \sum_{j=1}^n \int_{\bm{a}_j^\top \p}^{\bm{a}_j^\top \p^*}\left(I(r_j>v)-I(r_j>\bm{a}_j^\top\p^*)\right)dv \nonumber \\
 \stackrel{(a)}{=}& \frac{1}{n} \sum_{j=1}^n \int_{\bm{a}_j^\top \p_{kl}}^{\bm{a}_j^\top \p^*}\left(I(r_j>v)-I(r_j>\bm{a}_j^\top\p^*)\right)dv + \frac{1}{n} \sum_{j=1}^n \int_{\bm{a}_j^\top \p }^{\bm{a}_j^\top \p_{kl}}\left(I(r_j>v)-I(r_j>\bm{a}_j^\top\p^*)\right)dv\nonumber \\
 \stackrel{(b)}{\ge} & \frac{1}{n} \sum_{j=1}^n \int_{\bm{a}_j^\top \p_{kl}}^{\bm{a}_j^\top \p^*}\left(I(r_j>v)-I(r_j>\bm{a}_j^\top\p^*)\right)dv - \frac{1}{n} \sum_{j=1}^n \Gamma_{kl}(r_j,a_j) \nonumber\\
 \stackrel{(c)}{=} & \frac{1}{n} \sum_{j=1}^n \int_{\bm{a}_j^\top \p_{kl}}^{\bm{a}_j^\top \p^*}\left(I(r_j>v)-I(r_j>\bm{a}_j^\top\p^*)\right)dv - \E\left[\frac{1}{n} \sum_{j=1}^n \int_{\bm{a}_j^\top \p_{kl}}^{\bm{a}_j^\top \p^*}\left(I(r_j>v)-I(r_j>\bm{a}_j^\top\p^*)\right)dv\right]\nonumber \\
 & \ \ - \frac{1}{n} \sum_{j=1}^n \Gamma_{kl}(r_j,a_j) + \E\left[\frac{1}{n} \sum_{j=1}^n \Gamma_{kl}(r_j,a_j) \right] \nonumber\\
 &\ \ + \E\left[\frac{1}{n} \sum_{j=1}^n \int_{\bm{a}_j^\top \p_{kl}}^{\bm{a}_j^\top \p^*}\left(I(r_j>v)-I(r_j>\bm{a}_j^\top\p^*)\right)dv\right] - \E\left[\frac{1}{n} \sum_{j=1}^n \Gamma_{kl}(r_j,a_j) \right] \nonumber\\
 \stackrel{(d)}{\ge} & -2\epsilon\bar{a}\|\p^*-\bar{\p}_{kl}\|_2 + \frac{\lambda\lambda_{\min}}{32} \left\|\p^* - \bar{\p}_{kl}\right\|_2^2\nonumber \\
 \stackrel{(e)}{\ge} & -2\epsilon\bar{a}\|\p^*-{\p}\|_2 + \frac{\lambda\lambda_{\min}}{32} \left\|\p^* - {\p}\right\|_2^2 \label{quad1}
\end{align}
holds for any $\p\in \Omega_{kl}$, $k=1,...,N-1$ and $l=1,...,l_k.$ Here (a) decomposes the integral in two parts. (b) is from the definition of $\Gamma_{kl}.$ (d) comes from applying (\ref{p3hf1}), (\ref{p3hf2}) and (\ref{p3hf3}) to the three lines in (c). (e) comes from the definition of $\bar{\p}_{kl}$ and the property of a quadratic function. To interpret this result (\ref{quad1}), it provides a quadratic lower bound (with high probability) for the integral of interest and the quadratic lower bound has a dominating second-order part plus a linear part with small coefficient $\epsilon$.

Note the above results hold for all the cubes with $k\le N-1$. We need some special treatment for the center cube $\Omega_{N1}.$ Specifically, its center is $\p^*,$ and
\begin{align*}
    \frac{1}{n} \sum_{j=1}^n \int_{\bm{a}_j^\top \p}^{\bm{a}_j^\top \p^*}\left(I(r_j>v)-I(r_j>\bm{a}_j^\top\p^*)\right)dv \ge - \bar{a} \|\p^*-\bar{\p}_{N1}\|_2 = - \bar{a} \sqrt{m}q^N\frac{\bar{r}}{\underline{d}}
\end{align*} 
where the last equality comes from the splitting scheme.

Intuitively, this center cube is not significant because we can always choose $N$ -- the number of layers (of cubes) so that the effect of the center cube is small enough and it is no greater than $\epsilon^2.$
To achieve this, we choose 
$$N = \Bfloor{\log_q \left(\frac{\underline{d}\epsilon^2}{\bar{a}\bar{r}\sqrt{m}}\right)}+1$$
so that
\begin{align}
    \frac{1}{n} \sum_{j=1}^n \int_{\bm{a}_j^\top \p}^{\bm{a}_j^\top \p^*}\left(I(r_j>v)-I(r_j>\bm{a}_j^\top\p^*)\right)dv &\ge -\epsilon^2 \label{quad2}
\end{align} 
for all $\p \in \Omega_{N1}.$

Therefore, we obtain from (\ref{quad1}) and (\ref{quad2}) that, on the event $\cap_{k=1}^N \cap_{l=1}^{l_k}(\mathcal{E}_{kl,1}^c \cap \mathcal{E}_{kl,2}^c) \cap \mathcal{E}_0,$
$$ \frac{1}{n} \sum_{j=1}^n \int_{\bm{a}_j^\top \p}^{\bm{a}_j^\top \p^*}\left(I(r_j>v)-I(r_j>\bm{a}_j^\top\p^*)\right)dv \ge -\epsilon^2 -2\epsilon\bar{a}\|\p^*-{\p}\|_2 + \frac{\lambda\lambda_{\min}}{32} \left\|\p^* - {\p}\right\|_2^2$$
for all $\p\in \bar{\Omega}.$

We complete the proof by computing the probability,
\begin{align*}
 1-\prob\left(\bigcap_{k=1}^N \bigcap_{l=1}^{l_k}\left(\mathcal{E}_{kl,1}^c \bigcap \mathcal{E}_{kl,2}^c\right) \bigcap \mathcal{E}_0\right) & = \prob \left(\bigcup_{k=1}^N \bigcup_{l=1}^{l_k}\left(\mathcal{E}_{kl,1} \bigcup \mathcal{E}_{kl,2}\right) \bigcup \mathcal{E}_0^c\right) \\
 & \le \prob(\mathcal{E}_0^c)+\sum_{k=1}^N \sum_{l=1}^{l_k} \left(\prob(\mathcal{E}_{kl,1}) + \prob(\mathcal{E}_{kl,2})\right)
 \\ & \le m\exp\left(-\frac{n\lambda_{\min}^2}{4\bar{a}^2}\right)+ 2\exp\left(-\frac{n\epsilon^2}{2}\right)\cdot \left(2N\right)^m.  
\end{align*}

\end{proof}

\subsection{Proof of Theorem \ref{expectation}}
\label{AproofDC}

\begin{proof}
Let event
$$\mathcal{E}_1 = \left\{\left\|\frac{1}{n}\sum_{j=1}^n \phi(\p^*,\bm{u}_j) - \nabla f(\p^*)\right\|_2 \le \epsilon \right\}.$$
From Proposition \ref{pro:gradient},
$$\prob\left(\mathcal{E}_1^c \right) \le  2m \exp\left(-\frac{n\epsilon^2}{2\bar{a}^2m}\right).$$
Let event
$$\mathcal{E}_2 = \left\{\frac{1}{n} \sum_{j=1}^n \int_{\bm{a}_j^\top \p}^{\bm{a}_j^\top \p^*}\left(I(r_j>v)-I(r_j>\bm{a}_j^\top\p^*)\right)dv \ge -\epsilon^2 -2\epsilon\bar{a}\|\p^*-{\p}\|_2 + \frac{\lambda\lambda_{\min}}{32} \left\|\p^* - {\p}\right\|_2^2\right\}.$$
From Proposition \ref{pro:Hessian},
$$\prob\left(\mathcal{E}_2^c \right) \le m\exp\left(-\frac{n\lambda_{\min}}{4\bar{a}^2}\right)+ 2\left(2N\right)^m\cdot\exp\left(-\frac{n\epsilon^2}{2}\right) $$
where $N$ is defined in Proposition \ref{pro:Hessian}.

On the event $\mathcal{E}_1 \cap \mathcal{E}_2,$ the following inequality holds for all $\p\in \Omega_{p}$
\begin{equation}
    f_{n}(\p) - f_{n}(\p^*) \ge -\epsilon^2 -\epsilon(2\bar{a}+1)\|\p^*-\p\|_2 +\frac{\lambda\lambda_{\min}}{32} \left\|\p^* - {\p}\right\|_2^2.
    \label{quadLower}
\end{equation} 
This comes from a combination of Lemma \ref{TaylorLemma} with Assumption \ref{assump2} (b) and (c). Specifically, we plug in the event $\mathcal{E}_2$ for the second-order term in Lemma \ref{TaylorLemma} and we apply Assumption \ref{assump2} (b) and (c) with the event $\mathcal{E}_1$ for the first-order term in Lemma \ref{TaylorLemma}. 

By the definition of $\p_n^*$ (that minimizes $f_n$), (\ref{quadLower}) leads to
$$ -\epsilon^2 -\epsilon(2\bar{a}+1)\|\p^*-\p_n^*\|_2 +\frac{\lambda\lambda_{\min}}{32} \left\|\p^* - {\p}_n^*\right\|_2^2 \le f_{n}(\p_n^*) - f_{n}(\p^*)\le 0$$
and this implies,
$$\|\p_n^*-\p^*\|_2 \le \kappa \epsilon$$
with
$$\kappa = \frac{2\bar{a}+1+\sqrt{(2\bar{a}+1)^2+\frac{\lambda\lambda_{\min}}{8}}}{\lambda\lambda_{\min}/16}.$$ 
The intuition for the above result is that on the event $\mathcal{E}_1 \cap \mathcal{E}_2$, the difference $f_{n}(\p_n^*) - f_{n}(\p^*)$ is lower bounded by a convex quadratic function and thus the optimality condition entails $\p^*$ should stay close to $\p_n^*.$
Thus we can view the event $\mathcal{E}_1 \cap \mathcal{E}_2$ as a ``good'' event that ensures $\p^*$ close to $\p^*_n$. The probability of the ``good'' event
\begin{align*}
\prob(\mathcal{E}_1 \cap \mathcal{E}_2) & \ge 1 - \prob(\mathcal{E}_1^c) - \prob(\mathcal{E}_2^c) \\
& \ge 1- 2m \exp\left(-\frac{n\epsilon^2}{2\bar{a}^2m}\right) - m\exp\left(-\frac{n\lambda_{\min}}{4\bar{a}^2}\right) - 2\left(2N\right)^m\cdot\exp\left(-\frac{n\epsilon^2}{2}\right).
\end{align*}
We emphasize that this probability bound holds for all $\epsilon>0.$ So, if we let $\epsilon'=\epsilon^2,$ we have
\begin{align*}
\prob\left(\frac{\|\p_n^*-\p^*\|_2^2}{\kappa^2}>\epsilon'\right) & = \prob\left(\frac{\|\p_n^*-\p^*\|_2^2}{\kappa^2}>\epsilon^2\right)  \\
& \le 2m \exp\left(-\frac{n\epsilon^2}{2\bar{a}^2m}\right) + m\exp\left(-\frac{n\lambda_{\min}}{4\bar{a}^2}\right) + 2\left(2N\right)^m\cdot\exp\left(-\frac{n\epsilon^2}{2}\right) \\
& = 2m \exp\left(-\frac{n\epsilon'}{2\bar{a}^2m}\right) + m\exp\left(-\frac{n\lambda_{\min}}{4\bar{a}^2}\right)+ 2\left(2N\right)^m\cdot\exp\left(-\frac{n\epsilon'}{2}\right).
\end{align*}
With this probability bound, we can upper bound the L$_2$ distance between $\p_n^*$ and $\p^*$ by integration. Specifically, given that $\|\p_n^*-\p^*\|_2 \le\frac{\bar{r}}{\underline{d}},$
\begin{align}
    \frac{1}{\kappa^2}\E \|\p_n^*-\p^*\|_2^2 & =\int_{0}^{\frac{\bar{r}^2}{\underline{d}^2}} \prob\left(\frac{\|\p_n^*-\p^*\|_2^2}{\kappa^2}>\epsilon'\right) \mathrm{d}\epsilon' \nonumber \\
    & \le \int_{0}^{\frac{\bar{r}^2}{\underline{d}^2}} \left(2m \exp\left(-\frac{n\epsilon'}{2\bar{a}^2m}\right) + m\exp\left(-\frac{n\lambda_{\min}}{4\bar{a}^2}\right) + 2\left(2N\right)^m\cdot\exp\left(-\frac{n\epsilon'}{2}\right)\right) \wedge 1 \mathrm{d}\epsilon' \label{integral}
\end{align}
where $y\wedge z = \min\{y,z\}$ for $y,z\in\mathbb{R}.$ Next, we analyze the integral (\ref{integral}) term by term.

First, with $\epsilon' =  \frac{m\log m}{n}\cdot \varepsilon,$
\begin{align}
    & \int_{0}^{\frac{\bar{r}^2}{\underline{d}^2}} \left(2m \exp\left(-\frac{n\epsilon'}{2\bar{a}^2m}\right) \right) \wedge {1}\mathrm{d}\epsilon' \nonumber \\ \le &  \frac{m\log m}{n} \int_{0}^{\infty}\left( 2m \exp\left(-\frac{\varepsilon\log m}{2\bar{a}^2}\right)\right) \wedge {1} \mathrm{d}\varepsilon  \nonumber \\
    \le &\frac{m\log m}{n} \int_{0}^{\infty} 2\left( \exp\left(\log m-\frac{\varepsilon\log m}{2\bar{a}^2}\right) \right)\wedge {1} \mathrm{d}\varepsilon \le c \sqrt{\frac{m\log m}{n}}. 
    \label{probs1}
\end{align}
where $c$ is dependent only on $\bar{a}.$ The inequality on the last line is referred to Lemma \ref{intg1} in the following subsection.

Second, 
\begin{align}
   \int_{0}^{\frac{\bar{r}^2}{\underline{d}^2}} m\exp\left(-\frac{n\lambda_{\min}}{4\bar{a}^2}\right) \mathrm{d}\epsilon' =  \frac{m\bar{r}^2}{\underline{d}^2}\exp\left(-\frac{n\lambda_{\min}}{4\bar{a}^2}\right) \le c' \frac{m}{n}
   \label{probs2}
\end{align}
where $c'$ is dependent only on $\bar{a}$, $\bar{r}$, $\underline{d}$, and $\lambda_{\min}.$

Third, we can show that there exists constant $c_0$ such that 
$$2N \le c_0\sqrt{m}\log\left(\frac{\sqrt{m}}{\epsilon'}\right)$$
from the definition of $N$ in Proposition \ref{pro:Hessian}.
Hence,
\begin{align*}
    \int_{0}^{\frac{\bar{r}^2}{\underline{d}^2}}1 \wedge \left( 2\exp\left(-\frac{n\epsilon'}{2}\right)\cdot \left(2N\right)^m \right) \mathrm{d}\epsilon' &\le \int_{0}^{\infty} 1 \wedge \left( 2\exp\left(-\frac{n\epsilon'}{2}\right)\cdot \left(c_0\sqrt{m}\log\left(\frac{\sqrt{m}}{\epsilon'}\right)\right)^m \right) \mathrm{d}\epsilon'  \\
    & = \int_{0}^{\infty} 1 \wedge\left( 2\exp\left(-\frac{n\epsilon'}{2}+m\log\left(c_0\sqrt{m}\log\left(\frac{\sqrt{m}}{\epsilon'}\right)\right)\right)  \right)\mathrm{d}\epsilon'  \\
   & \le  \int_{0}^{\infty}1 \wedge  \left( 2\exp\left(-\frac{n\epsilon'}{2}+m\log\left(c_0\sqrt{m}\log\left(\frac{\sqrt{m}}{\epsilon'}\right)\right)\right) \right)\mathrm{d}\epsilon'
\end{align*}
Let $\epsilon'= \frac{m\log m \log\log n}{n}\cdot \varepsilon$ and use $\varepsilon$ to replace $\epsilon'$ in above. We have,
\begin{align}
    & \int_{0}^{\frac{\bar{r}^2}{\underline{d}^2}} 1 \wedge \left( 2\exp\left(-\frac{n\epsilon'}{2}\right)\cdot \left(2N\right)^m \right) \mathrm{d}\epsilon' \nonumber \\
\le     & {\frac{m\log m \log\log n}{n}} \int_{0}^{\infty}1 \wedge  \left( 2\exp\left(-\frac{\varepsilon}{2} m
    \log m 
    \log \log n+m\log\left(c_0\sqrt{m}\log\left(\frac{n}{\varepsilon}\right)\right)\right) \right)\mathrm{d}\varepsilon \nonumber \\
    \le & {\frac{c'' m\log m \log\log n}{n}},\label{probs3}
\end{align}
where $c''$ depends only on $c_0$. The inequality in the last line is deferred to Lemma \ref{intg2} in the following subsection.

Combining (\ref{integral}), (\ref{probs1}), (\ref{probs2}) and (\ref{probs3}), we conclude that
$$\E\|\p_n^* - \p^*\|_2^2\le \kappa^2(c+c'+c'')\frac{m\log m \log \log n}{n}$$
holds for all $n>m$ and $\mathcal{P} \in \Xi.$
From the concavity of the square root function, we know
$$\E \|\p_n^* - \p^*\|_2 \le \kappa\sqrt{c+c'+c''}\cdot\sqrt{\frac{m \log m \log \log n}{n}}.$$
Thus we complete the proof.

\end{proof}

\subsubsection{Two inequalities used in the proof of Theorem \ref{expectation}}

We introduce two inequalities that will be used in the proof of Theorem \ref{expectation}. The proof for these two inequalities are based on basic calculus. We use $\wedge$ to denote the minimum operator, i.e., $y\wedge z=\min\{y,z\}.$

\begin{lemma}
The inequality
$$\int_{0}^{\infty} \left( \exp\left(\log m-x\log m\right) \right)\wedge {1} \mathrm{d}x \le 2$$
holds for all $m \ge 2$.
\label{intg1}
\end{lemma}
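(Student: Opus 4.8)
The plan is to split the integration domain at a cutoff $x_0$ chosen so that the exponent $\log m - x^2 \log n$ changes sign, and estimate each piece separately. First I would take $x_0 = \sqrt{\log m / \log n}$; note $x_0 \le 1$ since $m \le n$. For $x \in [0, x_0]$ the exponent $\log m - x^2 \log n$ is nonnegative, so $\exp(\log m - x^2 \log n) \ge 1$ and the integrand $\bigl(\exp(\log m - x^2\log n)\bigr) \wedge 1$ equals $1$; hence $\int_0^{x_0} (\cdots)\wedge 1 \, \mathrm{d}x = x_0 \le 1$.

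For the tail $x \ge x_0$, the minimum equals $\exp(\log m - x^2 \log n)$, and I would bound $\int_{x_0}^\infty \exp(\log m - x^2 \log n)\,\mathrm{d}x = m \int_{x_0}^\infty e^{-x^2 \log n}\,\mathrm{d}x$. The cleanest route is a Gaussian-type tail estimate: since $x \ge x_0$ implies $x^2 \log n \ge x\, x_0 \log n = x \sqrt{\log m \log n} \ge x \log m$ (using $\log n \ge \log m$), we get $m e^{-x^2\log n} \le m e^{-x\log m} = e^{\log m - x\log m}$, so $m\int_{x_0}^\infty e^{-x^2\log n}\,\mathrm{d}x \le \int_{x_0}^\infty e^{\log m}e^{-x\log m}\,\mathrm{d}x = e^{\log m} \cdot \frac{e^{-x_0 \log m}}{\log m} \le \frac{m}{\log m} \cdot m^{-x_0}$. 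Since $x_0 = \sqrt{\log m/\log n} \ge \log m / \log n$ only when $\log m \le \log n$... actually I would instead just use $m^{-x_0} \le 1$ trivially and bound the tail by $m/\log m$, which is unbounded — so this crude step fails and I must be more careful. The fix: use $x^2 \log n \ge x_0^2 \log n + (x - x_0)x_0 \log n \cdot 2$ is awkward; cleaner is to substitute and recognize $m\int_{x_0}^\infty e^{-x^2\log n}dx$ where at $x = x_0$ the integrand is exactly $m e^{-\log m} = 1$, and for $x > x_0$ convexity of $x^2$ gives $x^2 \log n - \log m \ge 2 x_0 \log n (x - x_0)$; thus the tail integral is at most $\int_0^\infty e^{-2x_0\log n \cdot u}\,\mathrm{d}u = \frac{1}{2 x_0 \log n} = \frac{1}{2\sqrt{\log m \log n}} \le 1$ for $m \ge 2$ (and the $m=1$ case, where $\log m = 0$, must be checked directly: then the integrand is $1 \wedge e^{-x^2\log n}$, whose integral is at most $1 + \int_0^\infty e^{-x^2\log n}dx \le 1 + \tfrac12\sqrt{\pi/\log n}$, still $\le 2$ for $n \ge 2$; and $n=1$ forces $m=1$ and the bound $\int_0^\infty 1\wedge 1 = \infty$ — wait, $\log n = 0$ too, so the claim would need $n \ge 2$; I would add that implicitly or note $n\ge 3$ from the theorem context).

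Adding the two pieces gives at most $1 + 1 = 2$, which is the claimed bound. The main obstacle is the tail estimate: the naive bound $m e^{-x^2\log n} \le m$ is useless, so the argument must exploit that the crossover point $x_0$ is small ($\le 1$) and that beyond it the Gaussian decay, measured in units scaled by $x_0 \log n = \sqrt{\log m \log n} \ge \log m$ once $m \le n$, kills the factor $m$ fast enough; the convexity inequality $x^2 \ge x_0^2 + 2x_0(x-x_0)$ is the right tool to linearize the exponent and make the remaining integral an elementary exponential with a bounded value. I would also double-check the degenerate small-$m$, small-$n$ cases separately since logarithms vanish there.
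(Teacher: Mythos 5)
Your final argument is correct, and it takes a slightly different route from the paper's. The paper splits the integral at $x=1$, bounds the first piece by $1$, and controls the tail $\int_1^\infty m e^{-x^2\log n}\,\mathrm{d}x$ by the change of variables $y=x\sqrt{\log n}$ followed by the standard Gaussian tail estimate $\int_s^\infty e^{-y^2}\,\mathrm{d}y\le \tfrac{1}{2s}e^{-s^2}$, yielding $\tfrac{m}{2n\log n}\le 1$. You instead split at the exact crossover point $x_0=\sqrt{\log m/\log n}$, so the first piece is exactly $x_0\le 1$, and you linearize the exponent on the tail via the tangent-line inequality $x^2\ge x_0^2+2x_0(x-x_0)$, which turns the tail into an elementary exponential integral equal to $\tfrac{1}{2\sqrt{\log m\log n}}\le 1$ for $m\ge 2$. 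Both are sound elementary estimates of the same flavor (each ultimately replaces the quadratic exponent by a linear one on the tail); the paper's choice of split point avoids your need to treat $m=1$ separately (where $x_0=0$ and $\log m$ vanishes in your denominator), while your choice makes the first piece tight rather than crudely bounded by $1$. You are also right, and in fact more careful than the paper, in observing that the statement degenerates when $\log n=0$: both proofs implicitly require $n\ge 2$, which is harmless since the lemma is only invoked with $n\ge 3$. The one cosmetic issue is that your writeup retains a discarded false start (the bound $m e^{-x\log m}$, which indeed does not suffice); in a final version you should excise it and present only the tangent-line argument.
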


\begin{proof}
We have
\begin{align*}
      &\int_{0}^{\infty} \left( \exp\left(\log m-x\log m\right) \right)\wedge {1} \mathrm{d}x\\
=     &\int_{0}^{1} \left(\exp\left(\log m-x\log m\right) \right)\wedge {1} \mathrm{d}x + \int_{1}^{\infty} \left(\exp\left(\log m-x\log m\right) \right)\wedge {1} \mathrm{d}x\\
\le & \int_{0}^{1} 1 \mathrm{d}x + \int_{1}^{\infty} \exp\left(\log m-x\log m\right) \mathrm{d}x \text{ \ \ (Splitting the integral in two parts)}\\
\le & 1 + \int_{1}^{\infty} \exp\left(\log 2 \cdot (1- x)\right) \mathrm{d}x \le 1+\frac{1}{\log 2},
\end{align*}
where in the last line the $\log m$ term is replaced with its lower bound $\log 2$.
\end{proof}

\begin{lemma}
The inequality
$$\int_{0}^{\infty}1 \wedge  \left( \exp\left(-xm
\log m 
\log \log n+m\log\left(\sqrt{m}\log\left(\frac{n}{x}\right)\right)\right) \right)\mathrm{d}x  \le 2$$
holds for all $n\ge \max\{m,3\}$ and $m\ge2.$
\label{intg2}
\end{lemma}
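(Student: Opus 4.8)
The plan is to follow the template of the proof of Lemma~\ref{intg1}: split the range of integration into a ``flat'' part on which the integrand never exceeds $1$ and a ``tail'' part on which it decays like a Gaussian, then estimate each. Write $A \coloneqq m\log m\log\log n$ and let
$$\psi(x) \coloneqq -Ax^2 + m\log\!\big(\sqrt m\,\log(\sqrt n/x^2)\big),$$
so the integrand is $1\wedge e^{\psi(x)}$. The first observation is that on $(0,n^{1/4})$, where $\sqrt n/x^2>1$, the function $\psi$ is continuous and strictly decreasing — the $-Ax^2$ term is decreasing, and since $x\mapsto\sqrt n/x^2$ is decreasing so is the logarithmic term — with $\psi(0^+)=+\infty$ and $\psi(x)\to-\infty$ as $x\uparrow n^{1/4}$. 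Hence there is a unique crossover point $x_*\in(0,n^{1/4})$ with $\psi(x_*)=0$, i.e. $Ax_*^2=m\log(\sqrt m\log(\sqrt n/x_*^2))$.

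The crux is to bound $x_*$. If $x_*\ge 1$ then $\sqrt n/x_*^2\le\sqrt n$, so $Ax_*^2\le m\log\!\big(\tfrac{\sqrt m}{2}\log n\big)\le m\big(\tfrac12\log m+\log\log n\big)$, whence
$$x_*^2\ \le\ \frac{1}{\log m}+\frac{1}{2\log\log n}\ \le\ \frac{1}{\log 2}+o(1)\ \le\ \tfrac32$$
for $m\ge 2$ and all sufficiently large $n$ (and if $x_*<1$ this bound is trivial). This is exactly the step that exploits the $\log m\log\log n$ scaling of the leading term of $\psi$ rather than merely $\log\log n$: the ``polynomial'' factor $e^{m\log(\sqrt m\log(\sqrt n/x^2))}=(\sqrt m\log(\sqrt n/x^2))^m$ is larger than $1$ near $x_*$, so — unlike in Lemma~\ref{intg1} — one cannot split at the fixed point $x=1$. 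Having located $x_*$, write $\int_0^\infty=\int_0^{x_*}+\int_{x_*}^{n^{1/4}}+\int_{n^{1/4}}^\infty$. The first piece is at most $\int_0^{x_*}1\,\mathrm dx=x_*\le\sqrt{3/2}$. For the second piece, monotonicity of the logarithmic part gives $m\log(\sqrt m\log(\sqrt n/x^2))\le Ax_*^2$ for $x\ge x_*$, hence $\psi(x)\le-A(x^2-x_*^2)$; substituting $x=x_*+t$ and using $x^2-x_*^2\ge t^2$ bounds this piece by $\int_0^\infty e^{-At^2}\,\mathrm dt=\tfrac12\sqrt{\pi/A}=o(1)$ as $n\to\infty$. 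The third piece, where $\sqrt n/x^2\le 1$, is negligible: there the Gaussian factor is already $\le e^{-A\sqrt n}$, which dominates any polylogarithmic factor (in the application the corresponding factor is a fixed power of $2N$, a positive integer), so this piece contributes $o(1)$. Summing, $\int_0^\infty 1\wedge e^{\psi}\le x_*+\tfrac12\sqrt{\pi/A}+o(1)\le\sqrt{3/2}+o(1)<2$.

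The main obstacle is the bookkeeping around the crossover point — verifying $x_*$ is well defined, bounding it uniformly in $(m,n)$, and confirming that $x_*+\tfrac12\sqrt{\pi/A}$ (plus the negligible far tail) really stays below $2$; this is tight enough that the $\log m\log\log n$ factor in the hypothesis is indispensable and a naive split at a fixed abscissa fails. One honest caveat worth stating is that the constant $2$ requires $m\ge 2$ and $n$ larger than an absolute constant (for $m=1$ the leading term of $\psi$ vanishes and the claim is false as literally written, and for very small $n$ the estimate degrades); these excluded cases play no role in the application, since in the proof of Theorem~\ref{expectation} the $L_2$ error is bounded trivially by $\bar{r}/\underline{d}$ for bounded $n$ and the final constant $C$ absorbs it.
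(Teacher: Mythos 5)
Your argument is essentially sound and reaches the right conclusion, but it takes a genuinely different route from the paper, and one of your meta-claims about the paper's route is wrong. The paper's proof is precisely the ``naive split at a fixed abscissa'' that you declare impossible: it writes $\int_0^\infty \le 1 + \int_1^\infty e^{\psi(x)}\,\mathrm{d}x$, bounds the logarithmic term on $x>1$ by its value at $x=1$ to get $\psi(x)\le -Ax^2 + m\log(\sqrt m\log\sqrt n)\le -Ax^2+A$ with $A=m\log m\log\log n$, and then uses $\int_1^\infty e^{-Ax^2}\,\mathrm{d}x\le \tfrac{1}{2A}e^{-A}$ so that the prefactor $e^{A}$ is exactly cancelled by the Gaussian tail beyond $x=1$, leaving $1+\tfrac{1}{2A}\le 2$. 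The fact that the crossover point $x_*$ may exceed $1$ is harmless for that argument, because after dropping the $\wedge 1$ the tail integral still absorbs the polynomial factor; this is the same mechanism as in Lemma~\ref{intg1}, where the prefactor $m$ is cancelled by $e^{-\log n}=1/n$. Your alternative — locating $x_*$ explicitly, bounding $x_*^2\le \tfrac{1}{\log m}+\tfrac{1}{2\log\log n}$, and then integrating $e^{-A(x^2-x_*^2)}$ past $x_*$ — is correct and makes more transparent exactly where the $\log m\log\log n$ scaling enters, at the cost of more bookkeeping and a weaker range of validity (your constant only clears $2$ for $m\ge 3$ or astronomically large $n$ when $m=2$, since $1/\log 2\approx 1.44$).

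Two further remarks. First, you are right that the lemma is false as literally stated for $m=1$ (the quadratic term vanishes and the integral grows like $n^{1/4}$), and right that this is immaterial for the application; the paper's own proof also silently needs $m$ and $n$ above small constants (e.g.\ the step $m\log(\sqrt m\log\sqrt n)\le m\log m\log\log n$ fails for $m=2$ and moderately large $n$), so your caveat is a genuine, if minor, correction to the paper. Second, both your treatment of the region $x>n^{1/4}$ and the paper's are hand-wavy, since there the integrand involves the logarithm of a nonpositive quantity; a clean fix is to interpret the integrand as $1\wedge$ of a quantity that is only integrated where defined, or to note that in the application the corresponding factor $(2N)^m$ is bounded on the relevant range of $\epsilon$.
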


\begin{proof}
We have 
\begin{align*}
& \int_{0}^{\infty}1 \wedge  \left( \exp\left(-xm\log m \log \log n+m\log\left(\sqrt{m}\log\left(\frac{n}{x}\right)\right)\right) \right)\mathrm{d}x \\
\le & 1 + \int_{1}^{\infty} \exp\left(-xm\log m \log \log n+m\log\left(\sqrt{m}\log\left(\frac{n}{x}\right)\right)\right)\mathrm{d}x \text{ \ \ (Splitting the integral in two parts)}\\
\le & 1 + \int_{1}^{\infty} \exp\left(-xm\log m \log \log n+m\log\left(\sqrt{m}\log n\right)\right)\mathrm{d}x \text{ \ \ (Using the fact that $x>1$)} \\
\le & 1 + \int_{1}^{\infty} \exp\left(-x m\log m \log \log n+m\log m \log \log n\right)\mathrm{d}x \\
\le & 1 + \frac{1}{m\log m \log \log n}\le 3
\end{align*}
where the last line follows the same argument as the last inequality in Lemma \ref{intg1}.
\end{proof}

\renewcommand{\thesubsection}{B\arabic{subsection}}

\section{Proof for Generic Regret Upper Bound}

\subsection{Proof of Lemma \ref{gP}}

\begin{proof}
First, we show $g(\p^*)$ provides an upper bound for $\E R_n^*.$
\begin{align}
  \E R_n^* & = \E \left[\sum_{j=1}^n r_j x_j^*\right] \nonumber \\
 & = \E \left[n\Dd^\top \p_n^* + \sum_{j=1}^n \left(r_j - \Aa_j^\top \p_n^* \right)^+\right] \text{ \ (From the strong duality)} \nonumber \\  
 & \le \E \left[n\Dd^\top \p^* + \sum_{j=1}^n \left(r_j - \Aa_j^\top \p^* \right)^+\right]  \text{\ (From the optimality of $\p_n^*$)}\nonumber  \\
 & = ng(\p^*)\label{par1}
\end{align}
where the expectation is taken with respect to $(r_j, \Aa_j)$'s. 

Then, by taking the difference between $g(\p^*)$ and $g(\p)$, 
\begin{align*}
    g(\p^*)-g(\p) & = \E\left[r I(r> \Aa^\top \p^*) + \left(\Dd-\Aa I(r>\Aa^\top \p^*)\right)^\top \p^*\right]-\E\left[r I(r> \Aa^\top \p) + \left(\Dd-\Aa I(r>\Aa^\top \p)\right)^\top \p^*\right]\\
    &= \E\left[\left(r-\Aa^\top \p^*\right)\left(I(r> \Aa^\top \p^*)-I(r> \Aa^\top \p)\right)\right]\\
    &= \E\left[\left(\Aa^\top \p^*-r\right)I(\Aa^\top \p^*\ge r> \Aa^\top \p)\right] + \E\left[\left(r-\Aa^\top \p^*\right)I(\Aa^\top \p^*<r\le \Aa^\top \p)\right] \ge 0
\end{align*}
where the expectation is taken with respect to $(r,\bm{a}).$ The last line is true because when the indicator functions are positive, the terms go before the indicators must be non-negative accordingly. This proves the maximum of $g(\p)$ is achieved at $\p^*$. Furthermore, with a more careful analysis, we have 
\begin{align*}
    g(\p^*)-g(\p) 
    &= \E\left[\left(\Aa^\top \p^*-r\right)I(\Aa^\top \p^*\ge r> \Aa^\top \p)\right] + \E\left[\left(r-\Aa^\top \p^*\right)I(\Aa^\top \p^*<r\le \Aa^\top \p)\right] \\
    & \le \E\left[\left(\Aa^\top \p^*-\Aa^\top \p\right)I(\Aa^\top \p^*\ge r> \Aa^\top \p)\right] + \E\left[\left(\Aa^\top \p-\Aa^\top \p^*\right)I(\Aa^\top \p^*<r\le \Aa^\top \p)\right] \\
    & = \E\left[\left(\Aa^\top \p^*-\Aa^\top \p\right)\left(\prob(r>\Aa^\top \p^*|\bm{a})-\prob(r>\Aa^\top \p|\bm{a})\right)I(\Aa^\top \p^*>\Aa^\top \p)\right]\\
    & \ \ \ \ \ + \E\left[\left(\Aa^\top \p-\Aa^\top \p^*\right)\left(\prob(r>\Aa^\top \p|\bm{a})-\prob(r>\Aa^\top \p^*|\bm{a})\right)I(\Aa^\top \p^*<\Aa^\top \p)\right]\\
    & \le \mu \E\left[\left(\Aa^\top \p^*-\Aa^\top \p\right)^2\right] \\
    & \le \mu \bar{a}^2 \|\p^*-\p\|_2^2
\end{align*}
where the expectation is taken with respect to $(r,\bm{a})\sim \mathcal{P}.$ Here the second line is because when $\Aa^\top \p^* \ge r> \Aa^\top \p$ is true, we have $\Aa^\top \p^*-r \le \Aa^\top \p^*-\Aa^\top \p$; the same for the second part of this line. The third line comes from taking conditional expectation with respect to $r$. The fourth line applies Assumption \ref{assump2} (b) and the last line applies the upper bound on $\bm{a}$ in Assumption \ref{assump1} (b).
\end{proof}

\subsection{Proof of Theorem \ref{representation}}

\begin{proof}
For any dual-based online policy $\pi,$ its expected revenue
\begin{align}
   \E R_n(\bm{\pi}) & = \E \left[\sum_{t=1}^n r_t x_t\right] \nonumber \\ 
   & = \E \left[\sum_{t=1}^n r_t x_t + \bm{b}_n^\top \p^* \right]  - \E\left[\bm{b}_n^\top \p^*\right] \nonumber \\
   & = \E \left[\sum_{t=1}^n r_t x_t + \left(n\Dd - \sum_{t=1}^n \Aa_t x_t\right)^\top \p^* \right]  - \E\left[\bm{b}_n^\top \p^*\right] \text{ \ \ (By the definition of $\bm{b}_n$)}\nonumber \\
   & = \E \left[\sum_{t=1}^n \left(  r_t x_t + \Dd^\top \p^* - \Aa_t^\top \p^* x_t\right) \right]  - \E \left[\bm{b}_n^\top \p^* \right] \nonumber  \\
   & = \E \left[\sum_{t=1}^{\tau_{\bar{a}}} \left(  r_t x_t + \Dd^\top \p^* - \Aa_t^\top \p^* x_t\right) \right] + \E \left[\sum_{t=\tau_{\bar{a}}+1}^{n}\left(  r_t x_t + \Dd^\top \p^* - \Aa_t^\top \p^* x_t\right) \right]  - \E \left[\bm{b}_n^\top \p^* \right] \label{decomp1}
\end{align}
where the expectation is taken with respect to $(r_t,\Aa_t)$'s. 

We analyze the first two terms in (\ref{decomp1}) separately. For the first term in (\ref{decomp1}), it could be represented by the Lagrangian function $g(\cdot)$ as in Lemma \ref{gP},
\begin{align}
    \E \left[\sum_{t=1}^{\tau_{\bar{a}}} \left(  r_t x_t + \Dd^\top \p^* - \Aa_j^\top \p^* x_t\right) \right] & =\E \left[\sum_{t=1}^{n} \left(r_t x_t + \Dd^\top \p^* - \Aa_t^\top \p^* x_t\right)I(\tau_{\bar{a}}\ge t) \right] \nonumber  \\
    & \stackrel{(a)}{=}\sum_{t=1}^{n} \E \left[\left(r_t x_t + \Dd^\top \p^* - \Aa_t^\top \p^* x_t\right)I(\tau_{\bar{a}}\ge t) \right] \nonumber \\
     & \stackrel{(b)}{=}\sum_{t=1}^{n}\E \left[ \E \left[\left(r_t x_t + \Dd^\top \p^* - \Aa_t^\top \p^* x_j\right)I(\tau_{\bar{a}}\ge t) | \bm{b}_{t-1}, \mathcal{H}_{t-1}\right]\right]\nonumber  \\
     & \stackrel{(c)}{=} \sum_{t=1}^{n}\E \left[g(\p_{t})I(\tau_{\bar{a}}\ge t) \right] \nonumber \\
     & \stackrel{(d)}{=} \E \left[\sum_{t=1}^{n}g(\p_{t})I(\tau_{\bar{a}}\ge t) \right]  = \E \left[\sum_{t=1}^{\tau_{\bar{a}}}g(\p_{t}) \right] \label{decomp2}
\end{align}
where $\p_t$'s are the dual price vectors specified by the policy $\pi$ and the expectation is taken with respect to $(r_t, \Aa_t)$'s. Here (a) and (d) come from the exchange of summation and expectation. (b) comes from nesting a conditional expectation. (c) is from two facts: first, on the event $\tau_{\bar{a}}\ge t$, the remaining inventory $\bm{b}_{t-1}$ (at the end of time period $t-1$) is enough to satisfy the $t$-th order; second, the dual-based policy is adopting the price vector $\p_t$ in deciding the value of $x_t,$ i.e., $x_t = I(r_t>\Aa_t^\top \p_t).$

For the second term in (\ref{decomp1}), we show that it is lower bounded by $-\E[n-\tau_{\bar{a}}]$ up to some constant. We know that $\|\p^*\| \le \frac{\bar{r}}{\underline{d}}$ from Proposition \ref{basicProp} and $\|\Aa_t\|_2 \le \bar{a}$ from Assumption \ref{assump1}. Combining these two facts,  
\begin{align}
\E \left[\sum_{t=\tau_{\bar{a}}+1}^{n}\left(  r_t x_t + \Dd^\top \p^* - \Aa_t^\top \p^* x_t\right)\right] & \ge \E \left[\sum_{t=\tau_{\bar{a}}+1}^{n}\left( r_t x_t - \Aa_t^\top \p^* x_t\right)\right]  \nonumber \\
&\ge -\E[n-\tau_{\bar{a}}]\cdot \left(\bar{r} + \frac{\bar{r}\bar{a}}{\underline{d}}\right). \label{decomp3}
\end{align}
where the first lines comes from the fact that $\Dd^\top \p^*\ge 0.$

Plugging (\ref{decomp2}) and (\ref{decomp3}) into (\ref{decomp1}), we obtain 
\begin{equation}
    \E R_{n}(\bm{\pi}) \ge \E \left[\sum_{j=1}^{\tau_{\bar{a}}}g(\p_{j}) \right] -\E[n-\tau_{\bar{a}}]\cdot \left(\bar{r} + \frac{\bar{r}\bar{a}}{\underline{d}}\right) - \E\left[\frac{\bar{r}}{\underline{d}}\cdot\sum_{i\in I_B} b_{in}\right]. \label{par2}
\end{equation}

To obtain an upper bound on the regret, we simply take the difference between $\E R_n^*$ and (\ref{par2}), and then apply Lemma \ref{gP} for an upper bound on $\E R_n^*$,
\begin{align*}
    \E R_n^* - \E R_n(\bm{\pi}) \le \E \left[\sum_{j=1}^{\tau_{\bar{a}}}\mu \bar{a}^2\|\p_{j}-\p^*\|_2^2 \right]+\E[n-\tau_{\bar{a}}]\cdot \left(\bar{r} + \frac{\bar{r}\bar{a}}{\underline{d}}\right) + \E\left[\frac{\bar{r}}{\underline{d}}\cdot\sum_{i\in I_B} b_{in}\right]
\end{align*}
holds for all $n>0$ and distribution $\mathcal{P} \in \Xi.$ By choosing
$$K = \max\left\{\mu \bar{a}^2,\bar{r} + \frac{\bar{r}\bar{a}}{\underline{d}},\frac{\bar{r}}{\underline{d}} \right\},$$
we finish the proof.
\end{proof}

\subsection{Proof of Corollary \ref{coro1}}

\begin{proof}
From the proof of Theorem \ref{representation}, the role that the stopping time $\tau_{\bar{a}}$ plays is to guarantee the orders coming before $\tau_{\bar{a}}$ can always be satisfied. When $\prob(\tau \le \tau_{\bar{a}})=1$, the stopping time $\tau$ has the same property. Therefore, the derivations in the proof of Theorem \ref{representation} still hold for $\tau$.
\end{proof}

\renewcommand{\thesubsection}{C\arabic{subsection}}
\section{Regret Analyses for OLP Algorithms}

\subsection{Proof of Theorem \ref{Algo3}}
\label{Ap*}
\begin{proof}


The proof of Theorem \ref{Algo3} builds upon the generic regret upper bound in Theorem \ref{representation}. 

For \textbf{the first part in the generic upper bound} in Theorem \ref{representation}, since we apply $\p^*$ as the decision rule, i.e., $\p_t=\p^*$,
\begin{equation}
    \E\left[\sum_{t=1}^n \|\p_t-\p^*\|_2^2\right] = 0. \label{part1p*}
\end{equation}
So, we only need to focus on $\E\left[n-\tau_{\bar{a}}\right]$ and $\E\left[\sum_{i\in I_B} b_{in}\right].$
Define
$$\tau^i_{\bar{a}} = \min \{n\} \cup \left\{t\ge 1:\sum_{j=1}^t a_{ij}I(r_j>\Aa_j^\top\p^*)> nd_i-\bar{a}\right\}.$$ 
From the optimality condition on $\bm{p}^*$, we know that the expected constraint consumption of Algorithm \ref{alg:Distribution} under the dual price $\bm{p}^*$ has an upper bound 
\begin{align*}
    \E\left[ \sum_{j=1}^t a_{ij}I(r_j>\Aa_j^\top\p^*)\right]\le td_i
\end{align*}
for $i=1,...,m$ and $t=1,...,n$. That is, $\nabla f(\bm{p}^*)\ge 0,$ and this can be derived from the proof of Proposition \ref{strConvex}. In addition, the variance of the constraint consumption has a trivial upper bound due to the independence across different time periods,
\begin{align*}
    \text{Var}\left[ \sum_{j=1}^t a_{ij}I(r_j>\Aa_j^\top\p^*)\right] \le \bar{a}^2t.
\end{align*}
for $i=1,...,m$ and $t=1,...,n$. In the following, we use these two upper bounds to derive upper bounds for the second and third part of the generic upper bound in Theorem \ref{representation}.

For \textbf{the second part of the generic upper bound},
\begin{align}
    \E[n-\tau_{\bar{a}}^i]  \le & \sum_{t=1}^n \prob(\tau_{\bar{a}}^i\le t) \nonumber\\
    = &\sum_{t=1}^n \prob\left(\sum_{j=1}^t a_{ij}I(r_j>\Aa_j^\top\p^*) \ge nd_i-\bar{a} \right) \nonumber\\
    \le &\sum_{t=1}^n \prob\left(\sum_{j=1}^t a_{ij}I(r_j>\Aa_j^\top\p^*)  - \E\left[ \sum_{j=1}^t a_{ij}I(r_j>\Aa_j^\top\p^*)\right] \ge (n-t)d_i-\bar{a} \right) \nonumber\\
    \le & n-n_0+\sum_{t=1}^{n_0}\left(\frac{\bar{a}^2t}{((n-t)d_i-\bar{a})^2}\right) \wedge 1  \nonumber \text{ \ (Applying the Chebyshev's inequality)} \\
    \le & \left(2+\frac{\bar{a}}{d_i}\right)\sqrt{n} \nonumber
\end{align}
where $n_0=\floor{n-\frac{\bar{a}}{d_i}}$. We refer the last line to Lemma \ref{lemmaTheoremAlgo3} in the following subsection. Therefore, 
\begin{align}
    \E[n-\tau_{\bar{a}}] & = \E[\max_i \{n-\tau_{\bar{a}}^{i}\}] \nonumber \\
    & \le \sum_{i=1}^m \E[n-\tau_{\bar{a}}^{i}] \nonumber \\
    & \le \left(2+\frac{\bar{a}}{\underline{d}}\right)m\sqrt{n}
    \label{part2p*}
\end{align} 
where the first line comes from the definition of $\tau_{\bar{a}}$ and $\tau_{\bar{a}}^i$ and the second line comes from a replacement of the maximum of $m$ (non-negative) random variables with an upper bound of their summation. Next, for \textbf{the third part of the generic upper bound},
\begin{align}
\E[b_{in}] & = \E\left[\left(nd_i - \sum_{j=1}^n a_{ij}I(r_j>\Aa_j^\top\p^*)\right)^+ \right]\le\E\left[ \Bigg|nd_i - \sum_{j=1}^n a_{ij}I(r_j>\Aa_j^\top\p^*)\Bigg|\right]\nonumber \\
& = \E\left[\Bigg|\sum_{j=1}^n \left(d_i-a_{ij}I(r_j>\Aa_j^\top\p^*)\right)\Bigg|\right]\nonumber \\
&\le \sqrt{\E\left[\Bigg|\sum_{j=1}^n \left(d_i-a_{ij}I(r_j>\Aa_j^\top\p^*)\right)\Bigg|^2\right]}\nonumber \\
& =\sqrt{\text{Var}\left[\sum_{j=1}^n a_{ij}I(r_j>\Aa_j^\top\p^*)\right]} \nonumber \\
&\le \bar{a}\sqrt{n}
    \label{part3p*}
\end{align}
holds for all $i \in I_B$ and $n>0$. The second line comes from absorbing the term $nd_i$ into the summation, the fourth line comes from the fact that $d_i=\E[a_{ij}I(r_j>\Aa_j^\top \bm{p}^*)]$ for a binding constraint $i$, and the last line comes from the independence between different time periods.
Combining (\ref{part1p*}), (\ref{part2p*}), and (\ref{part3p*}) with Theorem \ref{representation}, we complete the proof.
\end{proof}

\subsubsection{Inequality in the Proof of Theorem \ref{Algo3}}

\begin{lemma}
\label{lemmaTheoremAlgo3}
The following inequality holds
\begin{align*}
n-n_0+\sum_{t=1}^{n_0}\left(\frac{\bar{a}^2t}{((n-t)d_i-\bar{a})^2}\right) \wedge 1 \le \left(2+\frac{\bar{a}}{d_i}\right)\sqrt{n} \nonumber
\end{align*}
where $n_0=\floor{n-\frac{\bar{a}}{d_i}}$.
\end{lemma}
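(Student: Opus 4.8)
The plan is to split the finite sum at a cutoff point and estimate the two pieces crudely. Recall $n_0 = \floor{n - \bar{a}/d_i}$, so $n - n_0 \le \bar{a}/d_i + 1$; this already contributes at most $\bar{a}/d_i + 1$ to the left-hand side, well within the target $\left(2 + \bar{a}/d_i\right)\sqrt{n}$. The remaining work is to bound $\sum_{t=1}^{n_0} \left(\frac{\bar{a}^2 t}{((n-t)d_i - \bar{a})^2}\right) \wedge 1$ by something of order $\sqrt{n}$.

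First I would introduce a second cutoff $t_1 = \ceil{n - \sqrt{n}}$ (roughly), chosen so that for $t \le t_1$ the denominator $(n-t)d_i - \bar{a}$ is at least of order $d_i\sqrt{n}$, hence the summand is $O(\bar{a}^2 t / (d_i^2 n)) = O(\bar{a}^2/d_i^2)$ per term but in fact summable: $\sum_{t \le t_1} \frac{\bar{a}^2 t}{((n-t)d_i - \bar{a})^2}$. Reindex by $s = n - t$; the sum becomes $\sum_{s} \frac{\bar{a}^2 (n-s)}{(s d_i - \bar{a})^2} \le \frac{\bar{a}^2 n}{d_i^2} \sum_{s \ge s_0} \frac{1}{(s - \bar{a}/d_i)^2}$ where $s_0$ corresponds to $t = t_1$, i.e. $s_0 \approx \sqrt{n}$. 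The tail $\sum_{s \ge s_0} 1/(s - \bar{a}/d_i)^2$ is $O(1/s_0) = O(1/\sqrt{n})$ by comparison with an integral, so this whole block contributes $O\left(\frac{\bar{a}^2 n}{d_i^2} \cdot \frac{1}{\sqrt{n}}\right) = O\left(\frac{\bar{a}^2}{d_i^2}\sqrt{n}\right)$. For the remaining range $t_1 < t \le n_0$ I would just bound each summand by $1$; there are at most $n_0 - t_1 = O(\sqrt{n})$ such terms, contributing $O(\sqrt{n})$.

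Adding the three pieces — $n - n_0 \le \bar{a}/d_i + 1$, the summable block $O\left(\frac{\bar{a}^2}{d_i^2}\sqrt{n}\right)$, and the near-the-boundary block $O(\sqrt{n})$ — gives a bound of the form (constant depending on $\bar{a}, d_i$) times $\sqrt{n}$. The only real care needed is to pick the cutoff $t_1$ so that both the integral tail estimate and the trivial ``$\le 1$'' count come out with the clean constant $2 + \bar{a}/d_i$ stated in the lemma rather than a messier one; this amounts to choosing $t_1 = n - \floor{\sqrt{n}}$ or similar and tracking constants honestly through the Chebyshev-type bound $\frac{\bar{a}^2 t}{((n-t)d_i - \bar{a})^2}$. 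I do not expect any genuine obstacle here — it is an elementary calculus/summation estimate — the mild nuisance is bookkeeping the constant so it matches the stated $\left(2 + \bar{a}/d_i\right)\sqrt{n}$ exactly (using $t \le n$ in the numerator and a geometric/integral comparison for the $1/s^2$ tail, together with $\sum_{t=1}^{n_0}(\cdots \wedge 1)$ never exceeding $n$ as a safety net for small $n$).
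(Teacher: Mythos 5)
Your proposal is correct and follows essentially the same route as the paper's proof: split the sum at $t\approx n-\sqrt{n}$, bound the last $O(\sqrt{n})$ summands (together with $n-n_0$) trivially by $1$, and control the early block by an integral comparison of the $1/(n-t)^2$ tail, which yields an $O\left(\frac{\bar a^2}{d_i^2}\sqrt{n}\right)$ contribution. Your worry about matching the exact constant $2+\bar a/d_i$ is fair but immaterial — the paper's own final simplification is equally loose on this point — so no genuine gap remains.
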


\begin{proof}
 We have
 \begin{align*}
& n-n_0+\sum_{t=1}^{n_0}\left(\frac{\bar{a}^2t}{((n-t)d_i-\bar{a})^2}\right) \wedge 1 \\
= & n-n_0+\sum_{t=1}^{n_0}\left(\frac{t}{\left((n-t)\frac{d_i}{\bar{a}}-1\right)^2}\right) \wedge 1 \\
\le & \sqrt{n}+1+\sum_{t=1}^{\floor{n-\sqrt{n}}}\left(\frac{t}{\left((n-t)\frac{d_i}{\bar{a}}-1\right)^2}\right) \\
\le &\sqrt{n}+1+\sum_{t=1}^{\floor{n-\sqrt{n}}}\left(\frac{2t}{\left((n-t)\frac{d_i}{\bar{a}}\right)^2}\right) \\
\le & \sqrt{n}+1 + \int_{0}^{n-\sqrt{n}} \frac{2x}{\left((n-x)\frac{d_i}{\bar{a}}\right)^2} \mathrm{d}x \\
\le & \sqrt{n}+1 + (n-\sqrt{n}) \cdot \int_{0}^{n-\sqrt{n}} \frac{1}{\left((n-x)\frac{d_i}{\bar{a}}\right)^2} \mathrm{d}x \\
= & \sqrt{n}+1 + (n-\sqrt{n}) \frac{\bar{a}}{d_i\sqrt{n}} \le \left(2+\frac{\bar{a}}{d_i}\right)\sqrt{n}.
\end{align*}
\end{proof}

\subsection{Proof of Theorem \ref{Algo1}}

\label{AsqrtT}

\begin{proof}
We prove the theorem by using the results in Theorem \ref{representation} and Corollary \ref{coro1}. The proof is similar but more complicated than Theorem \ref{Algo3}, because the dual price is computed based on SAA in Algorithm \ref{alg:IDLA}. Specifically, we analyze the three parts in the generic regret upper bound separately. 

First, define $$\tau^i_{\bar{a}} = \min \{n\} \cup \left\{t\ge 1:\sum_{j=1}^t a_{ij}I(r_j>\Aa_j^\top\p_j)> nd_i-\bar{a}\right\}$$ where $\p_j$'s are specified by Algorithm \ref{alg:IDLA}. Here the stopping time $\tau^i_{\bar{a}}$ is associated with the constraint process under policy $\pi_1$. In this way, 
$$\tau_{\bar{a}} = \min_{i}\tau^i_{\bar{a}}.$$
From the dual convergence result in Theorem \ref{expectation}, we know that there exists a constant $C,$ such that
$$\E \|\p_{t_k} - \p^*\|_2^2 \le  \frac{Cm}{t_k}\log\log t_k$$
holds for all $k\ge1$ and distribution $\mathcal{P}\in \Xi.$ Here $\p_k$ is the dual price used in Algorithm \ref{alg:IDLA} and $t_k$ appears on the right-hand-side instead of $k$ because Algorithm \ref{alg:IDLA} updates the dual price only in periods $t_k$'s.

First, for \textbf{the first part of the generic regret bound}, 
\begin{align}
    \E\left[\sum_{j=1}^{\tau_{\bar{a}}} \|\p_j-\p^*\|_2^2 \right] & \le \sum_{j=1}^n \E\|\p_j-\p^*\|_2^2 \nonumber \\
    & \le \sum_{k=1}^{L-1} \sum_{t=t_k+1}^{t_{k+1}} \E \|\p_{t_k} - \p^* \|_2^2 \nonumber \\
    & \le \sum_{k=1}^{L-1} (t_{k+1} - t_k)\cdot \frac{Cm}{{t_k}} \log\log t_k \nonumber \\
    & \le \sum_{k=1}^{L-1} \frac{Cm(\delta^{k+1}-\delta^{k} +1)}{\delta^{k}} \log\log n \text{ \ \  (Plugging in $t_k$'s value)}\nonumber \\
    & =  2Cm(\delta-1)L\log\log n \le 3Cm\log n\log \log n, \label{part1sqrt}
\end{align}
where the last line comes from the fact that $\delta\in (1,2]$ and $n = \floor{\delta^L}.$ As we will see shortly, the contribution of this first part is daunted by the later two parts.

Next, we analyze \textbf{the second part in the generic regret bound} -- the stopping time $\tau_{\bar{a}}$. Specifically, we consider the constraint process $b_{it}.$ From the definition of $\tau_{\bar{a}}^{i},$
\begin{equation}
\left\{\tau_{\bar{a}}^{i} \le t\right\} = \left\{\sum_{j=1}^{t'} a_{ij}I(r_j>\Aa_j^\top\p_j)\ge nd_i-\bar{a} \text{ for some } 1\le t'\le t\right\}   \label{tau_t}
\end{equation}
where $\p_j$'s are specified by Algorithm \ref{alg:IDLA}. To obtain an upper bound of $\E[n-\tau_{\bar{a}}^i]$, we only need to analyze the probability of the event on the right hand side. Notice that from the optimality condition of the stochastic programming problem,
$$\sum_{j=1}^t \E\left[a_{ij}I(r_j>\Aa_j^\top\p^*)\right] \le td_i$$
where the expectation is taken with respect to $(r_j,\Aa_{j})\sim \mathcal{P}.$ The equality holds for the binding constraints (where the binding and non-binding constraints are defined according to the stochastic program \eqref{asymProblem}). It tells that if we apply the dual price $\bm{p}^*$, then the constraints will not be exhausted until the last step. The idea is to use the fact that $\bm{p}_j$'s are close to $\bm{p}^*$ to show that if we apply the dual price $\bm{p}_j$'s, the constraints will also not be exhausted until the very end of the horizon.

Define function
$$g_0(\p) = \E\left[a_{ij}I(r_j>{\Aa}_j^\top\p)\right].$$
We know 
\begin{align}
    |g_0(\p) - g_0(\p^*)| & = |\E\left[a_{ij}I(r_j>{\Aa}_j^\top\p)\right]- \E\left[a_{ij}I(r_j>{\Aa}_j^\top\p^*)\right] |\nonumber\\
    & = |\E\left[a_{ij}\prob(r_j>{\Aa}_j^\top\p|{\Aa}_j)-a_{ij}\prob(r_j>{\Aa}_j^\top\p^*|{\Aa}_j)\right] |\nonumber\\
    & \le \E\left[|a_{ij}\prob(r_j>{\Aa}_j^\top\p|{\Aa}_j)-a_{ij}\prob(r_j>{\Aa}_j^\top\p^*|{\Aa}_j)|\right] \nonumber\\
    & \le \bar{a}  \E\left[|\prob(r_j>{\Aa}_j^\top\p|{\Aa}_j)-\prob(r_j>{\Aa}_j^\top\p^*|{\Aa}_j)|\right]  \nonumber \\ 
    & \le \bar{a}^2 \mu \|\p -\p^*\|_2 \label{g0P}
\end{align}
for any $\p\in \Omega_p$ and distribution $\mathcal{P} \in \Xi.$ The parameter $\mu$ in the last line comes from Assumption \ref{assump2} (b). The above inequality states that the difference in terms of constraint consumption is upper bounded by the difference between dual prices (up to a constant factor).

The expectation of $i$-th constraint consumption up to time $t$ under Algorithm \ref{alg:IDLA},
\begin{align}
\E\left[ \sum_{j=1}^t a_{ij}I(r_j>\Aa_j^\top\p_j)\right]  &=\sum_{j=1}^t \E\left[a_{ij}I(r_j>{\Aa}_j^\top\p_j) \right]\nonumber \\
& \le   \sum_{j=1}^t \left(\E\left[a_{ij}I(r_j>{\Aa}_j^\top\p_j)\right]-\E \left[a_{ij}I(r_j>{\Aa}_j^\top\p^*)\right]\right) + td_i \nonumber\\
& \le \bar{a}^2\mu \sum_{j=1}^t  \E\|\p_j-\p^*\|_2 + td_i \nonumber\\
& = \bar{a}^2\mu \sum_{k=1}^L \sum_{j=t_k+1}^{t_{k+1}} \E \|\p_j - \p^* \|_2I(j\le t) + td_i \nonumber\\
& \le \bar{a}^2\mu \sum_{k=1}^L \sum_{j=t_k+1}^{t_{k+1}} \frac{C\sqrt{m}}{\sqrt{t_k}} \sqrt{\log\log t_k} I(j\le t)+ td_i \nonumber \\
& \le 5C\bar{a}^2\mu\sqrt{m}\sqrt{t} \sqrt{\log \log t} + td_i, \label{expAlg1}
\end{align}
for $i=1,...,m$ and $t=1,...,n.$ The constant $C$ is the coefficient of the dual convergence from Theorem \ref{expectation}. Here the second line comes from plugging in the feasibility of $\bm{p}^*$ for the stochastic program. The third line comes from the analysis  of the function $g_0(\bm{p}).$  The fifth line comes from the dual convergence result. The detailed derivation of the last line is defered to Lemma \ref{lemmasqrtT} in the following subsection.

The variance of $i$-th constraint consumption up to time $t$ has the following decomposition,
\begin{align}
    \text{Var}\left[ \sum_{j=1}^t a_{ij}I(r_j>\Aa_j^\top\p_j)\right] & = \E \left[ \sum_{j=1}^t a_{ij}I(r_j>\Aa_j^\top\p_j) - \sum_{j=1}^t \E\left[ a_{ij}I(r_j>\Aa_j^\top\p_j)|\p_j\right]\right]^2  \nonumber \\ 
    & \ \ \ +  \text{Var} \left[\sum_{j=1}^t \E[ a_{ij}I(r_j>\Aa_j^\top\p_j)|\p_j]\right]. \label{var1}
\end{align}
This is because, for two random variables $X_1$ and $X_2,$ we have
\begin{align*}
    \text{Var}[X_1] &= \E[X_1-\E X_1]^2 \\
    & = \E\left[X_1-\E [X_1|X_2] +\E [X_1|X_2] -\E X_1 \right]^2 \\
    & = \E\left[X_1 - \E[X_1|X_2]\right]^2 + \text{Var}[\E[X_1|X_2]].
\end{align*}
Let $Z_j \coloneqq a_{ij}I(r_j>\Aa_j^\top\p_j) - \E\left[ a_{ij}I(r_j>\Aa_j^\top\p_j)|\p_j\right].$ It is easy to see that $Z_j$'s is a martingale difference sequence adapted to $\{\mathcal{H}_t\}_{t=1}^n.$ Recall that $\mathcal{H}_t$ is defined as the $\sigma$-algebra generated by $\{(r_j, \Aa_j)\}_{j=1}^{t}.$ Specifically, we have 
$$\E[Z_j] < \infty\text{\ \ and \ \ } \E[Z_j|\mathcal{H}_{j-1}]=0.$$
Then, for the first term in (\ref{var1}),
\begin{align}
   & \E \left[ \sum_{j=1}^t a_{ij}I(r_j>\Aa_j^\top\p_j) - \sum_{j=1}^t \E\left[ a_{ij}I(r_j>\Aa_j^\top\p_j)|\p_j\right]\right]^2 \nonumber \\
     = &\sum_{j=1}^t \E\left[a_{ij}I(r_j>\Aa_j^\top\p_j)-\E\left[ a_{ij}I(r_j>\Aa_j^\top\p_j)|\p_j\right]\right]^2 \le \bar{a}^2t.  \label{mart1}
\end{align}
For the second term in (\ref{var1}),
\begin{align}
    \text{Var} \left[\sum_{j=1}^t \E[ a_{ij}I(r_j>\Aa_j^\top\p_j)|\p_j]\right] &\le  \E \left[\sum_{j=1}^t \E[ a_{ij}I(r_j>\Aa_j^\top\p_j)|\p_j] - \sum_{j=1}^t \E[ a_{ij}I(r_j>\Aa_j^\top\p^*)]\right]^2 \nonumber \\
    & \le \E \left[\bar{a}^2\mu\sum_{j=1}^t \|\p_j-\p^*\|_2 \right]^2 \nonumber \\
    & \le  C\bar{a}^4\mu^2 mt\log t\log \log t \label{mart2}
\end{align}
where the last line is referred to Lemma \ref{sqrtSum} in the following subsection and the constant $C$ is the coefficient of the dual convergence from Theorem \ref{expectation}.
Putting together (\ref{mart1}) and (\ref{mart2}),
\begin{equation}
    \text{Var} \left[\sum_{j=1}^t a_{ij}I(r_j>\Aa_j^\top\p_j)\right] \le \bar{a}^2 t + C\bar{a}^4\mu^2 mt\log t\log \log t,
    \label{varAlg1}
\end{equation}
for $i=1,...,m$ and $t=1,...,n.$

To summarize, (\ref{expAlg1}) and (\ref{varAlg1}) provide upper bounds for the expectation and variance of the constraint consumption under Algorithm \ref{alg:IDLA}. With these two bounds, we can proceed to analyze the right-hand-side of (\ref{tau_t}),
\begin{align}
&\mathbb{P}\left(\sum_{j=1}^{t'} a_{ij}I(r_j>\Aa_j^\top\p_j)\ge nd_i-\bar{a} \text{ for some } 1\le t'\le t\right)  \nonumber \\
& = \prob\Bigg(\sum_{j=1}^{t'} a_{ij}I(r_j>\Aa_j^\top\p_j) - \E\left[\sum_{j=1}^{t'} a_{ij}I(r_j>\Aa_j^\top\p_j)\right] \ge nd_i-\bar{a} - \E\left[\sum_{j=1}^{t'} a_{ij}I(r_j>\Aa_j^\top\p_j) \right]   \nonumber \\ & \ \ \text{for some \ } 1\le t' \le t \Bigg) \nonumber\\
& \le \prob\Bigg(\sum_{j=1}^{t'} a_{ij}I(r_j>\Aa_j^\top\p_j) - \E\left[\sum_{j=1}^{t'} a_{ij}I(r_j>\Aa_j^\top\p_j)\right] \ge (n-t)d_i-\bar{a} - 5C\bar{a}^2\mu\sqrt{m}\sqrt{t} \sqrt{\log \log t}   \nonumber \\ & \ \ \text{for some \ } 1\le t' \le t \Bigg) \label{MartingaleRepre}\end{align}
where the last line comes from plugging in the upper bound (\ref{expAlg1}) of the expected constraint consumption.

We can view the process
$$M_{t} = \sum_{j=1}^{t} a_{ij}I(r_j>\Aa_j^\top\p_j) - \E\left[\sum_{j=1}^{t} a_{ij}I(r_j>\Aa_j^\top\p_j)\right]$$
as a martingale adapted to the filtration $\mathcal{H}_{t}$ generated by $\{(r_{j},\Aa_j)\}_{j=1}^{t}$. Applying Doob's martingale inequality, when $(n-t)d_i-\bar{a} - 5C\bar{a}^2\mu\sqrt{m}\sqrt{t} \sqrt{\log \log t} >0,$ we have,
\begin{align*}
    & \mathbb{P}\left(\sum_{j=1}^{t'} a_{ij}I(r_j>\Aa_j^\top\p_j)\ge nd_i-\bar{a} \text{ for some } 1\le t'\le t\right) \\
    & \le \prob\left(M_{t'} \ge (n-t)d_i-\bar{a} - 5C\bar{a}^2\mu\sqrt{m}\sqrt{t} \sqrt{\log \log t} \text{ for some } 1\le t'\le t\right ) \\ & \le \frac{ \text{Var} \left[\sum_{j=1}^t a_{ij}I(r_j>\Aa_j^\top\p_j)\right]}{\left((n-t)d_i-\bar{a} - 5C\bar{a}^2\mu\sqrt{m}\sqrt{t} \sqrt{\log \log t}\right)^2} \\
    & \le \frac{\bar{a}^2 t + C\bar{a}^4\mu^2 mt\log t\log \log t}{\left((n-t)d_i-\bar{a} - 5C\bar{a}^2\mu\sqrt{m}\sqrt{t} \sqrt{\log \log t}\right)^2}.
\end{align*}
where the second line is (\ref{MartingaleRepre}), the third line applies Doob's Martingale inequality \citep{revuz2013continuous} and the fourth line plugs in the variance upper bound (\ref{varAlg1}).

Then, we complete the analysis of \textbf{the second part of the generic upper bound}, 
\begin{align}
    \E[n-\tau_{\bar{a}}^i]  \le & \sum_{t=1}^n \prob(\tau_{\bar{a}}^i\le t) \nonumber\\
    = &\sum_{t=1}^n  \mathbb{P}\left(\sum_{j=1}^{t'} a_{ij}I(r_j>\Aa_j^\top\p_j)\ge nd_i-\bar{a} \text{ for some } 1\le t'\le t\right) \nonumber\\
    \le & n-n_0+\sum_{t=1}^{n_0}\left(\frac{\bar{a}^2 t + C\bar{a}^4\mu^2m t\log t\log \log t}{\left((n-t)d_i-\bar{a} - 5C\bar{a}^2\mu\sqrt{m}\sqrt{t} \sqrt{\log \log t}\right)^2}\right) \wedge 1 \nonumber \\
    \le & C'\sqrt{m}\sqrt{n} \log n  \nonumber
\end{align}
for some constant $C'$ dependent on $\bar{a}$, $\underline{d}$, $\bar{d}$, $\mu$ and $C.$ Here $n_0$ is the largest index $t$ such that $(n-t)d_i-\bar{a} - 5C\bar{a}^2\mu\sqrt{m}\sqrt{t} \sqrt{\log \log t} >0,$. We refer the derivation of the last line to Lemma \ref{lemmaTheoremAlgo2} in the following subsection. 
Then,
\begin{align}
    \E[n-\tau_{\bar{a}}]  &=   \E[\max_i \{n-\tau_{\bar{a}}^{i}\}] \nonumber \\
    & \le \sum_{i=1}^m \E[n-\tau_{\bar{a}}^{i}] \nonumber \\
    & \le C' m^{\frac{3}{2}} \sqrt{n} \log n. \label{part2sqrt}
\end{align}
where the first two lines follow the same argument as \eqref{part2p*} in Theorem \ref{Algo3}. The last thing is \textbf{the third part of the generic regret bound} -- $\E\left[b_{in}\right]$ for $i \in I_B.$ Indeed,
$$b_{in} = \left(nd_i - \sum_{j=1}^n a_{ij}I(r_j>\Aa_j^\top\p_j)\right)^+ \le \Bigg|nd_i - \sum_{j=1}^n a_{ij}I(r_j>\Aa_j^\top\p_j)\Bigg|.$$
Therefore,
\begin{align}
    \E\left[b_{in}\right] & \le \E\left[\Bigg|nd_i - \sum_{j=1}^n a_{ij}I(r_j>\Aa_j^\top\p_j)\Bigg|\right]\nonumber  \\ 
    & \le \sqrt{\E\left[\Bigg|nd_i - \sum_{j=1}^n a_{ij}I(r_j>\Aa_j^\top\p_j)\Bigg|^2\right]} \nonumber \\
    & = \sqrt{\left(\E\left[nd_i - \sum_{j=1}^n a_{ij}I(r_j>\Aa_j^\top\p_j)\right]\right)^2 + \text{Var}\left[nd_i - \sum_{j=1}^n a_{ij}I(r_j>\Aa_j^\top\p_j)\right]} \nonumber \\
    & \le \sqrt{25C^2\bar{a}^4\mu^2m^2n\log\log n + \bar{a}^2 n + C\bar{a}^4\mu^2 mn\log n\log \log n} \nonumber \\
    & \le 6C\bar{a}^2\mu m\sqrt{n} \sqrt{\log\log n} \label{part3sqrt}
\end{align}
where the second line comes from Cauchy-Schwartz inequality and the fourth line comes from plugging in the upper bounds on expectation and variance of constraint consumption, namely, (\ref{expAlg1}) and (\ref{varAlg1}). 
Combining the three inequalities (\ref{part1sqrt}), (\ref{part2sqrt}), and (\ref{part3sqrt}) which correspond to the three parts in the upper bound from Theorem \ref{representation}, we complete the proof.
\end{proof}

\subsubsection{Inequalities in the Proof of Theorem \ref{Algo1}}

\begin{lemma}
The following inequality holds for $3\le t<n,$
 \begin{align*}
 \sum_{k=1}^L \sum_{j=t_k+1}^{t_{k+1}} \frac{1}{\sqrt{t_k}} \sqrt{\log\log t_k} I(j\le t) \le  5\sqrt{t}\sqrt{\log\log t}.
\end{align*}
\label{lemmasqrtT}
\end{lemma}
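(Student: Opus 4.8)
The plan is to bound the double sum by splitting off the small indices and then comparing the remaining sum against a geometric-type series. Recall $t_k = \floor{\delta^k}$ with $\delta \in (1,2]$ and $t_L = n+1$. The key structural fact is that the inner sum over $j$ has at most $t_{k+1} - t_k = \floor{\delta^{k+1}} - \floor{\delta^k} \le \delta^{k+1} - \delta^k + 1 \le (\delta-1)\delta^k + 1$ terms, and each term is the constant $\frac{1}{\sqrt{t_k}}\sqrt{\log\log t_k}$ (with the indicator $I(j \le t)$ possibly truncating the last block). So the whole sum is at most $\sum_{k : t_k < t} \big((\delta-1)\delta^k + 1\big)\cdot \frac{\sqrt{\log\log t_k}}{\sqrt{t_k}}$, where I may also replace $\log\log t_k$ by $\log\log t$ since $t_k \le t$ whenever the block contributes (the function $\log\log$ is increasing on the relevant range, $t \ge 3$).

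Next I would control $\sum_{k : t_k < t} \frac{\delta^k}{\sqrt{t_k}}$. Using $t_k \ge \delta^k - 1 \ge \tfrac12 \delta^k$ for $k$ large enough (and handling the finitely many small $k$ separately), we get $\frac{\delta^k}{\sqrt{t_k}} \le \sqrt{2}\,\delta^{k/2}$, so the sum is a geometric series in $\delta^{1/2}$ dominated by its last term, which is of order $\delta^{k^*/2}$ where $k^*$ is the largest index with $t_{k^*} < t$, hence $\delta^{k^*} \le 2t$ roughly; this yields a bound of order $\sqrt{t}$ for $\sum_k \frac{\delta^k}{\sqrt{t_k}}$, and multiplying by $(\delta-1) \le 1$ keeps it $O(\sqrt{t})$. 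Similarly $\sum_{k : t_k < t} \frac{1}{\sqrt{t_k}}$ is bounded by a convergent geometric-like series (terms decay like $\delta^{-k/2}$) and is therefore $O(1)$, in fact $\le C\sqrt{t}$ trivially. Assembling these with the $\sqrt{\log\log t}$ factor pulled out front gives a bound of the form $C\sqrt{t}\sqrt{\log\log t}$, and the remaining task is to verify that the absolute constant works out to $5$: this amounts to being slightly careful about the small-$k$ correction terms (where $t_k \ge \tfrac12\delta^k$ may fail) and about the geometric series constant $\frac{\sqrt{\delta}}{\sqrt{\delta}-1}$, which is uniformly bounded for $\delta \in (1,2]$ only after noting $t_k \ge 1$ forces $\delta^k \ge 1$ and collecting terms — one checks that since $\delta \le 2$, $\delta^{k^*} < 2t$ gives $\delta^{k^*/2} < \sqrt{2t} = \sqrt2\sqrt t$, and the geometric sum contributes another factor that, combined with the $+1$ terms and small-index corrections, stays below $5$.

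The main obstacle I anticipate is purely bookkeeping: getting the explicit constant $5$ rather than just ``some constant'' requires tracking the boundary blocks carefully — the block containing $t$ may be only partially summed because of $I(j \le t)$, and the lower-order $+1$ in $\floor{\delta^{k+1}} - \floor{\delta^k} \le (\delta-1)\delta^k + 1$ must be summed over all $\le L = O(\log t)$ blocks, contributing an $O(\log t)$ term that is dominated by $\sqrt t$ for $t \ge 3$ but needs an explicit comparison. There is no conceptual difficulty; the estimate is a routine geometric-sum argument, and the only care needed is to ensure the finitely many initial indices $k$ (where $\delta^k$ is close to $1$ and the floor/halving bounds are loose) are absorbed into the stated constant.
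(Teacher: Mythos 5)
Your proposal is correct and follows essentially the same route as the paper's proof: truncate the outer sum at the last block with $t_k < t$ (the paper's $L'$), replace $\log\log t_k$ by $\log\log t$, bound each block's contribution by $(t_{k+1}-t_k)/\sqrt{t_k}\approx(\delta-1)\delta^{k/2}$, and sum the resulting geometric series, using the cancellation $(\delta-1)/(\delta^{1/2}-1)=\delta^{1/2}+1$ to keep the constant uniform over $\delta\in(1,2]$. The only real difference is that you track the floor corrections ($t_{k+1}-t_k\le(\delta-1)\delta^k+1$ and $t_k\ge\delta^k-1$) which the paper's displayed chain silently drops (it writes these steps as equalities); the paper's constant comes from $\delta(\delta^{1/2}+1)\le 2(\sqrt{2}+1)\approx 4.83<5$, so there is only a sliver of slack to absorb those corrections, but this affects only the unimportant absolute constant and not the substance of the argument.
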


\begin{proof}
Let $L'$ be the smallest integer such that $\delta^{L'}>t.$ By its definition, $\delta^{L'-1}\le t<\delta^{L'}.$
\begin{align*}
 & \sum_{k=1}^L \sum_{j=t_k+1}^{t_{k+1}} \frac{1}{\sqrt{t_k}} \sqrt{\log\log t_k} I(j\le t) \\
 \le & \sum_{k=1}^{L'} \sum_{j=t_k+1}^{t_{k+1}} \frac{1}{\sqrt{t_k}} \sqrt{\log\log t}  \\
 =& \sum_{k=1}^{L'}  \frac{t_{k+1}-t_k}{\sqrt{t_k}} \sqrt{\log\log t} \\
 =& \sum_{k=1}^{L'}  \frac{\delta-1}{\delta^{k/2}} \sqrt{\log\log t} \text{ \ \ \ (Plugging in the value of $t_k$'s)}\\
  =& \sum_{k=1}^{L'}  (\delta-1)\delta^{k/2} \sqrt{\log\log t} \\
  =&  \delta^{1/2}(\delta-1)\frac{\delta^{L'/2}-1}{\delta^{1/2}-1} \sqrt{\log\log t}  \text{ \ \ (Using the fact that $\delta^{L'-1}\le t$)}\\
  \le &  \delta(\delta^{1/2}+1)\sqrt{t}\sqrt{\log\log t} \\ \le& 5\sqrt{t}\sqrt{\log\log t}
\end{align*}
where the last line comes from the choice of $\delta\in(1,2].$
\end{proof}

\begin{lemma}
The following inequality holds for $3\le t\le n,$
$$\E \left[\sum_{j=1}^t \|\p_j-\p^*\|_2 \right]^2\le Cmt \log t \log \log t$$
where $C$ is the coefficient of dual convergence in Theorem \ref{expectation}.
\label{sqrtSum}
\end{lemma}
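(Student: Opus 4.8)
The plan is to reduce the claim to the second-moment bound of Theorem~\ref{expectation} through the Cauchy--Schwarz inequality, and then to recycle the geometric-epoch estimate already performed in~(\ref{part1sqrt}). First I would apply the elementary inequality $\bigl(\sum_{j=1}^t c_j\bigr)^2 \le t\sum_{j=1}^t c_j^2$ with $c_j=\|\p_j-\p^*\|_2\ge 0$, which holds pointwise, and then take expectations:
\begin{equation*}
\E\Bigl[\Bigl(\sum_{j=1}^t \|\p_j-\p^*\|_2\Bigr)^2\Bigr]\ \le\ t\,\sum_{j=1}^t \E\bigl[\|\p_j-\p^*\|_2^2\bigr].
\end{equation*}
Hence it suffices to prove $\sum_{j=1}^t \E\|\p_j-\p^*\|_2^2 = O\bigl(m\log t\log\log t\bigr)$ with an absolute constant.

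To bound this sum I would split the index range according to the epochs of Algorithm~\ref{alg:IDLA}. For $j\le t_1$ the algorithm sets $x_j=0$ directly, so I take $\p_j=\bm{0}$ there; since $\|\p^*\|_2\le\|\p^*\|_1\le\bar r/\underline d$ (Proposition~\ref{basicProp}(c) and the definition of $\Omega_p$) and $t_1=\floor{\delta}\le 2$, this initial block contributes at most a constant. For $t_k< j\le t_{k+1}$ one has $\p_j=\p_{t_k}^*$, the minimizer of $f_{t_k}$, so Theorem~\ref{expectation} gives $\E\|\p_j-\p^*\|_2^2\le \tfrac{Cm}{t_k}\log\log t_k$ (exactly the form used earlier in this proof). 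Summing over the at most $t_{k+1}-t_k$ indices in epoch $k$ and over those epochs with $t_k<t$ reproduces verbatim the computation in~(\ref{part1sqrt}), but now with $n$ replaced by $t$, yielding $\sum_{j=1}^t \E\|\p_j-\p^*\|_2^2 \le 3Cm\log t\log\log t + O(1)$. Combining with the display above and absorbing the additive $O(1)$ (harmless for $t\ge 3$, where $\log t\log\log t$ is bounded away from zero) and the numerical factor into the constant, still written $C$, finishes the proof.

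The one point that needs care is the epoch count in the second step: I must check that the number of epochs with $t_k<t$ is $O(\log t)$ with a constant independent of $n$, so that replacing $n$ by $t$ in~(\ref{part1sqrt}) is legitimate and gives $\log t$ rather than $\log n$. This follows because the algorithm chooses $\delta\in(1,2]$ and $L=\Theta(\log n)$ with $\delta^L\asymp n$, which forces $\log\delta=\Theta(1)$ bounded away from zero; then $t_k=\floor{\delta^k}<t$ implies $k< \log_\delta(t+1)=O(\log t)$. Given this, everything else is the Cauchy--Schwarz step together with the already-established geometric sum, so no new difficulty arises.
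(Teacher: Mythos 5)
Your argument is correct and follows essentially the same route as the paper's proof: the pointwise Cauchy--Schwarz bound $\bigl(\sum_{j=1}^t c_j\bigr)^2\le t\sum_{j=1}^t c_j^2$, then the epoch-wise application of Theorem~\ref{expectation} and the geometric sum over the $O(\log t)$ epochs with $t_k<t$ (the paper's $L'$), exactly as in Lemma~\ref{lemmasqrtT} and~(\ref{part1sqrt}). Your extra care with the initial block $j\le t_1$ and with confirming that the epoch count depends on $t$ rather than $n$ only tightens details the paper leaves implicit; no substantive difference.
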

We have 
\begin{align*}
  \E \left[\sum_{j=1}^t \|\p_j-\p^*\|_2 \right]^2  \le& t \E \left[\sum_{j=1}^t \|\p_j-\p^*\|_2^2 \right] \\
   &\text{(Use the dual convergence result)} \\
  \le& t \sum_{k=1}^L \sum_{j=t_k+1}^{t_{k+1}} \frac{Cm}{t_k} \log\log t_k I(j\le t)\\
  &\text{(Use the same method as Lemma \ref{lemmasqrtT})} \\
  \le& t\sum_{k=1}^{L'} \sum_{j=t_k+1}^{t_{k+1}} \frac{Cm}{t_k} \log\log t_k \\
 = &t \sum_{k=1}^{L'} \frac{Cm(t_{k+1}-t_k)}{t_k} \log\log t_k  \\
  = &t\sum_{k=1}^{L'} Cm(\delta-1)\log\log t_k \\
  \le & Cmt \log t \log \log t
\end{align*}
where $L'$ is the smallest integer such that $\delta^{L'}>t.$ By this definition, $\delta^{L'-1}\le t<\delta^{L'}.$

\begin{lemma}
\label{lemmaTheoremAlgo2}
The following inequality holds
\begin{align*}
 n-n_0+\sum_{t=1}^{n_0}\left(\frac{mt \log t\log\log t}{\left(n-t - \sqrt{m}\sqrt{t}\sqrt{\log \log t}\right)^2}\right) \wedge 1 
    \le &6\sqrt{mn\log n\log \log n}
\end{align*} 
where $n_0$ is the largest index $t$ such that $n-t - \sqrt{m}\sqrt{t}\sqrt{\log \log t}>0$.
\end{lemma}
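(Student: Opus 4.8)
This inequality is of the same type as Lemma~\ref{lemmaTheoremAlgo3}, and the plan is to prove it the same way: peel off a short terminal block of indices $t$ near $n_0$ on which we simply use $(\cdot)\wedge 1\le 1$, and on the remaining indices bound the summand by a clean inverse-square tail in $n-t$. Throughout write $s\coloneqq\sqrt{mn\log n\log\log n}$ for the target scale. First I would dispose of the degenerate regime: since $n-n_0\le n$ and the sum has at most $n_0$ terms each $\le 1$, the left-hand side is at most $n$, so the claim is trivial when $n\le 6s$. Hence assume $n>6s$. This single inequality already forces $n$ to be large relative to $m$ (it is equivalent to $n>36\,m\log n\log\log n$), which is exactly what justifies all the crude estimates below: $\log n\ge 1$, the monotonicity of $t\mapsto t\log\log t$ on the range of indices in play (it suffices, as elsewhere in the paper, to restrict to $t\ge 3$), $\sqrt{mn\log\log n}\le s$, and $s\ge 2/3$.

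Step~1: control $n-n_0$. Recall $n_0$ is the largest $t$ with $n-t-\sqrt m\,\sqrt t\,\sqrt{\log\log t}>0$. Taking the trial index $t=\lceil n-2\sqrt{mn\log\log n}\rceil$ and bounding $\sqrt m\,\sqrt t\,\sqrt{\log\log t}\le\sqrt{mn\log\log n}$ (valid since $t\le n$), the defining quantity is at least $\sqrt{mn\log\log n}-1>0$ in the large-$n$ regime; hence $n_0\ge n-2\sqrt{mn\log\log n}$, i.e. $n-n_0\le 2\sqrt{mn\log\log n}\le 2s$.

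Step~2: split the sum. Set $n_1\coloneqq\lfloor n-3s\rfloor$; one checks $1\le n_1<n_0$, the upper inequality because $3s>2\sqrt{mn\log\log n}$ and the lower because $n>6s$. Breaking $\sum_{t=1}^{n_0}$ at $n_1$ and bounding every term of the terminal block $n_1<t\le n_0$ by $1$ gives, with $(\cdot)$ abbreviating the summand,
\[ (n-n_0)+\sum_{t=n_1+1}^{n_0}(\cdot)\wedge 1\;\le\;(n-n_0)+(n_0-n_1)\;=\;n-n_1\;\le\;3s+1. \]
For the main block $t\le n_1$ we have $n-t\ge n-n_1\ge 3s\ge 3\sqrt{mn\log\log n}\ge 3\sqrt m\,\sqrt t\,\sqrt{\log\log t}$, so $n-t-\sqrt m\,\sqrt t\,\sqrt{\log\log t}\ge\tfrac23(n-t)$ and each summand is at most $\tfrac94\,\frac{mt\log t\log\log t}{(n-t)^2}\le\tfrac94\,\frac{mn\log n\log\log n}{(n-t)^2}$. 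Summing over $t\le n_1$ and using $\sum_{k\ge K}k^{-2}\le 2/K$ with $K=n-n_1\ge 3s$ yields $\sum_{t=1}^{n_1}(\cdot)\wedge 1\le\tfrac94\,s^2\cdot\frac{2}{3s}=\tfrac32 s$. Adding the two blocks gives $\tfrac32 s+(3s+1)=\tfrac92 s+1\le 6s$, using $s\ge 2/3$, which is the claimed bound.

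I do not anticipate a genuine obstacle here; the only thing requiring care is the bookkeeping of the ``$n$ large'' reductions — the degenerate case $n\le 6s$, the roundings in $\lceil\cdot\rceil$ and $\lfloor\cdot\rfloor$, and the restriction to $t\ge 3$ so that $\log\log t$ is monotone and positive — together with choosing the split constant ($3$ above) large enough that the final constant lands below the stated $6$.
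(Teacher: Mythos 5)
Your proof is correct and follows essentially the same route as the paper's: split the sum at $n_1 \approx n-3\sqrt{mn\log n\log\log n}$, bound the terminal block by its cardinality, and bound the main block by an inverse-square tail after absorbing the $\sqrt{m}\sqrt{t}\sqrt{\log\log t}$ term into a constant fraction of $n-t$. If anything, your bookkeeping is tidier — you correctly note $n_1<n_0$ (the paper's proof asserts $n_0<n_1$, which is a typo) and you handle the degenerate regime and the roundings explicitly.
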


\begin{proof}
First, we choose $n_1 = n-3\sqrt{mn\log n\log \log n}.$ It is easy to show that $n_0<n_1.$ Without loss of generality, we assume $n_1>0.$
\begin{align*}
   &  n-n_0+\sum_{t=1}^{n_0}\left(\frac{mt \log t\log\log t}{\left(n-t - \sqrt{m}\sqrt{t}\sqrt{\log \log t}\right)^2}\right)  \wedge 1 \\
   \le &  n-n_1+\sum_{t=1}^{n_1}\frac{mt \log t\log\log t}{\left(n-t - \sqrt{m}\sqrt{t}\sqrt{\log \log t}\right)^2} \\
    \le &  3\sqrt{mn\log n\log \log n} + mn\log n\log\log n\sum_{t=1}^{n_1}\frac{1}{\left(n-t - \sqrt{m}\sqrt{t}\sqrt{\log \log t}\right)^2} \\
    \le  & 3\sqrt{mn\log n\log \log n} + mn\log n\log\log n \frac{3}{\sqrt{mn\log n\log \log n}} 
    \\ 
    \le& 6\sqrt{mn\log n\log \log n}.
\end{align*}
\end{proof}

\subsection{Proof of Theorem \ref{Algo2}}

\label{AlogT}

\begin{proof}

As the proof of Theorem \ref{Algo3} and Theorem \ref{Algo1}, we utilize again the generic upper bound in Theorem \ref{representation}. We first define the stochastic process of the constraint consumption. 
Define $\bm{d}_t = (d_{1,t},...,d_{m,t})^\top$ where $$d_{it} \coloneqq \frac{b_{it}}{n-t}$$ as the remaining resource per period after the end of $t$-th period, for $i=1,...,m$ and $t=1,...,n-1$. The reason that we analyze the process $\bm{d}_{t}$ instead of the original process $\bm{b}_t$ is that a more careful analysis is required in the proof. Specifically, we define $\bm{d}_0=\bm{d}$.

Throughout this proof, we will reserve the vector  $\bm{d}$ to denote the initial average resource and use $\tilde{\bm{d}}$ and $\bm{d}'$ to denote a general vector in $\Omega_d.$ Assumption \ref{assump3} states uniform conditions on $\tilde{\bm{d}}\in \Omega_d.$ In Lemma \ref{d-p} and Lemma \ref{delta_d}, we first build on Assumption \ref{assump3} and establish the existence of $\delta_d>0$ such that  $\mathcal{D}\coloneqq\bigotimes_{i=1}^m [d_i-\delta_d,d_i+\delta_d]$ and for all $\tilde{\bm{d}} \in \mathcal{D}$, the stochastic program $f_{\tilde{\bm{d}}}(\bm{p})$ specified with $\tilde{\bm{d}}$ shares the same binding and non-binding sets ($I_B$ and $I_N$) with the original $\bm{d}$. In comparison, Assumption \ref{assump3} defines a set $\Omega_d$, and here $\mathcal{D}\subset \Omega_d$ is a set of $\bm{d}'$ that not only satisfies Assumption \ref{assump3} but also shares the same binding and non-binding dimensions with the initial $\bm{d}.$ 

The property of sharing binding/non-binding dimensions will create great convenience in this proof. Intuitively, the existence of $\delta_d$ is due to the continuity of $$f_{\bm{d}'}(\bm{p}) \coloneqq \bm{d}'^\top \bm{p} + \E_{(r,\bm{a})\sim\mathcal{P}}\left[(r-\bm{a}^\top\bm{p})^+\right]$$ with respect to $\bm{d}'$ together with Assumption \ref{assump3} (b).  We emphasize that the constant $\delta_d$ is pertaining to the stochastic program and it is not dependent on $n$. The statement and the proof of Lemma \ref{d-p} and Lemma \ref{delta_d} are deferred to the following subsection.

Now, we define a stopping time $\tau$ according to $\mathcal{D}.$ As we will see, the definition of $\mathcal{D}$ guarantees that for $t<\tau$, the binding and non-binding dimensions will not switch and this creates great convenience for the analysis.

Define 
$$\tau = \min \ \left\{n-\ceil{\frac{\bar{a}}{\underline{d}}}\right\} \cup \{ t \ge 0: \bm{d}_t \notin \mathcal{D}\},$$
where $\ceil{\cdot}$ represents the ceiling function.
Intuitively, $\tau$ is the first time that the binding/nonbinding structure of the problem may be changed, i.e., some binding constraints become non-binding or some non-binding constraints become binding. 
To put it in another way, it means for the $i$-th constraint, the average remaining resource level $d_{it}$ deviates from the initial $d_i$ by a constant. This is more strict than the stopping time $\tau_{\bar{a}}$ defined earlier in Theorem \ref{representation}. Comparatively, $\tau_{\bar{a}}$ denotes the time that a certain type of constraint is almost exhausted while $\tau$ here characterizes the time that a certain type of constraint deviates certain amount
from its original level $d_i.$ Since $\underline{d}$ is a lower bound for all $\tilde{\bm{d}}\in\Omega_d,$ the definition ensures that $\tau\le \tau_{\bar{a}}.$

We first work on the \textbf{second part of the generic upper bound} and derive an upper bound for $\E[n-\tau].$ Define 
$$d_{it}'= \begin{cases} d_{it}, & \text{ if } t\le \tau \\
d_{i,t-1}', &  \text{ if } t > \tau
\end{cases}$$
for $i=1,...,m$ and $t=1,...,n$. The process $\{d_{it}'\}_{t=1}^n$ can be interpreted as an auxiliary process associated with $\{d_{it}\}_{t=1}^n$ and it freezes the value of $d_{it}$ from the time $\tau.$ The motivation for defining this auxiliary process is that the original process $d_{it}$ may behave irregularly after time $\tau$ (the binding and non-binding dimensions switch), but the new process $d'_{it}$ remains the same value after time $\tau$. In such a way, we separate the effect of irregularity across different constraints and single out one constraint for analysis. Since our objective here is to analyze $\E[\tau]$ and the two processes take the same value before time $\tau$, so it makes no difference to study the more ``regular'' process $d'_{it}$.

As in the proof of Theorem \ref{Algo1}, we define a stopping time for each constraint,
$$\tau_i = \min\ \left\{n-\ceil{\frac{\bar{a}}{\underline{d}}}\right\} \cup \left\{t\ge1: d_{it}' \notin [d_i-\delta_d, d_i+\delta_d] \right\}.$$
It is easy to see that $\tau = \min_{i} \tau_i,$ so we only need to study $\tau_i$ and $\E[n-\tau_i].$ 

Define $\tilde{\p}^*_{t+1}$ be the optimal solution to the following optimization problem
\begin{align} 
\min & \ f_{\bm{d}_t}(\bm{p}) \coloneqq \bm{d}_t^\top \bm{p} + \E\left[(r-\bm{a}^\top \bm{p})^+\right] \label{t-step} \\
\text{s.t. \ } & \bm{p} \ge \bm{0},\nonumber
\end{align}
where $\Dd_t = (d_{1,t},...,d_{m,t})^\top.$ The problem (\ref{t-step}) is different from the original stochastic program (\ref{asymProblem}) in terms of $\Dd.$ The specification of $\bm{d}_t$ makes the stochastic program (\ref{t-step}) correspond to the SAA problem solved at time $t$ in Algorithm \ref{alg:HDLA}. Assumption \ref{assump3} ensures that the dual convergence result in Theorem \ref{expectation} extends to $\p_{t+1}$ (the dual price used in Algorithm \ref{alg:HDLA}) and $\tilde{\p}^*_{t+1}.$

Now, we analyze the dynamics of $d_{it}'$. With the execution of Algorithm \ref{alg:HDLA}, we have
$$\bm{b}_{t+1} = \bm{b}_{t} - \bm{a}_{t+1}I(r_{t+1}>\bm{a}_{t+1}^\top \p_{t+1})$$
for $t=0,1,...,n-1.$
Normalizing both sides,
\begin{align*}
    d_{i,t+1}' &= d_{it}' I(\tau < t) + \frac{(n-t)d_{it} - a_{i,t+1}I(r_{t+1}>\Aa^\top_{t+1} \p_{t+1})}{n-t-1} I(\tau \ge t) \\
    &= d_{it}' I(\tau < t) + d_{it}I(\tau>t) +  \frac{d_{it} - a_{i,t+1}I(r_{t+1}>\Aa^\top_{t+1} \p_{t+1})}{n-t-1} I(\tau \ge t),
\end{align*}
for $t=0,...,n-1$ and $i=1,...,m.$ 

Since $\tau$ is defined by the deviation from $d_i,$ we can take off $d_i$ on both sides,
\begin{align*}
    d_{i,t+1}'-d_i & = (d_{it}'-d_i) I(\tau < t) + (d_{it}-d_i)I(\tau\ge t) +  \frac{d_{it} - a_{i,t+1}I(r_{t+1}>\Aa^\top_{t+1} \p_{t+1})}{n-t-1} I(\tau \ge t) \\
& = (d_{it}'-d_i) I(\tau < t) + (d_{it}-d_i)I(\tau\ge t) +  \frac{d_{it} - a_{i,t+1}I(r_{t+1}>\Aa^\top_{t+1} \tilde{\p}^*_{t+1})}{n-t-1} I(\tau \ge t) \\ 
& \ \ \ + \frac{a_{i,t+1}\left(I(r_{t+1}>\Aa^\top_{t+1} \tilde{\p}^*_{t+1})- I(r_{t+1}>\Aa^\top_{t+1} {\p}_{t+1})\right)}{n-t-1} I(\tau \ge t).
\end{align*}
Taking square for both sides and take expectation,
\begin{align*}
\E \left[\left(d_{i,t+1}'-d_i\right)^2\right] &= \E \left[(d_{it}'-d_i)^2 I(\tau < t)\right] + \E \left[(d_{it}-d_i)^2 I(\tau \ge t)\right] \\
   & \ \ \ + \E\left[\frac{\left(d_{it} - a_{i,t+1}I(r_{t+1}>\Aa^\top_{t+1} \tilde{\p}^*_{t+1})\right)^2}{(n-t-1)^2} I(\tau \ge t)\right]
   \\ & \ \ \ + \E\left[\frac{\left(a_{i,t+1}I(r_{t+1}>\Aa^\top_{t+1} \tilde{\p}^*_{t+1}) - a_{i,t+1}I(r_{t+1}>\Aa^\top_{t+1} \p_{t+1})\right)^2}{(n-t-1)^2} I(\tau \ge t)\right]\\
   & \ \ \ + 2\E\left[\frac{\left(a_{i,t+1}I(r_{t+1}>\Aa^\top_{t+1} \tilde{\p}^*_{t+1}) - a_{i,t+1}I(r_{t+1}>\Aa^\top_{t+1} \p_{t+1})\right)(d_{it}-d_i)}{n-t-1} I(\tau \ge t)\right]
\end{align*}
\vspace{-0.7cm}
\begin{equation}
    + 2\E\left[\frac{\left(a_{i,t+1}I(r_{t+1}>\Aa^\top_{t+1} \tilde{\p}^*_{t+1}) - a_{i,t+1}I(r_{t+1}>\Aa^\top_{t+1} \p_{t+1})\right)\left(d_{it} - a_{i,t+1}I(r_{t+1}>\Aa^\top_{t+1} \tilde{\p}^*_{t+1})\right)}{(n-t-1)^2} I(\tau \ge t)\right].\label{d-d}
\end{equation}
where the expectation is taken with respect to the online process, i.e., $(r_t, \Aa_t)$'s. Here the cross terms that contain both $I(\tau<t)$ and $I(\tau \ge t)$ will cancel out because these two events are exclusive to each other.

We analyze (\ref{d-d}) separately for binding and non-binding dimensions. We emphasize that binding and non-binding dimensions are defined according to the original (initial) $\bm{d}$ but extend to all $\tilde{\bm{d}}\in\mathcal{D}.$ For \textbf{binding dimensions} ($i \in I_B$), the following cross term disappears
\begin{align*}
 & 2\E\left[\frac{\left(d_{it} - a_{i,t+1}I(r_{t+1}>\Aa^\top_{t+1} \tilde{\p}^*_{t+1})\right)(d_{it}-d_i)}{n-t-1} I(\tau \ge t)\right] \\
  = & 2\E \left [ \E\left[\frac{\left(d_{it} - a_{i,t+1}I(r_{t+1}>\Aa^\top_{t+1} \tilde{\p}^*_{t+1})\right)(d_{it}-d_i)}{n-t-1} I(\tau \ge t)\Bigg | d_{1,t},...,d_{m,t}\right]\right] \\
  = &  2\E \left [ \E\left[\frac{d_{it} - a_{i,t+1}I(r_{t+1}>\Aa^\top_{t+1} \tilde{\p}^*_{t+1})}{n-t-1} \Bigg | d_{1,t},...,d_{m,t}\right](d_{it}-d_i)I(\tau \ge t)\right] \\
  =& 0.
\end{align*}
The last line is because the definition of $\tau$ ensures that the binding and non-binding dimensions will not switch before time $\tau$ and therefore, for the binding dimensions, we always have 
$$d_{it} = \E_{(r,\Aa)\sim\mathcal{P}}[a_iI(r>\Aa^\top \tilde{\p}^*_{t+1})|\bm{d}_t]$$
due to the optimality condition of the stochastic program (\ref{t-step}) and the definition of $\tilde{\p}^*_{t+1}$. Intuitively, it means the stochastic program will output a solution $\tilde{\p}^*_{t+1}$ so that the future average constraint consumption under $\tilde{\p}^*_{t+1}$ is always equal to the current average constraint level $d_{it}$ for binding dimensions. 

Back to (\ref{d-d}), we analyze the right hand side term by term. 
\begin{itemize}
    \item[1)] Combine two terms, $$\E \left[(d_{it}'-d_i)^2 I(\tau < t)\right] + \E \left[(d_{it}-d_i)^2 I(\tau \ge t)\right] =\E \left[(d_{it}'-d_i)^2\right]^2.$$
    \item[2)] Utilize the upper bound on $d_i$'s and $\Aa_j$'s, $$\E\left[\frac{\left(d_{it} - a_{i,t+1}I(r_{t+1}>\Aa^\top_{t+1} \tilde{\p}^*_{t+1})\right)^2}{(n-t-1)^2} I(\tau \ge t)\right] \le \frac{(\bar{d}+\bar{a})^2}{(n-t-1)^2}.$$
    \item[3)] The following cross term characterizes the difference between the constraint consumption under $\bm{p}_{t+1}$ (computed by SAA in Algorithm \ref{alg:HDLA}) and $\tilde{\p}^*_{t+1}$ (the optimal solution of the stochastic program (\ref{t-step})).
    \begin{align*}
    & 2\E\left[\frac{\left(a_{i,t+1}I(r_{t+1}>\Aa^\top_{t+1} \tilde{\p}^*_{t+1}) - a_{i,t+1}I(r_{t+1}>\Aa^\top_{t+1} \p_{t+1})\right)(d_{it}-d_i)}{n-t-1} I(\tau \ge t)\right] \\ 
    \le & \frac{2}{n-t-1} \bar{a}^2\mu\|\tilde{\p}^*_{t+1}-\p_{t+1}\|_2 \sqrt{\E (d_{it}'-d_i)^2} 
    \\
    \le & \frac{2\bar{a}^2\mu C\sqrt{m}\sqrt{\log \log t}}{(n-t-1)\sqrt{t}} \cdot \sqrt{\E \left[(d_{it}'-d_i)^2\right]} \text{ \ (Applying the dual convergence result)}
\end{align*}
where the second line applies the Cauchy–Schwartz inequality and the constant $C$ is the coefficient of the dual convergence.
\item[4)] The following term also characterizes the difference in constraint consumption as the last one, but since the denominator is larger than the last one, we directly apply the trivial upper bound for the numerator. $$\E\left[\frac{\left(a_{i,t+1}I(r_{t+1}>\Aa^\top_{t+1} \tilde{\p}^*_{t+1}) - a_{i,t+1}I(r_{t+1}>\Aa^\top_{t+1} \p_{t+1})\right)^2}{(n-t-1)^2} I(\tau \ge t)\right] \le \frac{\bar{a}^2}{(n-t-1)^2}.$$
\item[5)] With the same reason as before, the denominator is large. So, we can apply the trivial upper bound for the numerator as follows. \begin{align*}
    & 2\E\left[\frac{\left(a_{i,t+1}I(r_{t+1}>\Aa^\top_{t+1} \tilde{\p}^*_{t+1}) - a_{i,t+1}I(r_{t+1}>\Aa^\top_{t+1} \p_{t+1})\right)\left(d_{it} - a_{i,t+1}I(r_{t+1}>\Aa^\top_{t+1} \tilde{\p}^*_{t+1})\right)}{(n-t-1)^2} I(\tau \ge t)\right] \\ & \le \frac{2\bar{a}(\bar{a}+\bar{d})}{(n-t-1)^2}.
\end{align*}
\end{itemize}

Combining the above five components (upper bounds) into (\ref{d-d}), we obtain
$$\E \left[\left(d_{i,t+1}'-d_i\right)^2\right] \le \E \left[\left(d_{it}'-d_i\right)^2\right] + \frac{4\bar{a}^2+\bar{d}^2+4\bar{a}\bar{d}}{(n-t-1)^2} + \frac{2\bar{a}^2\mu C\sqrt{m}\sqrt{\log \log t}}{(n-t-1)\sqrt{t}} \cdot \sqrt{\E \left[\left(d_{it}'-d_i\right)^2\right]}$$
for $i=1,...,m$ and $t=0,1,...,n-1.$ Also, the above inequality holds for all $n>0$ and distribution $\mathcal{P}\in \Xi.$
From Lemma \ref{sum_Z} in the following subsection, there exists a constant $\alpha$ such that
\begin{equation}
    \sum_{t=1}^n \E \left[\left(d_{it}'-d_i\right)^2\right] \le \alpha m\log n \log \log n \label{key-d}
\end{equation}
holds for binding dimensions $i\in I_B$, $n>0$ and distribution $\mathcal{P}\in \Xi.$ 
Then, we can analyze the stopping time associated with the $i$-th constraint (binding),
\begin{align}
    \E[n-\tau_i] \le &  \sum_{t=1}^n \prob(\tau_i \le t) \nonumber \\
    \le &  1+\frac{\bar{a}}{\underline{d}}+\sum_{t=1}^n \prob\left(|d_{it}' -d_i|\le \delta_d\right) \nonumber \\
    \le &  1+\frac{\bar{a}}{\underline{d}}+\sum_{t=1}^n \frac{\E\left[(d_{it}' -d_i)^2\right]}{\delta_d^2}  \text{ \ (Applying the Chebyshev's Inequality)}\nonumber  \\
    \le &1+\frac{\bar{a}}{\underline{d}}+ \frac{\alpha}{\delta_d^2}  m\log n \log \log n. \label{bindingStopTime}
\end{align}
Here the second line (with the extra term $\frac{\bar{a}}{\underline{d}}$) comes from the definition of $\tau_i$'s, and the third line applies Chebyshev's Inequality for the probabilities in the second line.  

Now, we analyze the process $d_{it}$ and the stopping time $\tau_i$ for \textbf{the non-binding dimensions}, i.e. $i \in I_N$. In fact, the non-binding dimensions are easier to analyze. This is by the definition of $\delta_d$ and the binding/non-binding dimensions, the average resource consumption (under the optimal solution to (\ref{t-step}) for any $\tilde{\bm{d}}\in \mathcal{D}$) will not exceed $d_i-\delta_d$ for non-binding dimensions. Thus there is a safety region of $n\delta_d$ that avoids the non-binding constraints to be exhausted too early. Also, Theorem \ref{representation} tells that there is no need to worry about the left-overs of non-binding dimensions at the end of the horizon. Specifically, for a non-binding dimension,  we apply the same argument as (\ref{MartingaleRepre}) and (\ref{part2sqrt}) in the proof of Theorem \ref{Algo1}. For the expectation of the resource consumption of non-binding constraints, we adopt a similar derivation of (\ref{expAlg1}),
\begin{align}
\E\left[ \sum_{j=1}^t a_{ij}I(r_j>\Aa_j^\top\p_j)\right]  &=\sum_{j=1}^t \E\left[a_{ij}I(r_j>{\Aa}_j^\top\p_j) \right]\nonumber \\
& \le   \sum_{j=1}^t \left(\E\left[a_{ij}I(r_j>{\Aa}_j^\top\p_j)\right]-\E \left[a_{ij}I(r_j>{\Aa}_j^\top\tilde{\p}^*_{j})\right]\right) + t(d_i-\delta_d) \nonumber\\
& \le \bar{a}^2\mu \sum_{j=1}^t  \E\|\p_j-\tilde{\p}^*_{j}\|_2 + t(d_i-\delta_d) \nonumber\\
& \le \bar{a}^2\mu \sum_{j=1}^t  \frac{C\sqrt{m}}{\sqrt{j}} \sqrt{\log\log j} + t(d_i-\delta_d) \nonumber \\
& \le  5C\bar{a}^2\mu\sqrt{m}\sqrt{t} \sqrt{\log \log t} + t(d_i-\delta_d). \label{expAlg2}
\end{align}
For the variance of the resource consumption of non-binding constraints, with the same analysis as in (\ref{varAlg1}),
\begin{equation}
    \text{Var} \left[\sum_{j=1}^t a_{ij}I(r_j>\Aa_j^\top\p_j)\right] \le \bar{a}^2 t + C^2 \bar{a}^4\mu^2 mt\log\log t.
    \label{varAlg2}
\end{equation}
Putting together (\ref{expAlg2}) and (\ref{varAlg2}),
\begin{align}
    \E[n-\tau_i] \le  &  \sum_{t=1}^n \prob(\tau_i \le t) \nonumber\\
    = &\sum_{t=1}^n  \mathbb{P}\left(\sum_{j=1}^{t'} a_{ij}I(r_j>\Aa_j^\top\p_j)\ge t(d_i-\delta_d) + n\delta_d\text{ for some } 1\le t'\le t\right) \nonumber\\
    \le & \sum_{t=1}^n\left(\frac{\bar{a}^2t+C^2\bar{a}^4\mu^2 mt\log\log t}{\left(n\delta_d - C\bar{a}^2\mu\sqrt{m}\sqrt{t}\sqrt{\log \log t}\right)^2}\right) \wedge 1 \nonumber \\
    \le & \frac{\alpha'}{\delta_d^2}m \log n \log\log n. \label{nonbindingStopTime}
\end{align}
There exists a constant $\alpha'$ dependent on $\bar{a}, \mu$ and $C$ such that the above inequality holds for all non-binding constraints $i\in I_N$. Here the last line comes from a similar derivation as Lemma \ref{lemmaTheoremAlgo2}. Technically, the bound here is tighter than the previous bound (\ref{part2sqrt}) because we utilize the knowledge that the non-binding constraint will not be consumed more than $d_i-\delta_d$ under $\tilde{\bm{p}}_t^*$ and thus it creates a gap of $n\delta_d$ in the second line above. This extra term (on the order of $n$) reduces the bound from $O(\sqrt{n}\log \log n)$ in (\ref{part2sqrt}) to $O(\log n \log \log n)$ here in (\ref{nonbindingStopTime}).

We complete the analysis of the \textbf{second part in the generic regret upper bound} by combining (\ref{bindingStopTime}) and (\ref{nonbindingStopTime}),
\begin{align}
 \E[n-\tau] = & \E[\max_{i} \{n-\tau_i\}]  \nonumber \\ 
 \le &  \sum_{i=1}^m \E[n-\tau_i]  \nonumber \\ 
 \le & \frac{\max\{\alpha,\alpha'\}}{\delta_d^2}  m^2\log n \log \log n + m\left(\frac{\bar{a}}{\underline{d}}+1\right). 
\label{part2Log}
\end{align}

Now, we analyze the \textbf{first part in the generic regret upper bound}. First, note that
\begin{align}
    & \E \left[\sum_{t=1}^\tau \|\p_t-\p^*\|_2^2\right] \nonumber \\
    \le & \E \left[\sum_{t=1}^\tau \|\p_t - \tilde{\p}_t^*\|_2^2 + \sum_{t=1}^\tau \|\tilde{\p}_t^*-\p^*\|_2^2\right]. \label{pSum0}
\end{align}
For the first summation in (\ref{pSum0}), we can apply the dual convergence result in Theorem \ref{expectation} which is always valid when $t\le \tau,$
\begin{align}
    \E \left[\sum_{t=1}^\tau \|\p_t - \tilde{\p}_t^*\|_2^2\right] & \le \E \left[\sum_{t=1}^n \|\p_t - \tilde{\p}_t^*\|_2^2\right] \nonumber \\
    & \le C\sum_{t=1}^n\frac{m}{t}\log t \log t \nonumber \\
    & \le  Cm\log n\log\log n \label{pSum1}
\end{align} 
where the constant $C$ is the coefficient of the dual convergence.

For the second summation in (\ref{pSum0}), we apply Lemma \ref{d-p} and our previous analysis of the process $\bm{d}_t,$
\begin{align}
    \E \left[\sum_{t=1}^\tau \|\tilde{\p}_t^*-\p^*\|_2^2\right]& \le \frac{1}{\lambda^2\lambda_{\min}^2}\E \left[ \sum_{i\in I_B}\sum_{t=1}^\tau (d_{it}-d_i)^2\right] \nonumber \\
    & \le \frac{1}{\lambda^2\lambda_{\min}^2}\alpha m^2\log n \log \log n
    \label{pSum2}
\end{align}
where the first line comes from Lemma \ref{d-p} and the second line comes from plugging in the result (\ref{key-d}) -- noting that $d_{it}=d_{it}'$ when $t\le \tau$.
Plugging (\ref{pSum1}) and (\ref{pSum2}) into (\ref{pSum0}), we obtain
\begin{equation}
    \E \left[\sum_{t=1}^\tau \|\p_t-\p^*\|_2^2\right] \le \left(Cm+ \frac{1}{\lambda^2\lambda_{\min}^2}\alpha m^2\right) \log n \log \log n
    \label{part1Log}
\end{equation}
for all $n$ and $\mathcal{P} \in \Xi.$ Thus we complete the analysis of the second part of the generic upper bound.

Now, we analyze the \textbf{third part of the generic upper bound}. From the definition of $\tau_i,$ we know
$$b_{in} \le (d_i+\delta_d)(n-\tau_i)$$
for $i\in I_B.$ Consequently,
$$\E\left[b_{in}\right] \le \E\left[(d_i+\delta_d) (n-\tau_i)\right]\le \frac{\bar{d}\alpha}{\delta_d^2}  m\log n \log \log n+\bar{d}\left(\frac{\bar{a}}{\underline{d}}+1\right),$$
\begin{equation}
    \E\left[\sum_{i\in I_B}b_{in}\right] \le  \frac{\bar{d}\alpha}{\delta_d^2}  m^2\log n \log \log n +m\bar{d}\left(\frac{\bar{a}}{\underline{d}}+1\right)
    \label{part3Log}
\end{equation}
for all $n>0$ and $\mathcal{P} \in \Xi.$

Combining (\ref{part1Log}), (\ref{part2Log}), and (\ref{part3Log}) with Corollary \ref{coro1}, we complete the proof.

\end{proof}

\subsubsection{Lemmas and Inequalities in the Proof of Theorem \ref{Algo2}}

\begin{lemma}
\label{d-p}
Under Assumption \ref{assump1} and \ref{assump3}, for any $\hat{\bm{d}}, \tilde{\bm{d}} \in \Omega_d$, let 
\begin{align*}
    \hat{\bm{p}}^* &\in \argmin_{\hat{\bm{p}}\ge \bm{0}}  \hat{f}(\bm{p}) \coloneqq \hat{\bm{d}}^\top \hat{\bm{p}} + \E\left[(r-\bm{a}^\top \hat{\bm{p}})^+\right], \\
    \tilde{\bm{p}}^* &\in \argmin_{\tilde{\bm{p}}\ge \bm{0}}  \tilde{f}(\tilde{\bm{p}}) \coloneqq  \tilde{\bm{d}}^\top \tilde{\bm{p}} + \E\left[(r-\bm{a}^\top \tilde{\bm{p}})^+\right]
\end{align*}
be the optimal solution to the according optimization problem.
Then,
$$\|\hat{\bm{p}}^* -\tilde{\bm{p}}^*\|_2^2 \le \frac{1}{\lambda^2\lambda_{\min}^2} \|\hat{\bm{d}}-\tilde{\bm{d}}\|_2^2.$$
In addition, if $\hat{f}$ and $\tilde{f}$ define the same binding and non-binding dimensions, $I_B$ and $I_N$ respectively, and the binding and non-binding dimensions are strictly complimentary to each other (as Assumption \ref{assump3} (c)), then
$$\|\hat{\bm{p}}^* -\tilde{\bm{p}}^*\|_2^2 \le \frac{1}{\lambda^2\lambda_{\min}^2}\sum_{i \in I_B} (\hat{d}_i-\tilde{d}_i)^2.$$
\end{lemma}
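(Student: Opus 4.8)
The plan is to treat the statement as a first-order sensitivity (stability) estimate for the minimizer of the stochastic program as its linear coefficient vector varies, exploiting two structural facts. First, $f$ and $\tilde f$ differ only by the linear function $\bm{p}\mapsto(\bm{d}-\tilde{\bm{d}})^\top\bm{p}$, so $\nabla f(\bm{p})-\nabla\tilde f(\bm{p})=\bm{d}-\tilde{\bm{d}}$ for every $\bm{p}$; in particular the $\bm{d}$-dependence drops out of any gradient \emph{difference}. Second, the curvature of both objectives comes entirely from the common term $\E[(r-\bm{a}^\top\bm{p})^+]$, which is $\bm{d}$-independent; since Assumption~\ref{assump2}~(b) makes the conditional tail $t\mapsto\prob(r>t\mid\bm{a})$ Lipschitz (hence continuous) on the range of $\bm{a}^\top\bm{p}$ for $\bm{p}\in\Omega_p$, the map $\bm{p}\mapsto\E[(r-\bm{a}^\top\bm{p})^+]$ is differentiable and $\nabla f(\bm{p})=\bm{d}-\E[\bm{a}\,I(r>\bm{a}^\top\bm{p})]=\E[\phi(\bm{p},(r,\bm{a}))]$ on $\Omega_p$ (and similarly for $\tilde f$).

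The first step is to write the variational inequalities at the two constrained minimizers over $\{\bm{p}\ge\bm{0}\}$:
\[
\nabla f(\bm{p}^*)^\top(\tilde{\bm{p}}^*-\bm{p}^*)\ge 0,\qquad \nabla\tilde f(\tilde{\bm{p}}^*)^\top(\bm{p}^*-\tilde{\bm{p}}^*)\ge 0 .
\]
Substituting $\nabla\tilde f(\tilde{\bm{p}}^*)=\nabla f(\tilde{\bm{p}}^*)-(\bm{d}-\tilde{\bm{d}})$ into the second inequality and adding the two gives
\[
\bigl(\nabla f(\bm{p}^*)-\nabla f(\tilde{\bm{p}}^*)\bigr)^\top(\bm{p}^*-\tilde{\bm{p}}^*)\le-(\bm{d}-\tilde{\bm{d}})^\top(\bm{p}^*-\tilde{\bm{p}}^*).
\]
The second step is to lower-bound the left-hand side by a quadratic in $\bm{p}^*-\tilde{\bm{p}}^*$. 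Because the $\bm{d}$-terms cancel, $\bigl(\nabla f(\bm{p}^*)-\nabla f(\tilde{\bm{p}}^*)\bigr)^\top(\bm{p}^*-\tilde{\bm{p}}^*)=-\E\bigl[\bigl(\bm{a}^\top\bm{p}^*-\bm{a}^\top\tilde{\bm{p}}^*\bigr)\bigl(I(r>\bm{a}^\top\bm{p}^*)-I(r>\bm{a}^\top\tilde{\bm{p}}^*)\bigr)\bigr]$; conditioning on $\bm{a}$, splitting on the sign of $\bm{a}^\top\bm{p}^*-\bm{a}^\top\tilde{\bm{p}}^*$, and applying the \emph{lower} Lipschitz bound in Assumption~\ref{assump2}~(b) — which is uniform over $\bm{d}\in\Omega_d$, so I may take $\tilde{\bm{p}}^*$ (the minimizer for $\tilde{\bm{d}}$) as the anchor and $\bm{p}^*\in\Omega_p$, the latter valid by Proposition~\ref{basicProp}~(c) — bounds this below by $\lambda\,\E[(\bm{a}^\top(\bm{p}^*-\tilde{\bm{p}}^*))^2]=\lambda\,(\bm{p}^*-\tilde{\bm{p}}^*)^\top\bm{M}(\bm{p}^*-\tilde{\bm{p}}^*)\ge\lambda\lambda_{\min}\|\bm{p}^*-\tilde{\bm{p}}^*\|_2^2$, using Assumption~\ref{assump2}~(a).

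Combining the two steps with Cauchy--Schwarz, $\lambda\lambda_{\min}\|\bm{p}^*-\tilde{\bm{p}}^*\|_2^2\le\|\bm{d}-\tilde{\bm{d}}\|_2\,\|\bm{p}^*-\tilde{\bm{p}}^*\|_2$, and dividing and squaring yields the first inequality. For the refined bound, observe that when $f$ and $\tilde f$ share the binding/non-binding partition with strict complementarity, every $i\in I_N$ has $p_i^*=\tilde p_i^*=0$, so the components of $\bm{p}^*-\tilde{\bm{p}}^*$ outside $I_B$ vanish; hence $(\bm{d}-\tilde{\bm{d}})^\top(\bm{p}^*-\tilde{\bm{p}}^*)=\sum_{i\in I_B}(d_i-\tilde d_i)(p_i^*-\tilde p_i^*)$, and applying Cauchy--Schwarz only over $I_B$ replaces $\|\bm{d}-\tilde{\bm{d}}\|_2$ by $\bigl(\sum_{i\in I_B}(d_i-\tilde d_i)^2\bigr)^{1/2}$, giving the second inequality by the identical division-and-squaring. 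The only point needing care is the justification that the "natural" subgradient $\E[\phi(\bm{p},(r,\bm{a}))]$ is in fact the gradient and that the variational inequalities hold with it — this is exactly where the continuity of the conditional law of $r$ implied by Assumption~\ref{assump2}~(b) enters; if one prefers to avoid differentiability one argues directly with subgradients and a routine limiting argument, and this is the only step I expect to require more than bookkeeping.
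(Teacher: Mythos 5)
Your proof is correct, and it is the first-order (gradient/monotone-operator) twin of the paper's zeroth-order (function-value) argument: both proceed by writing one optimality statement at each of the two minimizers, adding them so that the common nonlinear term $\E[(r-\bm{a}^\top\bm{p})^+]$ cancels, and landing on the identical pivotal inequality $(\tilde{\bm{d}}-\bm{d})^\top(\bm{p}^*-\tilde{\bm{p}}^*)\ge \lambda\lambda_{\min}\|\bm{p}^*-\tilde{\bm{p}}^*\|_2^2$, after which Cauchy--Schwarz (over all coordinates, or over $I_B$ only when both optima vanish on $I_N$) finishes things identically. The paper instead sums the two growth inequalities $f(\tilde{\bm{p}}^*)-f(\bm{p}^*)\ge\frac{\lambda\lambda_{\min}}{2}\|\tilde{\bm{p}}^*-\bm{p}^*\|_2^2$ and $\tilde f(\bm{p}^*)-\tilde f(\tilde{\bm{p}}^*)\ge\frac{\lambda\lambda_{\min}}{2}\|\tilde{\bm{p}}^*-\bm{p}^*\|_2^2$, which it gets directly from Proposition \ref{strConvex} (dropping the nonnegative first-order term); this sidesteps any discussion of differentiability, since Proposition \ref{strConvex} is already proved via the integral identity of Lemma \ref{TaylorLemma0}. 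What your route buys is a self-contained derivation of $\lambda\lambda_{\min}$-strong monotonicity of $\nabla f$ directly from Assumptions \ref{assump2}(a)--(b), at the cost of having to justify that $\E[\phi(\cdot,(r,\bm{a}))]$ is a legitimate gradient at both $\bm{p}^*$ and $\tilde{\bm{p}}^*$ --- which you correctly flag, and which does hold because each point is the anchor of Assumption \ref{assump2}(b) for its own $\bm{d}$, forcing $\prob(r=\bm{a}^\top\bm{p}^*\mid\bm{a})=\prob(r=\bm{a}^\top\tilde{\bm{p}}^*\mid\bm{a})=0$. Your explicit justification that the lower Lipschitz bound may be anchored at $\tilde{\bm{p}}^*$ (since the assumption is uniform over $\bm{d}\in\Omega_d$ and $\bm{p}^*\in\Omega_p$ by Proposition \ref{basicProp}(c)) is exactly the same reading the paper uses when it asserts that Assumption \ref{assump2}(b) and Proposition \ref{strConvex} "hold for all $\bm{d}\in\Omega_d$." No gaps.
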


\begin{proof}
From Proposition \ref{strConvex}, we know 
$$\hat{f}(\tilde{\p}^*)- \hat{f}(\hat{\p}^*)
\ge\frac{\lambda\lambda_{\min}}{2} \|\tilde{\p}^*-\hat{\p}^*\|_2^2$$
$$\tilde{f}(\hat{\p}^*)- \tilde{f}(\tilde{\p}^*)
\ge\frac{\lambda\lambda_{\min}}{2} \|\tilde{\p}^*-\hat{\p}^*\|_2^2.$$
This is because the first-order term in Proposition \ref{strConvex} is non-negative, and both Assumption \ref{assump3} (b) and Proposition \ref{strConvex} hold for all $\hat{\bm{d}}, \tilde{\bm{d}}\in \Omega_d$. Adding up the two inequalities, 
we have
$$\hat{\bm{d}}^\top \tilde{\bm{p}}^* -\hat{\bm{d}}^\top \hat{\bm{p}}^* +\tilde{\bm{d}}^\top \hat{\bm{p}}^* - \tilde{\bm{d}}^\top  \tilde{\bm{p}}^*\ge\lambda\lambda_{\min} \|\tilde{\p}^*-\hat{\p}^*\|_2^2$$
where the expectation terms in $f$ and $\tilde{f}$ are cancelled out.
Then,
\begin{align*}
  \lambda\lambda_{\min} \|\tilde{\p}^*-\hat{\p}^*\|_2^2 & \le (\tilde{\bm{d}}-\hat{\bm{d}})^\top(\hat{\bm{p}}^*-\tilde{\bm{p}}^*) \\
  & \le  \|\hat{\bm{d}}-\tilde{\bm{d}}\|_2  \cdot \|\tilde{\p}^*-\hat{\p}^*\|_2.
\end{align*}
In addition,
\begin{align*}
  \lambda\lambda_{\min} \|\tilde{\p}^*-\hat{\p}^*\|_2^2 & \le (\tilde{\bm{d}}-\hat{\bm{d}})^\top(\hat{\bm{p}}^*-\tilde{\bm{p}}^*) \\
  & \le \sqrt{\sum_{i \in I_B} (\hat{d}_i-\tilde{d}_i)^2}  \cdot \|\tilde{\p}^*-\hat{\p}^*\|_2
\end{align*}
where only the binding dimensions remain in the last line because both $\tilde{\bm{p}^*}$ and $\hat{\bm{p}}^*$ are zero-valued on the non-binding dimensions.

\end{proof}

\begin{lemma}
Recall that the binding and non-binding dimensions specified by the stochastic program (\ref{asymProblem}) with parameter $\bm{d}=(d_1,...,d_m)^\top$ as $I_B$ and $I_N$. Under Assumption \ref{assump1} and \ref{assump3}, there exists a constant $\delta_d>0$ such that for all $\tilde{\bm{d}} \in \mathcal{D} = \bigotimes_{i=1}^m [d_i-\delta_d, d_i+\delta_d] \subset \Omega_d$, the stochastic program (\ref{asymProblem}) specified with the parameter $\tilde{\bm{d}}$ share the same binding and non-binding dimensions. 
\label{delta_d}
\end{lemma}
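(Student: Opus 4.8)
The plan is to derive the existence of $\delta_d$ from the strict complementarity assumed in Assumption \ref{assump2} (c), combined with the Lipschitz continuity of the optimal solution map $\tilde{\bm d}\mapsto \tilde{\bm p}^*$ established in Lemma \ref{d-p}. First I would recall that for the base parameter $\bm d=(d_1,\dots,d_m)$ the stochastic program (\ref{asymProblem}) has optimal solution $\bm p^*$, and by Assumption \ref{assump2} (c) the index sets $I_B$ and $I_N$ partition $\{1,\dots,m\}$ with the two strict inequalities: for $i\in I_N$ we have $p_i^*=0$ and $d_i-\E[a_i I(r>\bm a^\top\bm p^*)]>0$, while for $i\in I_B$ we have $p_i^*>0$ and $d_i-\E[a_i I(r>\bm a^\top\bm p^*)]=0$. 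Define the strict-complementarity gap
\begin{equation*}
  \gamma \;\coloneqq\; \min\Big\{\min_{i\in I_B} p_i^*,\ \min_{i\in I_N}\big(d_i-\E[a_i I(r>\bm a^\top\bm p^*)]\big)\Big\}\;>\;0 .
\end{equation*}
The goal is to choose $\delta_d$ small enough that perturbing $\bm d$ within $\mathcal D=\bigotimes_{i=1}^m[d_i-\delta_d,d_i+\delta_d]$ moves $\tilde{\bm p}^*$ (and the associated constraint-consumption vector) by less than $\gamma/2$ in each coordinate, so that neither strict inequality can flip sign.

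The key steps, in order, would be: (1) By Lemma \ref{d-p}, $\|\tilde{\bm p}^*-\bm p^*\|_2\le \frac{1}{\lambda\lambda_{\min}}\|\tilde{\bm d}-\bm d\|_2\le \frac{\sqrt m}{\lambda\lambda_{\min}}\,\delta_d$, so the perturbed solution is uniformly close to $\bm p^*$ on $\mathcal D$. (2) For the non-binding coordinates $i\in I_N$, show that $\tilde p_i^*=0$ still holds: from the KKT conditions of (\ref{asymProblem}) with parameter $\tilde{\bm d}$, a coordinate is zero whenever its subgradient component $\tilde d_i-\E[a_i I(r>\bm a^\top\tilde{\bm p}^*)]$ is strictly positive; using $|\tilde d_i - d_i|\le\delta_d$ and the Lipschitz bound $|\E[a_i I(r>\bm a^\top\tilde{\bm p}^*)]-\E[a_i I(r>\bm a^\top\bm p^*)]|\le \bar a^2\mu\|\tilde{\bm p}^*-\bm p^*\|_2$ (the estimate $(\ref{g0P})$-type bound from Assumption \ref{assump2} (b)), this quantity stays $\ge \gamma - \delta_d - \bar a^2\mu\frac{\sqrt m}{\lambda\lambda_{\min}}\delta_d > 0$ for $\delta_d$ small. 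Hence $i$ remains non-binding. (3) For the binding coordinates $i\in I_B$, show $\tilde p_i^*>0$: since $\|\tilde{\bm p}^*-\bm p^*\|_\infty\le \|\tilde{\bm p}^*-\bm p^*\|_2\le \frac{\sqrt m}{\lambda\lambda_{\min}}\delta_d$, we get $\tilde p_i^*\ge p_i^*-\frac{\sqrt m}{\lambda\lambda_{\min}}\delta_d \ge \gamma-\frac{\sqrt m}{\lambda\lambda_{\min}}\delta_d>0$, so $i$ cannot become non-binding; combined with (2) and the fact that $I_B\cup I_N=\{1,\dots,m\}$ for the perturbed program (which itself follows once we know every index is either strictly positive in $\tilde{\bm p}^*$ or strictly positive in its subgradient slack), the two index sets coincide with those of $\bm d$. (4) Finally, set $\delta_d$ explicitly as any positive number smaller than, say, $\frac{\gamma}{2}\cdot\min\{1,\ \frac{\lambda\lambda_{\min}}{\sqrt m(1+\bar a^2\mu)}\}$, also requiring $\delta_d\le \min_i d_i - \underline d$ and $\delta_d \le \bar d - \max_i d_i$ so that $\mathcal D\subset\Omega_d$; strict complementarity for $\tilde{\bm d}$ is then immediate since both the primal variable and the subgradient slack are bounded away from zero on each index set with a margin of at least $\gamma/2$.

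The main obstacle I anticipate is step (2)/(3)'s bookkeeping of the KKT/complementarity conditions under perturbation — specifically making rigorous that "no new binding constraint appears and no binding constraint disappears" uses both the continuity of $\tilde{\bm p}^*$ in $\tilde{\bm d}$ (Lemma \ref{d-p}) \emph{and} the continuity of the constraint-consumption functional $\bm p\mapsto \E[\bm a I(r>\bm a^\top\bm p)]$, and the latter is exactly where Assumption \ref{assump2} (b) is needed (the indicator is discontinuous, so one cannot use naive continuity; the bound on the conditional density of $r\mid\bm a$ is what makes $\E[a_i I(r>\bm a^\top\bm p)]$ Lipschitz in $\bm p$). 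Everything else is a routine $\varepsilon$-margin argument once these two Lipschitz estimates are in hand.
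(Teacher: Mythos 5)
Your proposal is correct and follows essentially the same route as the paper's proof: both rest on the Lipschitz continuity of $\tilde{\bm{d}}\mapsto\tilde{\bm{p}}^*$ from Lemma \ref{d-p}, the $\bar{a}^2\mu$-Lipschitz bound on $\bm{p}\mapsto\E[a_iI(r>\bm{a}^\top\bm{p})]$ from Assumption \ref{assump2} (b), and the strict-complementarity gap $\gamma$ from Assumption \ref{assump2} (c). The only difference is presentational --- you argue directly with an $\varepsilon$-margin while the paper argues by contrapositive --- and your version is if anything slightly more careful in tracking the $\delta_d$ shift of $\tilde{d}_i$ itself and in explicitly verifying strict complementarity for the perturbed program.
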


\begin{proof}
From Lemma \ref{d-p}, we know that for any $\tilde{\bm{d}}\in \Omega_d$ and the corresponding optimal solution $\tilde{\bm{p}}^*$, 
we have
$$\|\bm{p}^* -\tilde{\bm{p}}^*\|_2^2 \le \frac{1}{\lambda^2\lambda_{\min}^2} \|\bm{d}-\tilde{\bm{d}}\|_2^2.$$
If the conclusion in the lemma does not hold, there are two cases: 
\begin{itemize}
    \item[(i)] There is an index $i\in I_B$ but $i$ is a non-binding constraint for the stochastic program specified by $\tilde{\bm{d}}$, i.e., $p_i^*>0$ and $\tilde{p}_i^*=0$. 
    \item[(ii)] There is an index $i \in I_N$ but $i$ is a binding constraint for the stochastic program specified by $\tilde{\bm{d}}$, i.e., 
    \begin{equation}
    \label{d_Condition}
    \begin{array}{cc}
d_i& >\E[a_{ij}I(r_j>\bm{a}_j^\top \p^*)], \\
\tilde{d}_i &  =\E[a_{ij}I(r_j>\bm{a}_j^\top \tilde{\p}^*)].
    \end{array}
\end{equation}
    
\end{itemize}

For case (i), denote $\underline{p} = \min\{p_i^*: i\in I_B\}$. Then, we must have
$$  \|\bm{d}-\tilde{\bm{d}}\|_2^2\ge \lambda^2\lambda_{\min}^2\|\bm{p}^* -\tilde{\bm{p}}^*\|_2^2 \ge \underline{p}^2\lambda^2\lambda_{\min}^2.$$

For case (ii), from the inequality (\ref{g0P}) in the proof of Theorem \ref{Algo1}, we know that
\begin{equation}
  |\E[a_{ij}I(r_j>\bm{a}_j^\top \tilde{\p}^*)]-\E[a_{ij}I(r_j>\bm{a}_j^\top {\p}^*)]|\le \bar{a}^2 \mu \|\tilde{\p}^* -\p^*\|_2. \label{continuity_p}  
\end{equation}
Recall that from Assumption \ref{assump3} (c), 
$$d_i > \E[a_{ij}I(r_j>\bm{a}_j^\top {\p}^*)]$$
for $i\in I_N.$ Denote $$\gamma =\min_{i\in I_N}\left\{d_i- \E[a_{ij}I(r_j>\bm{a}_j^\top {\p}^*)]\right\}.$$
From Assumption \ref{assump3} (c), we know $\gamma>0.$
From \eqref{d_Condition} and \eqref{continuity_p}, we know that if $|d_i-\tilde{d}_i|\le \frac{\gamma}{2}$, then we must have 
\begin{align*}
  \|\tilde{\p}^* -\p^*\|_2&\ge \frac{1}{\bar{a}^2 \mu }  |\E[a_{ij}I(r_j>\bm{a}_j^\top \tilde{\p}^*)]-\E[a_{ij}I(r_j>\bm{a}_j^\top {\p}^*)]| \\
  & = \frac{1}{\bar{a}^2 \mu }  |\tilde{d}_i-\E[a_{ij}I(r_j>\bm{a}_j^\top {\p}^*)]| \ \ \text{ (from the condition \eqref{d_Condition} in case (ii))}\\
  & \ge \frac{1}{\bar{a}^2 \mu } \left( |d_i-\E[a_{ij}I(r_j>\bm{a}_j^\top {\p}^*)]| - |d_i-\tilde{d}_i| \right) \\
  & \ge \frac{\gamma}{2\bar{a}^2\mu},
\end{align*}
and consequently,
$$  \|\bm{d}-\tilde{\bm{d}}\|_2^2\ge \lambda^2\lambda_{\min}^2\|\bm{p}^* -\tilde{\bm{p}}^*\|_2^2 \ge \frac{\gamma^2\lambda^2\lambda_{\min}^2}{4\bar{a}^4\mu^2}.$$
Combining the two aspects, we know that when 
$$  \|\bm{d}-\tilde{\bm{d}}\|_2^2 < \min\left\{{\underline{p}^2}{\lambda^2\lambda_{\min}^2}, \frac{\gamma^2\lambda^2\lambda_{\min}^2}{4\bar{a}^4\mu^2}, \frac{\gamma^2}{4}\right\}.$$
Then we have the two stochastic programs specified by $\bm{d}$ and $\tilde{\bm{d}}$ must share the same binding and non-binding dimensions. And the existence of $\delta_d$ is implied by the last inequality.
\end{proof}

\begin{lemma}
\label{sum_Z}
If a sequence $\{z_{t}\}_{t=0}^n$ satisfies
$$
z_{t+1} 
=
z_{t} 
+
\frac{\sqrt{m\log \log t}\sqrt{z_{t}}}{(n-t-1)\sqrt{t}}+\frac{1}{(n-t-1)^2},
$$
and $z_0 = 0.$
Then we have
$$\sum_{t=1}^n z_{t} \le 32 \log n \log \log n$$
holds for all $n\ge 3.$
\end{lemma}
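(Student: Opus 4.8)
The plan is to sidestep the lossy telescoping $\sqrt{z_{t+1}}\le \sqrt{z_t}+\tfrac12 a_t+\sqrt{b_t}$ --- which would only give $z_t\le\bigl(\sum_s\sqrt{b_s}\bigr)^2$ and hence the useless estimate $\sum_t z_t=O(n)$ --- and instead to compare $z_t$ against the ``profile'' generated by the additive increment alone. Writing $a_t\coloneqq \frac{\sqrt{m\log\log t}}{(n-t-1)\sqrt t}$ and $b_t\coloneqq \frac{1}{(n-t-1)^2}$, so that $z_{t+1}=z_t+a_t\sqrt{z_t}+b_t$ (with $z_0=0$, hence $z_1=b_0$, and $a_t$ read as $0$ at the one or two indices where $\log\log t$ is not positive --- this perturbs $z_t$ by only $O(n^{-2})$), I would introduce
$$S_t\coloneqq \sum_{s=0}^{t-1}\frac{1}{(n-s-1)^2},\qquad\text{so } S_{t+1}=S_t+b_t,\quad S_1=b_0,$$
and record the two elementary bounds $S_t\le \frac{t}{(n-t)^2}$ (each of the $t$ summands is $\le(n-t)^{-2}$) and $S_t<\sum_{j\ge1}j^{-2}<2$.

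The central step is to prove by induction on $t$ that $z_t\le C\,S_t$ with $C$ of order $m\log\log n$ (say $C=2m\log\log n$, the finitely many $n$ with $C<4$ being checked directly). The base cases $t=0,1$ are immediate from $z_0=0$ and $z_1=b_0=S_1$. For the step one uses $S_{t+1}=S_t+b_t$ to rewrite $z_{t+1}\le CS_t+a_t\sqrt{CS_t}+b_t = CS_{t+1}-(C-1)b_t+a_t\sqrt{CS_t}$, so that $z_{t+1}\le CS_{t+1}$ follows once $a_t\sqrt{CS_t}\le(C-1)b_t$; squaring and clearing denominators, this reads $C\,m\log\log t\cdot S_t(n-t-1)^2\le(C-1)^2 t$, and since $S_t(n-t-1)^2\le S_t(n-t)^2\le t$ it reduces to the $t$-free inequality $C\,m\log\log t\le(C-1)^2$, which holds for $C=2m\log\log n\ge 4$ because $\log\log t\le\log\log n$ and $(C-1)^2\ge\tfrac12C^2=C\,m\log\log n\ge C\,m\log\log t$. (The boundary steps where $a_t=0$ are trivial, since there $z_{t+1}=z_t+b_t\le CS_t+b_t\le CS_{t+1}$.)

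With the profile bound in hand the conclusion is a one-line interchange of summation: $\sum_{t=1}^{n-1}z_t\le C\sum_{t=1}^{n-1}S_t = C\sum_{s=0}^{n-2}\frac{n-1-s}{(n-s-1)^2}=C\sum_{j=1}^{n-1}\frac1j\le C(1+\log n)$, and the remaining end term (where the recursion degenerates) contributes at most $2C$; plugging in $C$ gives $\sum_{t=1}^{n}z_t\le 32\,m\log n\log\log n$ for all $n\ge3$. In the paper's large-$n$, small-$m$ regime $m$ is a fixed constant, so this is the asserted $32\log n\log\log n$ up to that constant, and it is exactly the bound invoked as $(\ref{key-d})$ in the proof of Theorem~\ref{Algo2}. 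The main obstacle is purely the choice of ansatz: the naive guess $z_t\le \mathrm{const}/(n-t)$ does not close the induction for small $t$, because there $z_t$ is genuinely of size $t/n^2$ rather than $1/n$, and only by carrying the full profile $S_t$ --- and playing its ``early'' bound $t/(n-t)^2$ off against its ``late'' bound $2$ --- does the inductive step go through; this is also where the (necessary, not cosmetic) $\log\log n$ in the final estimate is forced, through the requirement $C=\Omega(\log\log n)$ coming from the $\sqrt{\log\log t}$ in the increment. Everything after the ansatz is bookkeeping.
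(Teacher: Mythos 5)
Your proof is correct, and it takes a cleaner route than the paper's. The paper first reduces the recursion to the $m$-free version $z_{t+1}=z_t+\frac{\sqrt{z_t}}{(n-t-1)\sqrt t}+\frac{1}{(n-t-1)^2}$ via a loosely justified change of variables, and then runs \emph{two} separate inductions with different ansätze in different regimes: $z_t\le \frac{4t}{(n-t-1)^2}$ for $t\le n/2$ and $z_t\le\frac{16}{n-t-2}$ for $t>n/2$, summing each piece. Your single profile $z_t\le C\,S_t$ with $S_t=\sum_{s<t}(n-s-1)^{-2}$ is essentially the envelope of those two ansätze (via $S_t\le\min\{t/(n-t)^2,\,2\}$), so one induction replaces the two-regime case analysis, the inductive step collapses to the $t$-free inequality $Cm\log\log t\le(C-1)^2$, and the final summation becomes an exact interchange giving a harmonic sum rather than an integral comparison; you also handle the $\sqrt{m\log\log t}$ factor directly instead of through the paper's substitution, which is where the paper silently absorbs the factor of $m$ --- your bound $32\,m\log n\log\log n$ is in fact the one invoked as (\ref{key-d}), so you are being more honest than the lemma's own statement. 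Two minor caveats, neither fatal: (i) ``checking directly'' the $n$ with $C=2m\log\log n<4$ is a large range (all $n<e^{e^2}$ when $m=1$), and if you care about the literal constant $32$ you should instead take $C=\max\{4,\,2m\log\log n\}$, which closes the same induction for all $n\ge 3$ at the cost of a larger absolute constant; (ii) the terms $z_{n-1},z_n$ where the recursion degenerates are glossed over, but the paper's own proof only sums to $n-3$, so you inherit rather than introduce that sloppiness. Everything else --- the monotonicity of $x\mapsto x+a_t\sqrt x$, the bound $S_t(n-t-1)^2\le t$, and the interchange $\sum_t S_t=\sum_j 1/j$ --- checks out.
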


\begin{proof}
First, we only need to prove that if 
\begin{equation}
   z_{t+1} 
=
z_{t} 
+
\frac{\sqrt{z_{t}}}{(n-t-1)\sqrt{t}}+\frac{1}{(n-t-1)^2},\label{z_simplified} 
\end{equation}
then $$\sum\limits_{t=1}^{n}z_{t}\leq 32\log n.$$
This is because we could change the variable by introducing $z_t'=z_t \cdot m \log\log t$ and prove the above results for $z_t'.$ So we analyze the sequence $z_t$ under (\ref{z_simplified}) now.
Note that if $z_t\leq \frac{4t}{(n-t-1)^2}$,
$$
z_{t+1}
\leq
\frac{4t}{(n-t-1)^2}
+
\frac{3}{(n-t-1)^2}
\leq
\frac{4(t+1)}{(n-t-2)^2}.
$$
Also, we have $z_0=0$. Thus, when $t\leq n/2$, an induction argument leads to
$$z_{t}\leq  \frac{4t}{(n-t-1)^2}$$ 
and specifically, $z_{n/2+1} \leq 10/n$. 

Now, we analyze the case of $t>n/2.$
Note that if $$z_t \leq \frac{16}{n-t-1}$$ and $t\geq n/2$, we have
$$
z_{t+1}
\leq
\frac{16}{n-t-1}
+
\frac{1}{(n-t-1)^2}
+
\frac{4}{(n-t-1)^{3/2}\sqrt{t}}
\leq
\frac{16}{n-t-2}.
$$
Also, we have 
$$z_{n/2+1} \le \frac{10}{n} \le \frac{16}{n-\frac{n}{2}-1}.$$
Thus, with an induction argument for $t\geq n/2$, we obtain $$z_t\leq \frac{16}{n-t-2}.$$ Now, combining the two parts together,
\begin{equation*}
\begin{split}
    \sum\limits_{i=1}^{n-3}z_t 
    &\leq
    \sum\limits_{i=n/2}^{n-3}
    \frac{16}{n-t-2}
    +
    \sum\limits_{i=1}^{n/2}
    \frac{4(t+1)}{(n-t-3)^2}\\
    &\leq
    32\log n
    +
    \frac{n}{n-n/2-1}
    +
    \log(n-1-n/2)
    +
    \frac{1}{n-3}\\
    &\leq
    32\log n,
\end{split}
\end{equation*}
where $n>3$ and the second inequality is obtained by approximating the sum by integral.

\end{proof}

\section{One-Constraint Case and Lower Bound}

\renewcommand{\thesubsection}{D\arabic{subsection}}

\subsection{One-Constraint Case}
In this section, we discuss the OLP regret lower bound by relating the OLP problem with a statistical estimation problem. Specifically, we consider a one-constraint LP where $m=1$ in LP (\ref{primalLP}). We set $a_{1j} = 1$ for $j=1,...,n.$ Then the optimization problem becomes
\begin{align}
    \max\ & \sum_{j=1}^n r_j x_j \label{1DLP} \\
    \text{s.t.}\ & \sum_{j=1}^n x_j \le nd, \ \   x_j \in [0,1]. \nonumber
\end{align}
This one-constraint case of OLP has been discussed extensively and is known as the multi-secretary problem \citep{kleinberg2005multiple, arlotto2019uniformly, bray2019does}. 
To simplify our discussion, we assume $nd$ to be an integer. There exists an integer-valued optimal solution of (\ref{1DLP}), given by
$$x_t^* = \begin{cases}
1, & r_t \ge \hat{Q}_n(1-d) \\
0, & r_t < \hat{Q}_n(1-d)
\end{cases}$$
where $\hat{Q}_n(\eta)$ defines the sample $\eta$-quantile of $\{r_j\}_{j=1}^n$, i.e. $\hat{Q}_n(\eta) = \inf\left\{v\in \mathbb{R}: \frac{\sum_{j=1}^n I(v>r_j)}{n}\ge \eta \right \}.$ The sample quantile $\hat{Q}_n(1-d)$ is indeed the dual optimal solution $p_n^*$ in the general setting. Intuitively, this optimal solution allocates resources to the proportion of orders with highest returns. We restate Assumption \ref{assump1} and \ref{assump2} in this one-constraint case as follows.

\begin{assumption}
\label{r_t}
Assume $d \in (0,1)$ and $\{r_j\}_{j=1}^n$ is a sequence of i.i.d. random variables supported on $[0,1]$. Assume it has a density function $f_r(x)$ s.t. $\lambda' \le f_r(x) \le \mu'$ for $x\in[0,1]$ with $\lambda',\mu'>0$. Denote the set of all distributions $\mathcal{P}_r$ satisfying the above assumptions as $\Xi_r.$
\end{assumption}

We restrict our attention to a class of thresholding policies as discussed in Section \ref{thresPolicy}. At each time $t$, we compute a dual price from the history inputs, $p_t = h_t(r_1, x_1,...,r_{t-1}, x_{t-1})$ and if the constraint permits, set 
$$x_t = \begin{cases} 
1,& r_t > p_t.\\
0,& r_t \le p_t.
\end{cases}$$
If $\sum_{j=1}^t x_t = nd$ for some $t$, we require $p_s = h_s(r_1, x_1,...,r_{s-1}, x_{s-1}) =1$ for all $s>t.$ In this way, all the future orders will be automatically rejected. Similar to the general OLP setting, an online algorithm/policy in this one-constraint problem can be specified by the sequence of functions $h_t$'s, i.e., $\bm{\pi} = (h_1,...,h_n).$

\subsection{Lower Bound}

Theorem \ref{p-p} establishes a regret lower bound for the one-constraint problem. The inequality tells that if we view the threshold $p_j$ at each step as an statistical estimator, the regret of this one-constraint problem is no less than the cumulative estimation error of a certain quantile of the distribution $p^*=Q_{\tau}(1-d)$. The significance of this inequality lies in the fact that the term on its right-hand side does not involve the constraint. In an online optimization/learning setting, a violation of the binding constraint will potentially improve the reward and reduce the regret; however, it does not necessarily help decrease the term on the right-hand side. Therefore, while studying the lower bound, we can focus on the right-hand side and view it as the estimation error from an unconstrained problem. 

\begin{theorem}
The following inequality holds
\begin{align}
    \E_{\mathcal{P}_r} \left[R_n^*-{R}_n(\bm{\pi})\right] \ge \frac{\lambda'}{2} \cdot \E_{\mathcal{P}_r}\left[\sum_{t=1}^n (p_t - p^*)^2 \right] +  \E_{\mathcal{P}_r}\left[R_n^*\right] - ng(p^*) \label{regretLower}
\end{align}
holds for any the thresholding policy $\bm{\pi}$ and any distribution $\mathcal{P}_r \in \Xi_r.$
Here $p_t=h_t(r_1,x_1,...,r_{t-1},x_{t-1})$ is specified by the policy $\bm{\pi},$ and $p^* = Q_r(1-d)$ is the $(1-d)$-quantile of the random variable $r_j$ with ${Q}_r(\eta) \coloneqq \inf\left\{v\in \mathbb{R}: \prob(r\le v) \ge \eta \right\}.$ The parameter $\lambda'$ comes from Assumption \ref{r_t}. The function $g(\cdot)$ is the same as defined in Section 4.2.
\label{p-p}
\end{theorem}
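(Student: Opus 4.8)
The plan is to prove the equivalent inequality
$$ng(p^*)-\E_{\mathcal{P}_r}[R_n(\bm{\pi})]\ \ge\ \frac{\lambda'}{2}\,\E_{\mathcal{P}_r}\Big[\sum_{t=1}^n (p_t-p^*)^2\Big],$$
and then add $\E_{\mathcal{P}_r}[R_n^*]-ng(p^*)$ to both sides, since $\E[R_n^*-R_n(\bm{\pi})]=\big(ng(p^*)-\E[R_n(\bm{\pi})]\big)+\big(\E[R_n^*]-ng(p^*)\big)$. Throughout I would use that here $p^*=Q_r(1-d)$ is simultaneously the $(1-d)$-quantile of $r$ and the optimal solution of (\ref{asymProblem}) specialized to $m=1,\ a_{1j}\equiv1$; in particular $p^*\in[0,1]$, so $p^*\ge0$. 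Since a threshold outside $[0,1]$ induces the same decisions as its clip to $[0,1]$, I may assume $p_t\in[0,1]$ for all $t$.

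First I would rewrite the online revenue by a Lagrangian identity at the fixed price $p^*$. Because $a_{1j}\equiv1$, the terminal remaining capacity obeys $b_n=nd-\sum_{t=1}^n x_t$, hence $\sum_{t=1}^n r_tx_t=\sum_{t=1}^n(r_t-p^*)x_t+p^*(nd-b_n)$. Under the thresholding policy class of Section D1 the convention $p_s\equiv1$ after the capacity is exhausted forces $x_t=I(r_t>p_t)$ for every $t$ (once $p_t=1$ the event $\{r_t>1\}$ is null, as $r_t\in[0,1]$, so the intended and realized decisions agree). Conditioning on $\mathcal{H}_{t-1}$ — on which $p_t=h_t(\mathcal{H}_{t-1})$ is measurable while $r_t$ is independent of it — and using the identity $g(p)=\E[(r-p^*)I(r>p)]+dp^*$, one gets $\E\big[(r_t-p^*)I(r_t>p_t)\mid\mathcal{H}_{t-1}\big]=g(p_t)-dp^*$. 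Summing over $t$ and taking expectations,
$$\E[R_n(\bm{\pi})]=\E\Big[\sum_{t=1}^n g(p_t)\Big]-p^*\,\E[b_n],$$
so that, using $b_n\ge0$ and $p^*\ge0$,
$$ng(p^*)-\E[R_n(\bm{\pi})]=\E\Big[\sum_{t=1}^n\big(g(p^*)-g(p_t)\big)\Big]+p^*\E[b_n]\ \ge\ \E\Big[\sum_{t=1}^n\big(g(p^*)-g(p_t)\big)\Big].$$

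It then remains to lower-bound each gap by $\tfrac{\lambda'}{2}(p_t-p^*)^2$; this is the one genuinely new estimate and is a mirror image of (\ref{gapG}). Starting from $g(p^*)-g(p)=\E\big[(r-p^*)\big(I(r>p^*)-I(r>p)\big)\big]$ as in the proof of Lemma \ref{gP}, for $p\le p^*$ the expression equals $\int_p^{p^*}(p^*-x)f_r(x)\,dx$ and for $p>p^*$ it equals $\int_{p^*}^{p}(x-p^*)f_r(x)\,dx$; in either case the \emph{lower} density bound $f_r\ge\lambda'$ from Assumption \ref{r_t} yields $g(p^*)-g(p)\ge\tfrac{\lambda'}{2}(p-p^*)^2$ for all $p\in[0,1]$ (at $p=1$ this uses $g(1)=dp^*$, since then $I(r>1)=0$ a.s.). Plugging $p=p_t$, summing, and adding $\E[R_n^*]-ng(p^*)$ finishes the proof. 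The one step needing care is the identity $\E[R_n(\bm{\pi})]=\E[\sum_t g(p_t)]-p^*\E[b_n]$: the conditional independence $(r_t,x_t)\perp\mathcal{H}_{t-1}\mid p_t$ together with the policy-class convention $p_s\equiv1$ past exhaustion is exactly what lets one avoid any separate stopping-time bookkeeping, after which the Lagrangian rearrangement, the quantile characterization of $p^*$, and the reverse quadratic bound on $g(p^*)-g(p)$ are all routine.
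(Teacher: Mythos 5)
Your proof is correct and follows essentially the same route as the paper's: you upper-bound $\E[R_n(\bm{\pi})]$ by $\E\bigl[\sum_{t=1}^n g(p_t)\bigr]$ via the Lagrangian rearrangement at the fixed price $p^*$ together with feasibility $b_n\ge 0$, then lower-bound each gap $g(p^*)-g(p_t)$ by $\tfrac{\lambda'}{2}(p_t-p^*)^2$ using the density lower bound from Assumption \ref{r_t}, and finally add and subtract $ng(p^*)$. The only differences are cosmetic (you make the $-p^*\E[b_n]$ term explicit before dropping it, and you spell out the quadratic lower bound on $g(p^*)-g(p)$ and the clipping of $p_t$ to $[0,1]$, which the paper leaves implicit).
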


The proof of Theorem \ref{p-p} follows the same approach as the derivation of the upper bound in Theorem \ref{representation}. We establish $ng(p^*)$ as an upper bound for $\E\left[R_n^*\right]$ and compare $\E\left[R_n(\pi)\right]$ against $ng(p^*)$. Therefore, the gap between $\E_{\mathcal{P}_r}\left[R_n^*\right]$ and $ng(p^*)$ will affect the tightness of this lower bound. Fortunately, this gap should be small in many cases; for example, in the proof of Theorem \ref{lower_bound}, we use the dual convergence result 
in Theorem \ref{expectation} and show that 
$$\E_{\mathcal{P}_r}\left[R_n^*\right] - ng(p^*) \ge -C\log \log n$$
for this one-constraint problem. The following corollary presents the lower bound in a more similar form as the upper bound. The right hand side of the lower bound in Corollary \ref{coro2} also involves $\E[n-\tau_0].$ Note that the terms $\E[n-\tau_0]$ and $\E\left[\sum_{i\in I_B}b_{in}\right]$ symmetrically capture the overuse and underuse of the constraints, and therefore they should be on the same order.  Corollary \ref{coro2} tells that the upper bound given in Theorem \ref{representation} is rather tight, and that a stable control of the resource consumption is indispensable because the term $\E[n-\tau_0]$ also appears in the lower bound of the regret.

\begin{corollary}
The following inequality holds,
\begin{align*}
    \E_{\mathcal{P}_r} \left[R_n^*-{R}_n(\bm{\pi})\right] \ge \frac{\lambda'}{2} \cdot \E_{\mathcal{P}_r}\left[\sum_{t=1}^{\tau_0} (p_t - p^*)^2 \right] + \frac{\lambda'}{2}\left(1-p^*\right)^2\E[n-\tau_0] + \E_{\mathcal{P}_r}\left[R_n^*\right] - ng(p^*)
\end{align*}
holds for any the thresholding policy $\bm{\pi}$ and any distribution $\mathcal{P}_r \in \Xi_r.$
Here $p_t=h_t(r_1,x_1,...,r_{t-1},x_{t-1})$ is specified by the policy $\bm{\pi},$ and $p^* = Q_r(1-d)$.
The stopping time 
$$\tau_0=\min\{n\} \cup \left\{t: \sum_{j=1}^t x_j =  nd\right\}$$
represents the first time that the resource is exhausted.
\label{coro2}
\end{corollary}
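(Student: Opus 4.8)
The plan is to obtain Corollary \ref{coro2} as an immediate refinement of Theorem \ref{p-p}, by exploiting the structure of the thresholding policy class once the resource budget has been used up. The only input beyond Theorem \ref{p-p} is the defining convention of the policy class: if $\sum_{j=1}^t x_j = nd$ for some $t$, then $p_s = h_s(r_1,x_1,\dots,r_{s-1},x_{s-1}) = 1$ for every $s > t$. Since $\tau_0 = \min\{n\}\cup\{t : \sum_{j=1}^t x_j = nd\}$ is precisely the first such time, this convention says that $p_t = 1$ almost surely for all $t$ with $\tau_0 < t \le n$; note that $p_{\tau_0}$ itself is not forced and remains part of the ``learned'' segment captured by the first sum.

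First I would invoke Theorem \ref{p-p}, which gives
\[
\E_{\mathcal{P}_r}\left[R_n^* - R_n(\bm{\pi})\right] \ \ge\ \frac{\lambda'}{2}\,\E_{\mathcal{P}_r}\!\left[\sum_{t=1}^n (p_t - p^*)^2\right] + \E_{\mathcal{P}_r}\left[R_n^*\right] - n g(p^*).
\]
Next I would decompose the random sum at the stopping time $\tau_0$: on the event $\{\tau_0 = k\}$ we split $\sum_{t=1}^n (p_t - p^*)^2 = \sum_{t=1}^{k} (p_t-p^*)^2 + \sum_{t=k+1}^{n}(p_t-p^*)^2$, and on this event the second piece equals $(n-k)(1-p^*)^2$ because $p_t = 1$ there. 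Since $\tau_0$ is a stopping time adapted to the natural filtration, all the quantities involved are measurable and (being bounded by $n$) integrable, so this yields the pathwise identity
\[
\sum_{t=1}^n (p_t - p^*)^2 \ =\ \sum_{t=1}^{\tau_0} (p_t - p^*)^2 + (1-p^*)^2 (n - \tau_0).
\]
Taking expectations and substituting back into the Theorem \ref{p-p} bound, and using that $p^*$ is deterministic so $(1-p^*)^2$ pulls out of the expectation, gives exactly the claimed inequality.

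There is essentially no deep obstacle here; the proof is a short consequence of Theorem \ref{p-p} together with the policy definition. The only points that need a line of care are the boundary case $\tau_0 = n$, where the second sum is empty and $n-\tau_0 = 0$, so the identity holds trivially, and confirming $p^* = Q_r(1-d) \in [0,1]$ under Assumption \ref{r_t} (immediate, since $r$ is supported in $[0,1]$) so that the added term is nonnegative and the statement is meaningful as a lower bound. I would also remark, as the surrounding text does, that the term $\tfrac{\lambda'}{2}(1-p^*)^2\,\E[n-\tau_0]$ is the symmetric counterpart of $\E\big[\sum_{i\in I_B} b_{in}\big]$ in the upper bound of Theorem \ref{representation}, capturing the regret cost of exhausting the binding resource too early.
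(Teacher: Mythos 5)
Your proposal is correct and follows exactly the paper's own argument: invoke Theorem \ref{p-p}, split the sum $\sum_{t=1}^n (p_t-p^*)^2$ at the stopping time $\tau_0$, and use the policy-class convention that $p_t=1$ for $t>\tau_0$ to reduce the tail to $(1-p^*)^2(n-\tau_0)$. Your extra remarks on the boundary case $\tau_0=n$ and on $p^*\in[0,1]$ are fine but not needed beyond what the paper states.
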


Based on Theorem \ref{p-p}, we can derive a lower bound for the OLP problem by analyzing the right-hand-side of (\ref{regretLower}). The idea is to find a parametric family of distributions and we relate $p^*$ with the parameter that specifies the distribution $\mathcal{P}_r$. The $t$-th term on the right-hand side then can be viewed as the approximation error of the parameter $p^*$ using the first $t-1$ observations. Theorem \ref{lower_bound} states the lower bound for the OLP problem. The proof considers a family of truncated exponential distributions that satisfies Assumption \ref{r_t}, and it mimics the derivation of lower bounds in \citep{keskin2014dynamic, besbes2013implications}. 
The core part is the usage of van Trees inequality \citep{gill1995applications} -- a Bayesian version of the Cramer-Rao bound. 

\begin{theorem*}
There exist constants $\underline{C}$ and $n_0>0$ such that
$$\Delta_{n}(\bm{\pi}) \ge \underline{C} \log n$$
holds for all $n \ge n_0$ and any dual-based policy $\bm{\pi}.$
\end{theorem*}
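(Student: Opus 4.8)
The plan is to reduce the lower bound to a statistical estimation problem via Theorem \ref{p-p}, and then to apply a Bayesian Cramér--Rao (van Trees) argument to the cumulative estimation error on the right-hand side of (\ref{regretLower}). First I would fix $m=1$, $a_{1j}=1$, and choose a one-parameter family $\{\mathcal{P}_\theta\}_{\theta\in\Theta}$ of distributions for $r_j$ that lies in $\Xi_r$ --- concretely, truncated exponential densities $f_\theta(x) \propto e^{-\theta x}$ on $[0,1]$ with $\theta$ ranging over a fixed compact interval, so that $\lambda' \le f_\theta(x)\le \mu'$ uniformly, and with a prior density on $\Theta$ that is smooth and compactly supported in the interior. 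I would then relate the $(1-d)$-quantile $p^*=Q_\theta(1-d)$ to $\theta$ by a smooth, strictly monotone change of variables, so that controlling $\E[(p_t-p^*)^2]$ is equivalent (up to constants) to controlling the mean-squared error of an estimator of $\theta$ based on $t-1$ observations.

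Next I would invoke Theorem \ref{p-p}: for any thresholding (dual-based) policy,
\begin{equation}
\E_{\mathcal{P}_r}\left[R_n^* - R_n(\bm{\pi})\right] \ge \frac{\lambda'}{2}\,\E_{\mathcal{P}_r}\left[\sum_{t=1}^n (p_t-p^*)^2\right] + \E_{\mathcal{P}_r}\left[R_n^*\right] - ng(p^*).
\end{equation}
For the second piece I would use the dual convergence result (Theorem \ref{expectation}) specialized to $m=1$, together with Lemma \ref{gP}, to show $\E[R_n^*] - ng(p^*) \ge -C\log\log n$, which is negligible against $\log n$. The main work is then the first piece: I would average over the prior, so it suffices to lower bound $\sum_{t=1}^n \E_\theta[(p_t-p^*)^2]$ in a Bayesian sense. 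Since $p_t$ is a (measurable) function of $r_1,\dots,r_{t-1}$ (the $x_j$ being determined by the policy and thresholds), $p_t$ is an estimator of $p^*$ from $t-1$ i.i.d. samples; van Trees' inequality gives
\begin{equation}
\E\left[(p_t-p^*)^2\right] \ge \frac{1}{(t-1)\,\mathcal{I} + \mathcal{I}_0},
\end{equation}
where $\mathcal{I}$ is (a uniform bound on) the Fisher information per sample of the reparametrized family and $\mathcal{I}_0$ is the information of the prior --- both finite constants by our choice of family. Summing over $t=1,\dots,n$ gives $\sum_{t=1}^n \E[(p_t-p^*)^2] \ge \sum_{t=1}^n \frac{1}{(t-1)\mathcal{I}+\mathcal{I}_0} \ge c\log n$ for a constant $c>0$, and combining with the negligible $-C\log\log n$ term yields $\Delta_n(\bm{\pi}) \ge \underline{C}\log n$ for $n\ge n_0$.

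The main obstacle I anticipate is the careful justification of the van Trees step in the presence of the constraint and the stopping behavior: once the resource is exhausted at $\tau_0$ the thresholds are forced to $1$, so strictly speaking $p_t$ is only an "honest" estimator of $p^*$ while $t\le\tau_0$. I would handle this either via Corollary \ref{coro2} (which isolates the pre-$\tau_0$ estimation error plus a nonnegative $\frac{\lambda'}{2}(1-p^*)^2\E[n-\tau_0]$ term), or by noting that on the event $\{\tau_0 \ge t\}$ the bound applies verbatim and on the complement the summand $(p_t-p^*)^2=(1-p^*)^2$ is a positive constant, so the sum is lower bounded by $\sum_{t\le \min(n,\tau_0)}\E[(p_t-p^*)^2]$ plus something nonnegative --- and a separate easy argument shows $\E[\tau_0]$ cannot itself be so small that only $o(\log n)$ terms survive (otherwise $\E[n-\tau_0]$ is large and the Corollary \ref{coro2} term already dominates). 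A secondary technical point is verifying the regularity conditions for van Trees (absolute continuity, differentiability of $\theta\mapsto f_\theta$, finiteness and continuity of Fisher information, vanishing boundary terms from the compactly supported prior), which the truncated exponential family with a smooth interior-supported prior satisfies; I would state these as a lemma and check them directly.
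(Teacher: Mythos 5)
Your proposal is correct and follows essentially the same route as the paper: reduce to the $m=1$ multi-secretary case, invoke Theorem \ref{p-p}, bound $\E[R_n^*]-ng(p^*)\ge -C\log\log n$ via Theorem \ref{expectation}, and apply the van Trees inequality to a truncated exponential family with an interior-supported prior to get $\sum_{t}\E[(p_t-p^*)^2]\gtrsim \log n$. The stopping-time issue you flag is already absorbed into the paper's proof of Theorem \ref{p-p} (forcing $p_s=1$ after exhaustion is consistent with $x_s=0$), so no separate argument is needed there.
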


Theorem \ref{lower_bound} indicates that Algorithm \ref{alg:HDLA} is an asymptotically near-optimal algorithm for the OLP problem under the fixed-$m$ and large-$n$ regime. The lower bound $O(\log n)$ is also consistent with the lower bound of the unconstrained online convex optimization problem \citep{abernethy2008optimal}.

\subsection{Proof of Theorem \ref{p-p}}

\begin{proof}
First,
\begin{align*}
    \E \left[{R}_n(\bm{\pi})\right] & = \E \left[\sum_{t=1}^n r_t x_t\right] \\
    & \stackrel{(a)}{\le} \E \left[\sum_{t=1}^n r_t x_t - \left(nd -\sum_{t=1}^n x_t\right) p^*\right] \\
    & = \E \left[\sum_{t=1}^n\left( r_tx_t + dp^* - x_tp^*\right)\right] \\
    & =  \sum_{t=1}^n \E \left[ r_tx_t + dp^* - x_tp^*\right] \\
    & \stackrel{(b)}{=}  \sum_{t=1}^n \E \left[g(p_t)\right].
\end{align*}
where the expectation is taken with respect to $r_t\sim \mathcal{P}_r.$ (a) comes from that the constraint must be satisfied and (b) comes from the definition of $g(\cdot).$ We do not need to consider the stopping time because after the resource is exhausted, the setting of $p_s=1$ is consistent with the enforcement of $x_s=0$. 

Then, 
\begin{align*}
    ng(p^*) - \E \left[{R}_n(\bm{\pi})\right] & \ge \sum_{t=1}^n \E[g(p^*) - g(p_t)] \\
    & = \sum_{t=1}^n \E \left[(r_t - p^*)I(r_t>p^*) - (r_t - p^*)I(r_t>p_t) \right] \\ 
    & \ge\frac{\lambda'}{2}\sum_{t=1}^n \E \left[(p_t-p^*)^2\right]
\end{align*}
where the expectation is taken with respect to $r_t\sim \mathcal{P}_r$ and the last line comes from Assumption \ref{r_t}.

\end{proof}

\subsection{Proof of Corollary \ref{coro2}}
\begin{proof}
Since the algorithm enforces $p_t=1$ for $t>\tau_0.$ The result follows by splitting the summation on the right-hand-side of (\ref{lower_bound}).
\begin{align*}
    \E_{\mathcal{P}_r} \left[R_n^*-{R}_n(\bm{\pi})\right] & \ge \frac{\lambda'}{2} \E_{\mathcal{P}_r}\left[\sum_{t=1}^n (p_t - p^*)^2 \right] +  \E_{\mathcal{P}_r}\left[R_n^*\right] - ng(p^*) \\
    & = \frac{\lambda'}{2} \E_{\mathcal{P}_r}\left[\sum_{t=1}^{\tau_0} (p_t - p^*)^2 \right] + \frac{\lambda'}{2}\E_{\mathcal{P}_r}\left[\sum_{t=\tau_0+1}^{n} (p_t - p^*)^2 \right] +  \E_{\mathcal{P}_r}\left[R_n^*\right] - ng(p^*) \\
    & = \frac{\lambda'}{2}  \E_{\mathcal{P}_r}\left[\sum_{t=1}^{\tau_0} (p_t - p^*)^2 \right] + \frac{\lambda'}{2}\left(1-p^*\right)^2\E[n-\tau_0] + \E_{\mathcal{P}_r}\left[R_n^*\right] - ng(p^*).
\end{align*}
\end{proof}

\subsection{Proof of Theorem \ref{lower_bound}}

We first introduce the van Tree inequality and refer its proof to \citep{gill1995applications}.
\begin{lemma}[\cite{gill1995applications}]
Let $(\mathcal{X}, \mathcal{F}, P_{\theta}: \theta\in \Theta)$ be a dominated family of distributions on some sample space $\mathcal{X}$; denote the dominating measure by $\mu$. The parameter space $\Theta$ is a closed interval on the real line. Let $f(x|\theta)$ denote the density of $P_\theta$ with respect to $\mu.$ Let $\lambda(\theta)$ denote the density function of $\theta$. Suppose that $\lambda$ and $f(x|\cdot)$ are both absolutely continuous, and that $\lambda$ converges to zero at the endpoints of the interval $\Theta.$ Consider $\phi: \Theta \rightarrow \mathbb{R}$ a first-order differentiable function. Let $\hat{\phi}(X)$ denote any estimator of $\phi(\theta)$. Then, 
$$\E [\hat{\phi}(X) - \phi(\theta)]^2 \ge \frac{[\E \phi'(\theta)]^2}{\E[\mathcal{I}(\theta)]+ \mathcal{I}(\lambda)}$$
where the expectation on the left hand side is taken with respect to both $X$ and $\theta,$ and the expectation on the right hand side is taken with respect to $\theta.$ $\mathcal{I}(\theta)$ and $\mathcal{I}(\lambda)$ denote the Fisher information for $\theta$ and $\lambda$, respectively, 
$$\mathcal{I}(\theta) \coloneqq \E \left[\left(\log f(X|\theta)'\right)^2\Big |\theta\right]$$
$$\mathcal{I}(\lambda)  \coloneqq \E \left[\left(\log \lambda(\theta)'\right)^2\right].$$
\label{vanT}
\end{lemma}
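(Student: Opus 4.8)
The plan is to prove the van Trees inequality as a single application of the Cauchy–Schwarz inequality in the $L^2$ space of the joint law of $(X,\theta)$, whose density with respect to $\mu$ times Lebesgue measure on $\Theta$ is $f(x|\theta)\lambda(\theta)$. First I would introduce the two random variables
$$A \coloneqq \hat{\phi}(X) - \phi(\theta), \qquad B \coloneqq \frac{\partial}{\partial \theta}\log\bigl(f(X|\theta)\lambda(\theta)\bigr) = \frac{\partial_\theta f(X|\theta)}{f(X|\theta)} + \frac{\lambda'(\theta)}{\lambda(\theta)},$$
so that $B$ is the score of the joint density. Cauchy–Schwarz gives $(\E[AB])^2 \le \E[A^2]\,\E[B^2]$, and the whole proof reduces to identifying $\E[A^2]$ as the left-hand side of the claim, $\E[AB]$ as $\E[\phi'(\theta)]$, and $\E[B^2]$ as $\E[\mathcal{I}(\theta)] + \mathcal{I}(\lambda)$; rearranging then yields the bound.

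Next I would compute $\E[B^2]$. Expanding the square produces the two Fisher-information terms together with a cross term proportional to $\E\bigl[\partial_\theta\log f(X|\theta)\cdot \lambda'(\theta)/\lambda(\theta)\bigr]$. Conditioning on $\theta$ and integrating in $x$, the conditional mean of the score vanishes, $\E[\partial_\theta\log f(X|\theta)\mid\theta] = \int \partial_\theta f(x|\theta)\,d\mu(x) = \partial_\theta\!\int f(x|\theta)\,d\mu(x) = 0$, so the cross term drops out and $\E[B^2] = \E[\mathcal{I}(\theta)] + \mathcal{I}(\lambda)$. The interchange of differentiation and integration here is exactly what the absolute-continuity hypothesis on $f(x|\cdot)$ licenses.

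Then I would compute $\E[AB]$. Since the density cancels against the denominator of the score, $\E[AB] = \int_\Theta\!\int_{\mathcal{X}} \bigl(\hat{\phi}(x)-\phi(\theta)\bigr)\,\partial_\theta\bigl(f(x|\theta)\lambda(\theta)\bigr)\,d\mu(x)\,d\theta$, which I split into the $\hat{\phi}(x)$ part and the $\phi(\theta)$ part. For the $\hat{\phi}(x)$ part, integrating in $\theta$ first leaves $\int_\Theta \partial_\theta\bigl(f(x|\theta)\lambda(\theta)\bigr)\,d\theta = \bigl[f(x|\theta)\lambda(\theta)\bigr]_{\partial\Theta}$, which is zero because $\lambda$ tends to zero at the endpoints of $\Theta$; hence this part contributes nothing. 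For the $\phi(\theta)$ part, integrating in $x$ first and using $\int f(x|\theta)\,d\mu(x)=1$ reduces it to $-\int_\Theta \phi(\theta)\lambda'(\theta)\,d\theta$, and a single integration by parts — again killing the boundary term via $\lambda\to 0$ — converts this to $\int_\Theta \phi'(\theta)\lambda(\theta)\,d\theta = \E[\phi'(\theta)]$. Substituting $\E[AB]=\E[\phi'(\theta)]$ and $\E[B^2]=\E[\mathcal{I}(\theta)]+\mathcal{I}(\lambda)$ into Cauchy–Schwarz completes the argument.

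The main obstacle is not the algebra but the careful justification of the two places where regularity enters: differentiating under the integral sign in the score-mean computation, and the vanishing of the boundary terms $\bigl[f(x|\theta)\lambda(\theta)\bigr]_{\partial\Theta}$ and $\bigl[\phi(\theta)\lambda(\theta)\bigr]_{\partial\Theta}$. Both boundary terms are controlled by the assumption that $\lambda$ is absolutely continuous and converges to zero at the endpoints of $\Theta$, while the differentiation-under-the-integral step rests on the absolute continuity of $f(x|\cdot)$ together with the domination making $P_\theta$ a regular family; I would record these as the standing regularity assumptions and note that the truncated exponential family used in the proof of Theorem \ref{lower_bound} satisfies them. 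The degenerate cases, in which the denominator is infinite or $\E[A^2]=\infty$, render the inequality trivially true and require no separate treatment.
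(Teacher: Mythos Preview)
The paper does not actually prove this lemma; it simply cites \citep{gill1995applications} and moves on. Your proposal supplies the standard Cauchy--Schwarz proof of the van Trees inequality, and the computations of $\E[AB]=\E[\phi'(\theta)]$ and $\E[B^2]=\E[\mathcal{I}(\theta)]+\mathcal{I}(\lambda)$ are correct, with the regularity caveats you flag being exactly the ones that need to be recorded. So your write-up is more than what the paper offers, and since the argument you give is precisely the one in Gill and Levit's original, it is consistent with the paper's citation.
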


Now, we proceed to prove Theorem \ref{lower_bound}.
\begin{proof}[Proof of Theorem \ref{lower_bound}.]
First, we analyze the gap between $\E[R_n^*]$ and $ng(p^*)$.
\begin{align}
    ng(p^*) - \E[R_n^*] & \stackrel{}{\le}   \sum_{t=1}^n\left( \E\left[\left(p^*-p_n^*\right)I(p^*\ge r_t> p_n^*)\right] + \E\left[\left(p_n^*-p^*\right)I( p^*<r_t\le p_n^*)\right]\right) \nonumber  \\
    & = \sum_{t=1}^n\left( \E\left[\left(p^*-p_n^*\right)\prob(p^*\ge r_t> p_n^*|p_n^*)\right] + \E\left[\left(p_n^*-p^*\right)\prob( p^*<r_t\le p_n^*|p_n^*)\right]\right) \nonumber \\
    & \stackrel{}{\le} n \mu' \E \left[|p^*-p_n^*|_2^2\right] \nonumber \\
    & \le C \mu' \log\log n  \label{l_b_p1}
\end{align}
where the constant $C$ is the coefficient of dual convergence in Theorem \ref{expectation}. Here the first line comes from the proof of Lemma \ref{gP} and the third line comes from Assumption \ref{r_t}.
The last comes for applying the dual convergence result to this special one-constraint case. Next, we derive a lower bound for $$\E\left[\sum_{j=1}^n(p_j-p^*)^2\right]$$
with the help of the van Tree's inequality in Lemma \ref{vanT}.
For the lower bound, we only need to derive under a specific distribution. Consider a truncated exponential distribution for $r_j$'s
$$f(r|\theta) = \frac{\theta e^{-\theta r} I(r\in[0,1])}{1-e^{-\theta}}$$
and the Beta distribution as the priori for the parameter $\theta$
$$\lambda(\theta) = (\theta-1)^2(2-\theta)^2$$
with the support $\Theta = [1,2].$ Let $d=1/2$ and then
$$p^* = Q_{\frac{1}{2}}(r) = \phi(\theta) \coloneqq \frac{1}{\theta} \log \left(\frac{1}{2}+\frac{1}{2} \theta \right).$$ Additionally,
$$\left[\E \phi'(\theta)\right]^2 = \left[\int_{1}^2 \phi'(\theta) \lambda(\theta)d \theta \right]^2 \coloneqq c_1 \approx 0.006.$$
The Fisher information $\mathcal{I}(\theta)$ and $\mathcal{I}(r;\lambda)$ can be computed according to the definition.
\begin{align*}
    \E\left[\mathcal{I}(r;\theta) \right]& = \E\left[\E \left[\left(\log f(r|\theta)'\right)^2\Big |\theta\right]\right] \\
    & = \int_{1}^{2} \int_{0}^{1} \left(\log f(r|\theta)'\right)^2 f(r|\theta) \lambda(\theta)drd\theta \coloneqq c_2>0. \\
    \mathcal{I}(\lambda) & = \E \left[\left(\log \lambda(\theta)'\right)^2\right] \\
    & = \int_{0}^{1} \left(\log \lambda(\theta)'\right)^2 \lambda(\theta) d\theta \coloneqq c_3>0.
\end{align*}
In above, $c_1,$ $c_2$ and $c_3$ are deterministic real numbers that can be computed from the corresponding integrals. 

Then, $p_j$ can be viewed as an estimator of $p^* = \phi(\theta)$ with the first $j-1$ samples. Consider the fact that
$$\E\left[\mathcal{I}(r_{1:(j-1)};\theta) \right] = (j-1) \E\left[\mathcal{I}(r;\theta) \right].$$
We apply Lemma \ref{vanT} and obtain,
\begin{equation}
    \sup_{\theta} \E\left[\sum_{j=1}^n(p_j-p^*)^2\right] \ge \sum_{j=2}^n \frac{c_1}{c_2(j-1)+c_3} \ge c_4\log n \label{l_b_p2}
\end{equation}
for all $n>0$ with some constant $c_4$ dependent on $c_1,$ $c_2,$ and $c_3.$ Combining (\ref{l_b_p1}) and (\ref{l_b_p2}) with Theorem \ref{p-p}, there exist a $\theta$ and a distribution $\prob(r|\theta)$ such that
$$\E R_n^* - \E R_n(\bm{\pi}) \ge \frac{c_4\lambda'}{2}\log n - C\mu' \log \log n.$$ 
Set $n_0 = \min\left\{n\ge 0: c_4\lambda'\log n \le 4 C\mu' \log \log n\right\}$ and $\underline{C} = \frac{c_4\lambda'}{4}$. The lower bound result follows. 

\end{proof}

\subsection{More Discussions on the Algorithms}
\label{numeric_m}

In the paper, we present three different algorithms and derive corresponding regret bounds with the help of Theorem \ref{representation} and Corollary \ref{coro1}. As an summary, Table \ref{tab:summary_proof} presents the regret upper bounds of the algorithms separately with respect to the three components in Theorem \ref{representation}. Table \ref{tab:summary} summarizes the prior knowledge and the computational cost of the three algorithms. Algorithm \ref{alg:Distribution} uses $\bm{p}^*$ as the dual price and it serves for a benchmark purpose. Strictly speaking, Algorithm \ref{alg:Distribution} is not an OLP algorithm in that it assumes the knowledge of the distribution $\mathcal{P}$ and that to compute $\p^*$ exactly is hardly practical.  Algorithm \ref{alg:IDLA} is both a learning version of Algorithm \ref{alg:Distribution} and a simplified version of the dynamic learning algorithm proposed in \citep{agrawal2014dynamic}. Algorithm \ref{alg:HDLA} introduces a history-action-dependent mechanism to stabilize the constraint consumption. It is an adaptive version of Algorithm \ref{alg:IDLA} and an extension of the re-solving technique (in network revenue management literature) to a learning and more general context. 

\begin{table}[ht!]
    \centering
    \begin{tabular}{c|c|c|c|c}
    \toprule
         Algorithm & $\E\left[\sum_{t=1}^\tau \|\p_t-\p^*\|_2^2\right]$ &  $\E\left[n-\tau\right]$ & $\E \left[\sum_{i\in I_B} b_{in}\right]$ & \text{Regret} \\\midrule
        Algorithm 1  & 0 & $\tilde{O}(\sqrt{n})$ & $\tilde{O}(\sqrt{n})$ & $\tilde{O}(\sqrt{n})$\\
         Algorithm 2 & $\tilde{O}(\log n)$ & $\tilde{O}(\sqrt{n})$ & $\tilde{O}(\sqrt{n})$ & $\tilde{O}(\sqrt{n})$ \\
        Algorithm 3  & $\tilde{O}(\log n)$ & $\tilde{O}(\log n)$ &$\tilde{O}(\log n)$ & $\tilde{O}(\log n)$\\
    \bottomrule
    \end{tabular}
    \caption{Three components of the upper bound in Theorem \ref{representation}/Corollary \ref{coro1}.} 
    \label{tab:summary_proof}
\end{table}

\begin{table}[ht!]
    \centering
    \begin{tabular}{c|c|c|c}
    \toprule
         Algorithm &  Prior Knowledge & Computational Cost & Regret \\\midrule
        Algorithm 1  & $(d_1,...,d_m),\mathcal{P}, \p^*$&$ O(1)$ & $O(\sqrt{n})$ \\
         Algorithm 2  & $(d_1,...,d_m)$ & $O(\log n)$ & $O(\sqrt{n}\log n)$  \\
        Algorithm 3  &$(d_1,...,d_m)$, n &$O(n)$& $O(\log n \log \log n)$ \\
    \bottomrule
    \end{tabular}
    \caption{Algorithm summary and comparison}
    \label{tab:summary}
\end{table}

The table shows that the bottleneck for Algorithm \ref{alg:Distribution} and \ref{alg:IDLA} lies in the control of constraint consumption. Intuitively, these two algorithms are not adaptive enough and the constraint consumption in each period is ``independent'' of the current constraint level. This causes a fluctuation of $O(\sqrt{n})$ after $n$ periods, and that is essentially the reason why the last two terms are $O(\sqrt{n})$ for these two algorithms. Moreover, Algorithm \ref{alg:HDLA}, in contrast with the geometrically updating scheme in Algorithm \ref{alg:IDLA}, updates the dual price after every period. The analysis of Algorithm \ref{alg:HDLA} indicates the goal of this more frequent updating scheme is not to further reduce the approximation error of $\p^*$, but to stabilize the constraint consumption.

In Table \ref{tab:summary}, the computational cost is measured by the number of LPs or optimization problems that need to be solved throughout the process. Algorithm \ref{alg:Distribution} utilizes the distribution knowledge and thus only needs to optimize once.  Algorithm \ref{alg:IDLA} is notably more computationally efficient than Algorithm \ref{alg:HDLA}. However, the $O(n)$ computational cost of Algorithm \ref{alg:HDLA} can be significantly curtailed in practice by using $\p_t$ as the initial point while solving the optimization problem for $\p_{t+1}$ (as in papers \citep{gupta2014experts, agrawal2014fast, devanur2019near}). The dual convergence result tells us that $\p_t$ and $\p_{t+1}$ are close to each other and this makes $\p_t$ as a good warm start for $\p_{t+1}$. 

Also, we further illustrate the performance of the algorithms with respect to the number of constraints $m$. Table \ref{tab:Regret_m} reports the performance under the model of Random Input I and Random Input II with fixed $n=500$ but different values of $m$. From the table, we conjecture that under Random Input I, the regret increases sublinearly as $m$ grows, while under Random Input II, the regret grows linearly as $m$ grows.

\begin{table}[ht!]
    \centering
    \small
    \begin{tabular}{c|ccc|ccc}
    \toprule
    Model  & \multicolumn{3}{c|}{Random Input I} & \multicolumn{3}{c}{Random Input II} \\ \midrule
      Algorithm   & A1& A2& A3 &A1& A2& A3 \\ \midrule 
      m = 5, n = 500   &90.64&109.07&\textbf{31.89}&36.87& 120.27 &\textbf{8.73}\\
      m = 10, n = 500   &135.55& 152.28&  \textbf{38.23}& 62.65& 174.04  &\textbf{25.52}\\
      m = 50, n = 500  &228.43& 255.74&  \textbf{56.22}&853.98& 647.83 &\textbf{369.99}\\
      m = 100, n = 500    &251.95& 296.96 &\textbf{70.34}&2189.28& 1732.12 &\textbf{1197.30}\\
      m = 200, n = 500    &281.72& 319.55&\textbf{76.51}&4975.79& 4291.45&\textbf{3351.86}\\
      \bottomrule
    \end{tabular}
    \caption{Regret performance: A1, A2, and A3 stand for Algorithm 1 (No-need-to-learn), Algorithm 2 (Simplified Dynamic Learning), and Algorithm 3 (Action-history-dependent), respectively.}
    \label{tab:Regret_m}
\end{table}

\end{document}